\newcommand{\ds}{\displaystyle}
\newtheorem{theorem}{Theorem} [section]
\newtheorem{lemma}[theorem]{Lemma}
\newtheorem{corollary}[theorem]{Corollary}
\newtheorem{proposition}[theorem]{Proposition}
\newtheorem{example}[theorem]{Example}
\newtheorem{definition}[theorem]{Definition}
\newtheorem{remark}[theorem]{Remark}
\newcommand{\inn}[1]{\langle#1\rangle} %inner product
\newcommand{\dd}{{\mathrm d}}  %differential
\newcommand{\RR}{{\mathbb R}}  %real numbers
\newcommand{\CC}{{\mathbb C}}  %complex numbers
\newcommand{\NN}{{\mathbb N}}  %natural numbers
\newcommand{\Ee}{{\mathcal E}}
\newcommand{\Ff}{{\mathcal F}}
\newcommand{\Pp}{{\mathcal P}}
\newcommand{\CP}{{\mathbb C}P} %complex projective space
\newcommand{\ii}{{\rm i}}  %root of -1
\newcommand{\ra}{\rightarrow}
\newcommand{\pa}{\partial}
\renewcommand{\phi}{\varphi}
\newcommand{\la}{\lambda}
\newcommand{\La}{\Lambda}
\newcommand{\na}{\nabla}
\newcommand{\al}{\alpha}
\newcommand{\be}{\beta}
\newcommand{\ga}{\gamma}
\newcommand{\Ga}{\Gamma}
\newcommand{\ve}{\varepsilon}
\newcommand{\si}{\sigma}
\newcommand{\Si}{\Sigma}
\newcommand{\om}{\omega}
\newcommand{\ta}{\theta}
\newcommand{\wt}{\widetilde} %usually better than \tilde
\newcommand{\ov}{\overline} %usually better than \ov
\newcommand{\wh}{\widehat} %usually better than \hat
\begin{document}

\title{A class of quadratic difference equations on a finite graph}
\author{Paul Baird}
\thanks{The author is grateful for support provided by the Australian
Research Council and to the Mathematical
Sciences Institute at the Australian National University for its support and
hospitality; he also thanks Mike Eastwood for many helpful conversations in relation to this work.}

\address{D\'epartement de Math\'ematiques \\
Universit\'e de Bretagne Occidentale \\
6 av.\ Victor Le Gorgeu -- CS 93837 \\
29238 Brest Cedex, France}

\email{Paul.Baird@univ-brest.fr}

\begin{abstract} 
We study a class of complex polynomial equations on a finite graph with a view to understanding how holistic phenomena emerge from combinatorial structure.  Particular solutions arise from orthogonal projections of regular polytopes, invariant frameworks and cyclic sequences.  A set of discrete parameters for which there exist non-trivial solutions leads to the construction of a polynomial invariant and the notion of a geometric spectrum.  Geometry then emerges, notably dimension, distance and curvature, from purely combinatorial properties of the graph.  
\end{abstract}

\keywords{finite graph, quadratic difference equation, orthogonal projection, cyclic sequence, curvature, polynomial invariant, Gr\"obner basis}

\subjclass[2000]{05C10, 52C99,52B11,39A14}

\maketitle

\thispagestyle{empty}

\tableofcontents

\section{Introduction} Given a graph $\Ga = (V, E)$, with vertices $V$ and edges $E$ together with a function $\phi : V \ra \CC$, the equations we wish to study all have the form
\begin{equation} \label{one}
\frac{\ga (x)}{n(x)} \Big( \sum_{y\sim x} \big(\phi (y) - \phi (x)\big)\Big)^2 =  \sum_{y\sim x} \big(\phi (x) - \phi (y)\big)^2\,,
\end{equation}
at each vertex $x$, where $y\sim x$ means $y$ is adjacent to $x$, $n(x)$ is the degree of $\Ga$ at $x$ (the number of vertices adjacent to $x$) and where $\ga : V \ra \RR$ is a \emph{real}-valued function, which in many situations we will suppose constant; then we call the solution \emph{regular}.  A special case is when $\ga \equiv0$ and we have the equation:
\begin{equation} \label{holo}
0= \sum_{y\sim x} \big(\phi (x) - \phi (y)\big)^2\,,
\end{equation}
for all $x\in V$, solutions of which we refer to as \emph{holomorphic functions}.  
Note that the equations are invariant by the replacement of $\phi$ by $\wt{\phi} = \la\phi + \mu$ where $\la$ and $\mu$ are complex numbers, as well as with respect to complex conjugation $\phi \mapsto \ov{\phi}$. 

In what follows, it is convenient to write $\Delta \phi (x) =  \frac{1}{n(x)}\sum_{y\sim x} \big(\phi (y) - \phi (x)\big)$ (the Laplacian) and $(\dd \phi )^2(x) = \frac{1}{n(x)}\sum_{y\sim x} \big(\phi (y) - \phi (x)\big)^2$ (the symmetic square of the derivative), whereby equation (\ref{one}) becomes:
$$
\ga (x) \Delta \phi (x)^2 = (\dd\phi)^2(x)\,.
$$
All our graphs will be \emph{simple}, that is we do not allow loops or multiple edges, although much of what we discuss can be generalized to non-simple graphs.

For a given graph $\Ga = (V,E)$, we are particularly interested in the set $\Si$ of constant values of $\ga$ that can occur for which (\ref{one}) has non-trivial solutions:
$$
\Si:= \{ \ga\in \RR : \exists \ {\rm non-constant} \  \phi : V \ra \CC \ {\rm such \ that} \  \ga\Delta \phi^2 = (\dd\phi)^2 \}\,.
$$ 
We will call this set the \emph{geometric spectrum of $\Ga$}, which we regard as an intrinsic object associated to the graph, somewhat akin to the spectrum of the Laplacian.  We will see that this ``spectrum" reflects geometric properties, as well as combinatorial properties of $\Ga$.  Note that on a finite graph, by the maximum principle, $\phi$ non-constant implies $\Delta \phi \not\equiv 0$, so that $\Si$ is always well-defined.  

Particle physics provides some motivation to study these equations.  As a lattice model, equation (\ref{one}) expresses the influence exerted on a particular vertex by each of its neighbours.  In the case of a graph of degree three, in \cite{Ba-We} it is shown that equation (\ref{holo}) provides a discrete analogue of the equation for a shear-free ray congruence on space-time.  In the smooth setting such congruences can be used to generate solutions to the zero rest-mass field equations.  In Section \ref{sec:particles}, we refer to a connected graph $\Ga$ as a \emph{particle} and an equivalence class of solutions to (\ref{one}) on $\Ga$ as a \emph{state} of that particle.  States which have $\ga$ constant will be referred to as \emph{isostates}.  These are significant, since empirical evidence suggests that isostates occur as extremal states of a natural energy functional derived in Section \ref{sec:distance}.  We take the view that (i) at a fundamental level, the description of the world should be combinatorial and (ii) the only physical quantities that have meaning are relative. 

The idea that combinatorial structures should be at the basis of the description of matter, was put forward by R. Penrose in a well-known paper of 1971 \cite{Pe}.  In this article, he introduced the idea of a spin network, which is a graph of degree three whose edges are labeled by integers which represent twice the angular momentum.  The notion has been generalized in recent years by C. Rovelli and L. Smolin in their development of loop quantum gravity, see \cite{Ro}. Another key idea of Rovelli, is the relational interpretation of quantum mechanics, which puts forward the thesis that the only quantities that have meaning are relational. Thus no particle exists in an absolute state; the only meaningful way to describe a particle is how it correlates with other particles, or systems \cite{Ro-0}.

This philosophy is consistent with the idea that an isolated system be described by a graph coupled to a complex field.  Firstly, a graph is a combinatorial structure that expresses a binary relation between its vertices.  We do not need to imagine the graph as lying in some ambient space; the only relation that matters is whether one vertex is joined to another or not.  Of course, one may envisage more complicated combinatorial relations in nature, so perhaps edges should be labeled by an integer, or by a group representation, as in loop quantum gravity.  Secondly, the introduction of the field $\phi$ on the graph, represents an additional structure from which geometry emerges.  The equations which govern this field should enjoy the invariance $\phi \mapsto \la \phi + \mu$, so that only relative values of $\phi$ are significant. 

The condition that $\ga$ be real in equation (\ref{one}) is justified by Theorem \ref{thm:lift}.  This expresses the fact that under reasonable hypotheses, there is a unique realization of the solution about a vertex as a regular star in a Euclidean space; this enables us to develop geometry, in particular curvature and distance.  The constancy of $\ga$ then implicates global properties of the graph: it is a uniformizing parameter.  It is the case that the framework of a regular polytope, when projected orthogonally into the complex plane, determines a solution to (\ref{one}) with $\ga$ constant (Theorem \ref{thm:reg-polytope}).  

The factor $1/n(x)$ on the left-hand side of (\ref{one}) is explained by equation (\ref{star-gamma}) of Corollary \ref{cor:configured-star}; this means that $\ga$ depends only on local geometric invariants and not on the degree.  This choice also provides a consistent boundary, once more independent of the degree, between real solutions and complex solutions (see Lemma \ref{lem:ngaleq1}); there are also heuristic arguments that arise from the functional analytic perspective given in Appendix \ref{sec:lin}. 

A notion of \emph{holomorphic function} somewhat similar to ours has been introduced by S. Barr\'e \cite{Ba}.  However, in addition to (\ref{holo}), Barr\'e requires that $\phi$ be harmonic.  This is quite restrictive and in particular, by the maximum principle, with this definition no non-constant holomorphic function could exist on a finite graph.

We begin our paper with some elementary remarks concerning graph colourings.  The next three sections then provide examples of solutions to equation (\ref{one}) which will serve as a reference for subsequent developments.  We give a complete characterization on a cyclic graph;  for $\ga$ constant, real solutions can be described in terms of polynomial equations with integer coefficients (Theorem \ref{thm:cyclic}).  Complex solutions correspond to polygonal chains (or planar bar frameworks) whose edges all have the same length.  In Appendix \ref{sec:trig}, we characterize the regular polygons or star polygons as those solutions having $\ga$ constant.  Indeed the energy functional that we derive in Section \ref{sec:distance} (see below) defines a gradient flow on the configuration space of polygonal chains which we conjecture determines an evolution towards the most compact regular configuration.  

Following work of Eastwood and Penrose \cite{Ea-Pe}, in Section \ref{sec:orthographic} we discuss how solutions arise from orthogonal projections of regular polytopes.  An important step is Corollary \ref{cor:configured-star}, which shows that a certain class of star graphs in $\RR^N$ have invariant properties.  In particular, their projections to $\CC$ satisfy (\ref{one}) at the central vertex with $\ga$ independent of the position of the star. By extending the class of invariant star graphs in Section \ref{sec:inv-polytopes}, we describe a more general family of \emph{invariant frameworks} which define solutions to (\ref{one}).  For example, we show that it is possible to place a complete graph on any number of vertices in $\RR^3$ in an invariant way (Corollary \ref{cor:inv-complete-graph}).  By this we mean that the projections to $\CC$ of the vertices satisfy (\ref{one}) with $\ga$ constant, independently of any rigid motion or dilation of the framework.  More sophisticated invariant structures are described in Appendix \ref{sec:invariant-structures}.  When we change our perspective and begin with a combinatorial structure together with a spectral value $\ga$, these examples take on added significance as ``geometric objects". 

In Section \ref{sec:distance} we consider the converse question of how to lift a solution to (\ref{one}) at each vertex to an embedded star.  This will enable us to associate both distance and curvature to a graph $\Ga$ coupled to a field $\phi$, by considering at each vertex the best-fit polytope with regular vertex figure.  We call this the \emph{lifting problem}; its solution, expressed by Theorem \ref{thm:lift}, is at the heart of our study.  It transpires that when we lift to dimension three and only to this dimension, the solution is the unique minimum of a natural functional determined by the field $\phi$.  This \emph{energy} functional plays an important role in the formulation of our elementary universe in Section \ref{sec:particles} (Definition \ref{def:energy}).   

Curvature is defined in Section \ref{sec:curvature} by analogy with the well-known notion for polytopes embedded in a Euclidean space, in terms of angular deficiency.  A different, \emph{edge-curvature}, measures the angular tilt between the axes of adjacent vertex figures.  This leads to analogues in a discrete setting of sectional and Ricci curvatures.   

An important construction is that of a polynomial invariant associated to a finite connected graph $\Ga$. This is defined to be the least degree univariate polynomial $p_{\ga}$ in the spectral parameter $\ga$ lying in the ideal generated by the equations (\ref{one}) with $\ga$ now constant and complex, after we have taken into account the normalization $\phi \mapsto \la \phi +\mu$.  The geometric spectrum occurs as real roots of this polynomial, however, there may be other real roots that are not in the spectrum.  In Section \ref{sec:spectrum}, we use Gr\"obner bases to find $p_{\ga}$ for a number of examples. 

Our ultimate objective is to describe an elementary universe populated entirely by graphs, from which geometry and dynamics emerge.  For an individual graph, its geometry is implicit in the geometric spectrum.  However, the nature of this geometry (which spectral value applies) should only become manifest upon correlation with another graph:  it is a relative concept.  Dynamics corresponds to change which occurs when two graphs combine, a graph mutates, or vertices and edges are created or annihilated.  With an appropriate definition of thermal time, the universe is then endowed with local geometry and time.

\section{Elementary observations on graph colourings} \label{sec:colourings}
If a vertex $x$ has degree one, that is, it has precisely one neighbour $y$, then if $\phi$ is any function with $\phi (y) \neq \phi (x)$, equation (\ref{one}) is satisfied at $x$, with $\ga (x) = 1$.  More generally, if we can colour the vertices of the graph with two colours in such a way that every vertex is adjacent to at most one vertex of a different colour, then once more (\ref{one}) is satisfied with $\ga (x) = n(x)$ for all $x\in V$.  To see this, we just associate the values $0$ and $1$ to the two colours.  This is easily generalised to the following.

\begin{lemma} \label{lem:col-1} Let $\Ga = (V,E)$ be a graph whose vertices are assigned two colours, $0$ and $1$ say, and let $\phi (x) =$ \emph{colour of vertex} $x$, for each $x\in V$.  Suppose $\Ga$ is coloured in such a way that for some function $k(x)$, \emph{either} each vertex $x$ is adjacent to precisely $k(x)$ of a different colour, \emph{or} each vertex is adjacent to no vertex of a different colour.  Then $\phi$ satisfies {\rm (\ref{one})} with $\ga (x) = n(x)/k(x)$ for all $x\in V$.
\end{lemma}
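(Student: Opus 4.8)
The plan is to verify equation~(\ref{one}) directly at an arbitrary vertex $x$ by a short case analysis, using the special structure of a two-valued function. Since $\phi$ takes only the values $0$ and $1$, every difference $\phi(y) - \phi(x)$ with $y \sim x$ is one of $0$, $1$, or $-1$: it is $0$ precisely when $y$ has the same colour as $x$, and $\pm 1$ precisely when $y$ has a different colour. Writing $m(x)$ for the actual number of neighbours of $x$ of the opposite colour, we compute both sides of (\ref{one}) in terms of $m(x)$. On the left, $\sum_{y\sim x}(\phi(y)-\phi(x)) = \pm m(x)$ (the sign being $+$ if $\phi(x)=0$ and $-$ if $\phi(x)=1$), so $\big(\sum_{y\sim x}(\phi(y)-\phi(x))\big)^2 = m(x)^2$. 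On the right, $(\phi(x)-\phi(y))^2$ is $1$ for each opposite-coloured neighbour and $0$ otherwise, so $\sum_{y\sim x}(\phi(x)-\phi(y))^2 = m(x)$.

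Thus at each vertex $x$ equation~(\ref{one}) reduces to the scalar identity $\frac{\ga(x)}{n(x)}\, m(x)^2 = m(x)$. The hypothesis is that either $m(x) = k(x)$ for every $x$, or $m(x) = 0$ for every $x$. In the first case, dividing by $m(x) = k(x)$ (which we may do provided $k(x) \neq 0$, so that the stated value $\ga(x) = n(x)/k(x)$ makes sense) gives $\ga(x) = n(x)/k(x)$, exactly as claimed. In the degenerate second case $m(x) \equiv 0$, both sides vanish identically for any choice of $\ga$, so (\ref{one}) holds trivially and in particular holds with $\ga(x) = n(x)/k(x)$; this is consistent with the fact that then $\phi$ is constant on each connected component and we are in the situation of equation~(\ref{holo}).

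There is essentially no obstacle here: the only point requiring a word of care is the tacit assumption $k(x) \geq 1$, which is implicit in the statement since $\ga(x) = n(x)/k(x)$ is written as a value in $\RR$; when this holds the division is legitimate and the identity $\frac{\ga(x)}{n(x)} m(x)^2 = m(x)$ is immediate. One should also note that the second alternative in the hypothesis is exactly the case where the two colour classes are unions of connected components, so no genuine colouring constraint is being imposed there. I would present the proof as the two-line computation of the two sides above, followed by the two-case dichotomy, remarking that this generalises the two observations preceding the lemma (degree-one vertices, and proper-like colourings with $k(x)=1$) by allowing $k(x)$ to vary.
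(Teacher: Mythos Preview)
Your proof is correct and is exactly the direct verification the paper has in mind: the lemma is stated without proof, merely prefaced by ``This is easily generalised to the following,'' and your computation of both sides of (\ref{one}) in terms of the number $m(x)$ of opposite-coloured neighbours is precisely that easy generalisation.

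One small remark on your reading of the hypothesis: you treat the ``either \ldots\ or'' as a global dichotomy (all vertices satisfy $m(x)=k(x)$, or all satisfy $m(x)=0$), whereas the preceding paragraph of the paper --- which handles the case $k(x)\equiv 1$ with some vertices having $m(x)=1$ and others $m(x)=0$ --- suggests the intended reading is per-vertex. This does not affect your argument at all, since your computation is local and works vertex by vertex in either interpretation; you might simply rephrase the case split as ``at each vertex $x$, either $m(x)=k(x)$ or $m(x)=0$'' to match the paper's intent.
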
  

So solutions to (\ref{one}) arise from particular $2$-colourings of the graph.  But what about $3$-colourings, or $m$-colourings?  

Consider the three colours $\{ 0,1, \frac{1}{2} + \frac{\sqrt{3}}{2}\ii\}$, corresponding to the positions of the vertices of an equilateral triangle in the plane.  Now, whatever vertex $x$ we take, if it is adjacent to precisely one of each of the other two values, equation (\ref{one}) is satisfied with $\ga = n(x)/3$.  Similarly, a colouring using four colours with this property (with $\ga$ replaced by $n(x)/4$) is given by the four complex numbers $\{ 0, 1, \ii , 1 + \ii\}$.  These points correspond to the projections of the vertices of a regular tetrahedron in $\RR^3$ onto the complex plane.  

A result of Eastwood and Penrose \cite{Ea-Pe}, affirms that the co\"ordinates of any orthogonal projection of the vertices of a regular simplex in $\RR^N$ into $\CC$, provide $N+1$ colours that can be used to generate solutions to (\ref{one}).  Specifically, let $\{ v_1, v_2, \ldots , v_{N+1}\}$ be the vertices of a regular $N$-simplex in $\RR^N$ and let $P:\RR^N \ra \CC$ be any orthogonal projection.  Let $S:=\{ z_1, z_2, \ldots , z_{N+1}\}$, where $z_k = P(v_k)$ ($k = 1, \ldots , N+1$).

\begin{lemma} \label{lem:col-1}  Let $\Ga = (V,E)$ be a graph whose vertices are assigned $N+1$ colours taken from the set $S$.  Let $\phi (x) =$ \emph{colour of vertex} $x$.   Suppose $\Ga$ is coloured in such a way that 
\emph{either} each vertex is adjacent to precisely one of each of the other $N$ colours, \emph{or} each vertex is adjacent to no vertex of a different colour.  Then $\phi$ satisfies {\rm (\ref{one})} with $\ga (x) = n(x)/(N+1)$ for all $x\in V$.
\end{lemma}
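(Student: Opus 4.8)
The plan is to verify equation~(\ref{one}) directly at an arbitrary vertex $x$, splitting into the two cases allowed by the hypothesis. The second case (where $x$ has no neighbour of a different colour) is immediate: all neighbours $y$ of $x$ satisfy $\phi(y)=\phi(x)$, so both sides of~(\ref{one}) vanish, and any value of $\ga(x)$ works --- in particular $\ga(x)=n(x)/(N+1)$. So the real content is the first case, where $x$ is adjacent to precisely one vertex carrying each of the $N$ colours distinct from $\phi(x)$. Write $\phi(x) = z_j = P(v_j)$ for the appropriate index $j$, and label the neighbours $y_k$ so that $\phi(y_k) = z_k = P(v_k)$ for $k \neq j$; then $n(x) = N$ among the differently-coloured neighbours, but $x$ may also have neighbours of its own colour contributing zero to both sums, so I should be slightly careful and let $n(x)$ be arbitrary $\ge N$. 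Actually re-reading the hypothesis, "adjacent to precisely one of each of the other $N$ colours" is most naturally read as $n(x)=N$; I would adopt that reading, noting that extra same-colour neighbours change nothing since they contribute $0$ to each side.

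**Next I would reduce to a statement purely about the simplex.** Because $P$ is $\RR$-linear, $z_k - z_j = P(v_k - v_j)$, and the two quantities to compare at $x$ become
$$
\Big(\sum_{k\neq j}(z_k - z_j)\Big)^2 = \Big(P\Big(\sum_{k\neq j}(v_k - v_j)\Big)\Big)^2,
\qquad
\sum_{k\neq j}(z_k - z_j)^2 = \sum_{k\neq j}\big(P(v_k - v_j)\big)^2 .
$$
Let $c = \frac{1}{N+1}\sum_{i=1}^{N+1} v_i$ be the centroid of the simplex; then $\sum_{k\neq j}(v_k - v_j) = \big(\sum_i v_i\big) - (N+1)v_j = (N+1)(c - v_j)$. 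So the left side is $(N+1)^2\big(P(c-v_j)\big)^2$. For the right side, I would invoke the two defining metric features of a regular $N$-simplex: all edges have a common length $\ell$, say $|v_k - v_i|^2 = \ell^2$ for $k\neq i$, and all vertices lie at a common distance from the centroid, with $|v_i - c|^2 = \frac{N}{N+1}\,\ell^2$ (a standard computation: $\ell^2 = |v_k - v_i|^2$ expands, by symmetry, to $2|v_i-c|^2 - 2\langle v_i - c, v_k - c\rangle$, and summing over $k\neq i$ with $\sum_{k}(v_k - c)=0$ gives $\langle v_i-c,\sum_{k\neq i}(v_k-c)\rangle = -|v_i-c|^2$, whence $N\ell^2 = 2N|v_i-c|^2 + 2|v_i-c|^2$).

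**The key computational step** is then to show that for any orthogonal (in particular, norm-preserving-on-its-image is not needed --- just $\RR$-linear with $P^\ast P$ a rank-$2$ idempotent-like map) projection $P:\RR^N\to\CC$, one has the identity
$$
\sum_{k\neq j}\big(P(v_k - v_j)\big)^2 = \big(P(v_j - c)\big)^2 \quad\text{up to the factor we expect},
$$
more precisely that $\sum_{k\neq j}(z_k - z_j)^2 = (N+1)\big(P(c - v_j)\big)^2$; combined with the left-side computation this gives exactly $\ga(x) = n(x)/(N+1) = N/(N+1)$ as claimed. To prove the displayed identity I would write $w_k := v_k - v_j$ for $k\neq j$ and use that $P$ composed with squaring is a quadratic form: if $e_1,\dots,e_N$ is an orthonormal basis of $\RR^N$ chosen so that $P$ is projection onto the plane spanned by the first two coordinates and $z = Pv = v^{(1)} + \ii v^{(2)}$, then $(Pv)^2 = (v^{(1)})^2 - (v^{(2)})^2 + 2\ii v^{(1)}v^{(2)}$, a symmetric bilinear-in-$v$ expression; so $\sum_k (Pw_k)^2$ is a symmetric bilinear form evaluated on $\sum_k w_k \otimes w_k$. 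Here I expect to use the crucial fact that for a regular simplex $\sum_{i=1}^{N+1}(v_i - c)\otimes(v_i - c) = \frac{N}{N+1}\ell^2 \,\mathrm{Id}_{\RR^N}$ (the vertices of a regular simplex form a tight frame, i.e. their outer-product sum is a multiple of the identity); then $\sum_{k\neq j} w_k\otimes w_k = \sum_{i}(v_i-c)\otimes(v_i-c) + \text{cross terms in }(v_j-c)$, and since any symmetric bilinear form $Q$ with $Q(\mathrm{Id}) = \mathrm{tr}\,Q$ pairs against the identity to give the trace of $(Pv)^2$-form, which is $\sum_\alpha\big((e_\alpha^{(1)})^2 - (e_\alpha^{(2)})^2 + 2\ii\cdots\big) = 1 - 1 + 0 = 0$ because $P$ is an \emph{orthogonal} projection onto a $2$-plane --- this is exactly where orthogonality of the projection enters. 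Carrying the bookkeeping through, the identity-part contributes zero and only the $(v_j - c)$ terms survive, yielding the factor $N+1$.

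**The main obstacle** I anticipate is purely notational/organisational: keeping straight the distinction between $P$ as a map $\RR^N\to\CC$ versus $\RR^N\to\RR^2$, and correctly expanding $\big(P(\sum w_k)\big)^2$ versus $\sum\big(P(w_k)\big)^2$ via the tight-frame identity without sign errors --- there is no conceptual difficulty, since everything reduces to the two elementary facts that (i) a regular simplex is a tight frame and (ii) the squaring map $z\mapsto z^2$ on the image of an orthogonal $2$-plane projection has trace zero against the identity. I would present it cleanly by first isolating the lemma "$\sum_{i=1}^{N+1}(z_i - \bar z)^2 = 0$ where $z_i = P(v_i)$ and $\bar z = \frac1{N+1}\sum z_i$" (this is essentially the content of \cite{Ea-Pe} and already implicit in the excerpt's discussion of the triangle and tetrahedron colours), and then deriving the per-vertex identity from it by the shift $w_k = v_k - v_j$. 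With that lemma in hand the proof of Lemma~\ref{lem:col-1} is two or three lines.
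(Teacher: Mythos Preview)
Your proposal is correct, and the strategy --- reduce to the identity $\sum_{k\neq j}(z_k-z_j)^2=(N+1)\big(P(c-v_j)\big)^2$ via the tight-frame property $\sum_i(v_i-c)\otimes(v_i-c)=\lambda\,\mathrm{Id}$ together with the observation that the quadratic form $v\mapsto(Pv)^2$ has trace zero for an orthogonal projection onto a $2$-plane --- is sound and yields exactly $\ga(x)=n(x)/(N+1)$.

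The paper itself does not prove this lemma directly: it simply invokes the Eastwood--Penrose identity $(z_1+\cdots+z_{N+1})^2=(N+1)(z_1{}^2+\cdots+z_{N+1}{}^2)$, stated without proof just before the lemma and again at the opening of Section~\ref{sec:orthographic}. Your argument is in effect a self-contained derivation of that identity (in its shifted form centred at $v_j$). It also coincides with the paper's later machinery of \emph{configured stars} (Lemma~\ref{lem:configured-star} and Corollary~\ref{cor:configured-star}): the condition $UU^t=\rho I_{N-1}$ there is precisely your tight-frame condition, and the proof of Corollary~\ref{cor:configured-star} uses exactly the trace-zero cancellation you identify. So the two routes are the same mathematics in different dress; yours has the virtue of being explicit where the paper defers to a citation.

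Two small inessential slips: your constants $|v_i-c|^2=\tfrac{N}{N+1}\ell^2$ and $\sum_i(v_i-c)\otimes(v_i-c)=\tfrac{N}{N+1}\ell^2\,\mathrm{Id}$ are each off by a factor of $2$ (the correct values are $\tfrac{N}{2(N+1)}\ell^2$ and $\tfrac{\ell^2}{2}\,\mathrm{Id}$, as your own parenthetical computation in fact shows). Since the $\mathrm{Id}$-contribution vanishes under the trace-zero form, the error does not propagate.
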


\begin{example} \label{ex:bipartite} {\rm A bipartite graph $K_{m,n}$ is a graph with $m+n$ vertices which can be separated into two sets: $\{x_1, \ldots , x_m\}$ and $\{y_1, \ldots , y_n\}$, with the property that for each $j = 1, \ldots , m$, we have $x_j\sim y_r$ for all $r = 1, \ldots , n$; $x_j\not\sim x_k$ for all $j\neq k$; $y_r\not\sim y_s$ for all $r\neq s$.  The bipartite graph $K_{n,n}$ admits solutions deriving from both of the above lemmas.  In the first case, we can colour the vertices $x_1, \ldots , x_n$ with $0$ and the vertices $y_1, \ldots , y_n$ with $1$.  Then the corresponding function $\phi$ solves (\ref{one}) with $\ga = 1$.  In the second case, we let both $x_1, \ldots , x_n$ and $y_1, \ldots , y_n$ to be the orthogonal projections $z_1, \ldots , z_n$ of the vertices of a regular $(n-1)$-simplex in $\RR^{n-1}$, in any order.  Once more (\ref{one}) is satisfied with $\ga = 1$. 
\medskip
\begin{center}
\setlength{\unitlength}{0.254mm}
\begin{picture}(149,92)(115,-156)
        \allinethickness{0.254mm}\path(115,-140)(115,-80) % Plain Solid Line
        \allinethickness{0.254mm}\path(175,-80)(175,-140) % Plain Solid Line
        \allinethickness{0.254mm}\path(175,-140)(115,-80) % Plain Solid Line
        \allinethickness{0.254mm}\path(115,-140)(140,-115) % Plain Solid Line
        \allinethickness{0.254mm}\path(150,-105)(175,-80) % Plain Solid Line
        \allinethickness{0.254mm}\path(235,-140)(235,-80) % Plain Solid Line
        \allinethickness{0.254mm}\path(175,-80)(235,-140) % Plain Solid Line
        \allinethickness{0.254mm}\path(175,-140)(200,-115) % Plain Solid Line
        \allinethickness{0.254mm}\path(210,-105)(235,-80) % Plain Solid Line
        \allinethickness{0.254mm}\path(115,-140)(145,-125) % Plain Solid Line
        \allinethickness{0.254mm}\path(235,-80)(205,-95) % Plain Solid Line
        \allinethickness{0.254mm}\path(190,-105)(180,-110) % Plain Solid Line
        \allinethickness{0.254mm}\path(170,-115)(160,-120) % Plain Solid Line
        \allinethickness{0.254mm}\path(115,-80)(150,-100) % Plain Solid Line
        \allinethickness{0.254mm}\path(235,-140)(205,-125) % Plain Solid Line
        \allinethickness{0.254mm}\path(190,-120)(180,-115) % Plain Solid Line
        \allinethickness{0.254mm}\path(170,-110)(160,-105) % Plain Solid Line
        \put(115,-156){\shortstack{$x_1$}} % Plain Text
        \put(175,-156){\shortstack{$x_2$}} % Plain Text
        \put(235,-156){\shortstack{$x_3$}} % Plain Text
        \put(115,-72){\shortstack{$y_1$}} % Plain Text
        \put(175,-72){\shortstack{$y_2$}} % Plain Text
        \put(235,-72){\shortstack{$y_3$}} % Plain Text
         % Set color to black again (default font color)
\end{picture}
\end{center}
\medskip

We can explicitly find all solutions to (\ref{one}) on the bipartite graph $K_{33}$ illustrated above when $\ga = 1$.  If we first normalize so that $x_1=0$ and $y_1=1$, then the solutions are given by $x_2=x_3=0$, $y_3=\la$ an arbitrary real parameter, $y_2= \frac{1\pm \ii\sqrt{3}}{2} + \la \left(\frac{1\mp \ii\sqrt{3}}{2}\right)$.  As $\la$ varies from $1$ to $0$, the solution interpolates between $y_1=y_2=y_3=1$ and $y_1=1, y_2=\frac{1\pm \ii \sqrt{3}}{2}, y_3=0$, which are the vertices of an equilateral triangle.

If we now normalize so that $x_1=0$ and $x_2 = 1$, then necessarily $x_3 = \frac{1\pm \ii \sqrt{3}}{2}$.  We then have a $2$-parameter family of solutions given by arbitrarily prescribing $y_2=\la$, $y_3= \mu$ and then
$y_1=\la\left(\frac{1\pm \ii \sqrt{3}}{2}\right) + \mu\left(\frac{1\mp \ii \sqrt{3}}{2}\right)$.  In the case when $\mu = 0$ and $\la = 1$, we once more have the vertices of an equilateral triangle, corresponding to the hypotheses of Lemma \ref{lem:col-1}.  It turns out that for $K_{33}$,  $\ga = 1$ is the only possible constant value of $\ga$ for which (\ref{one}) has non-constant solutions. This will be justified in Section \ref{sec:spectrum}. }
\end{example}

\section{Regular cyclic sequences} \label{sec:cyclic}
We shall call a solution to (\ref{one}) on a cyclic graph with $\ga$ constant, a \emph{regular cyclic sequence}.  Real regular cyclic sequences have a particularly simple construction from polynomial equations with integer coefficients, as we explain below.  First we deal with the trivial case when $\ga =2$.

Let $(x_0,x_1,x_2,\ldots ,x_{N-1}, x_N=x_0)$ be a regular cyclic sequence which solves (\ref{one}) with $\ga =2$.  Consider a particular segment of three successive terms $x_{k-1}, x_k, x_{k+1}$.  On applying (\ref{one}) at the vertex $x_k$, we obtain the equation:
\begin{eqnarray*}
 & & (x_{k+1}+x_{k-1}-2x_k)^2 = (x_{k-1}-x_k)^2 + (x_{k+1}-x_k)^2 \\
 & \Leftrightarrow &(x_k-x_{k-1})(x_k-x_{k+1})=0\,,
\end{eqnarray*}
so that necessarily, $x_k$ is equal to one of its neighbours.  Conversely, as described in the previous section, if every vertex is adjacent to at most one of a different value, then the sequence solves (\ref{one}) with $\ga =2$.  Thus any cyclic graph of order $\geq 4$ admits such sequences:  \emph{the cyclic graph is coloured by numbers (real or complex) in such a way that at each vertex, at least one of its neighbours carries the same colour.}  This amounts to colouring the graph so that connected segments of order at least two have the same colour.  We now consider the general case.  In what follows, we refer to \emph{normalization} as the freedom $\phi \mapsto \la\phi + \mu \ (\la , \mu \in \CC)$.  By a \emph{real} regular cyclic sequence, we mean one in which every term is real under some normalization. 

\begin{theorem} \label{thm:cyclic} {\rm (Construction of real regular cyclic sequences)}:  Take any polynomial $a_nx^n+a_{n-1}x^{n-1} + \cdots + a_1x + a_0$ with integer coefficients all strictly positive.  Multiply the polynomial by $x+1$ to obtain the new polynomial 
\begin{eqnarray*}
p(x) & : = & b_{n+1}x^{n+1} + b_nx^n + \cdots + b_1x + b_0 \\
 & = & a_nx^{n+1} + (a_n+a_{n-1})x^n + \cdots + (a_1+a_0)x + a_0\,.
\end{eqnarray*}
  Let $x = y$ be any real root of $p(x)$.  Then a cyclic sequence 
$$
(x_0, x_1, x_2, \ldots , x_{N-1}, x_N=x_0)
$$ 
of order $N = 2\sum_ka_k$ is constructed by arbitrarily prescribing $x_0$ and then requiring increments $y_{\ell} = x_{\ell}-x_{\ell -1}$ of successive terms to be taken from the set $\{ 1, y, y^2, \ldots , y^{n+1}\}$ in such a way that each increment $y^k$ occurs precisely $b_k$ times and any two adjacent increments have powers that differ by precisely one.  This is always possible and up to these constraints, the ordering is arbitrary.  The constant $\ga$ in {\rm (\ref{one})} is given by $\ga = 2(1+y^2)/(1-y)^2$.  

Conversely, up to a multiple, addition of a constant and cyclic permutations, any real regular cyclic sequence with $\ga\neq 2$ or $1$\ arises this way.  The real regular cyclic sequences with $\ga = 2$ are characterized as those made up of connected segments of order $\geq 2$ on which the sequence is constant; those with $\ga = 1$ oscillate and up to normalization are equivalent to $(0,1,0,1, \ldots , 0,1)$.
\end{theorem}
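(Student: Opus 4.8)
The plan is to reduce equation (\ref{one}) on the cyclic graph $C_N$ to a recursion on the successive increments $y_k := x_k - x_{k-1}$, to recognise the resulting constraint as describing a closed nearest-neighbour walk on $\ZZ$, and then to translate the combinatorics of such walks into the stated polynomial recipe.

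Since $n(x)\equiv 2$ on a cyclic graph, I would first rewrite (\ref{one}) at $x_k$ as $\frac{\ga}{2}(y_{k+1}-y_k)^2 = y_k^2+y_{k+1}^2$, equivalently $(\frac{\ga}{2}-1)(y_k^2+y_{k+1}^2)=\ga\,y_ky_{k+1}$, and dispose of the degenerate cases. When $\ga=2$ this reads $(x_k-x_{k-1})(x_k-x_{k+1})=0$, so no term differs from both its neighbours; hence the sequence decomposes into maximal constant runs, each necessarily of length $\ge 2$, which is the asserted characterization (already carried out before the theorem). When $\ga\neq 2$, if one increment vanishes the recursion forces its neighbours to vanish and so, round the cycle, $\phi$ is constant; thus a non-constant regular cyclic sequence with $\ga\neq 2$ has all increments nonzero.

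Assuming now $\ga\neq2$ and all $y_k\neq0$, I set $r_k:=y_{k+1}/y_k$, which is a root of the reciprocal quadratic $(\frac{\ga}{2}-1)r^2-\ga r+(\frac{\ga}{2}-1)=0$; its discriminant is $4(\ga-1)$ and its two roots have product $1$. Reality of the $r_k$ forces $\ga\ge1$, and if $\ga\neq1$ the roots are distinct real numbers $y,1/y$ with $y\notin\{0,\pm1\}$, so $\ga=2(1+y^2)/(1-y)^2$ follows from $\frac{\ga}{2}(y-1)^2=1+y^2$. Then $y_{k+1}=y_k y^{\pm1}$, hence $y_k=y_1y^{m_k}$ with $m_1=0$ and $m_{k+1}=m_k\pm1$, so $(m_1,\dots,m_N,m_{N+1}=m_1)$ is a closed $\pm1$-walk on $\ZZ$ and $N$ is even. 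Shifting so that $\min_k m_k=0$ (absorbed into $y_1$, hence into the normalization), the walk takes values in $\{0,\dots,n+1\}$ with $n+1:=\max_k m_k$, and the closure condition $\sum_{k=1}^N y_k=x_N-x_0=0$ becomes $\sum_{j=0}^{n+1}c_j y^j=0$ where $c_j:=\#\{k:m_k=j\}$. If $n+1=1$ the walk is forced to alternate, $\sum y^{m_k}=0$ forces $y=-1$, i.e. $\ga=1$, contrary to hypothesis, so $n\ge1$. The key combinatorial step is then to identify $(c_0,\dots,c_{n+1})$: letting $u_j$ be the number of up-steps across $\{j,j+1\}$ (equal, by closedness, to the number of down-steps there, and $\ge1$ since the walk attains both levels $0$ and $n+1$), counting the steps ending at each level gives $c_j=u_{j-1}+u_j$ with $u_{-1}=u_{n+1}=0$. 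Putting $a_j:=u_j$ (strictly positive integers) one gets $c_j=a_j+a_{j-1}$, $\sum_j c_j=2\sum_j a_j=N$, and $\sum_j c_j x^j=(x+1)a(x)$ with $a(x):=\sum_{j=0}^n a_j x^j$; thus closure is exactly $p(y)=0$ for $p(x):=(x+1)a(x)$, the polynomial of the statement with $b_j=a_j+a_{j-1}$. For the converse direction, given positive integers $a_0,\dots,a_n$ I would take the multigraph on $\{0,\dots,n+1\}$ with $2a_j$ edges between $j$ and $j+1$: it is connected with all degrees even, so its Eulerian circuits exist and are precisely the admissible cyclic orderings of the increments $y^{m_k}$; substituting back, $\frac{\ga}{2}(y^{m+1}-y^m)^2=(y^m)^2+(y^{m+1})^2$ reduces to $\frac{\ga}{2}(y-1)^2=1+y^2$ at every vertex, and $\sum_\ell y_\ell=p(y)=0$ closes the cycle. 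Finally, for $\ga=1$ the forced ratio $r\equiv-1$ gives $y_k=(-1)^{k-1}y_1$, so $N$ is even and, after normalization, the sequence is $(0,1,0,1,\dots,0,1)$.

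The hard part will be the converse bookkeeping: passing from ``a pair of ratios whose product is $1$'' to a single base $y$ with integer exponents tracing a closed walk, performing the shift to the range $\{0,\dots,n+1\}$ without spoiling the normalization, and above all proving the lattice-path identity $c_j=a_j+a_{j-1}$ for occupation counts together with the Eulerian-circuit argument that makes an admissible ordering both available and (modulo the stated constraints) arbitrary. Once these are in place the forward direction is a direct substitution.
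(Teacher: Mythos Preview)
Your plan is correct and follows the same route as the paper: reduce to increments, show that successive ratios lie in a two-element set $\{y,1/y\}$ determined by $\ga$, and translate the resulting exponent walk on $\ZZ$ into the polynomial condition $p(y)=0$. The paper reaches the two-value conclusion via a separate lemma giving the three-term recurrence $y_k\in\{y_{k-1}^2/y_{k-2},\,y_{k-2}\}$ (obtained by eliminating $\ga$ between the equations at two consecutive vertices) and then argues somewhat ad hoc that the exponents organise into the shape~(\ref{real-poly}); you get to the same place in one stroke by observing that every ratio $r_k=y_{k+1}/y_k$ satisfies the \emph{same} reciprocal quadratic, so $r_k\in\{y,1/y\}$ directly. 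Your occupation-count identity $c_j=u_{j-1}+u_j$ (whence $\sum_j c_j x^j=(x+1)\sum_j u_j x^j$) and the Eulerian-circuit formulation for existence and arbitrariness of admissible orderings are genuinely cleaner than the paper's treatment: the paper never makes the ``This is always possible'' clause explicit, and your multigraph-with-$2a_j$-parallel-edges argument supplies exactly that. So: same approach, tighter execution.
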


We refer to the increment $y$ in the above theorem, as the \emph{fundamental increment associated to the real cyclic sequence}.   

\begin{remark}  {\rm Since any root $y$ must be strictly negative and adjacent powers differ by one, it follows that a real regular cyclic sequence must oscillate.  The length of the sequence is given by $\sum_kb_k = 2\sum_ka_k$, so that a non-trivial sequence can only occur on a cyclic graph of even order (which is also a consequence of oscillation).  }
\end{remark}

\begin{example} \label{ex:real-hexagon}  {\rm If we take for our starting polynomial $x+2$, then multiplication by $x+1$ gives the polynomial $x^2+3x+2$ with real root $x = -2$.  We can now arrange the powers of this root with appropriate multipicity to give the sequence of increments $(1,y,1,y,y^2,y) = (1,-2,1, -2, 4, -2)$.  We construct a real regular cyclic sequence of order $6$ by first setting $x_0 = 0$ and then proceeding so that $x_1-x_0 = 1, x_2 - x_1 = - 2$ and so on.  We thereby obtain the sequence $(0,1,-1,0, -2, 2)$ on a cyclic graph of order $6$.  Since the sequence is only defined up to multiple, addition of a constant and cyclic permutations, we can normalize the sequence in such a way that the minimum value is $0$ and that this occurs for the first term: $(0, 4, 2, 3, 1, 2)$. 

\medskip
\begin{center}
\setlength{\unitlength}{0.254mm}
\begin{picture}(143,117)(45,-156)
        \allinethickness{0.254mm}\path(85,-140)(140,-140) % Plain Solid Line
        \allinethickness{0.254mm}\path(85,-55)(140,-55) % Plain Solid Line
        \allinethickness{0.254mm}\path(140,-55)(165,-100) % Plain Solid Line
        \allinethickness{0.254mm}\path(165,-100)(140,-140) % Plain Solid Line
        \allinethickness{0.254mm}\path(85,-140)(60,-100) % Plain Solid Line
        \allinethickness{0.254mm}\path(60,-100)(85,-55) % Plain Solid Line
        \allinethickness{0.254mm}\special{sh 0.3}\put(85,-140){\ellipse{4}{4}} % Shade Dot
        \allinethickness{0.254mm}\special{sh 0.3}\put(140,-140){\ellipse{4}{4}} % Shade Dot
        \allinethickness{0.254mm}\special{sh 0.3}\put(165,-100){\ellipse{4}{4}} % Shade Dot
        \allinethickness{0.254mm}\special{sh 0.3}\put(140,-55){\ellipse{4}{4}} % Shade Dot
        \allinethickness{0.254mm}\special{sh 0.3}\put(85,-55){\ellipse{4}{4}} % Shade Dot
        \allinethickness{0.254mm}\special{sh 0.3}\put(60,-100){\ellipse{4}{4}} % Shade Dot
        \put(80,-156){\shortstack{$0$}} % Plain Text
        \put(140,-156){\shortstack{$4$}} % Plain Text
        \put(170,-106){\shortstack{$2$}} % Plain Text
        \put(140,-51){\shortstack{$3$}} % Plain Text
        \put(85,-51){\shortstack{$1$}} % Plain Text
        \put(45,-106){\shortstack{$2$}} % Plain Text
         % Set color to black again (default font color)
\end{picture}
\end{center}
\medskip
}
\end{example} 

\begin{example} {\rm  Irrational sequences arise from irrational roots.  For example, let us start with the polynomial $x^2+4x+1$, with root $x = - 2 + \sqrt{3}$.  On multiplying by $x+1$ we obtain the polynomial $x^3+5x^2+ 5x+1$.  A suitable sequence of increments is given by $(1,y,y^2,y,y^2,y,y^2,y,y^2,y^3,y^2,y)$ with $y = - 2 + \sqrt{3}$.  On calculating, we can now construct a real regular cyclic sequence of order $12$; explicitly, it is given by $(0,1,-1+\sqrt{3}, 6-3\sqrt{3}, 4-2\sqrt{3}, 11-6\sqrt{3}, 9-5\sqrt{3}, 16 - 9\sqrt{3}, -10 + 6\sqrt{3}, -3 + 2\sqrt{3}, -5 + 3\sqrt{3}, 2-\sqrt{3})$.  Since the absolute value of the root is $<1$, all terms of this oscillating sequence lie in the interval $[0,1]$.  The value of the constant $\ga$ in (\ref{one}) is given by $\ga = 4/3$. }
\end{example}

In order to prove Theorem \ref{thm:cyclic} we first of all establish a recurrence relation that determines a subsequent term of the sequence in terms of three previous terms.

\begin{lemma} \label{lem:recurrence}  Let $(x_0, x_1, \ldots , x_{N-1}, x_N=x_0)$ be a non-constant regular cyclic sequence satisfying {\rm (\ref{one})} with $\ga\neq 2$, then the increments $y_k = x_{k+1} - x_k$ satisfy the recurrence relation:
$$
y_k = \left\{ \begin{array}{rl} {\rm either} \ & y_{k-1}{}^2/y_{k-2} \\
{\rm or} \ & y_{k-2}\,.
\end{array} \right.
$$
Conversely, any sequence of increments satisfying these relations determines a real regular cyclic sequence.
\end{lemma}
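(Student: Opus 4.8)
The plan is to rewrite (\ref{one}) at a vertex of the cycle purely in terms of the increments $y_k=x_{k+1}-x_k$ and then to read off the recurrence from Vieta's formulas together with a symmetry of the resulting quadratic. Since $x_k$ has degree two, with neighbours $x_{k-1}$ and $x_{k+1}$, equation (\ref{one}) at $x_k$ reads
\[
\frac{\ga}{2}\,(y_k-y_{k-1})^2=y_{k-1}{}^2+y_k{}^2,
\]
which, on setting $c:=\ga/2-1$, is equivalent to $F(y_{k-1},y_k)=0$, where $F(a,b):=c\,(a^2+b^2)-\ga\,ab$. Two features of $F$ drive everything: it is symmetric, $F(a,b)=F(b,a)$, and $\ga\neq 2$ means $c\neq 0$, so for each fixed $a$ the equation $F(a,t)=0$ is a genuine quadratic in $t$ whose two roots have product $a^2$ (the ratio of constant term to leading coefficient).

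First I would rule out vanishing increments. If $y_j=0$ for some $j$ then $F(y_{j-1},0)=c\,y_{j-1}{}^2=0$ forces $y_{j-1}=0$ (as $c\neq0$), and likewise $F(0,y_{j+1})=0$ forces $y_{j+1}=0$; propagating around the finite cycle makes every increment vanish, contradicting non-constancy. So all $y_k\neq0$. Now fix $k$. Applying the equation at $x_{k-1}$ gives $F(y_{k-2},y_{k-1})=0$, hence $F(y_{k-1},y_{k-2})=0$ by symmetry, so $y_{k-2}$ is a root of the quadratic $t\mapsto F(y_{k-1},t)$; applying the equation at $x_k$ says $y_k$ is also a root. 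Since the product of the two roots is $y_{k-1}{}^2$, either $y_k=y_{k-2}$ or the two roots are distinct and $y_k\,y_{k-2}=y_{k-1}{}^2$, i.e.\ $y_k=y_{k-1}{}^2/y_{k-2}$ (permissible since $y_{k-2}\neq0$). This is the asserted dichotomy. The same relation also forbids two equal nonzero consecutive increments, since $F(a,a)=-2a^2\neq0$, and the dichotomy then propagates $y_k\neq y_{k-1}$ along the sequence, so no degenerate pair occurs.

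For the converse I would run the argument in reverse. Given nonzero increments with $y_0\neq y_1$ (automatically the case for increments coming from a solution), the relation $F(y_0,y_1)=0$ pins down a single constant $\ga:=2(y_0{}^2+y_1{}^2)/(y_0-y_1)^2$, which is $\neq2$ since $y_0,y_1\neq0$. Suppose inductively $F(y_{k-2},y_{k-1})=0$ for this $\ga$. If $y_k=y_{k-2}$ then $F(y_{k-1},y_k)=F(y_{k-1},y_{k-2})=F(y_{k-2},y_{k-1})=0$ by symmetry; if $y_k=y_{k-1}{}^2/y_{k-2}$ then $y_{k-2}$ and $y_k$ are precisely the two roots of $t\mapsto F(y_{k-1},t)$ (same leading coefficient $c$, product of roots $y_{k-1}{}^2$), so again $F(y_{k-1},y_k)=0$. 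Hence (\ref{one}) holds at every vertex with the one constant $\ga$, and $x_j:=x_0+\sum_{i<j}y_i$ (the increments being those of a closed cyclic sequence, so $\sum_i y_i=0$) recovers a regular cyclic sequence, real when the increments are.

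The substance is the single observation that the vertex equation, expressed in the increments, is a symmetric quadratic form whose two roots multiply to $y_{k-1}{}^2$; the only points needing care are the degenerate ones — keeping $\ga\neq2$ so the quadratics do not collapse, the zero-increment propagation, and, in the converse, noting that $\ga$ is fixed once and for all by the initial pair so that its constancy is automatic. I do not anticipate any difficulty beyond this routine case-checking.
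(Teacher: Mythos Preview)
Your proof is correct. The paper takes a slightly different route: it normalizes a four-term segment $(x_{k-2},x_{k-1},x_k,x_{k+1})$ to $(k,0,x,y)$, evaluates (\ref{one}) at the two middle vertices to obtain two expressions involving $\ga$, eliminates $\ga$ explicitly to arrive at the quadratic $ky^2+(k-x)^2y-x(k-x)^2=0$ in $y$, and then factors it by hand to read off the two solutions. You stay in increment coordinates throughout, recognize the vertex equation as the symmetric form $F(a,b)=c(a^2+b^2)-\ga ab$, and extract the dichotomy from the observation that $y_{k-2}$ and $y_k$ are both roots of $t\mapsto F(y_{k-1},t)$, whose product is $y_{k-1}{}^2$ by Vieta. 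This is cleaner: it avoids the coordinate normalization and makes the structural reason for the two branches (symmetry plus product of roots) transparent, whereas the paper's elimination step obscures where the factorization comes from. The paper's normalization does yield the explicit formula $\ga=2(k^2+x^2)/(k+x)^2$ as a by-product, which is used later in the proof of Theorem~\ref{thm:cyclic}, but you recover the equivalent formula $\ga=2(y_0{}^2+y_1{}^2)/(y_0-y_1)^2$ in your converse argument anyway. For the converse the paper simply says ``direct computation''; your inductive argument via symmetry of $F$ is a clean way to carry that out.
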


\begin{proof}
Let $(x_0, x_1, \ldots , x_{N-1}, x_N=x_0)$ be a non-constant regular cyclic sequence satisfying {\rm (\ref{one})} with $\ga\neq 2$.  Consider a particular segment of the sequence consisting of four consecutive vertices: $(x_{k-2}, x_{k-1}, x_k, x_{k+1})$.  On normalising, we can suppose this segment equivalent to $(k,0,x,y)$, for some real numbers $k,x,y$.  Note that $k\neq 0$, for otherwise $x$ would have to be zero since if not, we would have $\ga =2$ at vertex $k-1$.  But then proceeding along the cycle, we would eventually encounter a non-zero value at a vertex, which would then imply $\ga =2$ at that vertex; a contradiction.  
On evaluating equation (\ref{one}) at the vertex $x_{k-1}$, we obtain:
\begin{equation} \label{a}
\ga = \frac{2(k^2+x^2)}{(k+x)^2}\,.
\end{equation}

Now evaluate (\ref{one}) at vertex $x_k$:
$$
\ga (y-2x)^2 = 2\big((y-x)^2+x^2\big)\,.
$$    
On eliminating $\ga$, we obtain the quadratic equation in $y$:
$$
ky^2+(k-x)^2y-x(k-x)^2 = 0\,.
$$
This gives the two possible values $x(k-x)/k$ and $x-k$ for $y$.  To recover the general case, we set $k = k_{k-2}-x_{k-1}$, $x=x_k-x_{k-1}$, $y=x_{k+1}-x_{k-1}$.  This gives the two values:
$$
x_{k+1} = \left\{ \begin{array}{l}\ds\frac{x_{k-1}(x_k-x_{k-1})+x_k(x_{k-2}-x_k)}{x_{k-2}-x_{k-1}} \\
x_k+x_{k-1}-x_{k-2}
\end{array}
\right.
$$
On subtracting $x_k$ from both sides, we obtain the recurrence relation for the increments in the statement of the lemma.  The converse can be shown by direct computation.
\end{proof}

\noindent \emph{Proof of Theorem {\rm \ref{thm:cyclic}}}.  Let $(x_0, x_1, \ldots , x_{N-1}, x_N=x_0)$ be a non-constant real regular cyclic sequence satisfying {\rm (\ref{one})} with $\ga\neq 2$.  From Lemma \ref{lem:recurrence}, the sequence of increments $(y_k=x_{k+1}-x_k)$ must alternate in sign, for otherwise, if we have two consecutive increments of the same sign, then all subsequent increments would have the same sign and the sequence $(x_k)$ would be monotone increasing or decreasing, which is impossible.  

First normalise so that $x_1-x_0 = 1, x_2-x_1 = x<0$ and consider the sequence of possible increments:
$$
\Big( 1,x,\left\{ \begin{array}{ll} x^2, & \left\{ \begin{array}{l} x^3, \\ x, \end{array} \right. \\ 1,  & \left\{ \begin{array}{l} 1/x, \\ x, \end{array} \right. \end{array} \right. \cdots \Big)
$$
Suppose first that $x=-1$, then we obtain the sequence of increments $(1,-1,1,-1,\ldots , 1, -1)$ which corresponds to the real regular cyclic sequence $(0,1,0,1, \ldots , 0,1)$ with $\ga  = 1$.  Furthermore, any non-constant regular cyclic sequence with $\ga  = 1$ must have this form, since if we take a segment $x_{k-1}\sim x_k\sim x_{k+1}$, then
$$
(x_k-x_{k-1}+x_k-x_{k+1})^2 = 2(x_k-x_{k-1})^2 + 2(x_k-x_{k+1})^2\ \Leftrightarrow \ (x_{k+1}-x_{k-1})^2=0\,,
$$
so that $x_{k+1} = x_{k-1}$.  Henceforth, suppose that $x\neq -1$.  In particular, since $x$ is real, we cannot have $x^k=1$ for any power $k\neq 0$.

Note that every term in the sequence of increments must be a power of $x$.  Now multiply through by the highest negative power of $x$ to obtain $1 = x^0$ in some position, with all other powers of $x$ greater than or equal to zero:
$$
(x^{r_0}, x^{r_1}, \ldots , 1 , \ldots )\,,
$$
where each $r_k\geq 0$.  Now cyclically permute the sequence to obtain $1$ in the first entry:
$$
(1,x^k, \ldots )
$$
where $k$ is necessarily an odd positive integer.  The recurrence relation then implies that all terms must have the form $(x^k)^r$ for some integer $r\geq 0$, so we now set $y = x^k$, to obtain the sequence:
$$
\Big( 1,y, \left\{ \begin{array}{l} y^2, \\ 1, \end{array} \right. \ldots , y \Big)\,.
$$
On applying the recurrence relation once more, we see that an occurrence of $y^{\ell}$ must be followed by either $y^{\ell +1}$ or $y^{\ell -1}$.  But $\sum_{k=0}^{n-1}y_k = \sum_{k=0}^{n-1}(x_{k+1}-x_k) = 0$, which implies that $y$ satisfies a polynomial equation of the form: 
\begin{eqnarray}
p(x) & := &  \al_{r+1}x^{r+1} + (\al_{r+1}+1)x^r + 2x^{r-1}+2x^{r-2} + \cdots \nonumber \\
  & & \qquad \qquad \cdots  + 2 x^2 + (\al_0+1)x + \al_0 + \sum_{s = 2}^r \be_s(x^s+x^{s-1}) \label{real-poly}\\
 & = & (x+1)(\al_{r+1}x^r + x^{r-1} + x^{r-2} + \cdots + x + \al_0) \nonumber \\
 & &  \qquad \qquad + (x+1)\sum_{s = 2}^r \be_s x^{s-1}\,, \nonumber
 \end{eqnarray}
 where $\al_{r+1}, \al_0$ are strictly positive integers and $\be_s$ are integers that are $\geq 0$. Thus for each of the $\al_{r+1}$ occurrences of the maximum power $y^{r+1}$, we must have at least one more occurrence of $y^r$.  Similarly for each of the $\al_0$ occurrences of the minimum power $y^0$.  Since by assumption $y\neq 1$ so that $y^k\neq 1$ for any $k\neq 1$, all intermediate powers must occur at least twice.  However, we may have further oscillations between powers of $y^s$ and $y^{s-1}$ for $s = 2, \ldots , r$, which are given by the coefficients $\be_s$.  Then the polynomial $p(x)$ has the form of the statement of the theorem. 
 
 Conversely, given a polynomial $a_nx^n+a_{n-1}x^{n-1} + \cdots + a_1x + a_0$, as in the statement of the theorem, it can be written uniquely in the form: 
 $$
 (\al_{r+1}x^r + x^{r-1} + x^{r-2} + \cdots + x + \al_0) + \sum_{s = 2}^r \be_s x^{s-1}\,.
 $$
 The expression for $\ga$ is deduced from (\ref{a}).  In that expression $x$ and $-k$ are successive increments, so we must have $x/k=-y$ or $-y^{-1}$.  But the expression for $\ga$ is invariant under $y \mapsto y^{-1}$.   
\hfill $\Box$

\medskip
 
We now consider \emph{complex regular cyclic sequences}, that is solutions to (\ref{one}) with $\ga$ constant which in any normalization have at least one non-real value.   We first prove a lemma in a more general context, which we will require in later sections. 

\begin{lemma} \label{lem:ngaleq1}  Suppose the equation:
$$
\frac{\ga}{n} \left( \sum_{\ell = 1}^nz_{\ell}\right)^2 = \sum_{\ell =1}^nz_{\ell}{}^2\,,
$$
is satisfied for $\ga$ real and for $z_{\ell}$ real and not all zero.  Then $\ga \geq 1$.  In particular, if $\phi$ solves {\rm (\ref{one})} at a vertex $x$ of degree $n$, if $n\geq 2$ and $\ga <1$, then in any normalization which has $\phi (x) = 0$, at least one of $\phi (y)$ ($y\sim x$) must be complex.
\end{lemma}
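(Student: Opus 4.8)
The plan is to recognize the hypothesis as an instance of the Cauchy--Schwarz inequality applied against the all-ones vector. First I would record the elementary bound, valid for all real $z_1, \ldots, z_n$,
$$
\left( \sum_{\ell=1}^n z_\ell \right)^2 \;\le\; n \sum_{\ell=1}^n z_\ell{}^2 ,
$$
with equality precisely when $z_1 = \cdots = z_n$ (this is just $\mathrm{Var} \ge 0$). Substituting it into the hypothesis $\sum_\ell z_\ell{}^2 = \frac{\ga}{n}\bigl(\sum_\ell z_\ell\bigr)^2$ immediately gives $\sum_\ell z_\ell{}^2 \le \ga \sum_\ell z_\ell{}^2$.

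Next I would use the reality of the $z_\ell$: since they are real and not all zero, $\sum_\ell z_\ell{}^2 > 0$ (a sum of real squares vanishes only if every term does — this is the one place reality enters). Dividing the previous inequality by this strictly positive quantity yields $\ga \ge 1$, which is the assertion; moreover equality $\ga = 1$ forces all $z_\ell$ equal, so the vector $(z_\ell)$ is then a nonzero real multiple of $(1, \ldots, 1)$. A minor point to dispose of first: if $\sum_\ell z_\ell = 0$ then the hypothesis forces $\sum_\ell z_\ell{}^2 = 0$, contradicting "not all zero", so in fact $\sum_\ell z_\ell \ne 0$ automatically and no division by zero occurs.

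For the second assertion I would normalize so that $\phi(x) = 0$; then $\phi(y) - \phi(x) = \phi(y)$ for every $y \sim x$, and equation (\ref{one}) at $x$ becomes $\frac{\ga}{n}\bigl(\sum_{y\sim x}\phi(y)\bigr)^2 = \sum_{y\sim x}\phi(y)^2$, which is exactly the equation of the first part with the $z_\ell$ ranging over the $n$ values $\phi(y)$, $y\sim x$. If all of these were real, the first part would give $\ga \ge 1$ unless they all vanish; since $\phi(x)=0$ too, the latter would make $\phi$ constant on the closed star of $x$ (the case of interest being the contrary), so for $n \ge 2$ and $\ga < 1$ at least one neighbouring value $\phi(y)$ must be complex.

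There is essentially no hard step: the entire content is Cauchy--Schwarz, read the right way. The only things needing a moment's care are the appeal to reality to get strict positivity of $\sum_\ell z_\ell{}^2$ (and hence a genuine, not merely weak, conclusion), the harmless degenerate case $\sum_\ell z_\ell = 0$, and the bookkeeping of the trivial local configuration in which every neighbour of $x$ carries the value $0$.
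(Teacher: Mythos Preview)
Your argument is correct and is essentially the same as the paper's: both apply Cauchy--Schwarz in the form $\bigl(\sum_\ell z_\ell\bigr)^2 \le n\sum_\ell z_\ell{}^2$, substitute the hypothesis, and divide by the strictly positive $\sum_\ell z_\ell{}^2$. You are slightly more thorough in dispatching the degenerate cases and in spelling out the ``in particular'' clause, but the core idea is identical.
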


\begin{proof}  The Cauchy-Schwarz inequality shows that for any set $\{ a_1, \ldots , a_n\}$ of complex numbers, one has the inequality
\begin{equation} \label{inductive-inequality}
n\sum_{\ell = 1}^na_{\ell} \ov{a}_{\ell} \geq \left(\sum_{\ell}^na_{\ell}\right) \left(\sum_{\ell}^n\ov{a}_{\ell}\right)
\end{equation}
with equality if and only if $a_1=a_2=\cdots = a_n$.
Then for $z_{\ell}$ real and not all zero satisfying (\ref{one}), we have:
$$
\sum_{\ell}z_{\ell}{}^2 = \frac{\ga}{n} \left( \sum_{\ell}z_{\ell}\right)^2 \leq \ga \sum_{\ell}z_{\ell}{}^2\,.
$$
\end{proof}

\begin{corollary} \label{cor:norm-real-complex}  Let $(x_0, x_1, x_2, \ldots , x_{N-1}, x_N=x_0)$ be a regular cyclic sequence.  Then \emph{either} there is some normalization under which every term is real, \emph{or} whatever normalization is taken, every consecutive triple $(x_{k-1}, x_k, x_{k+1})$ contains at least one complex term.  The real regular cyclic sequences are characterized by the property $\ga \geq 1$.  
\end{corollary}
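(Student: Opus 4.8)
The plan is to read the corollary off from three ingredients already in place: the recurrence of Lemma~\ref{lem:recurrence}, the inequality of Lemma~\ref{lem:ngaleq1}, and the explicit description of the $\ga=2$ sequences in Theorem~\ref{thm:cyclic}. Throughout I assume the sequence is non-constant (there is nothing to prove otherwise), and I keep the trivial value $\ga=2$ apart, where the sequence is a union of constant segments and both assertions are read directly off that structure. For the first dichotomy the crucial remark is purely logical: ``no normalization makes every term real'' is precisely the negation of ``under some normalization some consecutive triple $(x_{k-1},x_k,x_{k+1})$ consists of three real numbers'', so it suffices to show, for $\ga\neq 2$, that a single real triple forces the whole sequence to be real in that same normalization.

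To run this propagation I first note that for a non-constant sequence with $\ga\neq 2$ no increment $y_k=x_{k+1}-x_k$ vanishes: evaluating (\ref{one}) at a vertex where $y_k=0$ gives $\frac{\ga}{2}(x_{k-1}-x_k)^2=(x_{k-1}-x_k)^2$, so either $\ga=2$ or $x_{k-1}=x_k$, and the latter, iterated around the cycle, would make the sequence constant. Hence, given a normalization under which $x_{k-1},x_k,x_{k+1}$ are real, the increments $y_{k-1},y_k$ are nonzero reals, and Lemma~\ref{lem:recurrence} writes $y_{k+1}$ as either $y_k{}^2/y_{k-1}$ or $y_{k-1}$, in both cases a real number; running the recurrence forwards, and after solving it for $y_{k-2}$ also backwards, shows every increment, hence every term, is real in that normalization. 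This settles the dichotomy for $\ga\neq 2$.

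For the spectral characterization I would work with the ratio $s_k:=y_k/y_{k-1}$ of consecutive increments, which for $\ga\neq 2$ is a well-defined nonzero complex number. Evaluating (\ref{one}) on a three-term segment, exactly as in the derivation of (\ref{a}), gives $\ga=2(1+s_k{}^2)/(1-s_k)^2$ at every vertex. If $s_k$ is real then $\ga-1=(1+s_k)^2/(1-s_k)^2\geq 0$, so $\ga\geq1$ (this is also immediate from Lemma~\ref{lem:ngaleq1} applied at any vertex where the sequence is not locally constant, with $n=2$, $z_1=x_{k-1}-x_k$, $z_2=x_{k+1}-x_k$). If $s_k$ is not real then reality of $\ga$ forces $s_k/(1-s_k)^2\in\RR$, i.e.\ $(s_k-\ov{s_k})(1-s_k\ov{s_k})=0$, hence $|s_k|=1$; writing $s_k=e^{\ii\ta}$ with $s_k\neq\pm1$ and substituting $1-e^{\ii\ta}=-2\ii\sin(\ta/2)e^{\ii\ta/2}$ yields $\ga=2-1/\sin^2(\ta/2)<1$. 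Since $\ga$ is the same at every vertex, real and non-real ratios cannot coexist: either all $s_k$ are real, in which case normalizing the first two terms to be real makes every term real (each increment being a real multiple of the previous one) and $\ga\geq1$; or all $s_k$ are unimodular and non-real and $\ga<1$. Combined with the second paragraph this gives the full corollary for $\ga\neq2$, with $\ga=2$ dealt with from Theorem~\ref{thm:cyclic}.

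The main obstacle is not a single hard estimate but the book-keeping around the real/complex threshold: one must verify that ``all ratios $s_k$ real'' genuinely produces a single real normalization (the ratios must be mutually compatible, which is where the non-vanishing of the increments is used), exclude the degenerate value $s_k=1$ (equal consecutive increments propagate, by the recurrence, to a constant increment, contradicting $\sum_k y_k=0$ on the cycle), recognise $s_k=-1$ as the oscillating string $(0,1,0,1,\dots)$ with $\ga=1$, and keep the case $\ga=2$ separate throughout. Once this is organised, the one genuine computation, the identity $\ga=2-1/\sin^2(\ta/2)$ for a non-real unimodular ratio, is exactly what pushes every complex regular cyclic sequence strictly below $1$, and the corollary follows.
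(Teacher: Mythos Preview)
Your argument for $\ga \neq 2$ is correct and is essentially the paper's intended route, made explicit: the paper states the corollary immediately after Lemma~\ref{lem:ngaleq1} without proof and then carries out the angular computation (equation~(\ref{spec-value})) in the paragraph that follows; your ratio formula $\ga = 2(1+s_k{}^2)/(1-s_k)^2$ is the same computation, and your use of the recurrence in Lemma~\ref{lem:recurrence} to propagate a single real triple to a globally real sequence fills in the step the paper leaves to the reader.

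There is, however, a genuine gap at $\ga = 2$. You write that ``both assertions are read directly off that structure'' via Theorem~\ref{thm:cyclic}, but that theorem only characterizes the \emph{real} $\ga = 2$ sequences; it does not rule out complex ones. Consider $(0,0,\ii,\ii,1+\ii,1+\ii)$ on the cyclic graph of order six: one checks $\ga = 2$ at every vertex, yet the three distinct values $0,\ii,1+\ii$ are not collinear in $\CC$, so no affine normalization makes them all real and the characterization ``real $\Leftrightarrow \ga \geq 1$'' fails. Worse, under $\phi \mapsto \phi/\ii$ the sequence becomes $(0,0,1,1,1-\ii,1-\ii)$, where the consecutive triple $(0,0,1)$ is entirely real while the full sequence is not, so the dichotomy fails as well. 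The underlying reason is that for $\ga = 2$ equation~(\ref{one}) decouples across the constant segments and their values may be chosen independently in $\CC$. This is a defect in the corollary's statement rather than in your method; the corollary ought to carry the hypothesis $\ga \neq 2$, that case having been set aside as ``trivial'' before Theorem~\ref{thm:cyclic}.
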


Let us explore in more detail the import of the above corollary.  Let $(x_0, x_1, x_2, \ldots , x_{N-1}, x_N=x_0)$ be a regular complex cyclic sequence.  Then for each term $x_k$, whatever the normalization, not all of $(x_{k-1}, x_k, x_{k+1})$ are real.  We can normalize so that $x_k=0$ and its two neighbours take values as indicated in the diagram: 
\medskip
\begin{center}
\setlength{\unitlength}{0.254mm}
\begin{picture}(172,87)(15,-96)
        \allinethickness{0.254mm}\path(80,-80)(30,-30) % Plain Solid Line
        \allinethickness{0.254mm}\path(80,-80)(170,-80) % Plain Solid Line
        \allinethickness{0.254mm}\special{sh 0.3}\put(80,-80){\ellipse{4}{4}} % Shade Dot
        \allinethickness{0.254mm}\special{sh 0.3}\put(30,-30){\ellipse{4}{4}} % Shade Dot
        \allinethickness{0.254mm}\special{sh 0.3}\put(170,-80){\ellipse{4}{4}} % Shade Dot
        \allinethickness{0.254mm}\dashline{5}[5](80,-80)(15,-80) % Plain Dashed Line
        \put(170,-96){\shortstack{$s$}} % Plain Text
        \put(80,-96){\shortstack{$0$}} % Plain Text
        \put(30,-21){\shortstack{$re^{i(\pi-\theta )}$}} % Plain Text
        \put(50,-76){\shortstack{$\theta$}} % Plain Text
         % Set color to black again (default font color)
\end{picture}
\end{center}
\medskip
where we suppose $r,s>0$ and $\theta \in (-\pi , \pi )\setminus \{ 0\}$. 
On substituting into (\ref{one}), we obtain
$$
\ga = 2 + \frac{4rs}{r^2e^{-\ii\ta} + s^2e^{\ii\ta} - 2rs}\,,
$$
 This is real if and only if 
\begin{equation} \label{spec-value}
r=s \quad {\rm in\ which\ case} \quad \ga = \frac{2\cos \theta}{\cos \theta - 1}\,,
\end{equation}
which confirms the inequality $\ga < 1$.  The case when $\theta = \pm \pi /2$ corresponds to $\ga=0$, that is holomorphicity at the vertex $x$.  We underline the fact that the two cases, real triple or complex triple, are mutually exclusive, their nature determined by the value of $\ga$. 

Thus the terms of any complex regular cyclic sequence determine a closed walk in the plane, such that at each step \emph{the walk progresses along an edge of fixed length in such a way that the angle between successive edges is $\pm \theta$, for some fixed $\theta$}; the latter property being a consequence of the constancy of $\cos \theta$ in (\ref{spec-value}).  In particular, any regular polygon satisfies these criteria, with $\theta$ the (constant) exterior angle.  

In principle it is possible to find all solutions:  if $\theta$ is the exterior angle and we perform $k$ rotations through $+\theta$ and $\ell$ through $-\theta$, then $(k-\ell )\theta$ must be a multiple of $2\pi$  with $k+\ell = N$, the order of the graph.  Thus for each $N$, up to normalization and conjugation, there can only be a finite number of possibilities.  In particular, taking into account the real cyclic sequences, the geometric spectrum of a cyclic graph is finite. The regular cyclic sequences on the cyclic graphs of order six and five, respectively, are given as follows.  

\begin{example} \label{ex:hexagon} {\rm 
Consider the cyclic graph on six vertices.  Then we have already encountered a real regular cyclic sequence in Example \ref{ex:real-hexagon}: $(0, 4, 2, 3, 1, 2)$, with corresponding invariant $\ga  =10/9$. Other complex solutions are indicated in the figure below.  For the left-hand hexagon, we have $\ga  = 1$ which corresponds to a $2$-colouring of the graph; for the middle one, $\ga  = 2/3$ which corresponds to two circuits of a triangle; and for the right-hand one, $\ga  = -2$ which corresponds to the position function of a regular hexagon.  

\medskip

\begin{center}
\setlength{\unitlength}{0.254mm}
\begin{picture}(572,132)(35,-136)
        \allinethickness{0.254mm}\path(80,-120)(140,-120) % Plain Solid Line
        \allinethickness{0.254mm}\path(140,-120)(170,-70) % Plain Solid Line
        \allinethickness{0.254mm}\path(50,-70)(80,-120) % Plain Solid Line
        \allinethickness{0.254mm}\path(240,-120)(300,-120) % Plain Solid Line
        \allinethickness{0.254mm}\path(300,-120)(330,-70) % Plain Solid Line
        \allinethickness{0.254mm}\path(210,-70)(240,-120) % Plain Solid Line
        \allinethickness{0.254mm}\path(170,-70)(140,-20) % Plain Solid Line
        \allinethickness{0.254mm}\path(140,-20)(80,-20) % Plain Solid Line
        \allinethickness{0.254mm}\path(80,-20)(50,-70) % Plain Solid Line
        \allinethickness{0.254mm}\path(330,-70)(300,-20) % Plain Solid Line
        \allinethickness{0.254mm}\path(300,-20)(240,-20) % Plain Solid Line
        \allinethickness{0.254mm}\path(240,-20)(210,-70) % Plain Solid Line
        \allinethickness{0.254mm}\path(400,-120)(460,-120) % Plain Solid Line
        \allinethickness{0.254mm}\path(460,-120)(490,-70) % Plain Solid Line
        \allinethickness{0.254mm}\path(490,-70)(460,-20) % Plain Solid Line
        \allinethickness{0.254mm}\path(460,-20)(400,-20) % Plain Solid Line
        \allinethickness{0.254mm}\path(400,-20)(370,-70) % Plain Solid Line
        \allinethickness{0.254mm}\path(370,-70)(400,-120) % Plain Solid Line
        \allinethickness{0.254mm}\special{sh 0.3}\put(140,-120){\ellipse{4}{4}} % Shade Dot
        \allinethickness{0.254mm}\special{sh 0.3}\put(170,-70){\ellipse{4}{4}} % Shade Dot
        \allinethickness{0.254mm}\special{sh 0.3}\put(140,-20){\ellipse{4}{4}} % Shade Dot
        \allinethickness{0.254mm}\special{sh 0.3}\put(80,-20){\ellipse{4}{4}} % Shade Dot
        \allinethickness{0.254mm}\special{sh 0.3}\put(50,-70){\ellipse{4}{4}} % Shade Dot
        \allinethickness{0.254mm}\special{sh 0.3}\put(80,-120){\ellipse{4}{4}} % Shade Dot
        \allinethickness{0.254mm}\special{sh 0.3}\put(240,-120){\ellipse{4}{4}} % Shade Dot
        \allinethickness{0.254mm}\special{sh 0.3}\put(300,-120){\ellipse{4}{4}} % Shade Dot
        \allinethickness{0.254mm}\special{sh 0.3}\put(330,-70){\ellipse{4}{4}} % Shade Dot
        \allinethickness{0.254mm}\special{sh 0.3}\put(300,-20){\ellipse{4}{4}} % Shade Dot
        \allinethickness{0.254mm}\special{sh 0.3}\put(240,-20){\ellipse{4}{4}} % Shade Dot
        \allinethickness{0.254mm}\special{sh 0.3}\put(210,-70){\ellipse{4}{4}} % Shade Dot
        \allinethickness{0.254mm}\special{sh 0.3}\put(400,-120){\ellipse{4}{4}} % Shade Dot
        \allinethickness{0.254mm}\special{sh 0.3}\put(460,-120){\ellipse{4}{4}} % Shade Dot
        \allinethickness{0.254mm}\special{sh 0.3}\put(490,-70){\ellipse{4}{4}} % Shade Dot
        \allinethickness{0.254mm}\special{sh 0.3}\put(460,-20){\ellipse{4}{4}} % Shade Dot
        \allinethickness{0.254mm}\special{sh 0.3}\put(400,-20){\ellipse{4}{4}} % Shade Dot
        \allinethickness{0.254mm}\special{sh 0.3}\put(370,-70){\ellipse{4}{4}} % Shade Dot
        \put(80,-136){\shortstack{$0$}} % Plain Text
        \put(140,-136){\shortstack{$1$}} % Plain Text
        \put(175,-81){\shortstack{$0$}} % Plain Text
        \put(145,-21){\shortstack{$1$}} % Plain Text
        \put(65,-16){\shortstack{$0$}} % Plain Text
        \put(35,-76){\shortstack{$1$}} % Plain Text
        \put(235,-136){\shortstack{$0$}} % Plain Text
        \put(300,-136){\shortstack{$1$}} % Plain Text
        \put(500,-76){\shortstack{$\frac{1+\sqrt{3}i}{2}$}} % Plain Text
        \put(305,-21){\shortstack{$0$}} % Plain Text
        \put(230,-16){\shortstack{$1$}} % Plain Text
        \put(395,-136){\shortstack{$0$}} % Plain Text
        \put(460,-136){\shortstack{$1$}} % Plain Text
        \put(380,-76){\shortstack{$\frac{-1+\sqrt{3}i}{2}$}} % Plain Text
        \put(465,-21){\shortstack{$1+\sqrt{3}i$}} % Plain Text
        \put(390,-16){\shortstack{$\sqrt{3}i$}} % Plain Text
        \put(215,-76){\shortstack{$\frac{1+\sqrt{3}i}{2}$}} % Plain Text
        \put(285,-76){\shortstack{$\frac{1+\sqrt{3}i}{2}$}} % Plain Text
         % Set color to black again (default font color)
\end{picture}

\end{center}
\medskip

There is the trivial solution with $\ga  = 2$, when the function $\phi$ has value $0$ on any connected set of either three or two vertices of the hexagon and $1$ on the connected complementary set of vertices, so for each vertex $x$, there is at most one neighbouring vertex $y$ with $\phi (y) - \phi (x) \neq 0$ (see Section \ref{sec:colourings}).  This exhausts all possible values of $\ga$, so the \emph{geometric spectrum} (the possible values of $\ga$) equals the set $\{ -2,2/3,1,10/9,2\}$ (cf. Section \ref{sec:spectrum}).  There is another path we have not considered, namely $0\ra 1\ra \frac{1}{2} + \frac{\sqrt{3}}{2}\ii \ra 0 \ra - \frac{1}{2} + \frac{\sqrt{3}}{2}\ii \ra \frac{1}{2} + \frac{\sqrt{3}}{2}\ii \ra 0$, with exterior angles $2\pi/3, 2\pi /3, 2\pi /3, - 2\pi /3, - 2\pi /3, - 2\pi /3$, respectively.  However, this gives the same value $\ga  = 2/3$, corresponding to all exterior angles equal to $2\pi /3$. }
\end{example} 

\begin{example} \label{ex:cyclic-5vertices} {\rm Now consider a cyclic graph on five vertices.  Once more, there is the trivial solution with $\ga =2$, where we colour the graph with two colours on complementary connected components of three and two vertices.  There are also two more solutions as indicated in the figure below.
\medskip
\begin{center}
\setlength{\unitlength}{0.254mm}
\begin{picture}(249,74)(63,-142)
        \allinethickness{0.254mm}\special{sh 0.3}\put(110,-70){\ellipse{4}{4}} % Shade Dot
        \allinethickness{0.254mm}\special{sh 0.3}\put(65,-95){\ellipse{4}{4}} % Shade Dot
        \allinethickness{0.254mm}\special{sh 0.3}\put(155,-95){\ellipse{4}{4}} % Shade Dot
        \allinethickness{0.254mm}\special{sh 0.3}\put(135,-140){\ellipse{4}{4}} % Shade Dot
        \allinethickness{0.254mm}\special{sh 0.3}\put(85,-140){\ellipse{4}{4}} % Shade Dot
        \allinethickness{0.254mm}\special{sh 0.3}\put(220,-95){\ellipse{4}{4}} % Shade Dot
        \allinethickness{0.254mm}\special{sh 0.3}\put(265,-70){\ellipse{4}{4}} % Shade Dot
        \allinethickness{0.254mm}\special{sh 0.3}\put(310,-95){\ellipse{4}{4}} % Shade Dot
        \allinethickness{0.254mm}\special{sh 0.3}\put(240,-140){\ellipse{4}{4}} % Shade Dot
        \allinethickness{0.254mm}\special{sh 0.3}\put(290,-140){\ellipse{4}{4}} % Shade Dot
        \allinethickness{0.254mm}\path(110,-70)(155,-95) % Plain Solid Line
        \allinethickness{0.254mm}\path(155,-95)(135,-140) % Plain Solid Line
        \allinethickness{0.254mm}\path(135,-140)(85,-140) % Plain Solid Line
        \allinethickness{0.254mm}\path(85,-140)(65,-95) % Plain Solid Line
        \allinethickness{0.254mm}\path(65,-95)(110,-70) % Plain Solid Line
        \allinethickness{0.254mm}\path(240,-140)(310,-95) % Plain Solid Line
        \allinethickness{0.254mm}\path(310,-95)(220,-95) % Plain Solid Line
        \allinethickness{0.254mm}\path(220,-95)(260,-120) % Plain Solid Line
        \allinethickness{0.254mm}\path(275,-130)(290,-140) % Plain Solid Line
        \allinethickness{0.254mm}\path(265,-70)(270,-85) % Plain Solid Line
        \allinethickness{0.254mm}\path(285,-125)(290,-140) % Plain Solid Line
        \allinethickness{0.254mm}\path(275,-100)(280,-115) % Plain Solid Line
        \allinethickness{0.254mm}\path(265,-70)(260,-85) % Plain Solid Line
        \allinethickness{0.254mm}\path(255,-100)(250,-115) % Plain Solid Line
        \allinethickness{0.254mm}\path(240,-140)(245,-125) % Plain Solid Line
         % Set color to black again (default font color)
\end{picture}
\end{center}
\medskip 
In these two examples, the angle $\theta$ is given by $2\pi /5$ for the regular pentagon on the left, and by $4\pi /5$ for the regular star pentagon on the right.  Since $\cos (2\pi /5) = (\sqrt{5}-1)/4$ and $\cos (4\pi /5) = -(\sqrt{5}+1)/4$, by (\ref{spec-value}), this gives the values $\ga = -2/\sqrt{5}$ and $\ga = +2/\sqrt{5}$, respectively.  These two values together with $\ga=2$ as above, exhaust the geometric spectrum for the cyclic graph on five vertices.  }
\end{example}
    
Consider a planar \emph{polygonal chain} on $N$ vertices $\{ x_1, x_2, \ldots , x_N\}$ with edges all of the same length, such that edges can rotate freely about adjacent vertices.  In a more general context, where edges have fixed length that are not necessarily equal, such an object is sometimes referred to either as a \emph{planar linkage} or as a \emph{planar polygonal bar-and-joint framework}; they are studied notably in robot arm motion planning, see for example \cite{Str}.  

A well known problem in the study of planar polygonal chains is to construct algorithms to either straighten the chain if it is open ended (the \emph{Carpenter's Rule Problem}), or, for a closed chain, to deform it into one that is convex.  If on allows edge crossings, then this was solved by Sallee in 1973 \cite{Sa}.  To do this without edge-crossings (for an initial configuration without crossings) proved more elusive and was solved by Connelly, Demaine and Rote in 2003 \cite{Co-De-Ro}. 

In Section \ref{sec:spectrum} we introduce an energy functional in a more general setting, which, for polygonal chains is given by 
$$
\Ee = \sum_{k = 1}^N(1+\cos\ta_k)\,.
$$
When $N$ is even, the absolute minimum of $\Ee$ is zero which is achieved when all exterior angles are $\pm \pi$, so the polygonal chain is in its most compact form, the edges superimposed along an interval.  In Appendix \ref{sec:trig}, we show how the configurations of a polygonal chain on $N$ vertices can be parametrized by $N-3$ parameters and prove that the regular configurations (with exterior angle constant) are critical for this functional.  The functional $\Ee$ should define a gradient flow on this configuration space whereby a polygonal chain evolves into a regular configuration.  

\section{Orthogonal projections of regular polytopes} \label{sec:orthographic}  In \cite{Ea-Pe}, the authors consider the projection of the vertices of the Platonic solids in $\RR^3$ onto the complex plane, as well as more general orthogonal projections $\RR^N\ra \RR^M$.  As a particular case, they establish that if $z_1, z_2, \ldots , z_{N+1}$ are the orthogonal projections to $\CC$ of the vertices of a regular simplex in $\RR^N$, then 
$$
(z_1+\cdots + z_{N+1})^2 = (N+1)(z_1{}^2+ \cdots + z_{N+1}{}^2)\,.
$$
In particular, it follows that if we view the $1$-skeleton of the simplex as a graph, then the function which associates the values $z_1, \ldots , z_{N+1}$ to the corresponding vertices satisfies (\ref{one}) with $\ga = N/(N+1)$.  
We now prove a corresponding result that shows that, as a graph, the projection of the $1$-skeleton of a regular polytope satisfies equation (\ref{one}), with $\ga$ constant.  We also develop some notions that enable us later to define distance and curvature.

A bipartite graph $K_{1,n}$, otherwise known as a star, consists of an internal vertex $\vec{x}_0$ connected by edges to $n$ external vertices $\vec{x}_1, \ldots , \vec{x}_n$; there are no other connections.  In what follows, we will represent a star embedded in $\RR^N$ with internal vertex located at the origin, by an $(N\times n)$--matrix
$$
W = (\vec{x}_1|\cdots |\vec{x}_n)\,,
$$
whose columns are the components of the external vertices. An \emph{invariant} of the star is a quantity which is invariant under orthogonal transformation of the ambient Euclidean space $\RR^N$.  For example the quantity
$$
\frac{1}{n}{\rm trace}\, (WW^t)=\frac{1}{n}{\rm trace}\,(W^tW) = \frac{1}{n}\big( ||\vec{x}_1||^2 + \cdots ||\vec{x}_n||^2\big)\,
$$
is invariant, where $W^t$ denotes the transpose of $W$; it corresponds to the mean of the squares of the Euclidean lengths of the external vertices.  We consider a bipartite graph $K_{1,n}$ embedded in $\RR^N$ in the following symmetric way.  

Let $(y_1, \ldots , y_N)$ be standard coordinates for $\RR^N$ ($N\geq 2$); write vectors as columns for the purpose of matrix multiplication.  Let $\{ \vec{e}_1, \ldots , \vec{e}_N\}$ be the canonical basis and write $I_N$ for the $N\times N$-identity matrix.  The internal vertex $\vec{x}_0$ is located at the origin, while the external vertices $\vec{x}_1, \ldots , \vec{x}_n$ are situated at distinct points in the hyperplane $y_N=c$ (constant): 
\begin{equation} \label{star-standard}
\vec{x}_{\ell} = \left( \begin{array}{c} \vec{v}_{\ell} \\ c \end{array} \right) \qquad (\ell = 1, \ldots , n)\,,
\end{equation}
We require further that the $(N-1)\times n$-matrix $U = (\vec{v}_1|\vec{v}_2|\cdots |\vec{v}_n)$ with columns the components $v_{\ell j}$ of $\vec{v}_{\ell}$ $(j=1, \ldots , N-1;\, \ell = 1, \ldots , n)$, satisfies:
\begin{equation} \label{star-conditions}
 UU^t= \rho I_{N-1}\,, \qquad  \sum_{\ell = 1}^n\vec{v}_{\ell} = \vec{0}\,,
\end{equation}
for some non-zero constant $\rho$ (necessarily positive), where $\vec{0}$ denotes the zero vector in $\RR^{N-1}$ and $U^t$ denotes the transpose of $U$.  

Any star, which, up to orthogonal transformation of $\RR^N$, is embedded in this way, we will call a \emph{configured star}.  We shall also say that the vectors $\{ \vec{v}_1, \ldots , \vec{v}_n\}$ form a \emph{configuration in $\RR^{N-1}$}, call $U$ the associated \emph{configuration matrix} and $\rho$ the \emph{configuration invariant}.  Provided the star does not lie in any proper linear subspace, we say that the star is \emph{full}.  If further, $||\vec{x}_{\ell}||=r$ (constant) for $\ell = 1, \ldots , n$, we refer to the star as \emph{regular of radius $r$}.  An embedding given by (\ref{star-standard}) and (\ref{star-conditions}) is referred to as a \emph{standard position} of the configured star.

\begin{lemma} \label{lem:configured-star}  Consider a configured star in $\RR^N$ ($N\geq 2$) with internal vertex the origin connected to $n$ external vertices $\{ \vec{x}_1, \ldots , \vec{x}_n\}$ $(n\geq N)$.  Let $W = (\vec{x}_1|\vec{x}_2|\cdots |\vec{x}_n)$ be the $N\times n$-matrix whose columns are the components $x_{\ell j}$ of $\vec{x}_{\ell}$ ($j = 1, \ldots , N; \  \ell = 1, \ldots , n$).  Then
\begin{equation} \label{configured-star-conds}
WW^t=\rho I_N + \si \vec{u}\vec{u}^t, \qquad \sum_{\ell = 1}^n\vec{x}_{\ell} = \sqrt{n(\si +\rho )}\,\vec{u}\,,
\end{equation}
where $\vec{u}\in \RR^N$ is a unit vector, $\rho >0$ and $\rho + \si >0$.  The quantities $n, \rho , \si$ are all invariants of the star; the vector $\vec{u}$ is normal to the affine plane containing $\vec{x}_1, \ldots , \vec{x}_n$.

Conversely, any matrix $W = (\vec{x}_1|\vec{x}_2|\cdots |\vec{x}_n)$ satisfying {\rm (\ref{configured-star-conds})} determines a configured star with central vertex the origin and external vertices $\vec{x}_1, \ldots , \vec{x}_n$.
\end{lemma}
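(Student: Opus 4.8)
The plan is to settle both directions by elementary linear algebra. For the forward direction I would compute $WW^{t}$ and $\sum_{\ell}\vec{x}_{\ell}$ directly when the star sits in a standard position, and then propagate the identities over the $\Orthog(N)$-orbit; for the converse I would rotate $\vec{u}$ onto $\vec{e}_{N}$ and read off (\ref{star-standard})--(\ref{star-conditions}). The only step that is not pure bookkeeping is the equality case of Cauchy--Schwarz in the converse, which is exactly what forces the external vertices into a common hyperplane.

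First, place the star in a standard position, so by (\ref{star-standard}) the matrix $W$ has its first $N-1$ rows equal to the configuration matrix $U$ and its last row equal to $c\,\vec{1}^{\,t}$ with $\vec{1}=(1,\dots,1)^{t}\in\RR^{n}$. Block multiplication, using $U\vec{1}=\sum_{\ell}\vec{v}_{\ell}=\vec{0}$ and $UU^{t}=\rho I_{N-1}$ from (\ref{star-conditions}), gives
\[
WW^{t}=\begin{pmatrix}\rho I_{N-1} & \vec{0}\\ \vec{0}^{\,t} & nc^{2}\end{pmatrix}=\rho I_{N}+(nc^{2}-\rho)\,\vec{e}_{N}\vec{e}_{N}^{\,t},\qquad \sum_{\ell=1}^{n}\vec{x}_{\ell}=nc\,\vec{e}_{N}.
\]
So (\ref{configured-star-conds}) holds in standard position with $\si:=nc^{2}-\rho$, with $\vec{u}:=\vec{e}_{N}$ or $-\vec{e}_{N}$ according as $c>0$ or $c<0$ (so that $\sum_{\ell}\vec{x}_{\ell}=n|c|\,\vec{u}=\sqrt{n(\si+\rho)}\,\vec{u}$), and with $\rho>0$, $\rho+\si=nc^{2}>0$; at this point I use $c\neq0$, i.e.\ that the internal vertex does not lie on the affine hull of the external vertices, which I take to be implicit in the notion of a (non-degenerate) star. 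For a general configured star write $W=OW_{0}$ with $O\in\Orthog(N)$ and $W_{0}$ in standard position; then $WW^{t}=O(W_{0}W_{0}^{t})O^{t}$ and $\sum_{\ell}\vec{x}_{\ell}=O\bigl(\sum_{\ell}(\vec{x}_{0})_{\ell}\bigr)$, so (\ref{configured-star-conds}) persists with the same $n,\rho,\si$ and $\vec{u}$ replaced by $O\vec{u}$, which is normal to $O\{y_{N}=c\}$, the affine hull of $\vec{x}_{1},\dots,\vec{x}_{n}$.

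For the invariance statement, note that $W\mapsto OW$ sends $WW^{t}\mapsto O(WW^{t})O^{t}$, so the eigenvalues of $WW^{t}$ are invariants of the star; by (\ref{configured-star-conds}) they are $\rho$ with multiplicity $N-1$ and $\rho+\si$ with multiplicity $1$ (these coinciding exactly when $\si=0$), so $\rho$, $\rho+\si$ and hence $\si$ are determined, while $n$ is obviously invariant and $\vec{u}$ is, up to sign, the unit eigenvector of $WW^{t}$ for the eigenvalue $\rho+\si$, i.e.\ the normal direction to the affine plane through the external vertices.

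For the converse, suppose $W=(\vec{x}_{1}|\cdots|\vec{x}_{n})$ satisfies (\ref{configured-star-conds}). Since $\rho>0$ and $\rho+\si>0$, the matrix $WW^{t}=\rho I_{N}+\si\vec{u}\vec{u}^{\,t}$ is positive definite, so $W$ has rank $N$ and $n\geq N$. Choose $O\in\Orthog(N)$ with $O\vec{u}=\vec{e}_{N}$ and replace $W$ by $OW$, which preserves (\ref{configured-star-conds}) with $\vec{u}=\vec{e}_{N}$; write the last row of $W$ as $(c_{1},\dots,c_{n})$ and the top $N-1$ rows as $U=(\vec{v}_{1}|\cdots|\vec{v}_{n})$. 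Comparing, in $WW^{t}=\rho I_{N}+\si\vec{e}_{N}\vec{e}_{N}^{\,t}$, the leading $(N-1)\times(N-1)$ block and the $(N,N)$-entry gives $UU^{t}=\rho I_{N-1}$ and $\sum_{\ell}c_{\ell}^{2}=\rho+\si$, and comparing components of $\sum_{\ell}\vec{x}_{\ell}=\sqrt{n(\si+\rho)}\,\vec{e}_{N}$ gives $\sum_{\ell}\vec{v}_{\ell}=\vec{0}$ and $\sum_{\ell}c_{\ell}=\sqrt{n(\si+\rho)}$. Now the Cauchy--Schwarz inequality $\bigl(\sum_{\ell}c_{\ell}\bigr)^{2}\leq n\sum_{\ell}c_{\ell}^{2}$ becomes the equality $n(\si+\rho)\leq n(\si+\rho)$, so all $c_{\ell}$ coincide, necessarily with $c:=\sqrt{(\si+\rho)/n}>0$; hence each $\vec{x}_{\ell}$ has first $N-1$ components $\vec{v}_{\ell}$ and last component $c$, so it lies on $y_{N}=c$, while $U,\rho$ satisfy (\ref{star-conditions}), and $W$ is a configured star in standard position. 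As anticipated, the single genuinely non-routine point is this Cauchy--Schwarz rigidity step; the only other care needed is over the sign of $\vec{u}$ and the non-degeneracy $c\neq0$ that secures $\rho+\si>0$ in the forward direction.
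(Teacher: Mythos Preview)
Your proof is correct and follows essentially the same route as the paper's: compute $WW^{t}$ and $\sum_{\ell}\vec{x}_{\ell}$ in standard position, transport by an orthogonal $O$, and for the converse rotate $\vec{u}$ to $\vec{e}_{N}$ and invoke the equality case of Cauchy--Schwarz (the paper's inequality (\ref{inductive-inequality})) to force all the last coordinates to coincide. Your treatment is slightly more explicit on two points the paper glosses over---the sign of $c$ versus the choice of $\vec{u}$, and the invariance of $\rho,\si$ read off as eigenvalues of $WW^{t}$---but neither changes the argument.
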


\begin{proof}  Consider a configured star in standard position given by (\ref{star-standard}) and (\ref{star-conditions}).  Set
$$
V = \left( \begin{array}{c|c|c|c} \vec{v}_1 & \vec{v}_2 & \cdots & \vec{v}_n \\ c & c & \cdots & c \end{array} \right) 
$$
and let $A : \RR^N \ra \RR^N$ be an orthogonal transformation; set $\vec{x}_n = A\left( \begin{array}{c} \vec{v}_n \\ c\end{array}\right)$.  Then $W = (\vec{x}_1|\vec{x}_2|\cdots |\vec{x}_n)=AV$ and 
$$
WW^t=AVV^tA^t=\rho I_N+\si (A\vec{e}_N)(A\vec{e}_N)^t\,,
$$
where 
\begin{equation} \label{sigma-c}
\si = nc^2-\rho\,.
\end{equation}  
Furthermore $\sum_{\ell =1}^n\vec{x}_{\ell} = ncA\vec{e}_N$, which gives the form (\ref{configured-star-conds}) with $\vec{u} = A\vec{e}_N$.  The independence of the quantities $n, \rho , \si$ of the orthogonal transformation $A$ is clear.

Conversely, suppose we are given an $N\times n$-matrix $W = (\vec{x}_1|\vec{x}_2|\cdots |\vec{x}_n)$ satisfying (\ref{configured-star-conds}).  Let $A$ be an orthogonal transformation such that $A\vec{u} = \vec{e}_N$ and let $V = AW$.  Write 
$$
V = \left( \begin{array}{c|c|c|c} \vec{v}_1 & \vec{v}_2 & \cdots & \vec{v}_n \\ y_{1N} & y_{2N} & \cdots & y_{nN} \end{array} \right)\,. 
$$
Then 
\begin{equation} \label{VVt}
VV^t = \rho I_N+\si \vec{e}_N\vec{e}_N{}^t \qquad {\rm and} \qquad \sum_{\ell} \left( \begin{array}{c} \vec{v}_{\ell} \\ y_{\ell N}\end{array}\right) = \sqrt{n(\si + \rho )}\, \vec{e}_N\,,
\end{equation}
so that $\sum_{\ell}\vec{v}_{\ell} = 0$ and $\sum_{\ell} y_{\ell N} = \sqrt{n(\si + \rho )}$.
Furthermore, (\ref{VVt}) implies that $\sum_{\ell} y_{\ell N}{}^2 = \rho + \si$.  In particular
$$
n\sum_{\ell}^n y_{\ell N}{}^2 = \left(\sum_{\ell}^n y_{\ell N}\right)^2\,.
$$  
But then (\ref{inductive-inequality}) implies that $y_{1N} = y_{2N} = \cdots = y_{nN} = \sqrt{(\si + \rho )/n}$.  
\end{proof}
 
It now follows that the function which assigns the values after projection of the vertices of a configured star to the complex plane satisfies (\ref{one}) at the internal vertex, independently of the position of the star.

\begin{corollary} \label{cor:configured-star}  Let $W = (\vec{x}_1|\vec{x}_2|\cdots |\vec{x}_n)$ define a configured star and let $P:\RR^N \ra \CC$ be orthgonal projection $P(y_1, \ldots , y_N)=y_1 + \ii y_N$.  Then if $z_{\ell} = P(\vec{x}_{\ell}) = x_{\ell 1} + \ii x_{\ell 2}$, we have
\begin{equation} \label{star-gamma}
\frac{\si}{n(\si + \rho )} \left( \sum_{\ell = 1}^nz_{\ell}\right)^2 = \sum_{\ell =1}^n z_{\ell}{}^2\,,
\end{equation}
where $\rho$ and $\si$ are given by {\rm (\ref{configured-star-conds})}.  In particular, $\ga = \si /(\si + \rho )$ is real and depends only on the star invariants.
\end{corollary}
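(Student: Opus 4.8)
The plan is to reduce the identity (\ref{star-gamma}) directly to the two structural relations (\ref{configured-star-conds}) furnished by Lemma \ref{lem:configured-star}, by reading off the complex moments $\sum_\ell z_\ell$ and $\sum_\ell z_\ell{}^2$ from the entries of $WW^t$ and from $\sum_\ell\vec{x}_\ell$. Write $\vec{a}$ for the first row of $W$ and $\vec{b}$ for the $N$-th row, so that after the projection $P(y_1,\ldots,y_N)=y_1+\ii y_N$ we have $z_\ell = x_{\ell 1}+\ii\,x_{\ell N} = a_\ell + \ii\, b_\ell$ (the statement's $x_{\ell 2}$ should read $x_{\ell N}$; in any case the argument below works verbatim for a projection onto any two of the standard coordinate axes, after relabelling). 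The point is that the inner products of rows of $W$ are exactly the entries of $WW^t$, and these are pinned down by (\ref{configured-star-conds}).

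\textbf{The quadratic moment.} First I would expand
$$
\sum_{\ell=1}^n z_\ell{}^2 = \big(\|\vec{a}\|^2 - \|\vec{b}\|^2\big) + 2\ii\,\langle\vec{a},\vec{b}\rangle,
$$
and note that $\|\vec{a}\|^2 = (WW^t)_{11}$, $\|\vec{b}\|^2 = (WW^t)_{NN}$ and $\langle\vec{a},\vec{b}\rangle = (WW^t)_{1N}$. Substituting $WW^t = \rho I_N + \si\,\vec{u}\vec{u}^t$ from (\ref{configured-star-conds}) gives $(WW^t)_{11} = \rho + \si u_1{}^2$, $(WW^t)_{NN} = \rho + \si u_N{}^2$ and $(WW^t)_{1N} = \si u_1 u_N$, so the $\rho$-terms cancel and
$$
\sum_{\ell=1}^n z_\ell{}^2 = \si\big((u_1{}^2 - u_N{}^2) + 2\ii u_1 u_N\big) = \si\,(u_1 + \ii u_N)^2.
$$

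\textbf{The linear moment and conclusion.} Next I would use the second relation in (\ref{configured-star-conds}), namely $\sum_\ell\vec{x}_\ell = \sqrt{n(\si+\rho)}\,\vec{u}$; taking its first and $N$-th components gives $\sum_\ell a_\ell = \sqrt{n(\si+\rho)}\,u_1$ and $\sum_\ell b_\ell = \sqrt{n(\si+\rho)}\,u_N$, hence
$$
\Big(\sum_{\ell=1}^n z_\ell\Big)^2 = n(\si+\rho)\,(u_1 + \ii u_N)^2.
$$
Since $\si+\rho>0$ by Lemma \ref{lem:configured-star}, the coefficient $\si/(n(\si+\rho))$ is well defined, and comparing the two displays yields $\frac{\si}{n(\si+\rho)}\big(\sum_\ell z_\ell\big)^2 = \si\,(u_1+\ii u_N)^2 = \sum_\ell z_\ell{}^2$, which is (\ref{star-gamma}); in particular this is (\ref{one}) at the internal vertex with $\ga = \si/(\si+\rho)$, a real number depending only on the invariants $n,\rho,\si$ of Lemma \ref{lem:configured-star}.

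\textbf{On the difficulty.} There is no genuine obstacle: the whole argument is bookkeeping once Lemma \ref{lem:configured-star} is in hand. The one place that needs care is recognising why the three numbers $(WW^t)_{11}$, $(WW^t)_{NN}$, $(WW^t)_{1N}$ are the only ingredients that enter $\sum z_\ell{}^2$ — this is precisely because $P$ projects onto two mutually orthogonal coordinate directions — and then observing that the rank-one correction $\si\,\vec{u}\vec{u}^t$ forces these three numbers to assemble into the perfect square $\si(u_1+\ii u_N)^2$, while the vector identity forces $\sum z_\ell$ to be a scalar multiple of the same complex number $u_1+\ii u_N$; the two expressions therefore differ by exactly the factor $n(\si+\rho)$.
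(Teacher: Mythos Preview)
Your proof is correct and follows essentially the same route as the paper: both arguments read $\sum_\ell z_\ell{}^2$ off the $(1,1)$, $(N,N)$ (resp.\ $(2,2)$) and $(1,N)$ (resp.\ $(1,2)$) entries of $WW^t=\rho I_N+\si\,\vec{u}\vec{u}^t$, and read $\sum_\ell z_\ell$ off the two relevant components of $\sum_\ell\vec{x}_\ell=\sqrt{n(\si+\rho)}\,\vec{u}$, then compare. Your observation about the indexing inconsistency in the statement is apt; the paper's own proof silently uses coordinates $1$ and $2$.
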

\begin{proof}  Let $\vec{u} = (u_1, \ldots , u_N)$ be the unit normal to the plane of the star.  Then for each $j = 1, \ldots , N$, we have
$$
\sum_{\ell = 1}^n x_{\ell j} = \sqrt{n(\si + \rho )} \, u_j\,.
$$
Thus
\begin{eqnarray*}
\left( \sum_{\ell = 1}^nz_{\ell}\right)^2 & = & \sum_{k,\ell = 1}^n (x_{k1} x_{\ell 1} - x_{k2} x_{\ell 2} + 2\ii x_{k1}x_{\ell 2}) \\
 & = & n(\si \rho )(u_1{}^2-u_2{}^2 + 2\ii u_iu_2) \\
  & = & n(\si \rho )(u_1+\ii u_2)^2\,,
  \end{eqnarray*}
whereas
\begin{eqnarray*}
\sum_{\ell =1}^n z_{\ell}{}^2 & = & \sum_{\ell = 1}^n (x_{\ell 1}{}^2 - x_{\ell 2}{}^2 + 2\ii x_{\ell 1}x_{\ell 2}) \\
 & = & (WW^t)_{11} - (WW^t)_{22} + 2\ii (WW^t)_{12} \\
  & = & \si (u_1+ \ii u_2)^2\,.
  \end{eqnarray*}
  The formula now follows.
  \end{proof}
  
  In Section \ref{sec:distance} we will consider the problem of establishing a converse to this corollary. 

Examples of configurations of points $\vec{v}_1, \ldots , \vec{v}_n \in \RR^{N-1}$ which satisfy the criteria of (\ref{star-conditions}) are as follows.  In $\RR$, any set of points not all zero distributed along the real line with centre of mass the origin form such a configuration.  In $\RR^2 \simeq \CC$, we have the $n$ roots of unity:
\begin{equation} \label{star-R2}
\vec{v}_{\ell} = e^{2\pi \ii\ell /n} \qquad (\ell = 1, \ldots n)\,.
\end{equation}
For $n\geq 3$, we have $\rho = \sum_{\ell = 1}^n\cos^2(2\pi \ell /n) = \sum_{\ell = 1}^n\sin^2(2\pi \ell /n) = n/2$ and the coefficient $\ga$ determined by (\ref{star-gamma}) is given by
\begin{equation} \label{star-gamma-cyclic}
\ga = \frac{2c^2-1}{2c^2}\,.
\end{equation}
This configuration is regular.  We note the following elementary fact.
\begin{lemma}  In \ $\RR^2$, every configuration of three points is regular.
\end{lemma}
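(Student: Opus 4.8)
The plan is to pass to complex coordinates, where the configuration conditions force the three points to be the cube roots of a single complex number, hence the vertices of a (possibly degenerate) equilateral triangle centred at the origin.

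First I would identify $\RR^2 \simeq \CC$ and write $w_\ell = v_{\ell 1} + \ii v_{\ell 2}$ for $\ell = 1,2,3$, where $\vec{v}_\ell = (v_{\ell 1}, v_{\ell 2})$ are the three points, with configuration matrix $U$. The centroid condition $\sum_{\ell}\vec{v}_\ell = \vec{0}$ becomes $w_1 + w_2 + w_3 = 0$. The condition $UU^t = \rho I_2$ says precisely that $\sum_\ell v_{\ell 1}^2 = \sum_\ell v_{\ell 2}^2$ and $\sum_\ell v_{\ell 1}v_{\ell 2} = 0$; since $w_\ell^2 = v_{\ell 1}^2 - v_{\ell 2}^2 + 2\ii\, v_{\ell 1}v_{\ell 2}$, taking real and imaginary parts shows these two real equations imply the single complex identity $w_1^2 + w_2^2 + w_3^2 = 0$.

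Next I would read off the elementary symmetric functions. One has $e_1 := w_1 + w_2 + w_3 = 0$ and $e_2 := w_1w_2 + w_2w_3 + w_3w_1 = \tfrac12\big(e_1^2 - (w_1^2+w_2^2+w_3^2)\big) = 0$, so $w_1, w_2, w_3$ are exactly the roots of $t^3 - e_3 = 0$ with $e_3 = w_1w_2w_3$; that is, they are the three cube roots of the complex number $e_3$. If $e_3 \neq 0$ these are $e_3^{1/3}$, $\om\, e_3^{1/3}$, $\om^2 e_3^{1/3}$ with $\om = e^{2\pi\ii/3}$, all of modulus $|e_3|^{1/3}$; if $e_3 = 0$ then $w_1 = w_2 = w_3 = 0$ (whence $\rho = 0$, the trivial case). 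Either way $|w_1| = |w_2| = |w_3|$, i.e.\ $||\vec{v}_1|| = ||\vec{v}_2|| = ||\vec{v}_3||$. Recalling that in standard position $\vec{x}_\ell = (\vec{v}_\ell, c)^t$, so $||\vec{x}_\ell||^2 = ||\vec{v}_\ell||^2 + c^2$, the equality of the $||\vec{v}_\ell||$ gives $||\vec{x}_1|| = ||\vec{x}_2|| = ||\vec{x}_3||$, which is the defining property of a regular configured star.

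There is no substantial obstacle here; the only point needing a word of care is the degenerate case $e_3 = 0$, which collapses all three points (and $\rho$) to zero and is already excluded by the standing hypothesis that $\rho$ be non-zero — or is regular trivially. The computation is otherwise entirely formal once the complex reformulation is in place.
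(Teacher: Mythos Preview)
Your proof is correct and takes a genuinely different route from the paper's. The paper normalizes by a rotation (and implicitly a scaling) so that the first point sits at $(1,0)$, then combines the centroid condition with the two scalar equations coming from $UU^t=\rho I_2$ to solve the resulting small system explicitly; the unique solution (up to a sign) is the equilateral configuration $\{1,\,-\tfrac12\pm\tfrac{\sqrt{3}}{2}\ii\}$, and one simply reads off that all three column lengths are equal. Your argument instead passes to $\CC$, observes that the configuration conditions force the first two elementary symmetric functions of $w_1,w_2,w_3$ to vanish, and concludes that the $w_\ell$ are the cube roots of $e_3=w_1w_2w_3$, hence automatically of equal modulus. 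This is more conceptual and coordinate-free: it avoids solving any system and makes transparent \emph{why} the answer must be an equilateral triangle (roots of $t^3=e_3$). The paper's computation, by contrast, has the virtue of being entirely elementary and of producing the explicit normalized matrix directly, but it is specific to the chosen gauge. Your handling of the degenerate case $e_3=0$ is also fine, since the definition of a configuration requires $\rho\neq 0$.
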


\begin{proof}  Consider a configuration of three points in $\RR^2$ given by (\ref{star-conditions}).  Write the entries of $U$ as follows:
$$
U = \left( \begin{array}{ccc} u_1 & u_2 & u_3 \\ v_1 & v_2 & v_3 \end{array} \right)\,.
$$
By a rotation, we can suppose the further normalization: $u_1 = 1,\ v_1 = 0$.  This gives the equations;
$$
\begin{array}{rcl}
1+u_2{}^2 + u_3{}^2 & = & v_2{}^2+v_3{}^2 \\
u_2v_2+u_3v_3 & = & 0 \end{array}
$$
Then this has the solution, unique up to the sign of the square root,
$$
U = \left( \begin{array}{ccc} 1 & -\frac{1}{2} & -\frac{1}{2} \\ 0 & \frac{\sqrt{3}}{2} & - \frac{\sqrt{3}}{2} \end{array} \right)\,.
$$
Since the lengths of the columns of this matrix are all equal to $1$, then the configuration is regular.
\end{proof}

However, there exist non-regular configurations in $\RR^2$.  Consider the following set of four points, expressed as the columns of the corresponding configuration matrix:
\begin{equation} \label{non-reg-config}
\left( \begin{array}{rrrr} -1 & 1 & -1 & 1 \\ 
\la & \mu & - \la & - \mu \end{array} \right) ,
\end{equation}
where $\la$ and $\mu$ are real non-zero constants.  Then the set forms a configuration if and only if $\la^2+\mu^2=2$.  This is regular only when $\la^2=\mu^2$.  A non-regular star determined by this configuration will appear later in Example \ref{ex:lift}.  In the examples of configurations that follow, we express their vertices as the column vectors of the corresponding configuration matrix.   

In $\RR^3$, the vertices of a tetrahedron placed symmetrically at the points:
\begin{equation} \label{tetrahedron-coords}
\left( \begin{array}{rrrr} -1 & -1 & 1 & 1 \\
-1 & 1 & -1 & 1 \\
-1 & 1 & 1 & -1 
 \end{array}\right)\,,
\end{equation}
satisfy conditions (\ref{star-conditions}), as do the six vertices of an octahedron:
$$
\left( \begin{array}{rrr} \pm 1 & 0 & 0\\
0 & \pm 1 & 0 \\
0 & 0 & \pm 1  \end{array}\right)\,,
$$
and the twelve vertices:
\begin{equation} \label{vertices-icosahedron}
\left( \begin{array}{rrr} 0 & \pm 1 & \pm \la \\
\pm 1 & \pm \la & 0 \\
\pm \la & 0 & \pm 1 \end{array}\right)\,,
\end{equation}
where $\la$ is any real constant.  In the case when $\la = \frac{1+\sqrt{5}}{2}$, these form the vertices of an icosahedron.  In $\RR^3$, there is a further configuration of twenty vertices satisfying (\ref{star-conditions}) given by:
\begin{equation} \label{vertices-dodecahedron}
\left( \begin{array}{cccc} 0 & \pm \la & \pm \la^{-1} & \pm 1 \\
\pm \la^{-1} & 0 & \pm \la & \pm 1 \\
\pm \la & \pm \la^{-1} & 0 & \pm 1  \end{array}\right)\,,
\end{equation}
where $\la$ is any real constant.  In the case when $\la = \frac{1+\sqrt{5}}{2}$, these form the vertices of a dodecahedron (see \cite{Co}, \S 3.8). 

In $\RR^{N-1}$, the configuration of $2(N-1)$ vertices of the cross-polytope:
\begin{equation} \label{cross-poly}
\left( \begin{array}{rrrr} \pm 1 & 0 & \cdots & 0 \\
0 & \pm 1 & \cdots & 0 \\
\vdots & \vdots & \ddots & \vdots \\
0 & 0 & \cdots & \pm 1 \end{array}\right), 
\end{equation}
satisfies (\ref{star-conditions}). 

In order to construct a configuration of vertices of a regular $N$-simplex in $\RR^N$ ($N\geq 3$) satisfying (\ref{star-conditions}), we proceed inductively starting with the configuration (\ref{tetrahedron-coords}) in $\RR^3$.  

Suppose we are given $N$ vectors $\vec{v}_1, \ldots , \vec{v}_N$ in $\RR^{N-1}$ which label the vertices of a regular $(N-1)$-simplex in such a way that, on letting $U$ denote the $(N-1)\times N$-matrix whose columns are formed from the components of $\vec{v}_{\ell}$ $(\ell = 1, \ldots , N)$, the following conditions are satisfied:
$$
\sum_{\ell = 1}^N \vec{v}_{\ell} = \vec{0}, \quad UU^t=\rho I_{N-1},\quad ||\vec{v}_{\ell}||=\si \ \forall \ell\,,
$$
 for constants $\rho$ and $\si$ satisfying $\rho /\si^2=N/(N-1)$ (which is the case for the $3$-simplex of (\ref{tetrahedron-coords})).  Note that the condition that the vectors do indeed correspond to the vertices of a regular $(N-1)$-simplex is that they are of equal length and the angle between them has cosine $- 1/(N-1)$.  Form the following set of $N+1$ vectors in $\RR^N$:
\begin{equation} \label{N-simplex-induction}
 \vec{w}_0=\left( \begin{array}{c} -\frac{N\si}{\sqrt{N^2-1}} \\ \vec{0}  \end{array}\right),\,
\vec{w}_1=  \left( \begin{array}{c} \frac{\si}{\sqrt{N^2-1}} \\ \vec{v}_1  \end{array}\right),\, \cdots ,\, 
\vec{w}_N=      \left( \begin{array}{c} \frac{\si}{\sqrt{N^2-1}} \\ \vec{v}_N  \end{array}\right).
\end{equation}
and let $W$ be the matrix whose columns are the components of $\vec{w}_{\ell}$ for $\ell = 0,1, \ldots ,N$.
Then it is readily checked that 
$$
\sum_{\ell = 0}^N \vec{w}_{\ell} = 0, \quad WW^t=\rho I_N,\quad ||\vec{w}_{\ell}||= \frac{\si N}{\sqrt{N^2-1}} \ \forall \ell\,,
$$
now with $\rho /\left(\frac{\si N}{\sqrt{N^2-1}}\right)^2 = (N+1)/N$, as required.  Furthermore, the angle between any two $\vec{w}_j$ and $\vec{w}_k$ ($j\neq k)$ has cosine $-1/N$.  In particular, the vectors correspond to the vertices of a regular $N$-simplex and satisfy (\ref{star-conditions}).    

To conclude this list of examples of configurations, we note that if $(\vec{v}_1|\vec{v}_2|\cdots |\vec{v}_n)$ is a configuration in $\RR^{N-1}$ with invariant $\rho$, then both
$$
\left( \begin{array}{c|c|c|c|c} 
\vec{v}_1 & \vec{v}_2 & \cdots & \vec{v}_n & \vec{0}  \\
 c & c & \cdots & c & -nc 
 \end{array} \right)
 $$
 where $c = \rho /\sqrt{n(1+n)}$, 
 and
$$
\left( \begin{array}{c|c|c|c|c|c} 
\vec{v}_1 & \vec{v}_2 & \cdots & \vec{v}_n & \vec{0} & \vec{0} \\
 0 & 0 & \cdots & 0 & \sqrt{\frac{\rho}{2}} & -\sqrt{\frac{\rho}{2}}
 \end{array} \right)
 $$
are configurations in $\RR^N$ with the same invariant $\rho$.  The first of these generalizes the inductive construction of the vertex configuration of the regular simplex given by (\ref{N-simplex-induction}). 

Before proceeding, we assemble the information we require concerning regular polytopes, which can be found in the classical text of Coxeter \cite{Co}.  The regular polytopes have a symbolic representation in terms of the Schl\"afli symbol.  This can be defined inductively as follows.  For a regular polygon with $p$ edges, one assigns the symbol $\{ p\}$.  A regular star polygon which winds $m$ times around its center is denoted by the fractional value $\{ p/m\}$, so for example, the second graph of Example \ref{ex:cyclic-5vertices} has Schl\"afli symbol $\{ 5/2\}$.  A regular polyhedron whose faces are polygons of type $\{ p\}$ which has $q$ such faces joining around a vertex has symbol $\{ p,q\}$.  A regular $4$-polytope (or polychoron) with highest dimensional cells polyhedra of type $\{ p,q\}$ having $r$ such cells joining around an edge has symbol $\{ p,q,r\}$, and so on.  

A regular polytope in $\RR^N$ with Schl\"afli symbol $\{ p, q, \ldots , s,t\}$ is characterized by its highest dimensional cells which are regular polytopes with symbol $\{ p,q, \ldots , s\}$, and its \emph{vertex figure}, which is a regular polytope of type $\{ q,r \ldots , t\}$ in $\RR^{N-1}$ obtained by fixing a vertex $\vec{x}_0$ and constructing a polytope in an affine $(N-1)$-plane, whose vertices are points half way along each edge emanating from $\vec{x}_0$.  For example, the dodecahedron has Schl\"afli symbol $\{ 5,3\}$; it is made up of three pentagonal faces around each vertex; its vertex figure is the triangle obtained by moving half way along each edge emanating from a particular vertex and joining these points by edges which traverse each of the three corresponding faces.  The $600$ cell is a $4$-dimensional regular polytope with $120$ vertices, $720$ edges, $1200$ faces and $600$ tetrahedral cells; it has $5$ tetrahedra joining around each of its edges and so has Schl\"afli symbol $\{ 3,3,5\}$.  Its vertex figure is an icosahedron with symbol $\{ 3,5\}$.  

In two dimensions, the regular polytopes are just the regular polygons and star polygons, as discussed in Section \ref{sec:cyclic} (see Example \ref{ex:cyclic-5vertices} for an illustration of a star polygon on five vertices).  A rich variety of regular polytopes exists in three and four dimensions.  In three dimensions, there are five convex regular polytopes (polyhedra): the tetrahedron, octahedron, cube, icosahedron and dodecahedron, often known as the Platonic solids; in addition there are four non-convex regular polytopes called star polyhedra, with Schl\"afli symbols $\{\frac{5}{2},5\}$, $\{ 5, \frac{5}{2}\}$, $\{ \frac{5}{2}, 3\}$, $\{ 3, \frac{5}{2}\}$.  

The star polyhedra can be constructed from the convex polyhedra by a process known as stellating and faceting. For a polygon, stellating consists of maintaining the edges and extending them until they connect in new vertices; faceting on the other hands consists of maintaining the vertices and inserting new edges in an appropriate way.  The procedure for polyhedra is similar: either the faces are extended to create new vertices or the vertices are maintained and new faces are constructed (see \cite{Co}, Chapter 6).  In particular, the ensemble of positions of the vertices of a star polyhedron is always congruent to those of a convex polyhedron.  

In dimension four, there are sixteen regular polytopes, six of them convex.  An important property to note is that it is precisely the three dimensional regular polyhedra that arise as vertex figures of the four dimensional regular polytopes.  

In dimensions five and above, there are just three kinds of regular polytope, the regular simplex with Schl\"afli symbol $\{ 3,3, \ldots , 3\}$, the cross polytope with Schl\"afli symbol $\{ 3, 3, \ldots , 3, 4\}$ and the measure polytope, or hypercube with Schl\"afli symbol $\{ 4,3,\ldots , 3\}$. 

The above discussion leads to the following consequence.

\begin{theorem}  \label{thm:reg-polytope}  Let $\Ga = (V,E)$ be the graph given by the $1$-skeleton of a regular polytope and let $P:\RR^N \ra \CC$ be any orthogonal projection of the ambient Euclidean space.  Let $\phi = P|_V:V\ra \CC$.  Then $\phi$ satisfies {\rm (\ref{one})} with $\ga$ constant.
\end{theorem}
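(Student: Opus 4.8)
The plan is to reduce the statement, vertex by vertex, to Corollary~\ref{cor:configured-star}: the claim will follow once I show that at every vertex the star formed by that vertex and its neighbours is a configured star, with the same invariants $n,\rho,\si$ at each vertex. The geometric input making this work is precisely that a regular polytope looks the same at each vertex and that its vertex figure is again a regular polytope, which is exactly the symmetry needed to produce conditions (\ref{star-conditions}).

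First I would normalise. Translating $\Ga$ inside $\RR^N$ alters $\phi$ only by an additive constant, under which (\ref{one}) is invariant, so I may assume the regular polytope $\Pp$ has centroid at the origin; then every symmetry of $\Pp$ is a linear orthogonal map. An orthogonal projection $\RR^N\ra\CC$ is a map $\vec y\mapsto\langle\vec y,\vec a\rangle+\ii\langle\vec y,\vec b\rangle$ with $\{\vec a,\vec b\}$ orthonormal, i.e.\ $P\circ A$ for a rotation $A$, where $P(y_1,\dots,y_N)=y_1+\ii y_N$; replacing $\Pp$ by $A(\Pp)$ (again a regular polytope) I may take $P$ to be this standard projection, the one used in Corollary~\ref{cor:configured-star}. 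Since $\Pp$ is regular its symmetry group acts transitively on vertices, so all vertices have a common degree $n$, and $n\ge N$ because the vertex figure is a regular $(N-1)$-polytope and hence has at least $N$ vertices.

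The main step — the one carrying the real content — is the following. Fix a vertex $\vec x_0$ with neighbours $\vec x_1,\dots,\vec x_n$, and consider the star with internal vertex $\vec x_0$ moved to the origin, i.e.\ with external vertices $\vec y_\ell=\vec x_\ell-\vec x_0$. All vertices of $\Pp$ lie on a sphere of some radius $R$ about the origin and all edges are congruent, so $\langle\vec x_\ell,\vec x_0\rangle$ has the same value for every neighbour $\vec x_\ell$; hence the $\vec y_\ell$ lie in a fixed affine hyperplane orthogonal to $\vec u:=\vec x_0/R$, at signed distance $c$ from the origin, with $c\neq0$ (as $c=0$ would force $\vec x_\ell=\vec x_0$). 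Write $\vec y_\ell=(\vec v_\ell,c)$ in an orthonormal frame with last vector $\vec u$; it remains to verify (\ref{star-conditions}) for the $\vec v_\ell$. The stabiliser $G_0$ of $\vec x_0$ in the symmetry group of $\Pp$ permutes the $\vec y_\ell$, hence the $\vec v_\ell$, and induces on $\vec u^\perp\cong\RR^{N-1}$ the full symmetry group of the vertex figure of $\Pp$, which acts irreducibly on $\RR^{N-1}$ (a classical property of regular polytopes, see \cite{Co}). Therefore $\sum_\ell\vec v_\ell$, being a $G_0$-fixed vector, vanishes, and $UU^t=\sum_\ell\vec v_\ell\vec v_\ell^t$ commutes with $G_0$, so equals $\rho I_{N-1}$ by Schur's lemma, with $\rho>0$ since the $\vec v_\ell$ are not all zero. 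Thus the star at $\vec x_0$ is a configured star, with $\si=nc^2-\rho$ and $\rho+\si=nc^2>0$. Moreover these invariants do not depend on $\vec x_0$: for another vertex $\vec x_0'$ there is an orthogonal symmetry $g$ of $\Pp$ with $g(\vec x_0)=\vec x_0'$ carrying neighbours to neighbours, so the translated star at $\vec x_0'$ is the $g$-image of that at $\vec x_0$ and, by Lemma~\ref{lem:configured-star}, has the same $n,\rho,\si$.

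Finally I would conclude. Since $P$ is linear, $z_\ell:=P(\vec y_\ell)=\phi(\vec x_\ell)-\phi(\vec x_0)$, so Corollary~\ref{cor:configured-star} applied to the configured star at $\vec x_0$ gives, at the vertex $x=\vec x_0$,
$$
\frac{\si}{n(\si+\rho)}\Big(\sum_{y\sim x}\big(\phi(y)-\phi(x)\big)\Big)^2=\sum_{y\sim x}\big(\phi(y)-\phi(x)\big)^2 ,
$$
which is exactly (\ref{one}) with $\ga=\si/(\si+\rho)$; and $n,\rho,\si$ are the same at every vertex, so $\ga$ is constant. The only delicate point is the verification that the neighbour-star is configured with vertex-independent invariants, and there the essential ingredient is that the vertex figure of a regular polytope is again a regular polytope, so that its symmetry group acts irreducibly and Schur's lemma forces $UU^t=\rho I_{N-1}$.
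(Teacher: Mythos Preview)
Your proof is correct, and it takes a genuinely different route from the paper's. The paper proceeds case by case through the list of regular polytopes: for polyhedra it notes that the vertex figure is a polygon and invokes the placement (\ref{star-R2}); for $4$-polytopes it notes that the vertex figure is one of the Platonic solids and refers back to the explicit co\"ordinate lists (\ref{tetrahedron-coords})--(\ref{vertices-dodecahedron}); and for the infinite families (simplex, hypercube, cross-polytope) it points to the placements (\ref{N-simplex-induction}) and (\ref{cross-poly}). In each case the conditions (\ref{star-conditions}) are verified by inspection of those explicit co\"ordinates.

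Your argument replaces this enumeration with a single uniform argument from symmetry: the stabiliser of a vertex acts on the affine hyperplane containing the neighbours as the full symmetry group of the vertex figure, that action is (absolutely) irreducible, and then $\sum_\ell\vec v_\ell=\vec 0$ and $UU^t=\rho I_{N-1}$ follow from Schur's lemma. What this buys you is that you never have to touch the classification or write down a single co\"ordinate, and the argument would extend verbatim to any vertex-transitive framework whose vertex stabiliser acts irreducibly on the hyperplane of neighbours. What the paper's approach buys is that it is completely elementary---no representation theory is invoked---and it produces, as a by-product, the explicit configurations that are reused later (for instance in the computation of $\ga$ and of curvature in Section~\ref{sec:curvature}). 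One small point worth making explicit in your write-up: the reflection representation of an irreducible finite Coxeter group is absolutely irreducible, so Schur's lemma over $\RR$ really does force $UU^t$ to be scalar; this covers the star polytopes as well, since they share their symmetry groups with the convex ones.
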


\begin{proof}  It suffices to note, that in each case the vertex figures with respect to a particular vertex $\vec{x}_0$ can be positioned in such a way that the vectors of the corresponding star satisfy the criteria of Corollary \ref{cor:configured-star}.   In the case of a polyhedron, the vertex figure is a polygon whose vertices can be placed as in (\ref{star-R2}).  For a $4$-dimensional regular polytope, the vertex figure is a regular polyhedron and, as the case by case list above shows, their vertices can also be placed in the required way.  Similarly, for the cross-polytope, whose vertex figure is another cross-polytope of one dimension lower; this is dealt with by the vertex placement (\ref{cross-poly}).  The hypercube and the regular $N$-simplex both have vertex figure a regular $N-1$-simplex, whose required placement is given by the inductive construction of the vectors given in (\ref{N-simplex-induction}) (with $N$ replaced by $N-1$).  
\end{proof} 

\section{Invariant frameworks} \label{sec:inv-polytopes} 

A \emph{framework} $\Ff$ in $\RR^N$ is a finite collection of points $\{ \vec{x}_1, \ldots , \vec{x}_n\}$ which correspond to the vertices of a graph whose edges are straight line segments joining the vertices.  We shall view a framework as a graph and continue to use the notation $\Ff = (V,E)$ to distinguish the vertices and edges.  The edges are often called \emph{bars} and such bar frameworks have been much studied in respect of questions about rigidity \cite{Co-Jo-Wh}.  Thus one is interested in whether it is possible to deform the bar framework whilst preserving the length of the edges.  If the only such deformations correspond to rigid motions, then the bar framework is called \emph{rigid}.  The planar linkages of Section \ref{sec:cyclic} (other than the triangle) provide examples of non-rigid bar frameworks. 

We have seen in the last section that the framework corresponding to the $1$-skeleton of a regular polytope has the property that the function which, after an orthogonal projection $\RR^N\ra \CC$ associates the corresponding complex values to each vertex, satisfies (\ref{one}).  Furthermore, it does so in an invariant fashion with respect to orthogonal transformation of the ambient Euclidean space.  In this section, we wish to determine more general frameworks for which these properties hold.  Amongst the examples are ones that are ``flexible" in the sense that they can be deformed continuously through solutions to (\ref{one}) without disturbing the value of $\ga$. 

Let $\Ff = (V,E)$ be a framework in $\RR^N$ and let $P:\RR^N \ra \CC$ be some orthogonal projection of the ambient Euclidean space.  Let $\phi = P|_V:V\ra \CC$.  We say that the framework is \emph{invariant} if (i) $\phi$ satisfies (\ref{one}) for some $\ga : V\ra \RR$; (ii) if $A:\RR^N\ra \RR^N$ is any orthogonal transformation, then $\phi_A = P\circ A|_V:V\ra \CC$ also satisfies (\ref{one}) with corresponding $\ga_A$ satisfying $\ga_A (A\vec{x}) = \ga (\vec{x})$ for all $\vec{x}\in V$.  Note that if (i) is satisfied, this always remains the case if the framework is translated or dilated, since this just corresponds to a normalization $\phi \mapsto \la \phi + \mu$ (now with $\la$ real).  Then condition (ii) implies invariance by a transformation $\vec{x} \mapsto \la A\vec{x} + \vec{b}$, where $\la >0$ is real, $A$ is any orthogonal transformation and $\vec{b}\in \RR^N$ is any vector.

By Theorem \ref{thm:reg-polytope}, any framework given by the $1$-skeleton of a regular polytope is invariant.  But do there exist other invariant frameworks that arise as the $1$-skeleton of non-regular polytopes? It turns out that the answer to this question is yes.  In Section \ref{sec:particles}, when we consider connected graphs which satisfy (\ref{one}) as components of a more complex system, we will view such objects as ``physically" significant.  We now construct families of examples.  In order to do this, we generalize the configured stars of the last section.  

Consider a star in $\RR^N$ with internal vertex the origin and external vertices located at the points $\vec{x}_1, \vec{x}_2, \ldots , \vec{x}_n$.  As in the last section, it is often convenient to represent the star by an $N\times n$ --matrix whose columns are the components of the vectors $\vec{x}_{\ell}$ $(\ell = 1, \ldots , n)$:
$$
W = (\vec{x}_1|\vec{x}_2|\cdots |\vec{x}_n)\,.
$$
Consider the projection $P:\RR^N\ra \CC$ given by $P(y_1, \ldots , y_N) = y_1 + \ii y_2$.  Let $A:\RR^N\ra \RR^N$ be an orthogonal transformation and set $z_{\ell} = P\circ A (\vec{x}_{\ell})$.  We say that the star is \emph{invariant} if it satisfies the equation:
\begin{equation} \label{inv-star}
\frac{\ga}{n} \left( \sum_{\ell = 1}^nz_{\ell}\right)^2 = \sum_{\ell = 1}^nz_{\ell}{}^2\,,
\end{equation}
with $\ga$ real and independent of the transformation $A$.  By Corollary \ref{cor:configured-star}, a configured star is invariant. However, there are more general invariant stars.  One family is given by the following lemma.

\begin{lemma} \label{lem:invariant-star}
The star in $\RR^3$ with $2r$ external vertices represented by the columns of the $3\times (2r)$ --matrix
$$
W = \left( \begin{array}{rrrrrrrr} x_1 & x_2 & \cdots & x_r & x_1 & x_2 & \cdots & x_r \\
s_1 & s_2 & \cdots & s_r & -s_1 & - s_2 & \cdots & - s_r \\
t_1 & t_2 & \cdots & t_r & - t_1 & - t_2 & \cdots & - t_r 
\end{array} \right)\,,
$$
 where the vectors $\vec{s} = (s_1, \ldots , s_r)$ and $\vec{t} = (t_1,\ldots , t_r)$ are orthogonal and of the same length, is invariant.
 \end{lemma}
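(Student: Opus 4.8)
The plan is to reduce the statement to two short matrix computations together with the algebraic identity satisfied by any orthogonal projection of $\RR^3$ onto $\CC$. Write $\rho := ||\vec{s}||^2 = ||\vec{t}||^2$. First I would record the two ``moments'' of the star: the first-moment vector $\sum_{\ell = 1}^{2r}\vec{x}_\ell$ and the symmetric second-moment matrix $WW^t = \sum_{\ell=1}^{2r}\vec{x}_\ell\vec{x}_\ell{}^t$. The $\pm$ antisymmetry of the last two rows of $W$ makes the row sums of rows $2$ and $3$ vanish, so $\sum_{\ell}\vec{x}_\ell = \big(2\sum_i x_i\big)\vec{e}_1$; the same cancellation kills $(WW^t)_{12}$ and $(WW^t)_{13}$, the hypothesis $\vec{s}\cdot\vec{t}=0$ kills $(WW^t)_{23}$, and $||\vec{s}||=||\vec{t}||$ gives $(WW^t)_{22}=(WW^t)_{33}=2\rho$, so that $WW^t = \mathrm{diag}\big(2\sum_i x_i^2,\, 2\rho,\, 2\rho\big)$.

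Next I would encode an arbitrary orthogonal transformation $A$ by the complex vector $\vec{w} = \vec{a} + \ii\vec{b}\in\CC^3$, where $\vec{a},\vec{b}$ are the first two rows of $A$; then $z_\ell = P(A\vec{x}_\ell) = \vec{w}\cdot\vec{x}_\ell$ (bilinear pairing), and orthonormality of $\vec{a},\vec{b}$ yields the crucial null identity $\vec{w}\cdot\vec{w} = ||\vec{a}||^2 - ||\vec{b}||^2 + 2\ii\,\vec{a}\cdot\vec{b} = 0$. Plugging in the two moments gives $\sum_\ell z_\ell = 2\big(\sum_i x_i\big)w_1$ and
$$\sum_\ell z_\ell{}^2 = \vec{w}^t(WW^t)\vec{w} = 2\Big(\sum_i x_i^2\Big)w_1^2 + 2\rho\big(w_2^2+w_3^2\big) = 2\Big(\sum_i x_i^2 - \rho\Big)w_1^2,$$
the last equality being $w_2^2+w_3^2 = -w_1^2$ from $\vec{w}\cdot\vec{w}=0$. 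Substituting into (\ref{inv-star}) and cancelling the common factor $2w_1^2$ (both sides vanish identically when $w_1=0$) shows the equation holds with
$$\ga = \frac{r\big(\sum_i x_i^2 - \rho\big)}{\big(\sum_i x_i\big)^2},$$
which is real and visibly independent of $A$; this is invariance.

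I do not expect a genuine obstacle: the whole point is the observation that the $\pm$ structure in the last two rows, combined with $\vec{s}\perp\vec{t}$ and $||\vec{s}||=||\vec{t}||$, forces $WW^t$ to be diagonal with a repeated eigenvalue on the $(y_2,y_3)$-plane, which is exactly the shape needed to collapse $\sum_\ell z_\ell{}^2$ using $\vec{w}\cdot\vec{w}=0$. It is worth remarking that such a $W$ need not satisfy the relations (\ref{configured-star-conds}) of Lemma \ref{lem:configured-star} --- those would force $\big(\sum_i x_i\big)^2 = r\sum_i x_i^2$, i.e.\ all the $x_i$ equal --- so this genuinely enlarges the family of invariant stars beyond the configured ones; and the displayed $\ga$ of course presupposes $\sum_i x_i\neq 0$, the degenerate case being one where no invariant $\ga$ exists.
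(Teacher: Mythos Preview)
Your proof is correct and follows essentially the same approach the paper indicates (the proof is omitted there, with a pointer to the method of Theorem~\ref{thm:double-cone}): compute the first moment $\sum_\ell\vec{x}_\ell$ and the second moment $WW^t$, then expand $\sum_\ell z_\ell$ and $\sum_\ell z_\ell{}^2$ after applying an arbitrary orthogonal $A$. Your packaging via the complex null vector $\vec{w}=\vec{a}+\ii\vec{b}$ with $\vec{w}\cdot\vec{w}=0$ is a clean way to collapse the $(y_2,y_3)$-block, but it is the same computation the paper carries out componentwise in the analogous results (Corollary~\ref{cor:configured-star}, Theorem~\ref{thm:double-cone}); your closing remarks on the non-configured nature of $W$ and the degeneracy at $\sum_i x_i=0$ are also accurate.
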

 
We omit the proof, which runs along similar lines to that of the theorem which follows.

The lemma enables us to construct a family of invariant polytopes in $\RR^3$, as illustrated in the following figure.

\medskip
\begin{center}
\setlength{\unitlength}{0.254mm}
\begin{picture}(135,150)(110,-225)
        \allinethickness{0.254mm}\path(145,-180)(205,-180) % Plain Solid Line
        \allinethickness{0.254mm}\path(205,-180)(245,-150) % Plain Solid Line
        \allinethickness{0.254mm}\path(145,-180)(110,-150) % Plain Solid Line
        \allinethickness{0.254mm}\path(110,-150)(140,-130) % Plain Solid Line
        \allinethickness{0.254mm}\path(175,-75)(110,-150) % Plain Solid Line
        \allinethickness{0.254mm}\path(175,-75)(145,-180) % Plain Solid Line
        \allinethickness{0.254mm}\path(175,-75)(205,-180) % Plain Solid Line
        \allinethickness{0.254mm}\path(175,-75)(245,-150) % Plain Solid Line
        \allinethickness{0.254mm}\path(245,-150)(215,-130) % Plain Solid Line
        \allinethickness{0.254mm}\path(175,-225)(110,-150) % Plain Solid Line
        \allinethickness{0.254mm}\path(145,-180)(175,-225) % Plain Solid Line
        \allinethickness{0.254mm}\path(175,-225)(205,-180) % Plain Solid Line
        \allinethickness{0.254mm}\path(175,-225)(245,-150) % Plain Solid Line
         % Set color to black again (default font color)
\end{picture}
\end{center}
\medskip

We take a regular polygon on $n$ vertices in the $y_1y_2$--plane with centre the origin and join each of these vertices to two symmetrically placed points $(0,0, \pm b)$ along the $y_3$ axis, where $b$ is a constant to be determined.  In the first instance, suppose the vertices of the polygon are located at the points $e^{2k\pi \ii /n}$ for $k = 0,1, \ldots , n-1$.  Fix attention on the star centred on the vertex at $(1,0,0)\in \RR^3$.  After translating to the origin, the matrix of the star is given by
$$
\left( \begin{array}{rcrc} -1 & -1 + \cos \frac{2\pi}{n} & -1 & -1 + \cos \frac{2\pi}{n} \\
 0 & \sin \frac{2\pi}{n} & 0 & - \sin \frac{2\pi}{n} \\
 b & 0 & -b & 0 
 \end{array} \right)\,.
 $$
 We then see that the uniquely determined choice $b = \sin \frac{2\pi}{n}$ (up to sign) guarantees that this matrix has the form of Lemma \ref{lem:invariant-star}.  We may calculate $\ga_{\rm lat}$ at this \emph{lateral} vertex, to give:
\begin{equation} \label{ga-lat}
 \ga_{\rm lat} = \frac{2(1-2\cos\frac{2\pi}{n} + 2\cos^2\frac{2\pi}{n})}{(2-\cos\frac{2\pi}{n})^2}\,.
\end{equation}
 On the other hand, the stars at the apexes, are regular configured stars which therefore satisfy (\ref{inv-star}), with $\ga_{\rm apex}$ given by (\ref{star-gamma-cyclic}):
\begin{equation} \label{ga-apex}
 \ga_{\rm apex} = \frac{2\sin^2\frac{2\pi}{n} -1}{2\sin^2\frac{2\pi}{n}}\,.
\end{equation}
 The two values of $\ga$ coincide precisely when $n=4$, in which case $\ga = 1/2$; then the figure is regular and corresponds to the cross-polytope.  It is readily checked that any other choice of $b$ destroys invariance.  We now generalize this construction.
 
 \medskip
 
Consider a framework $\Ga$ in $\RR^N$ given by the $1$-skeleton of a regular polytope.  We suppose the polytope is centred on the origin in $\RR^N$.  Define the \emph{double cone on $\Ga$ of height} $b$ to be the new framework in $\RR^{N+1}$ obtained by adding two vertices symmetrically placed at the points $(0, 0, \ldots , 0, \pm b)\in \RR^{N+1}$ and adding edges joining each of these points to each vertex of the polytope.  The example discussed above represents a double cone on a regular polygon. (Single) cones on frameworks have been considered in respect of rigidity problems by Connelly and Whitely \cite{Co-Wh}.

\begin{theorem} \label{thm:double-cone}  Given the framework $\Ga$ corresponding to the $1$-skeleton of a regular polytope, then there is a unique double cone on $\Ga$ which is invariant.
\end{theorem}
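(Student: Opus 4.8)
The plan is to translate the invariance of a single star into one linear–algebraic condition, observe that at the two apexes this condition holds automatically no matter what the height is, and show that at a lateral vertex it forces the height to be a specific positive number.

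\emph{Step 1 (criterion for an invariant star).} For a star in $\RR^K$ with internal vertex the origin and external vertices $\vec x_1,\dots,\vec x_m$, write $W=(\vec x_1|\cdots|\vec x_m)$ and $\vec s:=\sum_\ell\vec x_\ell$. For $A\in\Orthog(K)$ and $P(y_1,\dots,y_K)=y_1+\ii y_2$ one has $P\circ A(\vec x)=\vec w\cdot\vec x$, where $\vec w=\vec a_1+\ii\vec a_2\in\CC^K$ is formed from the first two (orthonormal) rows of $A$, so $\vec w\cdot\vec w=0$; conversely every nonzero isotropic $\vec w$ arises this way up to scale. Since (\ref{inv-star}) is homogeneous of degree two in $(z_\ell)$, hence in $\vec w$, it says that $\vec w^{\,t}(WW^t)\vec w-\tfrac{\ga}{m}(\vec w\cdot\vec s)^2=0$ on the cone $\{\vec w\cdot\vec w=0\}$. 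For $K\geq3$ that cone is an irreducible quadric, so a quadratic form vanishing on it is a scalar multiple of $\vec w\cdot\vec w$; hence the star is invariant (with $\ga$ real, independent of $A$) if and only if
$$WW^t=\kappa\,I_K+\tfrac{\ga}{m}\,\vec s\,\vec s^{\,t}\quad\text{for some }\kappa\in\RR,$$
equivalently $\vec s$ is an eigenvector of $WW^t$ and $WW^t$ restricts to a scalar operator on $\vec s^{\perp}$; then $\ga$ is determined and automatically real (the case $\vec s=\vec 0$ being the holomorphic one). This is the natural extension of the configured–star case of Lemma \ref{lem:configured-star} and Corollary \ref{cor:configured-star}, with essentially the same proof.

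\emph{Step 2 (symmetry data).} Normalise $\Ga\subset\RR^N$ ($N\geq2$) so its centre is the origin and $\|\vec x_\ell\|=1$ for every vertex; let $n=|V|$ and $d$ the common degree. I shall use: (i) the symmetry group acts irreducibly on $\RR^N$, so $\sum_\ell\vec x_\ell=\vec 0$ and $\sum_\ell\vec x_\ell\vec x_\ell^{\,t}=\tfrac{n}{N}I_N$; and (ii) for a fixed vertex $\vec p$ with neighbours $\{\vec x_j:j\sim\vec p\}$, the inner product $\vec p\cdot\vec x_j$ is a constant $c_1$ with $|c_1|<1$, the vertex figure has centroid on the axis $\RR\vec p$ so that $\sum_{j\sim\vec p}\vec x_j=dc_1\vec p$, and the vertex–figure matrix $M_0:=\sum_{j\sim\vec p}(\vec x_j-\vec p)(\vec x_j-\vec p)^t$ acts on $\vec p^{\perp}$ as a positive scalar $b'$. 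The last assertion is exactly the invariance of the $1$–skeleton of $\Ga$ at $\vec p$ (Theorem \ref{thm:reg-polytope}) read through Step 1, using also $M_0\vec p=(c_1-1)\sum_{j\sim\vec p}(\vec x_j-\vec p)=d(1-c_1)^2\vec p$; comparing traces gives $b'=d(1-c_1^2)/(N-1)$.

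\emph{Step 3 (the two stars of the double cone).} Embed the double cone in $\RR^{N+1}=\RR^N\times\RR$, placing $\Ga$ at $(\vec x_\ell,0)$ and the apexes at $(\vec 0,\pm b)$. Translating the apex $(\vec 0,b)$ to the origin, its star has external vertices $(\vec x_\ell,-b)$; because $\sum_\ell\vec x_\ell=\vec 0$ the off–diagonal block of $WW^t$ vanishes, so $WW^t=\tfrac{n}{N}I_N\oplus nb^2$ and $\vec s=(\vec 0,-nb)$; hence $\vec s$ is an eigenvector and $WW^t$ is the scalar $\tfrac{n}{N}$ on $\vec s^{\perp}=\RR^N\times\{0\}$, for \emph{every} $b$: the apex imposes nothing. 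Translating the lateral vertex $\vec p$ to the origin, its star has external vertices $(\vec x_j-\vec p,0)$ $(j\sim\vec p)$, $(-\vec p,b)$ and $(-\vec p,-b)$; the $\pm b$ pair again kills the off–diagonal block, and
$$WW^t=\big(M_0+2\vec p\vec p^{\,t}\big)\oplus 2b^2,\qquad \vec s=\big(d(c_1-1)-2\big)\vec p\oplus 0 .$$
Here $\vec s\neq\vec 0$ (as $c_1<1$) and $(\vec p,0)$ is an eigenvector of $WW^t$ since $M_0\vec p\parallel\vec p$; on $\vec s^{\perp}=\{(\vec v,t):\vec v\perp\vec p\}$ the operator $WW^t$ sends $(\vec v,t)\mapsto(b'\vec v,2b^2 t)$, which is scalar precisely when $2b^2=b'$. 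Therefore the double cone is invariant if and only if $b^2=\tfrac12 b'=d(1-c_1^2)/\big(2(N-1)\big)$, with the unique positive solution $b=\big(d(1-c_1^2)/(2(N-1))\big)^{1/2}$ ($-b$ giving a congruent framework); for $N=2$ and $\Ga$ a regular $n$–gon this is $b=\sin(2\pi/n)$, matching the example above.

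\emph{What remains and where the difficulty lies.} The $\ga$ so obtained is a genuine $\Orthog(N+1)$–equivariant function of the vertex — $\ga_{\mathrm{apex}}$ at the two apexes, $\ga_{\mathrm{lat}}$ at the lateral vertices — so the framework is invariant in the required sense. The computations themselves are routine; the conceptual point, and the reason the height is forced, is the asymmetry between vertex types: at an apex the two new edges complete a bona fide configured star and constrain nothing, whereas at a lateral vertex they perturb $M_0$ by $2\vec p\vec p^{\,t}$ in the first $N$ coordinates and insert a $2b^2$ in the last, and invariance demands these be balanced. I expect the one genuinely delicate step to be securing Step 2(ii) — in particular that $M_0$ is scalar on $\vec p^{\perp}$ — for which one appeals to Theorem \ref{thm:reg-polytope} (equivalently, to the irreducibility of the vertex–figure stabiliser acting on $\vec p^{\perp}$).
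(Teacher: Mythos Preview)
Your proof is correct and takes a genuinely different route from the paper.  The paper proceeds by direct computation: it fixes a lateral vertex, writes the star matrix $S$ explicitly in adapted coordinates, applies an arbitrary $A\in\Orthog(N+1)$, projects, and works out $\sum z_\ell$ and $\sum z_\ell{}^2$ by hand, discovering that the cross term in $(a_{1,N+1}+\ii a_{2,N+1})^2$ survives unless $2b^2=\rho$.  You instead isolate the invariance condition once and for all as the linear--algebraic criterion $WW^t=\kappa I+\tfrac{\ga}{m}\vec s\,\vec s^{\,t}$ (a neat observation, proved via irreducibility of the isotropic quadric), and then reduce both the apex and the lateral vertex to eigenvalue checks on block matrices.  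Your $b'$ is exactly the paper's configuration invariant $\rho$ of the vertex figure, so the condition $2b^2=b'$ reproduces $b=\sqrt{\rho/2}$.  What the paper's approach buys is that the explicit formulae for $\ga_{\rm lat}$ and $\ga_{\rm apex}$ drop out of the same computation (they are used extensively afterwards); what your approach buys is a reusable criterion (your Step~1) that clarifies \emph{why} the apex imposes no constraint while the lateral vertex does, and which would streamline several later arguments (e.g.\ Proposition~\ref{prop:double-cone-on-complete}) as well.  Your identification of Step~2(ii) as the delicate point is exactly right and your justification via Theorem~\ref{thm:reg-polytope} read through Step~1 is sound; alternatively one can cite the irreducibility of the vertex--figure stabiliser on $\vec p^\perp$, as you note.
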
 

In particular, if the regular polytope is convex, then the double cone corresponds to the vertex and edge set of a new convex polytope of one dimension greater.  In general this is only regular if the original polytope is the cross-polytope.  Indeed, the degree is no longer constant in general.  For example, the double cone on the dodecahedron has two vertices of degree $20$ and twenty vertices of degree $5$.  These polytopes generalize the regular polytopes, as being invariant by rigid transformation in $\RR^{N+1}$ in respect of equation (\ref{one}).  

\medskip

\noindent \emph{Proof of theorem} :  Consider a regular polytope in $\RR^N$.  Fix attention on one particular vertex $\vec{x}$ which we suppose located at the origin in $\RR^N$.  We may suppose the co\"ordinates chosen so that the vertex figure at $\vec{x}$ is given by the configuration $U = (\vec{v}_1|\vec{v}_2|\cdots |\vec{v}_n)$ in $\RR^{N-1}$, where $\RR^{N-1}$ is embedded in $\RR^N$ as the first $N-1$ co\"ordinates.  In particular, the star on the vertex $x$ is given by the matrix:
$$
W = \left( \begin{array}{c|c|c|c} \vec{v}_1 & \vec{v}_2 & \cdots & \vec{v}_n \\ c & c & \cdots & c 
\end{array} \right)\,, 
$$
for some non-zero constant $c$.  

Now the line from the origin to the centre of the configuration passes through the centre of the polytope, which is at some point $\left( \begin{array}{c} \vec{0} \\ a \end{array} \right)$, where $\vec{0} \in \RR^{N-1}$ is the centre of the configuration given by $U$.  
\medskip
\begin{center}
\setlength{\unitlength}{0.254mm}
\begin{picture}(183,264)(135,-297)
        \allinethickness{0.254mm}\path(155,-200)(280,-35) % Plain Solid Line
        \allinethickness{0.254mm}\path(155,-200)(235,-165) % Plain Solid Line
        \allinethickness{0.254mm}\path(155,-200)(260,-220) % Plain Solid Line
        \allinethickness{0.254mm}\dottedline{5}(155,-200)(280,-185) % Plain Dotted Line
        \allinethickness{0.254mm}\dottedline{5}(280,-35)(280,-295) % Plain Dotted Line
        \allinethickness{0.254mm}\path(155,-200)(280,-295) % Plain Solid Line
        \allinethickness{0.254mm}\dottedline{5}(235,-165)(260,-220) % Plain Dotted Line
        \allinethickness{0.254mm}\special{sh 0.3}\put(155,-200){\ellipse{4}{4}} % Shade Dot
        \allinethickness{0.254mm}\special{sh 0.3}\put(280,-35){\ellipse{4}{4}} % Shade Dot
        \allinethickness{0.254mm}\special{sh 0.3}\put(280,-295){\ellipse{4}{4}} % Shade Dot
        \allinethickness{0.254mm}\special{sh 0.3}\put(260,-220){\ellipse{4}{4}} % Shade Dot
        \allinethickness{0.254mm}\special{sh 0.3}\put(235,-165){\ellipse{4}{4}} % Shade Dot
        \allinethickness{0.254mm}\path(290,-185)(290,-35)\special{sh 1}\path(290,-35)(288,-41)(290,-41)(292,-41)(290,-35) % Plain Solid Arrow
        \put(300,-121){\shortstack{$b$}} % Plain Text
        \put(230,-155){\shortstack{$\vec{v}_1$}} % Plain Text
        \put(245,-236){\shortstack{$\vec{v}_2$}} % Plain Text
        \put(135,-196){\shortstack{$\vec{0}$}} % Plain Text
         % Set color to black again (default font color)
\end{picture}
\end{center}
\medskip
In order to construct the double cone, we situate the polytope in the plane $y_{N+1} = 0$ in $\RR^{N+1}$ and add two more vertices at the points $\left( \begin{array}{r} \vec{0} \\ a \\ \pm b \end{array} \right)$.  These two vertices are to be added as new external vertices to the star centred on $\vec{0} \in \RR^{N+1}$, whereby the star matrix now becomes:
\begin{equation} \label{star-double-cone}
S= \left( \begin{array}{c|c|c|c|c|r} \vec{v}_1 & \vec{v}_2 & \cdots & \vec{v}_n & \vec{0} & \vec{0} \\
 c & c & \cdots & c & a & a \\ 0 & 0 & \cdots & 0 & b & -b \end{array} \right)
\end{equation}
This is not in general a configured star.  We wish to confirm its invariance under orthogonal transformation with respect to (\ref{inv-star}) (with $z_1,\ldots , z_n$ replaced by $z_1, \ldots , z_n, z_{n+1}, z_{n+2}$ as defined below). 

Let $A:\RR^{N+1} \ra \RR^{N+1}$ be an orthogonal transformation and as usual set $A = (a_{rs})$, where now $r,s\in \{ 1,2, \ldots , N+1\}$.  Let $j,k$ range over the indices $\{ 1, \ldots , N-1\}$, so that the components of $\vec{v}_{\ell}$ are given by $v_{\ell j}$.  We compute the first two rows of $AS$ to determined the complex numbers $z_{\ell}, z_{n+1}, z_{n+2}$, where $1\leq \ell \leq n$, after projection $P:\RR^{N+1} \ra \CC$ given by $P(y_1, \ldots , y_{N+1}) = y_1 + \ii y_2$:
\begin{eqnarray*}
z_{\ell} & = & c(a_{1N}+\ii a_{2N}) + \sum_{j = 1}^{N-1} (a_{1j} + \ii a_{2j})v_{\ell j} \\
z_{n+1} & = & a(a_{1N} + \ii a_{2N}) + b(a_{1, N+1} + \ii a_{2,N+1}) \\
z_{n+2} & = & a(a_{1N} + \ii a_{2N}) - b(a_{1, N+1} + \ii a_{2,N+1})
\end{eqnarray*}
On recalling that for each $j = 1, \ldots , N-1$, the sum $\sum_{\ell} v_{\ell j} = 0$, we have
\begin{equation} \label{sum-zl}
z_{n+1} + z_{n+2} + \sum_{\ell = 1}^n z_{\ell} = (nc+2a)(a_{1N} + \ii a_{2N})\,.
\end{equation}
Now
\begin{eqnarray*}
z_{\ell}{}^2 & = & c^2(a_{1N}+\ii a_{2N})^2 + 2c(a_{1N} + \ii a_{2N})\sum_{j=1}^{N-1}(a_{1i} + \ii a_{2j})v_{\ell j} \\
& & \qquad + \sum_{j=1}^{N-1} (a_{1j} + \ii a_{2j})^2v_{\ell j}{}^2 + 2\sum_{j<k}(a_{1j} + \ii a_{2j})(a_{1k} + \ii a_{2k})v_{\ell j}v_{\ell k}\,.
\end{eqnarray*}
But since $U = (\vec{v}_1|\vec{v}_2|\cdots |\vec{v}_n)$ is a configuration, from (\ref{star-conditions}), we have
$$
\sum_{\ell = 1}^nv_{\ell j}v_{\ell k} = \rho \delta_{jk}\,,
$$
for all $j,k = 1, \ldots , N-1$.  It now follows that
$$
z_{n+1}{}^2 + z_{n+2}{}^2 + \sum_{\ell = 1}^nz_{\ell}{}^2 = (2a^2+nc^2-\rho )(a_{1N}+ \ii a_{2N})^2 + (2b^2-\rho )(a_{1,N+1}+ \ii a_{2,N+1})^2\,.
$$
On comparing with (\ref{sum-zl}), we see that (\ref{inv-star}) is satisfied with no dependence on $A$ if and only if $b = \sqrt{\rho /2}$.  In this case
$$
\ga_{\rm lat} = \frac{(n+2)(2a^2+nc^2-\rho)}{(nc+2a)^2}\,,
$$
where $\ga_{\rm lat}$ refers to the \emph{lateral} value of $\ga$, namely the value at one of the vertices of the regular polytope.

For the two vertices corresponding to the apexes of the double cone, the invariance if given by Corollary \ref{cor:configured-star}.  Indeed, as observed in the last section, the vertices of any regular polytope $\Pp$, convex or not, form a configuration.  If we let $\rho_{\Pp}$ denote the invariant of this configuration and let $m$ denote the its cardinality, then from (\ref{sigma-c}), we have
$$
\si = mb^2 - \rho_{\Pp} = \frac{m\rho}{2} - \rho_{\Pp}\,.
$$ 
It follows that
\begin{equation} \label{double-cone-ga-apex}
\ga_{\rm apex} = \frac{\si}{\si + \rho_{\Pp}} = \frac{m\rho - 2\rho_{\Pp}}{m\rho}\,.
\end{equation}
This completes the proof of the theorem.
\hfill $\Box$
\medskip

We have computed the two values of $\ga$ in the above proof so as to be able to apply them to examples.
Note that we are at liberty to normalize the parameters $a,c,\rho$ as we wish.  For example, for the double cone on a regular polygon of $n$ sides, $c = 2a\sin^2\frac{\pi}{n}$ and $\rho = 2a^2\sin^2\frac{2\pi}{n}$.  This confirms the values of $\ga$ computed in the example preceeding the theorem.

If our initial polytope is the cross-polytope in $\RR^N$, then we have $m = 2N$ and $n = 2(N-1)$.  If we normalize so that $a=1$, say, then $c=1$ and $\rho = \rho_{\Pp} = 2$.  It follows that the unique value of $b$ which gives an invariant double cone is $b = \sqrt{\rho /2} = 1$, as required, since the double cone on the cross-polytope is another cross-polytope of one dimension higher.  One easily checks that $\ga_{\rm apex} = \ga_{\rm lat}$ in this case.

Before moving on, we establish one more instance of invariant frameworks which has implications for questions of ``rigidity".

\begin{proposition} \label{prop:double-cone-on-complete}  Let $(\vec{v}_1|\vec{v}_2|\cdots |\vec{v}_n)$ be a configuration in $\RR^N$.  Consider the framework $\Ff$ given by the complete graph on this configuration.  Then there is a unique double cone on $\Ff$ which is invariant.
\end{proposition}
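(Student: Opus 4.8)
\emph{Proof proposal.} The plan is to check the two conditions in the definition of an invariant framework vertex by vertex on the double cone, and to see that exactly one height $b$ makes them hold, the constraint coming entirely from the configuration vertices. Place the configuration in the hyperplane $y_{N+1}=0$ of $\RR^{N+1}$, so its vertices sit at $(\vec{v}_\ell,0)$ with centre of mass the origin, and take the double cone of height $b$ to have apexes $(\vec{0},\pm b)\in\RR^{N+1}$, with edges all pairs $\{\vec{v}_i,\vec{v}_j\}$ ($i\neq j$) together with all pairs joining an apex to some $\vec{v}_\ell$. Throughout, use the projection $P(y_1,\dots,y_{N+1})=y_1+\ii y_2$ and recall that (\ref{one}) is unaffected by the translations used below.

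First I would dispose of the apexes. After translating an apex to the origin, its star has the $n$ columns $(\vec{v}_\ell,\mp b)$ ($\ell=1,\dots,n$): the external vertices lie in a hyperplane $\{y_{N+1}=\mathrm{const}\}$ and their first $N$ coordinates form the matrix $U=(\vec{v}_1|\cdots|\vec{v}_n)$ with $UU^t=\rho I_N$ and $\sum_\ell\vec{v}_\ell=\vec{0}$. So this star is already in the standard position of a configured star in $\RR^{N+1}$ as in (\ref{star-standard})--(\ref{star-conditions}); by Corollary \ref{cor:configured-star}, and the fact noted there that a configured star is invariant, equation (\ref{inv-star}) holds at the apex with a value $\ga_{\mathrm{apex}}$ independent of every orthogonal transformation of $\RR^{N+1}$, for \emph{any} $b$; explicitly $\ga_{\mathrm{apex}}=\si/(\si+\rho)$ with $\si=nb^2-\rho$ by (\ref{sigma-c}). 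Hence the apexes impose no condition on $b$.

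Next I would treat a configuration vertex $\vec{v}_i$, which we may take non-zero since $\rho>0$. Translating $\vec{v}_i$ to the origin, its star has the $n-1$ columns $(\vec{v}_j-\vec{v}_i,0)$ ($j\neq i$) together with the two columns $(-\vec{v}_i,\pm b)$, hence degree $n+1$; this is not in general a configured star, so one must compute, along the lines of the proof of Theorem \ref{thm:double-cone}. For an orthogonal $A=(a_{rs})$ of $\RR^{N+1}$ write $\al_s=a_{1s}+\ii a_{2s}$, $\ga_j=\sum_{m=1}^N\al_m v_{jm}$ and $\be=\ga_i$, so the projected values are $z_j=\ga_j-\be$ ($j\neq i$) and $z^\pm=-\be\pm b\,\al_{N+1}$. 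From $\sum_\ell\vec{v}_\ell=\vec{0}$ one gets $\sum_{j=1}^n\ga_j=0$, hence $\sum_{j\neq i}\ga_j=-\be$ and $\sum_\ell z_\ell=-(n+2)\be$; from $UU^t=\rho I_N$ one gets $\sum_{j=1}^n\ga_j^2=\rho\sum_{m=1}^N\al_m^2$; and orthonormality of the first two rows of $A$ gives $\sum_{s=1}^{N+1}\al_s^2=0$, i.e. $\sum_{m=1}^N\al_m^2=-\al_{N+1}^2$. Feeding these in, the apex terms $(z^+)^2+(z^-)^2=2\be^2+2b^2\al_{N+1}^2$ combine with $\sum_{j\neq i}(\ga_j-\be)^2=n\be^2-\rho\,\al_{N+1}^2$ to give
\[
\Big(\sum_\ell z_\ell\Big)^2=(n+2)^2\be^2,\qquad \sum_\ell z_\ell^2=(n+2)\be^2+(2b^2-\rho)\,\al_{N+1}^2 .
\]
Thus equation (\ref{inv-star}) at $\vec{v}_i$ reads $\tfrac{\ga_{\mathrm{lat}}}{n+1}(n+2)^2\be^2=(n+2)\be^2+(2b^2-\rho)\,\al_{N+1}^2$.

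Finally I would extract existence and uniqueness. As $A$ ranges over $\Orthog(N+1)$ the ratio $\al_{N+1}^2/\be^2$ is not constant (for $N\ge 2$ it vanishes whenever $A$ keeps the first two coordinate axes inside $\RR^N$, but is non-zero for suitable rotations involving the last axis), so $\ga_{\mathrm{lat}}$ can be independent of $A$ only if the coefficient $2b^2-\rho$ vanishes, i.e. $b=\sqrt{\rho/2}$; and for that value (\ref{inv-star}) does hold at every configuration vertex, with constant $\ga_{\mathrm{lat}}=(n+1)/(n+2)$. Combined with the automatic invariance at the two apexes, this shows that $b=\sqrt{\rho/2}$ gives an invariant double cone and is the only height that does. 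I expect the main obstacle to be organizational: keeping straight the three superimposed sums — over the $n-1$ remaining configuration vertices, over the two apexes, and over the $N$ coordinates — so that everything collapses to the two displayed expressions; the one genuinely substantive point, needed only for the uniqueness half, is the non-proportionality of $\al_{N+1}$ and $\be$ over $\Orthog(N+1)$.
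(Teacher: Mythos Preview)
Your proof is correct and follows essentially the same route as the paper: dispose of the apexes via Corollary~\ref{cor:configured-star}, translate a lateral vertex to the origin, expand $\sum z_\ell$ and $\sum z_\ell^2$ after an arbitrary orthogonal transformation, and read off $2b^2=\rho$ as the unique invariance condition. Your computation is in fact a little cleaner than the paper's: by invoking the row-orthonormality identity $\sum_{s=1}^{N+1}\al_s^2=0$ at the outset you collapse the sum of squares to the two scalars $\be^2$ and $\al_{N+1}^2$, whereas the paper carries the full expression $\sum_j\al_j^2[(n+2)v_{1j}^2+\rho-2b^2]+2(n+2)\sum_{j<k}\al_j\al_k v_{1j}v_{1k}$ and compares it termwise with $(\sum z_\ell)^2$; the two are equivalent once one substitutes $\sum_m\al_m^2=-\al_{N+1}^2$. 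You also make the uniqueness step explicit by observing that $\al_{N+1}^2/\be^2$ genuinely varies over $\Orthog(N+1)$, a point the paper asserts (``if and only if'') without spelling out.
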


\begin{proof}  We use the same notation as in the proof of the above theorem, except that now each vector $\vec{v}_{\ell}$ lies in $\RR^N$ rather than in $\RR^{N-1}$, so the indices $j,k$ range over $1, \ldots , N$.  The invariance at the apexes is guaranteed by Corollary \ref{cor:configured-star}.  We therefore fix attention on one of the lateral vertices, say $\vec{v}_1$.  Perform a translation so that $\vec{v}_1$ is located at the origin.  If we suppose the height of the double cone is given by $b$, then the star matrix at $\vec{v}_1$ is given by
$$
S:= \left( \begin{array}{ccccrr} v_{21}-v_{11} & v_{31} - v_{11} & \cdots & v_{n1}-v_{11} & -v_{11} & - v_{11} \\
v_{22} - v_{12} & v_{32} - v_{12} & \cdots & v_{n2} - v_{12} & - v_{12} & - v_{12} \\
\vdots & \vdots & \ddots & \vdots & \vdots & \vdots \\
v_{2N} - v_{1N} & v_{3N}-v_{1N} & \cdots & v_{nN} - v_{1N} & - v_{1N} & - v_{1N} \\
 0 & 0 & \cdots & 0 & b & -b 
 \end{array} \right)
$$
Now consider the effect of an arbitrary orthogonal transformation $A = (a_{ij}):\RR^{N+1} \ra \RR^{N+1}$ on the star followed by projection to the first two coordinates.  This gives the corresponding complex numbers:
\begin{eqnarray*}
z_{\ell} & = & \sum_{j = 1}^N(a_{1j} + \ii a_{2j})(v_{\ell j} - v_{1j}) \qquad (\ell = 2, \ldots , n) \\
z_{n+1} & = &  \ b(a_{1,N+1} + \ii a_{2,N+1})- \sum_{j=1}^N(a_{1j} + \ii a_{2j})v_{1j} \\ 
z_{n+2} & = &  -b(a_{1,N+1} + \ii a_{2,N+1})- \sum_{j=1}^N(a_{1j} + \ii a_{2j})v_{1j}
\end{eqnarray*}
But since the vectors $\vec{v}_{\ell}$ form a configuration, we have $\sum_{\ell = 2}^nv_{\ell j} = - v_{1j}$ for all $j = 1, \ldots , N$.  Thus
$$
z_{n+1} + z_{n+2} + \sum_{\ell = 2}^nz_{\ell} = - (n+2) \sum_{j = 1}^N(a_{1j} + \ii a_{2j})v_{1j}\,.
$$

For the sum of squares, we obtain
$$
\begin{array}{l} 
z_{n+1}{}^2 + z_{n+2}{}^2 + \sum_{\ell = 2}^n z_{\ell}{}^2 \\
= 2\sum_{j=1}^N(a_{1j} + \ii a_{2j})^2(v_{1j}{}^2 - b^2) + 4\sum_{j<k}(a_{1j} + \ii a_{2j})(a_{1k}+ \ii a_{2k}v_{1j})v_{1k} \\
\qquad + \sum_{\ell = 2}^n\sum_{j = 1}^N(a_{1j} + \ii a_{2j})^2(v_{\ell j} - v_{1j})^2 \\
\qquad \qquad + 2\sum_{\ell = 2}^n\sum_{j<k}(a_{1j} + \ii a_{2j})(a_{1k} + \ii a_{2k})(v_{\ell j} - v_{1j})(v_{\ell k}- v_{1k})\,.
\end{array}
$$
If we let $\rho$ denote the configuration invariant, so that $\sum_{\ell = 1}^nv_{\ell j}v_{\ell k} = \rho \delta_{jk}$, then this gives
\begin{eqnarray*}
z_{n+1}{}^2 + z_{n+2}{}^2 + \sum_{\ell = 2}^n z_{\ell}{}^2 & = & \sum_{j = 1}^N(a_{1j} + \ii a_{2j})^2[(n+2)v_{1j}{}^2+ \rho - 2b^2] \\
 & & + 2(n+2)\sum_{j<k}(a_{1j} + \ii a_{2j})(a_{1k} + \ii a_{2k})v_{1j}v_{1k}\,.
\end{eqnarray*}
 On the other hand
 $$
 \begin{array}{l}
 \Big( z_{n+1} + z_{n+2} + \sum_{\ell = 2}^nz_{\ell}\Big)^2 = \\
 (n+2)^2\left\{ \sum_{j=1}^N(a_{1j}+ \ii a_{2j})^2v_{1j}{}^2 + 2\sum_{j<k}(a_{1j} + \ii a_{2j})(a_{1k}+ \ii a_{2k})v_{1j}v_{1k}\right\} \,.
\end{array}
$$ 
It now follows that equation (\ref{inv-star}) is satisfied (with $z_1, \ldots , z_n$ replaced by $z_2,\ldots , z_n,z_{n+1},z_{n+2}$) with $\ga$ independent of $A$ if and only if $2b^2 = \rho$, in which case:
$$
\frac{1}{n+2}\Big( z_{n+1} + z_{n+2} + \sum_{\ell = 2}^nz_{\ell}\Big)^2 = z_{n+1}{}^2 + z_{n+2}{}^2 + \sum_{\ell = 2}^n z_{\ell}{}^2\,.
$$
This completes the proof.

\end{proof}

The significance of this proposition is that it provides examples of frameworks that are ``flexible" in respect of equation (\ref{one}), by which we mean the edge lengths may change, but the framework can be deformed continuously through solutions to (\ref{one}) while maintaining invariance.  Configurations for which this is the case are given by the examples (\ref{non-reg-config}), (\ref{vertices-icosahedron}) and (\ref{vertices-dodecahedron}) of Section \ref{sec:orthographic}.  In each case, there is a $1$-parameter family of configurations which yield invariant double cones.  

The last formula of the proof shows that the lateral value of $\ga$ is given by
$$
\ga_{\rm lat} = \frac{n+1}{n+2}\,.
$$
At an apex, the value of $\ga$ is obtained from (\ref{sigma-c}).  In fact
$$
\si = nb^2-\rho = (n-2)b^2\,,
$$
so that
$$
\ga_{\rm apex} = \frac{\si}{\si + \rho } = \frac{n-2}{n}\,.
$$
We note also that $\ga_{\rm lat}$ and $\ga_{\rm apex}$ depend only on the cardinality $n$ of the configuration.  Although these two values are never equal, remarkably, when $\vec{v}_1, \ldots , \vec{v}_n$ are the vertices of a regular polygon in the plane, the addition of one more edge joining the two apexes yields an invariant framework in $\RR^3$ with $\ga =(n+1)/(n+2)$ constant.  The addition of this extra edge makes the framework into a complete graph on $n+2$ vertices.

Say that a framework is \emph{immersed} if all vertices are distinct; say that it is \emph{embedded} if it is immersed and no two edges intersect except at their endpoints.  Then the complete graph on $n+2$ vertices can always be embedded as an invariant framework in $\RR^{n+1}$, its vertices and edges corresponding to the $1$-skeleton of a regular $n+1$--simplex.

\medskip
\begin{center}
\setlength{\unitlength}{0.254mm}
\begin{picture}(120,120)(120,-170)
        \allinethickness{0.254mm}\path(120,-120)(240,-120) % Plain Solid Line
        \allinethickness{0.254mm}\path(180,-115)(180,-50) % Plain Solid Line
        \allinethickness{0.254mm}\path(180,-125)(180,-170) % Plain Solid Line
        \allinethickness{0.254mm}\path(180,-50)(120,-120) % Plain Solid Line
        \allinethickness{0.254mm}\path(180,-50)(240,-120) % Plain Solid Line
        \allinethickness{0.254mm}\path(240,-120)(180,-170) % Plain Solid Line
        \allinethickness{0.254mm}\path(180,-170)(120,-120) % Plain Solid Line
        \allinethickness{0.254mm}\path(155,-100)(120,-120) % Plain Solid Line
        \allinethickness{0.254mm}\path(155,-100)(175,-105) % Plain Solid Line
        \allinethickness{0.254mm}\path(240,-120)(195,-110) % Plain Solid Line
        \allinethickness{0.254mm}\path(155,-100)(180,-50) % Plain Solid Line
        \allinethickness{0.254mm}\path(180,-170)(165,-130) % Plain Solid Line
        \allinethickness{0.254mm}\path(155,-100)(160,-115) % Plain Solid Line
         % Set color to black again (default font color)
\end{picture}
\end{center}
\medskip

\begin{corollary} \label{cor:inv-complete-graph}  There exists an invariant immersed framework in $\RR^3$ with $\ga$ constant corresponding to the complete graph on $n+2$ ($n\geq 2$) vertices.  There exists an invariant embedded framework in $\RR^n$ with $\ga$ constant corresponding to the complete graph on $n+2$ ($n\geq 2$) vertices.
\end{corollary}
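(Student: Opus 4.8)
The plan is to prove the two assertions by separate constructions, each producing a framework with $\ga=(n+1)/(n+2)$ constant.

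For the embedded framework I would start from the fact that $K_{n+2}$ is the $1$-skeleton of a regular $(n+1)$-simplex $S$, realized concretely in $\RR^{n+1}$ by the inductive construction (\ref{N-simplex-induction}), and then project $S$ orthogonally along the altitude from a vertex $\vec{w}_0$ to the centroid of the opposite facet. Since in a regular simplex this altitude is perpendicular to that facet, the image in $\RR^n$ is a regular $n$-simplex $T$ with the centroid of $T$ (the image of $\vec{w}_0$) adjoined, which is a straight-line realization of $K_{n+2}$ in $\RR^n$. It is embedded: the edges of $T$ never cross, being the $1$-skeleton of a convex polytope, while a segment joining the centroid to a vertex of $T$ has, at interior points, all barycentric coordinates with respect to $T$ strictly positive, whereas an edge of $T$ has only two nonzero barycentric coordinates, so such a segment meets an edge of $T$ only at a common vertex and two segments from the centroid meet only there. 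For invariance I would observe that, given any orthogonal $A:\RR^n\to\RR^n$ and the projection $P(y_1,\ldots,y_n)=y_1+\ii y_2$, the composite of the altitude-projection $\RR^{n+1}\to\RR^n$ with $A$ and then $P$ is again an orthogonal projection $\RR^{n+1}\to\CC$ of $S$; so by Theorem~\ref{thm:reg-polytope} (recall that the $1$-skeleton of a regular polytope is an invariant framework) the associated $\phi$ satisfies (\ref{one}) with $\ga$ constant, the value being independent of $A$ and, by the Eastwood--Penrose identity, equal to $(n+1)/(n+2)$. This refines the $\RR^{n+1}$ realization mentioned just before the corollary.

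For the immersed framework in $\RR^3$ I would first dispose of $n=2$, where $K_4$ is the $1$-skeleton of a regular $3$-simplex, already embedded --- hence immersed --- in $\RR^3$ and invariant by the previous step. For $n\geq 3$, I would take the regular $n$-gon configuration in $\RR^2$ of (\ref{star-R2}), with configuration invariant $\rho=n/2$; by Proposition~\ref{prop:double-cone-on-complete} the double cone of height $b=\sqrt{\rho/2}$ over the complete graph on this configuration is an invariant framework in $\RR^3$ whose underlying graph is $K_{n+2}$ with the edge between the two apexes deleted, with lateral value $\ga_{\rm lat}=(n+1)/(n+2)$ and apex value $\ga_{\rm apex}=(n-2)/n$. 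I would then add back the apex--apex edge, so the graph becomes exactly $K_{n+2}$. No lateral star acquires a new neighbour, so (\ref{one}) still holds invariantly at the lateral vertices with $\ga=(n+1)/(n+2)$; at an apex the star gains the opposite apex as an external vertex, and a direct recomputation as in the proof of Proposition~\ref{prop:double-cone-on-complete} --- in which, for an arbitrary orthogonal $A:\RR^3\to\RR^3$ and $P(y_1,y_2,y_3)=y_1+\ii y_2$, the new edge contributes a term proportional to $a_{13}+\ii a_{23}$ to $\sum z_{\ell}$ and its square to $\sum z_{\ell}{}^2$, and one uses $\sum_{\ell}\cos^2\theta_{\ell}=\sum_{\ell}\sin^2\theta_{\ell}=n/2$, $\sum_{\ell}\cos\theta_{\ell}\sin\theta_{\ell}=0$ and $\sum_{\ell}\cos\theta_{\ell}=\sum_{\ell}\sin\theta_{\ell}=0$, together with the orthogonality relation $(a_{11}+\ii a_{21})^2+(a_{12}+\ii a_{22})^2+(a_{13}+\ii a_{23})^2=0$ --- shows that (\ref{inv-star}) holds at the apex with $\ga=(n+1)/(n+2)$, independently of $A$. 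Hence $\ga$ becomes constant. Finally all $n+2$ vertices are distinct, since the polygon vertices lie in the plane $y_3=0$ and the apexes at $(0,0,\pm b)$ with $b\neq 0$, so the framework is immersed.

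I expect the main obstacle to be, in the embedded case, checking that the altitude-projection is crossing-free (the barycentric argument above) and, more importantly, in the immersed case, the recheck of invariance at an apex: this cannot simply be quoted from Proposition~\ref{prop:double-cone-on-complete}, because once the apex--apex edge is present the apex star is no longer a \emph{configured} star (it violates Lemma~\ref{lem:configured-star}), so Corollary~\ref{cor:configured-star} is not directly available. The calculation goes through quickly only because the new edge's contribution is a ``pure $a_{13}+\ii a_{23}$'' term that cannot disturb the proportionality between $\bigl(\sum z_{\ell}\bigr)^2$ and $\sum z_{\ell}{}^2$ on which the argument rests; one merely re-tallies coefficients. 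A smaller point of care is to separate off $n=2$ in the second construction, a digon not being a configuration in $\RR^2$, and to note throughout that composing an orthogonal projection with an orthogonal transformation is again an orthogonal projection, so one never leaves the hypotheses of Theorem~\ref{thm:reg-polytope}. No further difficulties are anticipated.
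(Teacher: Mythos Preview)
Your proof is correct, but it diverges from the paper's in an interesting way.

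For the immersed framework in $\RR^3$ you use the same construction as the paper --- the double cone of Proposition~\ref{prop:double-cone-on-complete} over the regular $n$-gon, with the apex--apex edge added --- but you verify the new apex invariant by a direct orthogonality computation, whereas the paper instead appeals to Lemma~\ref{lem:extension-star} (the addition of an external vertex along the axis of an invariant star), substituting $b=\sqrt{n}/2$ and $x=\sqrt{n}$ into~(\ref{ga-ext}) to obtain $\wt{\ga}_{\rm apex}=(n+1)/(n+2)$. Your route is self-contained; the paper's is cleaner once Lemma~\ref{lem:extension-star} is available and generalises immediately. You are also more explicit than the paper about the $n=2$ case, where the $2$-gon fails to be a configuration in $\RR^2$.

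For the embedded framework in $\RR^n$ your construction is genuinely different. The paper repeats the double-cone-plus-edge argument with the base configuration replaced by a regular $(n-1)$-simplex in $\RR^{n-1}$, and checks embeddedness by noting that the apex--apex segment meets $\RR^{n-1}$ only at the centroid, which lies on no edge of the simplex. You instead project the regular $(n+1)$-simplex orthogonally along an altitude, obtaining a regular $n$-simplex together with its centroid, and invoke Theorem~\ref{thm:reg-polytope} via the observation that $P\circ A\circ\pi$ is again an orthogonal projection $\RR^{n+1}\to\CC$. Your argument is shorter and needs only Theorem~\ref{thm:reg-polytope}; the paper's has the virtue of using a single technique for both parts of the corollary and of exhibiting the embedded framework as a special case of the double-cone family. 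The barycentric check of embeddedness you sketch is sound.
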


\begin{proof}  Let $\vec{v}_1, \ldots , \vec{v}_n$ be unit length vertices of a regular polygon in the plane and construct the unique invariant double cone of the above proposition.  Since $\rho = n/2$ (Section \ref{sec:orthographic}), its height is given by $b = \sqrt{n}/2$.  Lemma \ref{lem:extension-star} of Appendix \ref{sec:invariant-structures} shows that the addition of another vertex at a distance $x$ ($\neq -nb$) along the axis of symmetry of the cone connected to the apex, yields another invariant framework, with new $\ga$ at the apex given by (\ref{ga-ext}), that is by
$$
\wt{\ga}_{\rm apex} = \frac{(n+1)(x^2+nb^2\ga_{\rm apex} )}{(x+nb)^2}\,,
$$
where $\ga_{\rm apex} = (n-2)/n$ is the value of $\ga$ prior to the addition of the new vertex.  But now substitution of the values $b = \sqrt{n}/2$ and $x=\sqrt{n}$ yields precisely the value $\wt{\ga}_{\rm apex} = (n+1)/(n+2)$, which is the value at a lateral vertex.  Furthermore, the new edges connects the two apexes.

For the second part of the corollary, when $n\geq 3$ one proceeds as above but now with $\vec{v}_1, \ldots , \vec{v}_n$ the vertices of a regular $(n-1)$--simplex in $\RR^{n-1}$.  Indeed, as noted after the above proposition, the value of $\ga_{\rm lat}$ depends only on the cardinality of the configuration.  The resulting double cone is clearly embedded, since when $n\geq 3$, the centre of mass of the regular simplex is not contained in any of its edges.  But the edge joining the two apexes intersects the subspace $\RR^{n-1}$ precisely at this point.  When $n=2$, the framework can be taken to be the projection of the $1$-skeleton of a tetrahedron as indicated in the right-hand figure below.
\end{proof} 

In Appendix \ref{sec:invariant-structures} we show how to build more intricate frameworks, called \emph{invariant structures}, from the elementary ones constructed above.   Examples with $\ga$ constant such as those given by the above corollary, are particularly significant in respect of our model of an elementary universe developed in Section \ref{sec:particles}, where we view them as stable particles.

Given a solution $\phi$ to (\ref{one}) on a connected graph $\Ga$, the \emph{global embedding problem} is to realise the graph as an invariant embedded framework $\Ff = (V,E)$ in a Euclidean space $\RR^N$ in such a way that $\phi (\vec{x}) = P(\vec{x})$ for all $\vec{x}\in V$.  The above corollary shows that the solution corresponding to the projection of the framework of a regular $n$-simplex (in $\RR^n$) can be embedded in $\RR^{n-1}$ for $n\geq 3$.  Clearly, if $\Ga$ can be realised as a planar graph with non-intersecting edges in such a way that $\phi$ is the position function associated to the vertices, then the solution $(\Ga , \phi )$ is globally embedded in $\RR^2$.  This is the case for the left-hand solution below, for which $\ga = 1/3$ (see Example \ref{ex:lift}):

\medskip
\begin{center}
\setlength{\unitlength}{0.254mm}
\begin{picture}(312,107)(25,-151)
        \allinethickness{0.254mm}\path(80,-60)(80,-140) % Plain Solid Line
        \allinethickness{0.254mm}\path(40,-100)(120,-100) % Plain Solid Line
        \allinethickness{0.254mm}\path(80,-60)(120,-100) % Plain Solid Line
        \allinethickness{0.254mm}\path(120,-100)(80,-140) % Plain Solid Line
        \allinethickness{0.254mm}\path(80,-140)(40,-100) % Plain Solid Line
        \allinethickness{0.254mm}\path(40,-100)(80,-60) % Plain Solid Line
        \allinethickness{0.254mm}\special{sh 0.3}\put(80,-60){\ellipse{4}{4}} % Shade Dot
        \allinethickness{0.254mm}\special{sh 0.3}\put(80,-100){\ellipse{4}{4}} % Shade Dot
        \allinethickness{0.254mm}\special{sh 0.3}\put(40,-100){\ellipse{4}{4}} % Shade Dot
        \allinethickness{0.254mm}\special{sh 0.3}\put(120,-100){\ellipse{4}{4}} % Shade Dot
        \allinethickness{0.254mm}\special{sh 0.3}\put(80,-140){\ellipse{4}{4}} % Shade Dot
        \put(85,-96){\shortstack{$0$}} % Plain Text
        \put(85,-56){\shortstack{$i$}} % Plain Text
        \put(125,-96){\shortstack{$1$}} % Plain Text
        \put(22,-96){\shortstack{$-1$}} % Plain Text
        \put(80,-151){\shortstack{$-i$}} % Plain Text
        \allinethickness{0.254mm}\path(240,-140)(335,-140) % Plain Solid Line
        \allinethickness{0.254mm}\path(335,-140)(285,-70) % Plain Solid Line
        \allinethickness{0.254mm}\path(285,-70)(240,-140) % Plain Solid Line
        \allinethickness{0.254mm}\path(285,-115)(335,-140) % Plain Solid Line
        \allinethickness{0.254mm}\path(285,-115)(285,-70) % Plain Solid Line
        \allinethickness{0.254mm}\path(285,-115)(240,-140) % Plain Solid Line
        \allinethickness{0.254mm}\special{sh 0.3}\put(285,-70){\ellipse{4}{4}} % Shade Dot
        \allinethickness{0.254mm}\special{sh 0.3}\put(285,-115){\ellipse{4}{4}} % Shade Dot
        \allinethickness{0.254mm}\special{sh 0.3}\put(240,-140){\ellipse{4}{4}} % Shade Dot
        \allinethickness{0.254mm}\special{sh 0.3}\put(335,-140){\ellipse{4}{4}} % Shade Dot
         % Set color to black again (default font color)
\end{picture}
\end{center}
\medskip 
In fact, by direct calculation, one can show that this solution admits no global embedding in $\RR^3$ and by the lemma below, it cannot admit any global embedding in $\RR^N$ for $N>3$, hence $\RR^2$ is the unique Euclidean space into which it embeds. 

\begin{lemma} \label{lem:embed}  Let $\phi$ be a solution to {\rm (\ref{one})} on a connected graph $\Ga = (V,E)$ such that $\ga (x)<1$ for all $x\in V$.  Suppose that $n_0$ is the smallest degree of the vertices of $\Ga$.  Then the solution cannot be realised as an invariant embedded framework in any $\RR^N$ with $N>n_0$.
\end{lemma}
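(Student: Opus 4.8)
\emph{Proof proposal.} The plan is to argue by contradiction, localising everything to a vertex of minimal degree and exploiting the freedom in the orthogonal transformations built into the definition of an invariant framework. First I would suppose the solution is realised as an invariant embedded framework $\Ff = (V,E)$ in $\RR^N$ with $N > n_0$, take the associated projection to be $P(y_1,\ldots,y_N) = y_1 + \ii y_2$ (note $N \ge 2$, since $n_0 \ge 1$), and fix a vertex $\vec{x}_0$ of degree $n_0$ with neighbours at positions $\vec{y}_1,\ldots,\vec{y}_{n_0}$; write $\vec{x}_k := \vec{y}_k - \vec{x}_0$ for the vectors of the star centred at $\vec{x}_0$. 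The first real step is to record that invariance of the framework forces this star to satisfy (\ref{inv-star}) for \emph{every} orthogonal transformation $A$ of $\RR^N$: writing out (\ref{one}) at $\vec{x}_0$ for $\phi_A = P\circ A|_V$, whose coefficient there is $\ga(\vec{x}_0)$ by the second defining property of invariance, and using $P(A\vec{y}_k) - P(A\vec{x}_0) = P(A\vec{x}_k)$, one obtains
$$
\frac{\ga_0}{n_0}\Big(\sum_{k=1}^{n_0} z_k\Big)^2 = \sum_{k=1}^{n_0} z_k{}^2, \qquad z_k := P(A\vec{x}_k),
$$
where $\ga_0 := \ga(\vec{x}_0) < 1$ by hypothesis (this $\ga$ being that of $\phi$, since $\phi = P|_V$).

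Next I would make a judicious choice of $A$. Let $\Pi := \spn\{\vec{x}_1,\ldots,\vec{x}_{n_0}\} \subseteq \RR^N$. Since the framework is embedded, hence immersed, no neighbour of $\vec{x}_0$ lies at $\vec{x}_0$, so the $\vec{x}_k$ are nonzero and $\dim\Pi \ge 1$; on the other hand $\dim\Pi \le n_0 < N$, so $\Pi^\perp$ is nonzero as well. I would then pick unit vectors $\vec{a} \in \Pi^\perp$ and $\vec{b} \in \Pi$, so that $\vec{a} \perp \vec{b}$ automatically, extend $\{\vec{a},\vec{b}\}$ to an orthonormal basis of $\RR^N$, and let $A$ be the orthogonal matrix having this basis as its rows. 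For each $k$ one then has
$$
z_k = P(A\vec{x}_k) = \vec{a}\cdot\vec{x}_k + \ii\,(\vec{b}\cdot\vec{x}_k) = \ii\,(\vec{b}\cdot\vec{x}_k),
$$
which is purely imaginary; moreover the $z_k$ are not all zero, since $\vec{b}\cdot\vec{x}_k = 0$ for every $k$ would force $\vec{b} \in \Pi\cap\Pi^\perp = \{0\}$, contradicting $|\vec{b}| = 1$.

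Finally I would set $w_k := \vec{b}\cdot\vec{x}_k \in \RR$, which are real and not all zero, and substitute $z_k = \ii w_k$ into the displayed equation; the factor $\ii^2 = -1$ cancels throughout to leave $\frac{\ga_0}{n_0}\big(\sum_k w_k\big)^2 = \sum_k w_k{}^2$, whereupon Lemma \ref{lem:ngaleq1} gives $\ga_0 \ge 1$, contradicting $\ga_0 < 1$ and proving the lemma. The step I expect to be the main obstacle --- or at any rate the one needing the key idea --- is the middle one: seeing that the invariance hypothesis is exactly what licenses a free choice of the orthogonal transformation $A$, and that the right choice is the one projecting the whole vertex star of $\vec{x}_0$ onto the imaginary axis, which is available precisely because that star spans a proper subspace of $\RR^N$ once $N > n_0$; after that, the obstruction collapses to the elementary real inequality already packaged in Lemma \ref{lem:ngaleq1}.
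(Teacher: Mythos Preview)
Your proof is correct and follows essentially the same route as the paper's: argue by contradiction, localise at a vertex of minimal degree, use the dimension gap $n_0 < N$ to rotate the star into a hyperplane where the projected values lie on a real line, and invoke Lemma~\ref{lem:ngaleq1}. The paper phrases it slightly differently---it uses a Euclidean motion to put the affine span $S$ of the star into the hyperplane $y_2 = 0$, so that the projected neighbours are directly real---whereas you work with the linear span of the difference vectors and arrange for the $z_k$ to be purely imaginary, then cancel the common factor $\ii^2$. These are equivalent up to a rotation of $\CC$ by a right angle; your version is simply more explicit about how the orthogonal transformation is built.
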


\begin{proof}  Suppose on the contrary that there exists an invariant embedding in $\RR^N$ with $N>n_0$.   Let $\vec{x}$ be a vertex of degree $n_0$ and consider the affine subspace $S$ spanned by the edges emanating from $\vec{x}$.  Let $P: \RR^N \ra \CC$ be an orthogonal projection.  Then since $\dim S<N$, there exists a Euclidean motion $A : \RR^N \ra \RR^N$ such that $P(A(S))$ is entirely real.  In particular each neighbour $\vec{y}$ of $\vec{x}$ has $P(A(\vec{y}))$ real.  But by Lemma \ref{lem:ngaleq1}, this contradicts the hypothesis that $\ga <1$.
\end{proof} 

The right-hand figure above shows a projection of the tetrahedron.  This can be realised either as an embedded framework in $\RR^2$, or in $\RR^3$.  On the other hand, a solution corresponding to the projection of a cube cannot be embedded in the plane, since edges must cross; $\RR^3$ is the unique space into which it embeds.  Corollary \ref{cor:inv-complete-graph} shows that the complete graph on five vertices admits solutions to (\ref{one}) with $\ga = 4/5$ which correspond to embeddings in $\RR^3$ and $\RR^4$, and only into these spaces.  The star graph of Example \ref{ex:cyclic-5vertices} shows that there exist examples with $\ga <1$ which cannot be embedded into any Euclidean space:  since the degree of each vertex is $2$, by the above lemma it can only embed into $\RR^2$, but then edges must cross.  In the next section, we address the \emph{local} embedding problem, or \emph{lifting problem} and show that, for a given solution to (\ref{one}), we can always embed a vertex and its neighbours in an invariant way.

\section{The lifting problem} \label{sec:distance}
Given a solution $\phi$ to (\ref{one}), at each vertex $x$, our aim is to construct a configured star in some Euclidean space $\RR^N$ whose external vertices project to the points $\phi (y)-\phi (x)$ ($y\sim x$) of the complex plane.  To do this, we establish a converse to Corollary \ref{cor:configured-star}.  We shall refer to the problem of constructing such a star as \emph{the lifting problem}.  At a vertex of degree three with $\phi$ holomorphic, this is the Theorem of Axonometry of Gauss \cite{Ga}.  It turns out that provided $\ga <1$ ($n=$ vertex degree), the lifting problem can always be solved.  Remarkably, when $N=3$, the configured star is unique up to a sign ambiguity and arises as the minimizer of a natural functional determined by $\phi$.  Relative edge-length and so relative distance on the graph, can now be defined in terms of virtual configured stars.

Suppose we are given $z_1, \ldots , z_n\in \CC$ ($n\geq 2$) not all zero satisfying
  \begin{equation} \label{gz}
\frac{\ga}{n} \left( \sum_{\ell = 1}^nz_{\ell}\right)^2 = \sum_{\ell =1}^n z_{\ell}{}^2 \quad (\ga \in \RR )\,.
\end{equation}
For a given $N$ with $2\leq N\leq n$, we wish to construct a configured star $W = (\vec{x}_1|\vec{x}_2|\cdots |\vec{x}_n)$ in $\RR^N$ with $z_{\ell}$ the orthogonal projection of $\vec{x}_{\ell}$.  For convenience, write $z_{\ell} = x_{\ell 1} + \ii x_{\ell 2} = \al_{\ell} + \ii \be_{\ell}$, so that
$$
W = \left( \begin{array}{c|c|c|c} 
\al_1 & \al_2 & \cdots & \al_n \\
\be_1 & \be_2 & \cdots & \be_n \\
x_{13} & x_{23} & \cdots & x_{n3} \\
 \vdots & \vdots & \vdots & \vdots \\
  x_{1N} & x_{2N} & \cdots & x_{nN} \end{array} \right)\,.
  $$
  For $N\geq 3$, we are required to solve the system:
\begin{equation} \label{system}
  WW^t = \rho I_N+\si \vec{u}\vec{u}^t, \qquad \sum_{\ell =1}^n \vec{x}_{\ell} = \sqrt{n(\si + \rho )} \, \vec{u}\,,
\end{equation}
  for $x_{\ell j}$ $(\ell = 1, \ldots , n;\ j = 3, \ldots , N)$, $\rho >0$, $\si$ such that $\rho + \si >0$ and $\vec{u} \in \RR^N$ unit, with $\ga = \si / (\si + \rho )$.  If we require the star to be regular, then we have the further condition:
$$
||\vec{x}_{\ell} ||=r \qquad \forall \ell = 1, \ldots , n, \quad {\rm for\ some} \ r>0\,.
$$ 
If $N=2$, then $W$ is determined and for $n\geq 3$, the system (\ref{system}) is not in general satisfied; if $N=n=2$ it is always satisfied. 
  
 Note also the normalizing freedom; namely, (\ref{gz}) is invariant by the replacement $z_{\ell} \mapsto \la z_{\ell}$ $(\la \in \CC )$.  We will discuss the effect of this on the parameters of the problem below.  

\begin{theorem}  \label{thm:lift} Given a non-zero solution $\{ z_1, \ldots , z_n ; \ga\}$ to {\rm (\ref{gz})} with $n\geq 3$ satisfying $\ga <1$; then for each $3\leq N\leq n$, there is a full configured star $W = (\vec{x}_1|\vec{x}_2|\cdots |\vec{x}_n)$ in $\RR^N$ such that $z_{\ell}$ is the orthogonal projection of $\vec{x}_{\ell}$\,.  There are two cases:

{\rm (i)}  If $N=3$ there is a $3\times n$--real matrix $A$ and a vector \ $\vec{b}\in \RR^3$, both of which depend only on \ $z_1, \ldots , z_n$, such that the vertices of the star are determined by the solutions \ $\vec{t}\in \RR^n$ to the equation $A\vec{t}=\vec{b}$; provided the $z_{\ell}$ do \emph{not} satisfy the identity
\begin{equation} \label{lift-condition}
n\sum_{\ell}|z_{\ell}|^2 + (\ga - 2)\Big|\sum_{\ell}z_{\ell}\Big|^2 = 0\,,
\end{equation}
then the star corresponds to the minimal Euclidean norm solution to this equation and is unique up to a sign ambiguity. 

{\rm (ii)}  If $N>3$, the star is no longer unique, and although it corresponds to a solution $X$ to a more general system $AX=B$, where $X$ is a variable $n\times (N-2)$ --matrix and $B$ is an $3\times (N-2)$ --matrix which depends only on \ $z_1, \ldots , z_n$, it is not in general minimizing amongst solutions.
\end{theorem}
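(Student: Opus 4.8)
The plan is to treat the first two rows of $W$ as prescribed by the data and to choose the remaining rows. Write $z_\ell=\al_\ell+\ii\be_\ell$ and put $\vec\al=(\al_1,\dots,\al_n),\ \vec\be=(\be_1,\dots,\be_n)\in\RR^n$ and $\mathbf 1=(1,\dots,1)$. A configured star $W=(\vec x_1|\cdots|\vec x_n)$ whose orthogonal projection to $\CC$ returns the $z_\ell$ necessarily has first two rows $\vec w_1=\vec\al,\ \vec w_2=\vec\be$; denoting its rows $\vec w_1,\dots,\vec w_N$, the system (\ref{system}) reads $\inn{\vec w_i,\vec w_j}=\rho\,\delta_{ij}+\si\,u_iu_j$ and $\inn{\vec w_i,\mathbf 1}=\tau u_i$, where $\tau:=\sqrt{n(\si+\rho)}$ and $\vec u=(u_1,\dots,u_N)$ is a unit vector with $\ga=\si/(\si+\rho)$. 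Splitting (\ref{gz}) into real and imaginary parts and comparing the $(1,1),(2,2),(1,2)$ entries shows these are equivalent to the forced values
$$
\rho=\frac12\Big(\sum_\ell|z_\ell|^2-\frac{\ga}{n}\Big|\sum_\ell z_\ell\Big|^2\Big),\qquad \si=\frac{\ga\rho}{1-\ga},\qquad u_1=\frac1\tau\sum_\ell\al_\ell,\quad u_2=\frac1\tau\sum_\ell\be_\ell .
$$
By Cauchy--Schwarz and $\ga<1$ (as in Lemma \ref{lem:ngaleq1}) one gets $\rho>0$, hence $\si+\rho=\rho/(1-\ga)>0$, so $WW^t=\rho I_N+\si\vec u\vec u^t$ is positive definite and the resulting star is automatically \emph{full}. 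For $\vec u$ to extend to a unit vector one needs $u_1^2+u_2^2\le1$; putting $b_\ell=z_\ell-\tfrac1n\sum_kz_k$, equation (\ref{gz}) gives $\sum_\ell b_\ell^2=\tfrac{\ga-1}{n}\big(\sum_\ell z_\ell\big)^2$, whence $(1-\ga)\big|\sum_\ell z_\ell\big|^2=n\big|\sum_\ell b_\ell^2\big|\le n\sum_\ell|b_\ell|^2$, and using $\sum_\ell|z_\ell|^2=\sum_\ell|b_\ell|^2+\tfrac1n\big|\sum_\ell z_\ell\big|^2$ this is precisely $u_1^2+u_2^2\le1$; equality is the identity (\ref{lift-condition}), and it forces equality in the triangle inequality, i.e.\ the $z_\ell$ collinear in $\CC$. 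Set $w:=\sqrt{1-u_1^2-u_2^2}\ge0$.

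For $N=3$ I would take $u_3=\pm w$ (the sign ambiguity of the statement) and note that the only conditions left on the single unknown row $\vec t:=\vec w_3$ are the three \emph{linear} equations $\inn{\mathbf 1,\vec t}=\tau u_3,\ \inn{\vec\al,\vec t}=\si u_1u_3,\ \inn{\vec\be,\vec t}=\si u_2u_3$ --- that is $A\vec t=\vec b$ with $A$ the $3\times n$ matrix with rows $\mathbf 1,\vec\al,\vec\be$ and $\vec b=u_3(\tau,\si u_1,\si u_2)^t$, both depending only on $z_1,\dots,z_n$ (up to the sign) --- together with the single \emph{quadratic} equation $\|\vec t\|^2=\rho+\si u_3^2$. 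When $w\ne0$, the minimum-norm solution $\vec t^{\,*}$ of $A\vec t=\vec b$ lies in the row space of $A$, i.e.\ in $\spn\{\mathbf 1,\vec\al,\vec\be\}$; writing $\vec t^{\,*}=p\mathbf 1+q\vec\al+r\vec\be$ and solving the resulting $3\times3$ system for $(p,q,r)$ with the Gram relations of the preceding paragraph gives $\vec t^{\,*}=\tfrac1{u_3}\big(\tfrac\tau n\mathbf 1-u_1\vec\al-u_2\vec\be\big)$, and expanding $\big\|\tfrac\tau n\mathbf 1-u_1\vec\al-u_2\vec\be\big\|^2$ with $\tau^2/n=\si+\rho$ and $u_1^2+u_2^2=1-u_3^2$ yields $\|\vec t^{\,*}\|^2=\si u_3^2+\rho$; so the quadratic condition holds automatically. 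Uniqueness follows because any other solution is $\vec t^{\,*}+\vec k$ with $\vec k\perp\spn\{\mathbf 1,\vec\al,\vec\be\}$, so $\|\vec t^{\,*}+\vec k\|^2=\|\vec t^{\,*}\|^2+\|\vec k\|^2>\rho+\si u_3^2$ unless $\vec k=0$. When $w=0$ the $z_\ell$ are collinear, so $\vec\al,\vec\be$ are parallel and $\spn\{\mathbf 1,\vec\al,\vec\be\}$ has dimension $\le2$; one may then take $\vec w_3$ to be any vector of norm $\sqrt\rho$ in its orthogonal complement (of dimension $\ge n-2\ge1$), so the star exists but is neither unique nor the minimum-norm solution.

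For $N>3$ I would fix the canonical choice $\vec u=(u_1,u_2,w,0,\dots,0)$, take $\vec w_3=\vec t^{\,*}$ as above (or, if $w=0$, of the degenerate type just described), and choose $\vec w_4,\dots,\vec w_N$ to be any orthogonal family of common norm $\sqrt\rho$ inside $\spn\{\mathbf 1,\vec\al,\vec\be\}^\perp$; since $\vec w_3\in\spn\{\mathbf 1,\vec\al,\vec\be\}$, this complement has dimension $\ge n-3\ge N-3$ (the count in the exceptional case $w=0$, where $\vec\al\parallel\vec\be$, being handled similarly), which is exactly the room required. Checking all entries then gives $WW^t=\rho I_N+\si\vec u\vec u^t$ and $\sum_\ell\vec x_\ell=\tau\vec u$, so by the converse part of Lemma \ref{lem:configured-star} this $W$ is a full configured star, and stacking $X=(\vec w_3|\cdots|\vec w_N)$ it solves $AX=B$ with the same $A$ and $B=\big(w\vec c\,|\,\vec 0\,|\cdots|\,\vec 0\big)$, $\vec c=(\tau,\si u_1,\si u_2)^t$, depending only on $z_1,\dots,z_n$. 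Non-uniqueness is immediate: $\vec w_4,\dots,\vec w_N$ may be rotated among themselves, and other choices of $\vec u$ are admissible. Finally, $X$ is not in general minimizing: the minimum-Frobenius-norm solution of $AX=B$ has columns $\vec w_4=\cdots=\vec w_N=\vec 0$, violating $\|\vec w_j\|=\sqrt\rho>0$, so a genuine configured star is forced to have Frobenius norm $\|\vec t^{\,*}\|^2+(N-3)\rho$, strictly larger than the minimum.

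The step I expect to be the crux is the identity $\|\vec t^{\,*}\|^2=\rho+\si u_3^2$ for the minimum-norm solution of the linear system: this is precisely what makes the $N=3$ lift both unique and minimizing, and it only becomes available after $\rho,\si,\tau,u_1,u_2$ have been pinned down and the explicit form of $\vec t^{\,*}$ computed. A secondary technical point is the bookkeeping around the exceptional locus (\ref{lift-condition}): one must verify that $u_1^2+u_2^2\le1$ holds unconditionally, that equality there forces $\vec\al\parallel\vec\be$ so that the dimension counts for the auxiliary rows $\vec w_4,\dots,\vec w_N$ still go through, and that the minimum-norm prescription degenerates exactly on that locus.
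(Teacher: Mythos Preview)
Your argument is correct and follows the same overall strategy as the paper: pin down $\rho,\si,u_1,u_2$ from the data, reduce the remaining configured-star conditions on the unknown rows $\vec w_3,\dots,\vec w_N$ to the linear system $AX=B$ together with a quadratic norm constraint, show that for $N=3$ the minimum-norm solution of the linear system automatically meets the constraint, and for $N>3$ augment it by orthonormal vectors in $\ker A$. The paper executes this via the Moore--Penrose pseudo-inverse, computing $(AA^t)^{-1}$ explicitly and checking the constraint through $Z^tZ=B^t(AA^t)^{-1}B$; you instead write the minimum-norm solution in closed form as $\vec t^{\,*}=u_3^{-1}\big(\tfrac\tau n\mathbf 1-u_1\vec\al-u_2\vec\be\big)$ and verify $\|\vec t^{\,*}\|^2=\rho+\si u_3^2$ directly, which is more elementary and arguably cleaner. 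Likewise, your proof of $u_1^2+u_2^2\le1$ via the triangle inequality applied to $\sum_\ell b_\ell^2$ is more explicit than the paper's implicit route (positive semi-definiteness of the Gram matrix $AA^t$, whose determinant is $n\rho^2(1-u_1^2-u_2^2)$).

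One small slip: in the degenerate case $w=0$ you assert ``$\vec\al,\vec\be$ are parallel''. Collinearity of the $z_\ell$ only gives $\vec\be\in\spn\{\vec\al,\mathbf 1\}$ (think of $z_\ell=\al_\ell+\ii\be$ with $\be$ a nonzero constant), i.e.\ $\mathbf 1,\vec\al,\vec\be$ are linearly \emph{dependent}; this is precisely Lemma~\ref{lem:lin-comb}(ii) in the paper. Your conclusion that $\spn\{\mathbf 1,\vec\al,\vec\be\}$ has dimension at most $2$, and hence all the dimension counts for the auxiliary rows, is unaffected.
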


In the case when $N=3$, except for special cases, the theorem says that the configured star arises as the absolute minimum of a functional determined by the solution to (\ref{gz}).  The special solutions which satisfy (\ref{lift-condition}) are characterized in Lemma \ref{lem:lin-comb} below.  Note that we do not insist that $\sum_{\ell} z_{\ell} \neq 0$, which is necessary to determine $\ga$; in fact in this case we have a family of stars for different choices of $\ga$ subject to $\ga <1$.  In order to prove the theorem, we first of all establish some preliminary identities.   
  
\begin{lemma} If the system {\rm (\ref{system})} is satisfied, then we have the following equalities:
\begin{equation} \label{identity-1}
u_1 = \frac{1}{\sqrt{n(\si + \rho )}} \sum_{\ell =1}^n \al_{\ell} , \quad  u_2 = \frac{1}{\sqrt{n(\si + \rho )}} \sum_{\ell =1}^n \be_{\ell}\,;
\end{equation}
\begin{equation} \label{identity-2}
\left\{ \begin{array}{rcl}
\ds \frac{\ga}{n} \big(\sum_{\ell} \al_{\ell}\big)^2 + \rho & = & \ds \sum_{\ell}\al_{\ell}{}^2 \\
\ds \frac{\ga}{n} \big(\sum_{\ell} \be_{\ell}\big)^2 + \rho & = & \ds \sum_{\ell}\be_{\ell}{}^2 
 \end{array}
\right.
\end{equation}
In particular, we have the identity:
\begin{eqnarray*} 
\rho & = & \frac{1}{2}\sum_{\ell} (\al_{\ell}{}^2 + \be_{\ell}{}^2) - \frac{\ga}{2n}\left(\Big(\sum_{\ell}\al_{\ell}\Big)^2 + \Big(\sum_{\ell}\be_{\ell}\big)^2\right) \nonumber \\
 & = & \frac{1}{2}\sum_{\ell} z_{\ell} \ov{z}_{\ell} - \frac{\ga}{2n} \Big( \sum_{\ell}z_{\ell}\Big)\Big( \sum_{\ell} \ov{z}_{\ell}\Big)\,. 
 \end{eqnarray*}
Furthermore, provided $\ga <1$, then $\rho >0$. 
\end{lemma}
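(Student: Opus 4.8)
The plan is to read off all three statements directly from the two equations that constitute the system (\ref{system}), together with the relation $\ga = \si/(\si+\rho)$ and the Cauchy--Schwarz inequality (\ref{inductive-inequality}). First, for (\ref{identity-1}): the second equation of (\ref{system}) is a vector identity in $\RR^N$, so I would simply take its first and second components. Since the first row of $W$ is $(\al_1,\dots,\al_n)$ and the second is $(\be_1,\dots,\be_n)$, the first component of $\sum_{\ell}\vec{x}_{\ell}$ is $\sum_{\ell}\al_{\ell}$ and the second is $\sum_{\ell}\be_{\ell}$; equating with $\sqrt{n(\si+\rho)}\,\vec{u}$ and dividing by $\sqrt{n(\si+\rho)}>0$ gives (\ref{identity-1}).

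Next, for (\ref{identity-2}), I would compare the $(1,1)$ and $(2,2)$ diagonal entries of the matrix equation $WW^t=\rho I_N+\si\vec{u}\vec{u}^t$. The $(1,1)$ entry of $WW^t$ is $\sum_{\ell}\al_{\ell}{}^2$, while the right-hand side contributes $\rho+\si u_1{}^2$; substituting $u_1=\frac{1}{\sqrt{n(\si+\rho)}}\sum_{\ell}\al_{\ell}$ from (\ref{identity-1}) turns $\si u_1{}^2$ into $\frac{\si}{n(\si+\rho)}(\sum_{\ell}\al_{\ell})^2=\frac{\ga}{n}(\sum_{\ell}\al_{\ell})^2$, using $\ga=\si/(\si+\rho)$, which is the first line of (\ref{identity-2}); the $(2,2)$ entry yields the second line in the same way. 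Adding the two equations of (\ref{identity-2}) and dividing by $2$ gives $\rho=\frac12\sum_{\ell}(\al_{\ell}{}^2+\be_{\ell}{}^2)-\frac{\ga}{2n}\big((\sum_{\ell}\al_{\ell})^2+(\sum_{\ell}\be_{\ell})^2\big)$, and then $\al_{\ell}{}^2+\be_{\ell}{}^2=z_{\ell}\ov{z}_{\ell}$ together with $(\sum_{\ell}\al_{\ell})^2+(\sum_{\ell}\be_{\ell})^2=(\sum_{\ell}z_{\ell})(\sum_{\ell}\ov{z}_{\ell})$ converts this into the stated identity for $\rho$.

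Finally, for the positivity, set $P=\sum_{\ell}z_{\ell}\ov{z}_{\ell}\ge 0$ and $Q=(\sum_{\ell}z_{\ell})(\sum_{\ell}\ov{z}_{\ell})\ge 0$, so that $\rho=\frac12 P-\frac{\ga}{2n}Q$, with $P>0$ because the $z_{\ell}$ are not all zero. Inequality (\ref{inductive-inequality}) applied to $a_{\ell}=z_{\ell}$ gives $nP\ge Q$. If $\ga\le 0$ then $-\frac{\ga}{2n}Q\ge 0$ and hence $\rho\ge\frac12 P>0$; if $0<\ga<1$ then $\frac{\ga}{2n}Q\le\frac{\ga}{2n}(nP)=\frac{\ga}{2}P<\frac12 P$, so again $\rho>0$. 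There is no genuine obstacle here: every step is a direct reading-off of matrix or vector entries followed by routine substitution. The only point deserving a little care is the \emph{strictness} of $\rho>0$, which forces one to split on the sign of $\ga$ and to use that not all $z_{\ell}$ vanish (so that $P>0$); the bound $Q\le nP$ on its own only yields $\rho\ge 0$ as $\ga\to 1$.
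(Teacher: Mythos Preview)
Your proof is correct and follows essentially the same approach as the paper: read off (\ref{identity-1}) from the components of the second equation in (\ref{system}), obtain (\ref{identity-2}) from the diagonal entries of $WW^t=\rho I_N+\si\vec{u}\vec{u}^t$ after substituting $u_1,u_2$ and using $\ga=\si/(\si+\rho)$, and deduce $\rho>0$ from the Cauchy--Schwarz inequality (\ref{inductive-inequality}). Your case split on the sign of $\ga$ for the strict positivity is a welcome clarification of what the paper compresses into a single sentence.
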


\begin{proof}
Identity (\ref{identity-1}) follows from the second equation of (\ref{system}).  Then the first equation of (\ref{system}) implies that
$$
\sum_{\ell}\al_{\ell}{}^2 = \rho + \si u_1{}^2 = \rho + \frac{\si}{n(\si + \rho )}\Big(\sum_{\ell}\al_{\ell}\Big)^2 = 
\rho + \frac{\ga}{n}\Big(\sum_{\ell}\al_{\ell}\Big)^2\,,
$$
which gives the first identity of (\ref{identity-2}).  The second identity follows similarly.  The inequality $\rho >0$ is an immediate consequence of (\ref{inductive-inequality}).
\end{proof}

These are necessary conditions that must be satisfied by any configured star that projects to a solution of (\ref{gz}). 

Write vectors in column form and set $\vec{\al} = \left( \begin{array}{c} \al_1 \\ \vdots \\ \al_n \end{array}\right) ,\ \vec{\be} = \left( \begin{array}{c} \be_1 \\ \vdots \\ \be_n \end{array}\right)$ and ${\bf 1}\in \RR^n$ to be the vector having $1$ in each of its entries.  

In the context of equation (\ref{gz}), by \emph{normalization} we mean the freedom to multiply each $z_{\ell}$ by a non-zero constant $\la \in \CC$; this leaves the equation invariant.  So we no longer allow the more general freedom $\phi \mapsto \la \phi + \mu$ associated to equation (\ref{one}) since that disturbs the requirement that $(z_1, \ldots , z_n)$ represents a solution which is zero at the internal vertex of a star.
The following lemma concerns normalizations that are useful in the proof of the theorem.
 
 \begin{lemma}  \label{lem:lin-comb} {\rm (i)}  If $\sum_{\ell}z_{\ell} \neq 0$, then there is a normalization with respect to which $\sum_{\ell}z_{\ell} = 1$, in which case:
\begin{equation} \label{normal-1}
\sum_{\ell} \al_{\ell} = 1, \quad \sum_{\ell} \be_{\ell} = 0, \quad \sum_{\ell} \al_{\ell}\be_{\ell} = 0\,.
\end{equation}

{\rm (ii)}  For a non-trivial solution to {\rm (\ref{gz})} with $\ga <1$, suppose there exists a linear combination of the form \ $a\vec{\al} + b \vec{\be} + c{\bf 1}=0$ for $a,b,c \in \RR$ not all zero.  Then there exists a normalization factor $\la$ such that $\la z_{\ell} = \al_{\ell} + \ii \be$, with $\al_{\ell}\in \RR$ ($\ell = 1, \ldots , n)$ and with $\be \in \RR$ constant.  Furthermore $\be \neq 0$ and the identities: 
\begin{equation} \label{normal-2}
\sum_{\ell} \al_{\ell} = 0 \qquad {\rm and} \qquad \ga = 1-\frac{1}{n\be^2}\sum_{\ell}\al_{\ell}{}^2\,,
\end{equation}
are satisfied.  In particular {\rm (\ref{lift-condition})} holds.

Conversely, if {\rm (\ref{lift-condition})} holds, then there is a linear combination of the form \ $a\vec{\al} + b \vec{\be} + c{\bf 1}=0$ for $a,b,c \in \RR$ not all zero. 
 \end{lemma}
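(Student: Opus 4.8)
The plan is to take the three assertions in order; the only genuinely new point is the equivalence at the end of (ii) between collinearity of the $z_\ell$ and the identity (\ref{lift-condition}). For (i) I would simply set $\la=1/\sum_\ell z_\ell$, which is a legitimate normalization since $\sum_\ell z_\ell\neq 0$; then $\sum_\ell\al_\ell+\ii\sum_\ell\be_\ell=1$ forces $\sum_\ell\al_\ell=1$ and $\sum_\ell\be_\ell=0$, while (\ref{gz}) gives $\sum_\ell z_\ell^2=\ga/n\in\RR$, and since $\sum_\ell z_\ell^2=\sum_\ell(\al_\ell^2-\be_\ell^2)+2\ii\sum_\ell\al_\ell\be_\ell$ its vanishing imaginary part yields $\sum_\ell\al_\ell\be_\ell=0$, completing (\ref{normal-1}).

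For the forward direction of (ii): a relation $a\vec{\al}+b\vec{\be}+c{\bf 1}=0$ with $(a,b,c)\neq 0$ has $(a,b)\neq(0,0)$ (otherwise $c=0$ too), and expresses that the points $z_\ell$ lie on a common affine line $L\subset\CC$. I would first rule out $0\in L$: if $z_\ell=\mu_\ell q$ with $\mu_\ell\in\RR$ and $q\neq 0$, then (\ref{gz}) becomes $\frac{\ga}{n}\bigl(\sum_\ell\mu_\ell\bigr)^2=\sum_\ell\mu_\ell^2$ with the $\mu_\ell$ real and not all zero, so $\ga\geq 1$ by Lemma \ref{lem:ngaleq1}, contradicting $\ga<1$. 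Writing $L=\{z_0+tq:t\in\RR\}$ and taking $\la=1/q$ gives $\la z_\ell=\al_\ell+\ii\be$ with $\be=\Im(z_0/q)$ constant, and $\be\neq 0$ because $0\notin L$. Since (\ref{gz}) is invariant under $z_\ell\mapsto\la z_\ell$, the numbers $\al_\ell+\ii\be$ satisfy it with the same $\ga$; separating real and imaginary parts gives $2(\ga-1)\be\sum_\ell\al_\ell=0$, hence $\sum_\ell\al_\ell=0$, and then the real part gives $\sum_\ell\al_\ell^2=n\be^2(1-\ga)$, i.e. (\ref{normal-2}). Finally (\ref{lift-condition}) is itself invariant under $z_\ell\mapsto\la z_\ell$ (both $\sum_\ell|z_\ell|^2$ and $|\sum_\ell z_\ell|^2$ scale by $|\la|^2$, and $\ga$ is unchanged), so it is enough to verify it for $\al_\ell+\ii\be$, where $\sum_\ell z_\ell=\ii n\be$ and the substitution $\ga-1=-\sum_\ell\al_\ell^2/(n\be^2)$ collapses the left-hand side to $0$.

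For the converse I would centre the data: set $S=\sum_\ell z_\ell$ and $w_\ell=z_\ell-S/n$, so $\sum_\ell w_\ell=0$, $\sum_\ell|w_\ell|^2=\sum_\ell|z_\ell|^2-|S|^2/n$, and, by (\ref{gz}), $\sum_\ell w_\ell^2=\sum_\ell z_\ell^2-S^2/n=\frac{\ga-1}{n}S^2$. One notes $S\neq 0$, for otherwise (\ref{lift-condition}) forces $\sum_\ell|z_\ell|^2=0$. Rewriting (\ref{lift-condition}) as $n\sum_\ell|z_\ell|^2-|S|^2=(1-\ga)|S|^2$ shows $\sum_\ell|w_\ell|^2=\frac{1-\ga}{n}|S|^2=\bigl|\frac{\ga-1}{n}S^2\bigr|=\bigl|\sum_\ell w_\ell^2\bigr|$, where $\ga<1$ is used to identify $|1-\ga|$ with $1-\ga$. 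Thus the triangle inequality $\bigl|\sum_\ell w_\ell^2\bigr|\leq\sum_\ell|w_\ell^2|$ is an equality, so all nonzero $w_\ell^2$ share a common argument $2\psi$, whence each $w_\ell$ is a real multiple of $e^{\ii\psi}$; the possibility that every $w_\ell=0$ is excluded because it would make the $z_\ell$ constant, forcing $\ga=1$ unless they all vanish. Hence $z_\ell=S/n+\mu_\ell e^{\ii\psi}$ with $\mu_\ell\in\RR$, so $(\sin\psi)\al_\ell-(\cos\psi)\be_\ell$ is independent of $\ell$; that is $(\sin\psi)\vec{\al}-(\cos\psi)\vec{\be}+c{\bf 1}=0$ for a suitable $c$, with $(\sin\psi,\cos\psi)\neq(0,0)$, as required.

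The one real idea, and the main obstacle to spot, is the recognition in the converse that after centring and using (\ref{gz}) the condition (\ref{lift-condition}) is exactly the equality case of $\bigl|\sum_\ell w_\ell^2\bigr|\leq\sum_\ell|w_\ell|^2$; everything else is bookkeeping with the two normalizing invariances and the real/imaginary splitting of (\ref{gz}). The subtle points are that $\ga<1$ must be invoked three separate times — to apply Lemma \ref{lem:ngaleq1}, to divide by $\ga-1$, and to drop the modulus from $1-\ga$ — and that the degenerate cases ($L$ through the origin, all $w_\ell$ zero) must be disposed of explicitly.
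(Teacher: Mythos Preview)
Your proof is correct. Parts (i) and the forward direction of (ii) are essentially the paper's argument, though you phrase the latter geometrically (the $z_\ell$ lie on an affine line $L$, and $0\notin L$ by Lemma~\ref{lem:ngaleq1}) whereas the paper carries out the case split $b\neq 0$ versus $b=0$ algebraically; the content is the same.

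For the converse you take a genuinely different route. The paper normalizes so that $\sum_\ell z_\ell=\ii$, reads off $\sum_\ell\al_\ell=0$, $\sum_\ell\be_\ell=1$, $\sum_\ell\al_\ell\be_\ell=0$ and $\ga=-n\sum_\ell(\al_\ell^2-\be_\ell^2)$, and then shows that (\ref{lift-condition}) reduces to $n\sum_\ell\be_\ell^2=(\sum_\ell\be_\ell)^2$; the equality case of Cauchy--Schwarz forces $\be_\ell$ constant. You instead centre the data, setting $w_\ell=z_\ell-S/n$, and observe that (\ref{gz}) together with (\ref{lift-condition}) gives $\bigl|\sum_\ell w_\ell^2\bigr|=\sum_\ell|w_\ell|^2$, the equality case of the triangle inequality, so all $w_\ell$ are real multiples of a single unit complex number. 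Both arguments identify (\ref{lift-condition}) as the equality case of a classical inequality; the paper's version is slightly quicker because the normalization $\sum_\ell z_\ell=\ii$ lands directly in the form asserted in the lemma, while your centring is more symmetric and makes the geometric meaning (collinearity of the $z_\ell$) transparent.
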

 
 \begin{proof}  (i) This is trivial; if $\sum_{\ell}z_{\ell} = \mu \neq 0$, then we multiply each $z_{\ell}$ by $\la = \mu^{-1}$.  Then after normalization
 $$
 \ga = n\sum_{\ell = 1}^n (\al_{\ell} + \ii \be_{\ell})^2 = n\sum_{\ell = 1}^n(\al_{\ell}{}^2 - \be_{\ell}{}^2 + 2\ii \al_{\ell}\be_{\ell})\,,
 $$
 so that $\sum_{\ell}\al_{\ell}\be_{\ell} = 0$.
 
(ii)  Suppose there exists a linear combination of the form $a\vec{\al} + b \vec{\be} + c{\bf 1}=0$ for $a,b,c \in \RR$ not all zero.  If $b\neq 0$, then
 $$
 z_{\ell} = \left( 1 - \ii \frac{a}{b}\right)\al_{\ell} - \ii \frac{c}{b}\,,
 $$
 so that 
 $$
 b(b+ \ii a)z_{\ell} = \{ (a^2+b^2)\al_{\ell} + ac\}  - \ii cb\,,
 $$
 which is of the required form.  If on the other hand $b = 0$, then since $\ga <1$, by Lemma \ref{lem:ngaleq1}, $a\neq 0$ and $-\ii z_{\ell} = \be_{\ell} + \ii (c/a)$, which is once more of the required form.
 
 Suppose now that $z_{\ell}$ has the form $z_{\ell} = \al_{\ell} + \ii \be$ with $\be \in \RR$ constant.  First note that $\be \neq 0$, otherwise we would contradict Lemma \ref{lem:ngaleq1} once more.  The identities (\ref{normal-2}) now follow by taking the real and imaginary part of (\ref{gz}).  It is straightforward to check that these imply (\ref{lift-condition}).  
 
 Conversely, suppose that (\ref{lift-condition}) holds.  If\ $\sum_{\ell}z_{\ell} = 0$, then from (\ref{lift-condition}),\ $\sum_{\ell} |z_{\ell}|^2=0$ which implies each $z_{\ell}$ vanishes, contradicting our hypothesis.  Hence $\sum_{\ell} z_{\ell} \neq 0$.  Now normalize so that $\sum_{\ell} z_{\ell} = \ii$.  Then
 $$
 \sum_{\ell} \al_{\ell =1}^n = 0,\quad \sum_{\ell = 1}^n \be_{\ell} = 1,\quad \sum_{\ell = 1}^n \al_{\ell}\be_{\ell} = 0\,,
 $$
and
 $$
 \ga = -n\sum_{\ell = 1}^n(\al_{\ell}{}^2 - \be_{\ell}{}^2)\,.
 $$
 But now (\ref{lift-condition}) implies that
 $$
 n\sum_{\ell = 1}^n \be_{\ell}{}^2 - 1 = 0 \quad \Leftrightarrow \quad n\sum_{\ell = 1}^n \be_{\ell}{}^2 - \left(\sum_{\ell = 1}^n\be_{\ell}\right)^2 = 0\,,
 $$
 which, by the Cauchy-Schwarz inequality implies that $\be_{\ell} = \be$ constant.  Since $\ga <1$, we must have $\be \neq 0$ and there exists a non-trivial linear combination: $\vec{\be} - \be {\bf 1}=0$.
 \end{proof} 

\medskip

\noindent \emph{Proof of Theorem} \ref{thm:lift}:  Let $\{ z_1, \ldots , z_n ; \ga\}$ be a non-trivial solution to (\ref{gz}) satisfying $\ga <1$.  Set  
\begin{equation} \label{rho}
\rho = \frac{1}{2}\sum_{\ell} z_{\ell} \ov{z}_{\ell} - \frac{\ga}{2n} \Big( \sum_{\ell}z_{\ell}\Big)\Big( \sum_{\ell} \ov{z}_{\ell}\Big) >0 \,, 
\end{equation}
and 
\begin{equation} \label{si}
\si = \frac{\ga \rho }{1 - \ga} \quad (\Rightarrow \si + \rho = \rho /(1-\ga )>0)\,.
\end{equation} 
Then from (\ref{gz}) we have
$$
\frac{\ga}{n} \Big(\sum_{\ell}\al_{\ell}\Big)^2 - \sum_{\ell}\al_{\ell}{}^2 = \frac{\ga}{n} \Big(\sum_{\ell}\be_{\ell}\Big)^2 - \sum_{\ell}\be_{\ell}{}^2
$$
and
$$ 
\frac{\ga}{n} \Big(\sum_{\ell}\al_{\ell}\Big)\Big(\sum_{\ell}\be_{\ell}\Big)=\sum_{\ell}\al_{\ell}\be_{\ell}\,,
$$
so that the identities (\ref{identity-2}) hold:
$$ 
\rho = \sum_{\ell}\al_{\ell}{}^2 - \frac{\ga}{n} \Big(\sum_{\ell}\al_{\ell}\Big)^2 = \sum_{\ell}\be_{\ell}{}^2 - \frac{\ga}{n} \Big(\sum_{\ell}\be_{\ell}\Big)^2\,.
$$
Define $u_1$ and $u_2$ by (\ref{identity-1}).  We are required to solve (\ref{system}) for $x_{\ell j}$ $(\ell = 1, \ldots , n;\ j = 3, \ldots , N)$ and for $u_3, \ldots , u_N$ satisfying $u_1{}^2+u_2{}^2 + u_3{}^2 + \cdots + u_N{}^2 = 1$.  It is convenient to rewrite (\ref{system}) as follows.

Set 
$$
A:= \left( \begin{array}{cccc} \al_1 & \al_2 & \cdots & \al_n \\
\be_1 & \be_2 & \cdots & \be_n \\
1 & 1 & \cdots & 1 \end{array} \right) , \qquad 
X:= \left( \begin{array}{cccc}
x_{13} & x_{14} & \cdots & x_{1N} \\
x_{23} & x_{24} & \cdots & x_{2N} \\
\vdots & \vdots & \ddots & \vdots \\
x_{n3} & x_{n4} & \cdots & x_{nN} 
\end{array} \right)\,.
$$
Then 
$$ \label{system-2}
AX  =  \left( \begin{array}{cccc} 
\inn{\vec{\al} , \vec{X}_3} & \inn{\vec{\al} , \vec{X}_4} & \cdots & \inn{\vec{\al} , \vec{X}_N} \\
\inn{\vec{\be} , \vec{X}_3} & \inn{\vec{\be} , \vec{X}_4} & \cdots & \inn{\vec{\be} , \vec{X}_N} \\
\sum_{\ell} x_{\ell 3} & \sum_{\ell} x_{\ell 4} & \cdots & \sum_{\ell} x_{\ell N} 
\end{array} \right)\,, 
$$
where $\inn{\vec{\al} , \vec{X}_j}$ denotes the Euclidean inner product of the vectors $\vec{\al} = \left( \begin{array}{c} \al_1 \\ \vdots \\ \al_n \end{array} \right)$ and $\vec{X}_j = \left( \begin{array}{c} x_{1j} \\ \vdots \\ x_{nj} \end{array} \right)$, for $j = 3, \ldots , N$ (similarly for $\inn{\vec{\be} , \vec{X}_j}$).
Then solving (\ref{system}) is equivalent to solving
\begin{eqnarray} 
 AX =  B:& = &  \left( \begin{array}{cccc}
\si u_1u_3 & \si u_1u_4 & \cdots & \si u_1u_N \\
\si u_2u_3 & \si u_2u_4 & \cdots & \si u_2u_N \\
\sqrt{n(\si + \rho )}\,u_3 & \sqrt{n(\si + \rho )}\,u_4 & \cdots & \sqrt{n(\si + \rho )}\,u_N
\end{array} \right) \nonumber \\
 & = & \left( \begin{array}{c} \si u_1 \\ \si u_2 \\ \sqrt{n(\si + \rho )} \end{array} \right)
\left( \begin{array}{cccc} u_3 & u_4 & \cdots & u_N \end{array} \right)\,, \label{system-2}
\end{eqnarray}
subject to the constraint: 
\begin{eqnarray} \label{system-2-constraint}
X^tX & = & \left( \begin{array}{cccc} 
\inn{\vec{X}_3 , \vec{X}_3} & \inn{\vec{X}_3 , \vec{X}_4} & \cdots & \inn{\vec{X}_3 , \vec{X}_N} \\
\inn{\vec{X}_4 , \vec{X}_3} & \inn{\vec{X}_4 , \vec{X}_4} & \cdots & \inn{\vec{X}_4 , \vec{X}_N} \\
\vdots & \vdots & \ddots & \vdots \\
\inn{\vec{X}_N , \vec{X}_3} & \inn{\vec{X}_N , \vec{X}_4} & \cdots & \inn{\vec{X}_N , \vec{X}_N}
\end{array} \right) \\
& = & 
\left( \begin{array}{cccc}
\rho + \si u_3{}^2 & \si u_3u_4 & \cdots & \si u_3u_N \\
\si u_4u_3 & \rho + \si u_4{}^2 & \cdots & \si u_4u_N \\
 \vdots & \vdots & \ddots & \vdots \\
 \si u_Nu_3  & \si u_Nu_4 & \cdots & \rho + \si u_N{}^2 
 \end{array} \right) \nonumber \\
  & = & \rho I_{N-2} + \si \left( \begin{array}{c} u_3 \\ u_4 \\ \vdots \\ u_N \end{array} \right) (u_3 \ u_4 \ \cdots \ u_N) \nonumber
 \end{eqnarray}
 It is important to note the sign ambiguity: the equations are invariant under the simultaneous replacement of $u_j$ by $-u_j$ for $j = 3, \ldots , N$ and of $X$ by $-X$.  This ambiguity represents two choices for the configured star.
 In order to study solutions to (\ref{system-2}), we will employ the Moore--Penrose pseudo-inverse of the matrix $A$ \cite{Pe-0}. 

In general, if $A$ is an $m\times n$ --matrix with $m\leq n$ such that the rows of $A$ are linearly independent, then $AA^t$ is invertible and the \emph{Moore-Penrose pseudo-inverse} is the $n\times m$ --matrix
$$
A^+ : = A^t(AA^t)^{-1}\,.
$$
If $B$ is an $m\times p$ --matrix and $X$ a variable $n\times p$ --matrix, then $Z = A^+B$ is a solution to the equation
$$
AX=B\,;
$$
furthermore, it satisfies $||Z||_F \leq ||X||_F$ for all solutions $X$, where 
$$
||X||_F^2 = \sum_{i = 1}^m\sum_{j=1}^n|x_{ij}|^2 = {\rm trace}\, (X^tX)\,,
$$ is the square of the Frobenius norm of $X$.  The general solution is given by 
$$
X = A^+B+(I_n-A^+A)V\,,
$$
where $V$ is an $n\times p$ --matrix which can take on arbitrary values.  We now return to the system (\ref{system-2}).

Suppose first that the rows of $A$ are linearly independent.  Set $Z = A^+B$.  Then $AZ=B$.  In order to study the constraint (\ref{system-2-constraint}), we calculate $Z^tZ$:
\begin{eqnarray*}
Z^tZ & = & (A^+B)^tA^+B = B^t(A^+)^tA^+B \\
 & = & B^t(A^t(AA^t)^{-1})^tA^t(AA^t)^{-1}B \\
 & = & B^t((AA^t)^{-1})^tAA^t(AA^t)^{-1}B \\
  & = & B^t(AA^t)^{-1}B\,.
\end{eqnarray*}
Then from (\ref{identity-1}) and (\ref{identity-2}),
$$
AA^t=\left( \begin{array}{ccc}
\rho + \si u_1{}^2 & \si u_1u_2 & \sqrt{n(\si + \rho )}\, u_1 \\
\si u_1u_2 & \rho + \si u_2{}^2 & \sqrt{n(\si + \rho )}\, u_2 \\
\sqrt{n(\si + \rho )}\, u_1 & \sqrt{n(\si + \rho )} \, u_2 & n
\end{array} \right)\,,
$$
with determinant:
\begin{equation} \label{det-AAt}
\det (AA^t) = n\rho^2(1-u_1{}^2-u_2{}^2)\,.
\end{equation}
The inverse is given by:
$$ \begin{array}{l}
 (AA^t)^{-1} =  \\
\ds  \frac{1}{n\rho (1-u_1{}^2-u_2{}^2)} \left( \begin{array}{ccc}
n(1 - u_2{}^2) & nu_1u_2 & - \sqrt{n(\si + \rho )}\,u_1\\
nu_1u_2 & n(1-u_1{}^2) & - \sqrt{n(\si + \rho )}\,u_2 \\
-\sqrt{n(\si + \rho )}\,u_1 & - \sqrt{n(\si + \rho )}\,u_2 & \rho + \si (u_1{}^2 + u_2{}^2)
\end{array} \right)
\end{array}
$$
We then compute to find
\begin{equation} \label{ZtZ} 
Z^tZ = \left( \frac{\rho}{1-u_1{}^2-u_2{}^2} + \si \right) \left( \begin{array}{c} u_3 \\ u_4 \\ \vdots \\ u_N \end{array} \right) (u_3 \ u_4 \ \cdots \ u_N)\,.
\end{equation}
On comparing (\ref{ZtZ}) with (\ref{system-2-constraint}), we see that we always have equality of the right-hand sides when $N=3$.  In particular, the solution $X=Z$ which satisfies the constraint minimizes $||X||_F$ among solutions $X$ to $AX=B$ and so is unique up to the sign ambiguity referred to above.  (If also $n=3$, then there is one and only one solution to $AX=B$; this solution automatically satisfies the constraint).  On the other hand, for $N>3$, the solution $X=Z$ never satisfies the constraint and we are required to look at a more general set of solutions. 

If an $n\times (N-2)$ --matrix $Y$ satisfies the homogeneous equation $AY=\vec{0}$, where $\vec{0}$ is the $3\times (N-2)$ zero matrix, then $X=Z+Y$ solves the equation (\ref{system-2}).  We now look at the homogeneous equation.   Write the column vectors of $Y$ as $\vec{Y}_j$ ($j = 3, \ldots , N$).  

Suppose first that $\sum_{\ell}z_{\ell} \neq 0$.  Then by Lemma \ref{lem:lin-comb}, there is a normalization with respect to which the conditions (\ref{normal-1}) are satisfied.  Note that in this normalization $u_2=0$ and $u_1 = 1/\sqrt{n(\si + \rho )} = \sqrt{\ga / n\si}$.  Consider $\RR^n$ with coordinates $(t_1, \ldots , t_n)$.  Let $\Pi$ be the $(n-1)$-dimensional hyperplane $t_1+\cdots + t_n=0$ with inner product induced from the standard one on $\RR^n$.  Then $\vec{\be} \in \Pi$.  Let $\pi : \RR^n \ra \Pi$ be orthogonal projection and consider $\pi (\vec{\al} )$; this vector (which could vanish) is orthogonal to $\vec{\be}$.  Then a necessary and sufficient condition that $Y$ satisfy $AY=\vec{0}$, is that the column vectors $\vec{Y}_j \ (j = 3, \ldots , N)$ lie in $\Pi$ and are orthogonal to both $\vec{\be}$ and $\pi (\vec{\al})$.  Let us now consider the constraint (\ref{system-2-constraint}) for $X=Z+Y$. 

Then 
$$
X^tX = Z^tZ + Y^tZ + Z^tY + Y^tY\,.
$$
Given (\ref{ZtZ}), we see that $X$ satisfies (\ref{system-2-constraint}) if and only if
\begin{equation} \label{constraint-3}
\rho I_{N-2} =  \frac{\rho}{u_3{}^2 + \cdots + u_N{}^2}  \left( \begin{array}{c} u_3 \\ u_4 \\ \vdots \\ u_N \end{array} \right) (u_3 \ u_4 \ \cdots \ u_N) + Y^tZ + Z^tY + Y^tY\,.
\end{equation}
Now a calculation shows that
\begin{eqnarray} \label{Z}
& & \qquad Z  =  A^+B \\
 & &  =  \frac{1}{n(1-u_1{}^2-u_2{}^2)}\left( \begin{array}{ccc} \al_1 & \be_1 & 1 \\ \al_2 & \be_2 & 1 \\ \vdots & \vdots & \vdots \\ \al_n & \be_n & 1 \end{array} \right) \left( \begin{array}{c} - nu_1 \\ - n u_2 \\ \sqrt{n(\si + \rho )} \end{array} \right) \left( \begin{array}{ccc} u_3 & \cdots & u_N \end{array} \right)\,. \nonumber
\end{eqnarray}

Recall that in $\RR^N$ we use co\"ordinates $(y_1, \ldots , y_N)$.  By a rotation of the $(N-2)$--dimensional subspace spanned by $(y_3, \ldots , y_N)$, we may suppose that the vector $(u_3, \ldots , u_N)$ is directed along $\pa / \pa y_3$; thus $u_4=u_5=\cdots =u_N=0$ and $u_3{}^2 = 1 - u_1{}^2 - u_2{}^2$.  We now construct $Y$ in a judicious way in order to satisfy the constraint. 

Let $Y = (\vec{0}, \vec{Y}_4, \ldots , \vec{Y}_N)$ be formed by setting the first vector $\vec{Y}_3$ equal to zero, and, in addition to $\vec{Y}_j \ (j = 4, \ldots , N-2)$ all lying in $\Pi$, we require that they are mutually orthogonal of length $\sqrt{\rho}$ and are orthogonal to both $\vec{\be}$ and $\pi (\vec{\al})$.  This is possible since $N\leq n$ and can be achieved by Gram-Schmidt orthonormalization.  Then by construction, $Y^tZ=\vec{0}$ since $\inn{\vec{Y}_j, \vec{\al}} = \inn{\vec{Y}_j, \vec{\be}} = \inn{\vec{Y}_j, {\bf 1}} = 0$ for $j = 3, \ldots , N$, and 
$$
Y^tY = \rho I_{N-2} - \rho \left( \begin{array}{c} 1 \\ 0 \\ \vdots \\ 0\end{array} \right) \left( \begin{array}{cccc} 1 & 0 & \cdots & 0 \end{array} \right)\,.
$$
But this is exactly what is required to satisfy the constraint (\ref{constraint-3}) when $u_4 = \cdots = u_N = 0$.  This solves the lifting problem in the case when the rows of $A$ are linearly independent and $\sum_{\ell}z_{\ell} \neq 0$.  Note that provided $\pi (\vec{\al}) \neq 0$, the solution given by $Y$ is unique up to orthogonal transformation of the subspace of $\Pi$ orthogonal to $\pi (\vec{\al})$ and $\vec{\be}$.  If $\pi (\vec{\al})=0$ there is a further freedom in the construction of $Y$. 

If now the rows of $A$ are linearly independent and $\sum_{\ell} z_{\ell} = 0$, then we proceed as above, replacing the condition $\sum_{\ell} \al_{\ell} = 1$, by $\sum_{\ell} \al_{\ell} = 0$.  This simply means that the vector $\vec{\al}$ automatically lies in the plane $\Pi$ and there is no need to project via $\pi$.  Note that the conditions $\sum_{\ell} \be_{\ell} = \sum_{\ell} \al_{\ell}\be_{\ell} = 0$ are still satisfied.  Then the construction of the configured star is identical.

Suppose now that the rows of $A$ are dependent, equivalently, the matrix $AA^t$ is no longer invertible.  From (\ref{det-AAt}), this is equivalent to $u_1{}^2+u_2{}^2=1$, that is, to $u_3=u_4= \cdots = u_N=0$.  Then (\ref{identity-1}) and (\ref{identity-2}) show that this is equivalent to the condition:
$$
n\sum_{\ell}|z_{\ell}|^2 + (\ga - 2)\Big|\sum_{\ell}z_{\ell}\Big|^2 = 0\,.
$$
Solutions $\{ z_1, \ldots , z_n\}$ to (\ref{gz}) which arise in this case are made explicit in Lemma \ref{lem:lin-comb}.  

We are now required to solve the system $AX={\bf 0}$, where ${\bf 0}$ is the $3\times (N-2)$ --matrix having zero in each of its entries, subject to the constraint $X^tX=\rho I_{N-2}$.  By Lemma \ref{lem:lin-comb}, we can choose a normalization for $\{ z_1, \ldots , z_n\}$ such that $\sum_{\ell} \al_{\ell}=0$ and $\be_{\ell} = \be$ is a non-zero constant.  We therefore have the system:
$$
\left( \begin{array}{ccc} \al_1 & \cdots & \al_n \\ \be & \cdots & \be \\ 1 & \cdots & 1 \end{array}\right) \left( \begin{array}{ccc} x_{13} & \cdots & x_{1N} \\ \vdots & \ddots & \vdots \\ x_{n3} & \cdots & x_{nN} \end{array} \right) = {\bf 0}\,.
$$
where now $\sum_{\ell} \al_{\ell} = 0$.  In particular, we require $\sum_{\ell} x_{\ell j} = 0$ for each $j = 3, \ldots , N-2$ and the constraint requires that the vectors $\vec{X}_j$ be orthogonal to $\vec{\al}$, to each other and have length $\sqrt{\rho}$.  We now proceed in a way similar to the non-degenerate case. 

Consider $\RR^n$ with coordinates $(t_1, \ldots , t_n)$.  Let $\Pi$ be the $(n-1)$-dimensional hyperplane $t_1+\cdots + t_n=0$ as above.  Then we are required to find $N-2$ vectors of length $\sqrt{\rho}$ in $\Pi$ orthogonal to $\vec{\al}$.  The dimension of the space orthogonal to $\vec{\al}$ in $\Pi$ is $n-2$, so since by hypothesis $N\leq n$, this can be achieved by Gram-Schmidt orthonormalization. 

The configured stars obtained by the above constructions are full provided the configuration matrix $W$ has maximal rank $N$.  This is the case if and only if the matrix $WW^t$ given in (\ref{system}) is invertible.  But a simple inductive argument shows that 
$$
|WW^t| = \left| \begin{array}{cccc}
\rho + \si u_1{}^2 & \si u_1u_2 & \cdots & \si u_1u_N \\
\si u_2u_1 & \rho + \si u_2{}^2 & \cdots & \si u_2u_N \\
 \vdots & \vdots & \ddots & \vdots \\
 \si u_Nu_1  & \si u_Nu_2 & \cdots & \rho + \si u_N{}^2 
 \end{array} \right| = \rho^N+\rho^{N-1}\si\,.
 $$
Recall that since $\ga <1$, by Lemma \ref{lem:ngaleq1}, we have $\rho >0$.  Since also $\rho + \si > 0$ (equation (\ref{si})), it follows that $|WW^t|>0$ and $WW^t$ is indeed invertible.
This completes the proof of the theorem.
\hfill $\Box$

\medskip

\begin{example} \label{ex:lift} {\rm We consider the graph on five vertices below, with solutions $\phi$ to (\ref{one}) normalized so as to take the value $0$ at the central vertex and $1$ on one of the other vertices.  The symmetry of the figure means that this determines the most general non-constant solution.

\medskip
\begin{center}
\setlength{\unitlength}{0.254mm}
\begin{picture}(102,112)(35,-136)
        \allinethickness{0.254mm}\path(40,-40)(40,-120) % Plain Solid Line
        \allinethickness{0.254mm}\path(40,-120)(120,-120) % Plain Solid Line
        \allinethickness{0.254mm}\path(120,-120)(120,-40) % Plain Solid Line
        \allinethickness{0.254mm}\path(40,-40)(120,-40) % Plain Solid Line
        \allinethickness{0.254mm}\path(120,-40)(40,-120) % Plain Solid Line
        \allinethickness{0.254mm}\path(120,-120)(40,-40) % Plain Solid Line
        \allinethickness{0.254mm}\special{sh 0.3}\put(40,-40){\ellipse{4}{4}} % Shade Dot
        \allinethickness{0.254mm}\special{sh 0.3}\put(120,-40){\ellipse{4}{4}} % Shade Dot
        \allinethickness{0.254mm}\special{sh 0.3}\put(80,-80){\ellipse{4}{4}} % Shade Dot
        \allinethickness{0.254mm}\special{sh 0.3}\put(40,-120){\ellipse{4}{4}} % Shade Dot
        \allinethickness{0.254mm}\special{sh 0.3}\put(120,-120){\ellipse{4}{4}} % Shade Dot
        \put(35,-136){\shortstack{$1$}} % Plain Text
        \put(120,-136){\shortstack{$x$}} % Plain Text
        \put(120,-33){\shortstack{$y$}} % Plain Text
        \put(35,-33){\shortstack{$z$}} % Plain Text
        \put(75,-71){\shortstack{$0$}} % Plain Text
         % Set color to black again (default font color)
\end{picture}
\end{center}
\medskip
There are two solutions to (\ref{one}) with $\ga$ constant, namely:
$$
\begin{array}{ll}
\ga = 1/3\,; & \quad x = \pm \,\ii \,, \ y = - 1\,, \ z = \mp \,\ii \,; \\
\ga = 1\,; & \quad x = yz\,, \ y = \frac{1}{2} \pm \frac{\sqrt{3}}{2}\ii\,, \ z = 3 \pm 2\sqrt{2}\,.
\end{array}
$$

We reject the latter solution, since the inequality $\ga <1$ is violated.  Consider the solution with $\ga = 1/3$.  Let us construct the lift at the bottom left-hand vertex.  First, we normalize so that the solution takes on the value $0$ at this vertex:

\medskip
\begin{center}
\setlength{\unitlength}{0.254mm}
\begin{picture}(486,117)(20,-136)
        \allinethickness{0.254mm}\path(40,-40)(40,-120) % Plain Solid Line
        \allinethickness{0.254mm}\path(40,-120)(120,-120) % Plain Solid Line
        \allinethickness{0.254mm}\path(40,-120)(80,-80) % Plain Solid Line
        \allinethickness{0.254mm}\special{sh 0.3}\put(40,-120){\ellipse{4}{4}} % Shade Dot
        \allinethickness{0.254mm}\special{sh 0.3}\put(120,-120){\ellipse{4}{4}} % Shade Dot
        \allinethickness{0.254mm}\special{sh 0.3}\put(80,-80){\ellipse{4}{4}} % Shade Dot
        \allinethickness{0.254mm}\special{sh 0.3}\put(40,-40){\ellipse{4}{4}} % Shade Dot
        \put(35,-136){\shortstack{$1$}} % Plain Text
        \put(85,-76){\shortstack{$0$}} % Plain Text
        \put(115,-136){\shortstack{$i$}} % Plain Text
        \put(25,-31){\shortstack{$-i$}} % Plain Text
        \allinethickness{0.254mm}\path(175,-90)(230,-90)\special{sh 1}\path(230,-90)(224,-88)(224,-90)(224,-92)(230,-90) % Plain Solid Arrow
        \allinethickness{0.254mm}\path(280,-40)(280,-120) % Plain Solid Line
        \allinethickness{0.254mm}\path(280,-120)(360,-120) % Plain Solid Line
        \allinethickness{0.254mm}\path(280,-120)(320,-80) % Plain Solid Line
        \allinethickness{0.254mm}\special{sh 0.3}\put(280,-120){\ellipse{4}{4}} % Shade Dot
        \allinethickness{0.254mm}\special{sh 0.3}\put(320,-80){\ellipse{4}{4}} % Shade Dot
        \allinethickness{0.254mm}\special{sh 0.3}\put(280,-40){\ellipse{4}{4}} % Shade Dot
        \allinethickness{0.254mm}\special{sh 0.3}\put(360,-120){\ellipse{4}{4}} % Shade Dot
        \put(280,-136){\shortstack{$0$}} % Plain Text
        \put(325,-76){\shortstack{$-1$}} % Plain Text
        \put(345,-136){\shortstack{$-1 + i$}} % Plain Text
        \put(255,-31){\shortstack{$-1 - i$}} % Plain Text
         % Set color to black again (default font color)
\end{picture}
\end{center}
\medskip
The choice $n=N=3$ is determined and from (\ref{rho}) and (\ref{si}), we obtain $\rho = 2$ and $\si = 1$.  From (\ref{identity-1}) we find that $u_1{}^2+u_2{}^2=1$ so that $u_3=0$.  Then the $3\times 1$ --matrix $B$ vanishes and system (\ref{system-2}) has general solution $\vec{X}_3 = (2\la , -\la , - \la )$.  The constraint (\ref{system-2-constraint}) requires that $\inn{\vec{X}_3, \vec{X}_3}=\rho = 2$, so that $\la = \pm 1/\sqrt{3}$.  The star matrix $W$ (whose columns give the positions of the external star vertices) is given by:
$$
W = \left( \begin{array}{rrr} - 1 & -1 & - 1 \\
0 & 1 & -1 \\ \pm \frac{2}{\sqrt{3}} & \mp \frac{1}{\sqrt{3}} & \mp \frac{1}{\sqrt{3}} \end{array}
\right)
$$

We can proceed similarly with the central vertex of degree $4$.  Now we can choose $N = 3$ or $N = 4$.  In either case, $u_1=u_2=0$, $\rho = 2$ and $\si = 1$, so that, for $N=3$ we must have $u_3=\pm 1$.  Then
$$
A = \left( \begin{array}{rrrr} 1 & 0 & -1 & 0 \\ 0 & 1 & 0 & -1 \\ 1 & 1 & 1 & 1 \end{array} \right) , \qquad  B = \left( \begin{array}{c} 0 \\ 0 \\ 2\sqrt{3} \end{array} \right)\,
$$
and the unique (minimizing) solution is given by 
$$
Z = A^+B = \left( \begin{array}{c} \frac{\sqrt{3}}{2} \\ \frac{\sqrt{3}}{2} \\ \frac{\sqrt{3}}{2} \\ \frac{\sqrt{3}}{2} \end{array} \right)\,.
$$
The star matrix $W$ is given by
$$
W = \left( \begin{array}{rrrr} 1 & 0 & -1 & 0 \\ 0 & 1 & 0 & -1 \\ \frac{\sqrt{3}}{2} & \frac{\sqrt{3}}{2} & \frac{\sqrt{3}}{2} & \frac{\sqrt{3}}{2} \end{array} \right)\,,
$$
where the last line is only defined up to sign.
 
At the same vertex, we can also take $N=4$.  Then 
$$
B = \left(\begin{array}{cc} 0 & 0 \\ 0 & 0 \\ 2\sqrt{3}\, u_3 & 2\sqrt{3}\,u_4 \end{array} \right) \,.
 $$
 We proceed as in the proof of Theorem \ref{thm:lift}.  Take $u_3=1$ and $u_4=0$.  The solution is then given by the $(2\times 4)$--matrix $X = Z+Y$, where
 $$
 Z = A^+B = \left( \begin{array}{cc} \frac{\sqrt{3}}{2} & 0 \\
 \frac{\sqrt{3}}{2}  & 0 \\
\frac{\sqrt{3}}{2} & 0 \\
 \frac{\sqrt{3}}{2} & 0
 \end{array} \right) \,,
 $$
 and $Y = (\vec{0}, \vec{Y}_4)$, with $\vec{Y}_4\in \RR^4$ a vector of length $\sqrt{\rho} = \sqrt{2}$ in the plane $t_1+t_2+t_3+t_4=0$ orthogonal to both $\vec{\al} = (1,0,-1,0)$ and $\vec{\be} = (0,1,0,-1)$.  Up to sign, this is given by $\vec{Y}_4 = (1/\sqrt{2}, -1/\sqrt{2}, 1/\sqrt{2}, - 1/\sqrt{2})$.  We then obtain four possible lifts corresponding to the different choices of square root, with star matrix:
$$
W = \left( \begin{array}{rrrr}
1 & 0 & -1 & 0 \\
0 & 1 & 0 & -1 \\
\frac{\sqrt{3}}{2} & \frac{\sqrt{3}}{2} & \frac{\sqrt{3}}{2} & \frac{\sqrt{3}}{2} \\
\frac{1}{\sqrt{2}} & -\frac{1}{\sqrt{2}} & \frac{1}{\sqrt{2}} & -\frac{1}{\sqrt{2}}
\end{array} \right)\,.
$$
}
\end{example} 

\medskip

What additional information is required in order to make a unique choice for the lifted star?  This question is important when we come to consider \emph{edge curvature} in Section \ref{sec:curvature}.  For example, given a solution to (\ref{one}) corresponding to one of the regular polyhedra, we would like the lifted star to be exactly the one that arises from its canonical embedding in Euclidean space.  One way to do this is to define a notion of \emph{orientation} as follows.

In \cite{Ba-We}, a notion of orientation was considered on a regular graph of degree $n$, say, whereby the graph is endowed with an edge colouring of the $n$ colours $\{ 1, 2, \ldots , n\}$.  Thus each edge is coloured in such a way that no two edges of the same colour are incident at a vertex.  This enables one to uniquely label the edges at each vertex to give an ordering.  One could then attempt to apply a right-hand rule say, in order to make a choice of lift.  However, although this can be done with the solution corresponding to the framework of the tetrahedron in a way consistent with its embedding, it turns out to be impossible for the cube and the dodecahedron.  In the latter examples, any edge colouring with three colours leads to at least one of the two choices of lifted stars to be directed in the opposite way required.  We therefore proceed to define an orientation in terms of an edge colouring together with an $n$-form at each vertex of degree $n$.

\begin{definition} \label{def:orientation}  Let $\Ga = (V,E)$ be a graph with largest vertex degree equal to $M$.  Then an \emph{edge colouring} of $\Ga$ is an association of one of the colours $\{ 1, 2, \ldots , M, M+1\}$ to each edge so that no two same colours are incident at any vertex.  By a theorem of Vizing, any graph can be coloured with either $M$ or $M+1$ colours (see \cite{Di}).  We make the convention to choose the minimum $M$ colours when possible.  Given an edge colouring of $\Ga$, at each vertex, a \emph{volume form} is an alternating mapping $\theta$ of the edges which takes on the value $+1$ or $-1$.  Thus if $x$ is a vertex with $n$ incident edges $e_1, \ldots , e_n$ arranged so that \emph{colour}($e_j$)\,$<$\,\emph{colour}($e_k$) for $j<k$, then $\theta_x(e_1, \ldots , e_n) = \pm 1$ with $\theta_x(e_{\si (1)}, \ldots e_{\si (n)}) = {\rm sign}\, (\si )\theta (e_1, \ldots , e_n)$, for any permutation $\si$ of $\{ 1, \ldots , n\}$.  An \emph{orientation} of $\Ga$ is given by an edge colouring together with a volume form at each vertex.
\end{definition}

In order to apply this notion of orientation to make a choice of lifted star, in the notation of Theorem \ref{thm:lift} and its proof, we suppose that $N=3$ and the matrix $A$ of equation (\ref{system-2}) is of maximal rank $3$.  There is now either a unique lifted configured star in the case when $u_3=0$, or two choices if $u_3\neq0$ depending on the sign chosen for $u_3 = \pm \sqrt{u_1{}^2+u_2{}^2}$.  The star matrix is now given by
$$
W = \left( \begin{array}{cccc} \al_1 & \al_2 & \cdots & \al_n \\
\be_1 & \be_2 & \cdots & \be_n \\
x_{13} & x_{23} & \cdots & x_{n3} 
\end{array} \right)\,,
$$
where the last row is only defined up to sign.  Suppose that the vertex in question has an orientation according to Definition \ref{def:orientation}.  Without loss of generality, we can suppose that the edges are coloured with the colours $\{ 1, 2, \ldots , n\}$, in such a way that the external vertex $\vec{x}_{\ell}$ is joined to the internal vertex by the edge with colour $\ell$.  Suppose that the volume form satisfies $\theta (e_1, \ldots , e_n) = \ve$, where $\ve \in \{ +1, -1\}$.  Then provided the determinant of the $3\times 3$--minor given by the first three columns of $W$ is non-zero, we choose the sign of the third row so that 
$$
\left| \begin{array}{ccc} \al_1 & \al_2 & \al_3 \\ \be_1 & \be_2 & \be_3 \\ x_{13} & x_{23} & x_{33} 
\end{array} \right| = \ve \delta,
$$
where $\delta >0$.  If on the other hand this determinant vanishes, then we proceed in a lexicographic ordering, to choose next the minor formed from columns $1$, $2$ and $4$ and so on, until we encounter a non-zero determinant and apply the above rule. 

\begin{example} {\rm Consider the framework of a regular octagon with vertices placed at the points $(\pm 1,0,0)$, $(0,\pm 1, 0)$, $(0,0,\pm 1)$.  Then this can be edge-coloured as indicated.  Then there is a volume form which gives the lifts that correspond to the standard embedding in $\RR^3$.  However, in order to do this at the lateral vertices, we have to impose an additional condition that the star be regular.  This is because at these vertices $u_3=0$ and we do not satisfy the conditions of the discussion above.  The corresponding solution to (\ref{one}) has $\ga = 1/2$ and $\rho = \si = 2$.
\medskip
\begin{center}
\setlength{\unitlength}{0.200mm}
\begin{picture}(304,282)(35,-341)
        \allinethickness{0.254mm}\path(185,-75)(125,-245) % Plain Solid Line
        \allinethickness{0.254mm}\path(125,-245)(185,-335) % Plain Solid Line
        \allinethickness{0.254mm}\path(125,-245)(305,-195) % Plain Solid Line
        \allinethickness{0.254mm}\path(305,-195)(185,-75) % Plain Solid Line
        \allinethickness{0.254mm}\path(305,-195)(185,-335) % Plain Solid Line
        \allinethickness{0.254mm}\path(125,-245)(50,-195) % Plain Solid Line
        \allinethickness{0.254mm}\path(50,-195)(185,-75) % Plain Solid Line
        \allinethickness{0.254mm}\path(50,-195)(185,-335) % Plain Solid Line
        \allinethickness{0.254mm}\path(305,-195)(220,-150) % Plain Solid Line
        \allinethickness{0.254mm}\path(50,-195)(145,-170) % Plain Solid Line
        \allinethickness{0.254mm}\path(220,-150)(165,-165) % Plain Solid Line
        \allinethickness{0.254mm}\path(185,-75)(220,-150) % Plain Solid Line
        \allinethickness{0.254mm}\path(220,-150)(210,-210) % Plain Solid Line
        \allinethickness{0.254mm}\path(185,-335)(205,-230) % Plain Solid Line
        \put(180,-71){\shortstack{$x_0$}} % Plain Text
        \put(35,-186){\shortstack{$x_1$}} % Plain Text
        \put(140,-231){\shortstack{$x_2$}} % Plain Text
        \put(310,-196){\shortstack{$x_3$}} % Plain Text
        \put(225,-151){\shortstack{$x_4$}} % Plain Text
        \put(195,-341){\shortstack{$x_5$}} % Plain Text
        \put(155,-136){\shortstack{$1$}} % Plain Text
        \put(115,-176){\shortstack{$1$}} % Plain Text
        \put(110,-136){\shortstack{$4$}} % Plain Text
        \put(195,-126){\shortstack{$3$}} % Plain Text
        \put(230,-116){\shortstack{$2$}} % Plain Text
        \put(250,-186){\shortstack{$4$}} % Plain Text
        \put(220,-231){\shortstack{$3$}} % Plain Text
        \put(205,-281){\shortstack{$2$}} % Plain Text
        \put(250,-271){\shortstack{$1$}} % Plain Text
        \put(100,-271){\shortstack{$3$}} % Plain Text
        \put(90,-221){\shortstack{$2$}} % Plain Text
        \put(155,-286){\shortstack{$4$}} % Plain Text
         % Set color to black again (default font color)
\end{picture}
\end{center}
\medskip

Consider the vertex $x_0$ and define the volume form $\theta_{x_0}$ by $\theta_{x_0} (1234) = -1$ (for convenience, we write $\theta (1234)$ rather than $\theta (e_1, e_2, e_3, e_4)$).  Then with this edge-colouring, at this vertex $z_1=1,\, z_2=\ii ,\, z_3 = -1$ and $z_4 = -\ii$.  Thus $u_1=u_2=0$ and $u_3 = \pm 1$.  The solution to (\ref{system-2}) is given by $\vec{X}_3 = \pm (1,1,1,1)$.  In order to be consistent with the orientation, we must take the negative sign, to give the lifted star:
$$
W = \left( \begin{array}{rrrr} 1 & 0 & -1 & 0 \\ 0 & 1 & 0 & -1 \\ -1 & -1 & -1 & -1 \end{array} \right)
$$
whose sign of the determinant of the $3\times 3$--minor given by the first three columns is negative, which coincides with the sign of $\theta_{x_0}(1234)$. 

At the vertex $x_1$, we choose $\theta_{x_1}(1234)=+1$.  Then the edge-colouring dictates that $z_1=-1+\ii ,\, z_2=1+\ii ,\, z_3 = \ii , \, z_4 = \ii$, so that $u_1=0$, $u_2=1$ and $u_3=0$.  The solution to (\ref{system-2}) gives a $1$-parameter family of lifted stars:
$$ 
W = \left( \begin{array}{cccc} 1 & 0 & -1 & 0 \\ 0 & 1 & 0 & -1 \\ -\frac{\cos t}{\sqrt{2}} & -\frac{\cos t}{\sqrt{2}} & \frac{\cos t}{\sqrt{2}} + \sin t & \frac{\cos t}{\sqrt{2}} - \sin t \end{array} \right)\,.
$$
If we now impose the condition that the lift must be a regular star, then there are just two solutions given by $t = \pi /2$ or $t = 3\pi /2$.  The choice $t=3\pi /2$ is required in order that the determinant of the $3\times 3$--minor consisting of the first three columns be positive, to coincide with the sign of $\theta_{x_1}(1234)$.  This gives the lift that coincides with the canonical embedding of the octahedron.  We proceed similarly with the other vertices, defining the appropriate volume form, with the proviso that the the stars at the lateral vertices be regular.
}
\end{example}  

\medskip

\noindent \emph{Distance}:  The above analysis enables us to define edge-length and so distance on a graph $\Ga = (V,E)$ admitting a solution to (\ref{one}), provided that at each vertex we have $\ga <1$.  For this, note that if we reverse the order of multiplication of $W$ and $W^t$, we obtain
$$
W^tW = \left( \begin{array}{cccc}
||\vec{x}_1||^2 & \inn{\vec{x}_1,\vec{x}_2} & \cdots  & \inn{\vec{x}_1,\vec{x}_n} \\
\inn{\vec{x}_1,\vec{x}_2} & ||\vec{x}_2||^2 & \cdots  & \inn{\vec{x}_2, \vec{x}_n} \\
 \vdots & \vdots & \ddots   & \vdots \\
\inn{\vec{x}_1,\vec{x}_n} & \inn{\vec{x}_2,\vec{x}_n} & \cdots  & ||\vec{x}_n||^2 
\end{array}
\right)\,,
$$  
 where $\inn{\vec{x}_j,\vec{x}_k}$ denotes the standard Euclidean inner product of $\vec{x}_i$ and $\vec{x}_j$.  But then
 \begin{eqnarray*}
 \sum_{\ell}||\vec{x}_{\ell}||^2 & = & {\rm trace}\, W^tW = {\rm trace}\, WW^t \\
   & = & N\rho + \si ||\vec{u}||^2 = N\rho + \si\,.
   \end{eqnarray*}
 Now this latter quantity can be expressed in terms of $\ga$ and $z_{\ell}$ from (\ref{rho}) and the relation $\ga = \si / (\si + \rho )$ to give the mean of the values $||\vec{x}_{\ell}||$:
\begin{equation} \label{mean-distance}  
 \frac{1}{n}\sum_{\ell}||\vec{x}_{\ell}||^2 = \frac {\big(N+(1-N)\ga\big)}{n(1-\ga )}\rho\,.
\end{equation}
This equation expresses the mean length of the edges of a virtual configured star in $\RR^N$ whose external vertices $\vec{x}_{\ell}$ project to $z_{\ell}$.  This motivates our definition of edge length in a graph.  

Let $\Ga = (V,E)$ be a graph coupled to a solution $\phi : V \ra \CC$ to equation (\ref{one}).  For each $x\in V$, set 
$$
\rho (x) = \frac{1}{2}\left\{\sum_{y\sim x} |\phi (y)-\phi(x)|^2  - \frac{\ga (x)}{n(x)} \Big| \sum_{y\sim x}(\phi (y)-\phi (x))\Big|^2\right\}\,,
$$
where $n(x)$ is the degree of $\Ga$ at $x$.

\begin{definition} \label{def:edge-length}  If $x\in V$ is a vertex of degree $n(x)$ such that $\ga (x)<1$, then we define the \emph{median edge length at $x$ relative to $\phi$} to be the quantity $r(x)$ whose square is given by
$$
r(x)^2 = \frac {\big[N+(1-N)\ga (x)\big]}{n(x)[1-\ga (x)]}\,\rho (x)\,.
$$
If $\ov{xy}\in E$ is an edge which joins $x$ to $y$ such that both $\ga (x)<1$ and $\ga (y) < 1$, then we define the \emph{length of $\ov{xy}$ relative to $\phi$} to be the mean $\ell (\ov{xy})$ of the median edge lengths at $x$ and $y$:
$$
\ell (\ov{xy}) = \frac{r(x)+r(y)}{2}\,.
$$
\end{definition}
As emphasized in the above definition, the lengths so defined are \emph{relative} to the solution $\phi$ of (\ref{one}), which is only defined up to $\phi \mapsto \la \phi + \mu$ for $\la , \mu \in \CC$.  This means that the only meaningful quantities are \emph{relative} lengths, say $\ell (e) / \ell (f)$, for two edges $e,f\in E$.  This is consistent with our relational interpretation of physical quantities as discussed in the Introduction.  In particular, if both $n$ and $\ga$ are constant on the graph, we may take the quantity $2\rho$ defined by (\ref{rho}) as a measure of median edge length at each vertex:
$$
r(x)^2 = 2\rho =  \sum_{y\sim x} |\phi (y)-\phi(x)|^2  - \frac{\ga}{n} \Big| \sum_{y\sim x}(\phi (y)-\phi (x))\Big|^2\,.
$$

We can define an absolute length by normalizing as follows.  Let $\Ga = (V,E)$ be a graph coupled to a solution $\phi : V \ra \CC$ to (\ref{one}).  Then as in Appendix \ref{sec:lin}, define the \emph{square $L^2$-norm of the derivative of $\phi$} to be the quantity:
$$
||\dd \phi ||^2 = \sum_{\ov{xy} \in E} |\dd\phi (\ov{xy})|^2 = \frac{1}{2}\sum_{x,y \in V, x\sim y} |\phi (y)-\phi (x)|^2\,,
$$   
where $\dd\phi (\ov{xy}) = \phi (y)-\phi (x)$ is the discrete derivative with respect to some orientation of the edge $\ov{xy}$ (in this case $x$ is the initial vertex and $y$ the end vertex).  
\begin{definition} \label{def:abs-edge-length}
Let $\Ga = (V,E)$ be a graph coupled to a solution $\phi : V \ra \CC$ to equation {\rm (\ref{one})}.  If $x\in V$ is a vertex such that $\ga (x)<1$, then we define the \emph{absolute median edge length at $x$ relative to $\phi$} to be the quantity $r_{\rm abs}(x)$ whose square is given by
$$
r_{\rm abs}(x)^2 = \frac{r(x)^2}{||\dd\phi ||^2}\,,
$$
where $r(x)$ is the median edge length at $x$ relative to $\phi$.
If $e\in E$ is an edge joining $x$ to $y$ such that both $\ga (x)<1$ and $\ga (y) < 1$, then we define the \emph{absolute length of $e$ relative to $\phi$} to be the mean $\ell_{\rm abs} (e)$ of the absolute median edge lengths at $x$ and $y$:
$$
\ell_{\rm abs}(e) = \frac{r_{\rm abs}(x)+r_{\rm abs}(y)}{2}\,.
$$
\end{definition}
Then both the quantities $r_{\rm abs}(x)$ and $\ell_{\rm abs}(x)$ are independent of the freedom $\phi \mapsto \la \phi + \mu$ ($\la , \mu \in \CC$).  

The median edge length of Definition \ref{def:edge-length} is defined so as to give the length of the edges of a corresponding regular star in $\RR^N$, when such exists.  In particular, if $\Ga = (V,E)$ is the $1$-skeleton of a regular polytope and $\phi :V\ra \CC$ associates to each vertex its value after an orthogonal projection, then the edge-length at each vertex coincides with the lengths of the edges of the regular polytope. More generally, we can interpret the edge length at each vertex as the length of the edges of the ``best fit" polytope at that vertex.  The median edge length then gives the average length at two adjacent vertices.

\begin{example} {\rm 
If we return to Example \ref{ex:lift} and consider the solution corresponding to $\ga = 1/3$, then the (unique) common dimension to define edge length is $N=3$.  At the central vertex the edge length is $\sqrt{7}/2$ and at any of the other vertices, it is $\sqrt{7/3}$.  Thus the median edge length of the edge joining the central vertex to one of the other vertices is $\frac{(\sqrt{7}/2)+\sqrt{7/3}}{2}$, whereas the median edge length of one of the outside edges is $\sqrt{7/3}$.  So, for example, the shortest path joing $x$ to $z$ is given by passing through the central vertex.  Note that, as already remarked, the edge lengths are only defined up to multiple and so only relative edge lengths have meaning.
}
\end{example}

A question we now consider, is whether the notion of distance, either relative or absolute, that we have defined above, endows a graph with the structure of a \emph{path metric space} in the sense of M. Gromov \cite{Gr}.  We first of all note a triangle inequality around complete subgraphs on three vertices. 

Given a function $\phi : V \ra \CC$ and a vertex $x\in V$, we say that \emph{$\phi$ is constant on the star centred on $x$} if the restriction of $\phi$ to $x$ and its neighbours $y\sim x$, is constant. 

\begin{proposition} \label{prop:triangle-ineq}  {\rm (Local triangle inequality)} Let $\Ga = (V,E)$ be a graph coupled to a solution $\phi : V \ra \CC$ to equation {\rm (\ref{one})}.  Suppose $x,y,z\in V$ are three vertices of a complete subgraph: $x\sim y,\, y\sim z,\, z\sim x$, such that the inequality $\ga <1$ is satisfied at each vertex.   Then the triangle inequality is satisfied:
$$
\ell (\ov{xy}) + \ell (\ov{xz}) \geq \ell (\ov{yz})\,.
$$
If further $\phi$ is non-constant on the star centred on $x$, then the inequality is strict.
\end{proposition}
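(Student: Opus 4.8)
The plan is to reduce the inequality to the positivity of a single median edge length, and then that to the positivity of $\rho$. Unwinding Definition~\ref{def:edge-length}, write $\ell(\ov{xy})=\tfrac12\big(r(x)+r(y)\big)$, $\ell(\ov{xz})=\tfrac12\big(r(x)+r(z)\big)$ and $\ell(\ov{yz})=\tfrac12\big(r(y)+r(z)\big)$; then a one-line cancellation gives
$$
\ell(\ov{xy})+\ell(\ov{xz})-\ell(\ov{yz})=\tfrac12\Big[\big(r(x)+r(y)\big)+\big(r(x)+r(z)\big)-\big(r(y)+r(z)\big)\Big]=r(x).
$$
So the asserted triangle inequality is \emph{exactly} the statement $r(x)\ge 0$, and its strict version is exactly $r(x)>0$. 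Here the hypothesis that $x,y,z$ span a complete subgraph is used only so that $\ov{xy},\ov{xz},\ov{yz}$ are all edges, and the hypothesis $\ga<1$ at each of $x,y,z$ is what makes the three median edge lengths defined in the first place.

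It therefore remains to show $r(x)^2\ge 0$. By Definition~\ref{def:edge-length},
$$
r(x)^2=\frac{N+(1-N)\ga(x)}{n(x)\,[1-\ga(x)]}\;\rho(x),
$$
and I would check the factors in turn. Since $\ga(x)<1$, the denominator $n(x)\,[1-\ga(x)]$ is positive; and $N+(1-N)\ga(x)=N-(N-1)\ga(x)>N-(N-1)=1>0$, again using $\ga(x)<1$ together with $N\ge 2$. Hence the coefficient of $\rho(x)$ is strictly positive, and the whole matter comes down to the sign of $\rho(x)$. Putting $z_\ell:=\phi(y_\ell)-\phi(x)$ for the neighbours $y_\ell\sim x$ and applying the Cauchy--Schwarz inequality (\ref{inductive-inequality}) gives $n(x)\sum_\ell|z_\ell|^2\ge\big|\sum_\ell z_\ell\big|^2$; together with $\ga(x)\le 1$ this yields $\rho(x)\ge 0$, which already establishes the (non-strict) triangle inequality.

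For strictness, assume $\phi$ is non-constant on the star centred on $x$, so that the $z_\ell$ are not all zero. If $\sum_\ell z_\ell\neq 0$, then since $\ga(x)<1$ one has the strict chain $\frac{\ga(x)}{n(x)}\big|\sum_\ell z_\ell\big|^2<\frac1{n(x)}\big|\sum_\ell z_\ell\big|^2\le\sum_\ell|z_\ell|^2$, whence $\rho(x)>0$; if $\sum_\ell z_\ell=0$, then $\rho(x)=\tfrac12\sum_\ell|z_\ell|^2>0$ outright. Either way $r(x)>0$, which is the strict inequality (and the same dichotomy records that $\rho(x)=0$ precisely when $\phi$ is constant on the star centred on $x$, so the hypothesis of the last sentence is sharp). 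The argument is essentially bookkeeping, so there is no substantial obstacle; the only steps requiring attention are the positivity of $N+(1-N)\ga(x)$ — which is exactly where $\ga(x)<1$, and not merely $\ga(x)\neq1$, is needed — and the strict form of Cauchy--Schwarz, which is the same mechanism already used in Lemma~\ref{lem:ngaleq1}.
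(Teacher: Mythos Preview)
Your proof is correct and follows the same approach as the paper's: both reduce the inequality to $r(x)\ge 0$ via the identical one-line cancellation. You go further in spelling out why $r(x)\ge 0$ (checking the sign of each factor and invoking Cauchy--Schwarz for $\rho(x)\ge 0$), whereas the paper simply asserts this as a consequence of $\ga(x)<1$; your added detail is sound and helpful.
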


\begin{proof}  This is an immediate consequence of the definition.  Specifically,
$$
\ell (\ov{xy}) + \ell (\ov{xz}) = \frac{1}{2}(r(x) + r(y)) + \frac{1}{2}(r(x)+r(z)) = \ell (\ov{yz}) + r(x) \geq \ell (\ov{yz})\,,
$$
since because of the inequality $\ga (x) < 1$, we have $r(x) \geq 0$.  If further, $\phi$ is non-constant on the star centred on $x$, then $r(x)>0$ and the inequality is strict.  
\end{proof} 

In spite of this local triangle inequality, we may encounter a difficulty in trying to endow a graph coupled to a solution $\phi$ to (\ref{one}) with a metric space structure.  This may arise when, for a given vertex $x$, the function $\phi$ is constant on the star centered on $x$, as well as on the star centred on one of its neighbours $y$.  Then $\ell (\ov{xy}) = 0$.  We can either agree to allow distinct points to have zero distance between them, and so consider rather a \emph{pseudo-metric space structure}, or we can avoid this situation by introducting a notion of \emph{collapsing}.  This is a concept we will return to in Section \ref{sec:particles}.

\begin{definition} \label{def:collapse}  Let $(\Ga , \phi )$ be a graph coupled to a solution to equation {\rm (\ref{one})}.  Then we \emph{collapse} $\Ga$ to a new graph $\wt{\Ga}$ by removing all edges that connect vertices at which $\phi$ takes on identical values; then remove all isolated vertices.
\end{definition}

It is clear that after collapse, if we let $\wt{\phi}$ denote the restriction of $\phi$ to $\wt{\Ga}$, then $\wt{\phi}$ also satisfies (\ref{one}) with $\wt{\ga} = \wt{n}\ga /n$ where $\wt{n}$ is the new degree at each vertex.  Indeed, if we check at a vertex $x$, then if $y$ is a neighbour at which $\phi (y) = \phi (x)$, then since only the difference $\phi (y) - \phi (x)$ occurs in (\ref{one}), removing the edge $\ov{xy}$ only affects the degree.  However, it is to be noted that collapsing may disconnect a graph.

Let $(\Ga , \phi )$ be a graph coupled to a solution to equation (\ref{one}).  Then given a path $\ov{c}:= \ov{x_0x_1x_2\cdots x_p}$ joining $x$ to $y$ (so we have $x=x_0$, $y=x_p$ and $x_j\sim x_{j+1}$ for all $j = 0, \ldots , p-1$), then we define the length $\ell (\ov{c})$ to be the sum:
$$
\sum_{j=0}^{p-1} \ell (\ov{x_jx_{j+1}})\,.
$$
We can now define the distance between two vertices to be the infimum of the lengths of all paths joining the two vertices.  Then provided $\Ga$ is collapsed with respect to $\phi$, it is clear that this notion of distance endows $\Ga$ with the structure of a path metric space.

\section{Curvature} \label{sec:curvature}
Our introduction of curvature on a graph is based on Theorem \ref{thm:lift} and Corollary \ref{cor:configured-star}.  Thus, we consider a graph $\Ga = (V,E)$ coupled to a solution $\phi$ to equation (\ref{one}): $\ga_{\phi}(\Delta \phi )^2 = \dd \phi^2$.  At each vertex $y\in V$, we measure the (in general solid) angular deficit $\delta (y)$ as determined by a regular star placed in $\RR^N$ whose central vertex $x_0$ has the same degree as $y$ in $\Ga$, and which also solves (\ref{one}) at $x_0$ with $\ga_{{\rm star}}(x_0) = \ga_{\phi}(y)$.

Depending on the degree $n$ and the value of $\ga$, $\delta (y)$ will be well-defined.  However, in some situations, there may be different possibilities for the dimension $N$, leading to different possible values for the curvature.  The restriction $\ga\leq 1$, will be a necessary condition.

The aim is to suppose $\ga$ is part of the geometric spectrum (see Section \ref{sec:spectrum}), so that both dimension and curvature arise from purely combinatorial properties of the graph (independent of $\phi$).  For some graphs, these will be uniquely defined.  This is the case for the bipartite graph $K_{33}$, for example, whose geometric spectrum contains the unique value $\ga = 1$ (see Section \ref{sec:spectrum}) and whose vertex degree dictates that it ``lives in " dimension three.

In what follows, we discuss \emph{convex} polytopes, which are by definition, the closed intersection of half-spaces (whether this be in Euclidean space, or in spherical space).  In the case when a polytope is regular (convex or not), its vertices all lie on a sphere called the \emph{circum-sphere} \cite{Co}.  It is useful to use \emph{absolute angle measure} when measuring solid angles (see \cite{Sh, Gr-Sh}).  We will write $H^M(\La )$ for the $M$-dimensional Hausdorff measure of a set $\La$ in these units.  Then, in any dimension, the angle is measured as a fraction of the total angle subtended by a sphere centred at the point in question.  Thus in two dimensions, a right-angle has value $1/4$, whereas in three dimensions, the angle subtended by the vertex figure of a cube has value $1/8$.  Equivalently, $H^2(\La )=1/8$, where $\La \subset S^2$ represents one eighth portion: $x,y,z>0$, of the sphere $x^2+y^2+z^2=1$.  

\begin{definition} \label{def:N-curvature}  {\rm ($N$-dimensional vertex-curvature)} Let $(\Ga , \phi )$ be a pair consisting of a graph $\Ga$ coupled to a solution $\phi$ to {\rm (\ref{one})}.  Let $y$ be a vertex of $\Ga$ and let $n$ be the degree of \, $\Ga$ at $y$.  Then the $N$-dimensional vertex-curvature at $y$ is defined provided $\ga (y)\leq 1$ and there is a regular $N$-polytope with vertex figure $P$ (a regular $(N-1)$-polytope) having $n$ vertices.  Let $P$ be centred on $\vec{0}\in \RR^{N-1}$ with vertices ${\vec v}_1, \ldots , \vec{v}_n$ lying on its circum-sphere of radius $r>0$.  For $\ga <1$, let $U$ be the corresponding configuration matrix with associated constant $\rho >0$ (see {\rm (\ref{star-standard})} and {\rm (\ref{star-conditions})}).  Let 
$$
\vec{x}_{\ell} = \frac{1}{\sqrt{\rho + nr^2(1-\ga )}}\left( \begin{array}{c} \sqrt{n(1-\ga )}\, \vec{v}_{\ell} \\ \sqrt{\rho}\end{array} \right) \in S^{N-1} \quad (\ell = 1, \ldots , n)\,,
$$
be the corresponding vertices of a regular star in $\RR^N$ centred on $\vec{0}$.  Let $\La$ be the convex hull in $S^{N-1}$ of the set $\{ \vec{x}_1, \ldots , \vec{x}_n\} \subset S^{N-1}$.  We define the \emph{$N$-dimensional vertex-curvature of $(\Ga , \phi )$ at $y$} to be the deficit:
$$
\delta_{\ga} (y) = 1 - H^{N-2}(\pa \La )\,,
$$
in absolute angle measure.  In the case when $\ga =1$, then we define the $N$-dimensional vertex-curvature to be the limit $\lim_{\ga \ra 1^-} \delta_{\ga}(y)$, when this exists.
\end{definition}

Note that in the above definition, if we rescale the vertices $\vec{v}_{\ell}$ by $\vec{v}_{\ell} \mapsto \vec{v}_{\ell}^{\ \sim} = \la v^{\ell}$, say, then $r\mapsto r^{\sim} = \la r$, $\rho \mapsto \rho^{\sim} = \la^2\rho$ and both $\vec{x}_j$ and the curvature $\delta_{\ga}(y)$ are well-defined and independent of this scaling.  We write \emph{vertex-curvature} to distinguish it from \emph{edge-curvature} which we define later, but if the context is clear, we shall just write $N$-curvature.

We first of all justify this definition and put it into the context of classical work on topological invariants of polytopes.  In particular, we refine the definition for $3$- and $4$-curvature.  We begin by reviewing a generalization of a theorem of Descartes, by Shephard \cite{Sh}, Gr\"unbaum and Shephard \cite{Gr-Sh} and Ehrensborg \cite{Eh}.  The context is that of elementary polytopes.  

A family $\{ F_1, \ldots , F_r\}$ of $(N-1)$-dimensional convex polytopes in $\RR^N$ form an \emph{elementary polytope} $P$ of dimension $N$ if: (i)  for all $j,k$, $F_j\cap F_k$ is either empty or a face of each of $F_j$ and $F_k$; (ii)  $\cup_{j}F_j$ is an $(N-1)$-dimensional manifold.  

Given an elementary polytope $P\subset \RR^N$, denote by $\Pp$ the \emph{face decomposition of} $P$; thus $\Pp$ is the collection of all (open) faces of all dimension, consisting of the vertices, edges, ... , $(N-1)$-faces, $N$-faces.  For $\vec{x}\in \RR^N,\, \vec{w}\in S^{N-1}$, following Ehrensborg \cite{Eh}, we define the quantity $R(\vec{x},\vec{w}):= \lim_{s\ra 0^+}{\bf 1}_P(\vec{x}+s\cdot \vec{w})$, where ${\bf 1}_P$ is the characteristic function of $P$.  Note that this takes on the value $0$ or $1$.  Then given a face $F\in \Pp$, we have $R(\vec{x},\vec{w})= R(\vec{y},\vec{w})$ for all $\vec{x},\vec{y}\in F$; write $R(F,\vec{w})$ for this and define $\La_F=\{ \vec{w}\in S^{N-1}: R(F,\vec{w}) = 1\}$. 

Let $S(\La_F) = H^{N-2}(\pa \La_F)$ and let $\si_{N} = H^N\, (S^{N})$, so that in absolute angle measure, $\si_{N} = 1$.  Let $F$ be a face of $P$ of dimension $\leq N-3$.  We define the \emph{deficiency at $F$} to be the quantity:
$$
\delta (F):= \si_{N-2} - S(\La_F)\,.
$$
Note that if $Q\subset S^{N-1}$ is spherically convex (that is, it is the intersection of hemispheres), then $S(Q) = H^{N-2} (\pa Q)$ is proportional to the Haar measure of all the great circles which intersect $Q$.  The following theorem generalizes a classical result of Descartes.

\begin{theorem} \label{thm:descartes} {\rm \cite{Sh, Gr-Sh, Eh}}  Let $P$ be an elementary polytope with face decomposition $\Pp$ such that $P$ has only one $N$-dimensional face $P^0$.  Then
$$
\sum_{F\in \Pp , \ \dim F\leq N-3 } \ve (F)\delta (F) = \si_{N-2}[(-1)^N-1]\ve (P^0)\,,
$$
where $\ve (F)$ denotes the Euler characteristic of $F$ given by $(-1)^k$ when $F$ is of dimension $k$.
\end{theorem}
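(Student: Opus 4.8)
The plan is to recover this identity from its sources \cite{Sh,Gr-Sh,Eh} by an integral-geometric (Crofton-type) reformulation, which replaces the global sum by an average over the great circles of $S^{N-1}$ of a purely combinatorial, dimension-stable count. As a preliminary one checks that $\delta(F)=0$ as soon as $\dim F\ge N-2$: for $\dim F=N-1$ the set $\La_F$ is a closed hemisphere, so $\pa\La_F$ is a great $(N-2)$-sphere and $S(\La_F)=\si_{N-2}$; for $\dim F=N-2$, since $\cup_jF_j$ is an $(N-1)$-manifold, $\La_F$ has the form (a planar sector)$\times\RR^{N-2}$ intersected with $S^{N-1}$, whose boundary consists of two closed hemispheres of a great $(N-2)$-sphere, again of total $H^{N-2}$-measure $\si_{N-2}$. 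This explains the range $\dim F\le N-3$ in the statement, and (see the third paragraph) for a generic great circle the excluded terms contribute nothing in any case.

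\emph{Crofton reformulation and exchange of sum and integral.} Let $\mathcal G$ be the space of great circles of $S^{N-1}$, equivalently of $2$-dimensional linear subspaces $L\subset\RR^N$, equipped with its orthogonally invariant probability measure $\mu$. The fact recalled in the text — that for spherically convex $Q$ the number $S(Q)=H^{N-2}(\pa Q)$ is proportional to $\mu\{C:C\cap Q\ne\emptyset\}$ — together with the remark that every great circle meets a closed hemisphere (so a hemisphere has full measure, which forces the proportionality constant to be $\si_{N-2}$), gives for each face $F$
$$
\delta(F)=\si_{N-2}-S(\La_F)=\si_{N-2}\,\mu\bigl(\{C\in\mathcal G:\ C\cap\La_F=\emptyset\}\bigr).
$$
Since $\Pp$ is finite, summing against the Euler signs $\ve(F)=(-1)^{\dim F}$ and interchanging the finite sum with the integral,
$$
\sum_{F\in\Pp,\ \dim F\le N-3}\ve(F)\,\delta(F)\ =\ \si_{N-2}\int_{\mathcal G}\Bigl(\sum_{F\in\Pp,\ \dim F\le N-3}\ve(F)\,\mathbf 1[C\cap\La_F=\emptyset]\Bigr)\,d\mu(C),
$$
so it suffices to show the inner signed count equals $[(-1)^N-1]\ve(P^0)$ for $\mu$-almost every $C$; the integral then collapses (using $\mu(\mathcal G)=1$) to the asserted value.

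\emph{The generic slice count.} Fix a generic $L\in\mathcal G$ and let $\pi=\pi_L:\RR^N\to\RR^N/L\cong\RR^{N-2}$ be the quotient projection. For $\vec x$ in the relative interior of a face $F$, the condition $C\cap\La_F=\emptyset$ is equivalent to $L\cap T_FP=\{0\}$, where $T_FP$ is the tangent cone of $P$ along $F$; equivalently, the affine $2$-plane $\vec x+L$ meets $P$ near $\vec x$ only at $\vec x$. For generic $L$ this can occur only when $\dim F\le N-3$ (a half-space, or a sector$\times\RR^{N-2}$, always catches a generic $2$-plane), and when it does occur $\pi$ maps $F$ isomorphically onto a boundary face of the $(N-2)$-dimensional polytopal region $\pi(P)$. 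Thus the inner sum is the Euler characteristic of the subcomplex of these "$L$-extreme" faces, and by the Euler-characteristic bookkeeping of Ehrensborg \cite{Eh} — comparing the face posets of $P$, of its single top cell $P^0$, and of the generic projection $\pi(P)$, and using that $P$ is a topological ball — this count equals $\chi(\pa P)=[(-1)^N-1]\ve(P^0)$, independent of the generic $L$. Substituting back completes the proof.

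Two comments on the endpoints of this scheme. For $N=3$ the sum runs over vertices, $\ve(F)=+1$, and the right-hand side is $\si_1[(-1)^3-1]\ve(P^0)=2\si_1$, which is exactly Descartes' theorem that the total angular deficit of a ball-like polyhedron equals $2$ in absolute angle measure (i.e.\ $4\pi$ steradians); in general the right-hand side is $\si_{N-2}\chi(S^{N-1})$, vanishing in even dimension. The genuine obstacle is the generic slice count, namely showing that the signed number of $L$-extreme faces is constant in the generic $L$ and equals $[(-1)^N-1]\ve(P^0)$ — everything else is integral-geometric formalism. For that step one may either transcribe Ehrensborg's argument directly, or proceed by induction on $N$: slice $P$ by a hyperplane transverse to the distinguished direction, apply the $(N-1)$-dimensional identity to the polytopal slice, and track the boundary contributions, which account precisely for the passage from $[(-1)^{N-1}-1]$ to $[(-1)^N-1]$ in the constant.
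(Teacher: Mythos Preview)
The paper does not prove this theorem: it is stated with attribution to \cite{Sh,Gr-Sh,Eh} and used as a black box (immediately after the statement the paper simply writes ``In the case when $N=3$\ldots we obtain the classical theorem of Descartes'' and moves on). So there is no ``paper's own proof'' to compare against; the theorem is quoted background.

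As to your sketch itself: the Crofton reformulation is indeed the spirit of the Shephard/Gr\"unbaum--Shephard argument, and your preliminary check that $\delta(F)=0$ for $\dim F\ge N-2$ is correct and explains the summation range. However, your write-up is not a complete proof. You explicitly flag the ``generic slice count'' --- that the signed number of $L$-extreme faces equals $[(-1)^N-1]\ve(P^0)$ for generic $L$ --- as ``the genuine obstacle,'' and then offer only two options (transcribe Ehrensborg, or an induction you do not carry out). That step is the entire content of the theorem once the integral-geometric reduction is made; deferring it to the cited paper means your argument is a reduction-plus-citation rather than a self-contained proof. If you want a proof that stands on its own, you must actually execute one of those two options: either reproduce Ehrensborg's Euler-characteristic bookkeeping for the $L$-extreme subcomplex, or write out the inductive slice argument in full, checking the transversality and the boundary-term accounting. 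There is also a small imprecision: the Crofton identity $S(Q)=\si_{N-2}\,\mu\{C:C\cap Q\ne\emptyset\}$ is stated in the paper only for spherically convex $Q$, whereas for a general elementary polytope $\La_F$ need not be convex; you should say why the identity still holds (e.g.\ because $\La_F$ is a finite union of spherically convex pieces and the Crofton formula is additive on $H^{N-2}$ of the boundary).
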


In the case when $N=3$ and $P$ is a convex polyhedron (now using radians for our measure), we obtain the classical theorem of Descartes:
$$
\sum_{\ell} \delta (\vec{v}_{\ell}) = 4\pi\,,
$$  
 where the sum is taken over the vertices of $P$.  Here, the deficiency $\delta (\vec{v}_{\ell})$ is the sum of the internal angles at $\vec{v}_{\ell}$ of the faces which contain $\vec{v}_{\ell}$.  We may view this theorem as a discrete version of the Gauss-Bonnet Theorem for surfaces in the smooth setting.

Let us now consider the different dimensional curvatures that arise from Definition \ref{def:N-curvature}.  By convention, at a vertex of degree $1$, we assign the curvature $\delta = 1$.  It is straightforward to see that at a vertex of degree $2$, the $2$-curvature just measures the exterior angle in absolute angle measure.  In view of identity (\ref{spec-value}), we can state this as follows.

\begin{proposition} \label{prop:2-curvature}  Let $(\Ga , \phi )$ be a pair consisting of a graph $\Ga$ coupled to a solution $\phi$ to {\rm (\ref{one})}.  Let $y$ be a vertex of degree $2$ where $\ga \leq 1$, if such exists.  Then the $2$-dimensional vertex-curvature of $(\Ga , \phi )$ at $y$ is given by the quantity:
$$
\delta_{\ga}(y) = \frac{1}{2\pi} \arccos\left( \frac{\ga}{\ga - 2}\right)\,.
$$
\end{proposition}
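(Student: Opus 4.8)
The plan is to unwind Definition \ref{def:N-curvature} in the case $N=n=2$, evaluate the resulting angular deficit, and simplify it with the cosine identity; identity (\ref{spec-value}) then reappears as a cross‑check. First note that the definition does produce a well‑defined quantity at a degree‑two vertex: a regular $2$‑polytope is a regular polygon or star polygon, its vertex figure is a segment (a regular $1$‑polytope) with exactly $n=2$ vertices, so the hypotheses of Definition \ref{def:N-curvature} are met.

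The main computation is the specialisation of the formula for $\vec{x}_\ell$. The vertex figure $P$ is a segment; placed centred at $\vec{0}\in\RR^1$ on its circum-sphere of radius $r$ it has configuration matrix $U=(r\ \ {-r})$ and configuration invariant $\rho=2r^2$. Substituting into the stated formula gives the two unit vectors $\vec{x}_{1,2}=(2-\ga)^{-1/2}\bigl(\pm\sqrt{1-\ga},\,1\bigr)\in S^1$ (well defined because $\ga\le1$, and, as it must be, independent of the scale $r$), whence $\inn{\vec{x}_1,\vec{x}_2}=\ga/(2-\ga)$. Consequently $\La$, the spherical convex hull of $\{\vec{x}_1,\vec{x}_2\}$ in $S^1$, is the short geodesic arc joining them through $(0,1)$, of angular length $\iota=\arccos\bigl(\ga/(2-\ga)\bigr)$; since $\ga\le1<2$ we have $\iota\in[0,\pi)$, so $\La$ is a genuine sub‑semicircular arc and $\iota$ is the interior angle of the corresponding polygon. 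Reading off the deficit, $\delta_\ga(y)=1-H^{0}(\pa\La)$ evaluates in absolute angle measure to the complementary turning angle as a fraction of $2\pi$, namely $\frac{\pi-\iota}{2\pi}$.

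It then remains to simplify $\pi-\iota$. Since $\cos(\pi-\iota)=-\cos\iota=-\ga/(2-\ga)=\ga/(\ga-2)$, and since $\ga\le1$ forces $\ga/(\ga-2)\in[-1,1]$ while $\pi-\iota\in(0,\pi]$, we get $\pi-\iota=\arccos\bigl(\ga/(\ga-2)\bigr)$ and hence $\delta_\ga(y)=\frac{1}{2\pi}\arccos\bigl(\ga/(\ga-2)\bigr)$ for $\ga<1$. This is consistent with (\ref{spec-value}): at the degree‑two vertex $y$, Lemma \ref{lem:ngaleq1} (with $n=2$, $\ga<1$) forbids the two neighbouring values from both being real, so after a normalisation the local configuration is that of the diagram preceding (\ref{spec-value}) with exterior angle $\theta$, and (\ref{one}) at $y$ gives $\ga=2\cos\theta/(\cos\theta-1)$, i.e. $\cos\theta=\ga/(\ga-2)$ — exactly $\theta=\pi-\iota$. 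Finally, for $\ga=1$ the curvature is defined as $\lim_{\ga\to1^-}\delta_\ga(y)=\frac{1}{2\pi}\arccos(-1)=\frac{1}{2}$, which agrees with the value of the right‑hand side at $\ga=1$, so the formula holds for all $\ga\le1$.

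The one genuinely delicate point is the evaluation $1-H^{N-2}(\pa\La)=\frac{\pi-\iota}{2\pi}$ in the degenerate case $N-2=0$: here $H^{0}(\pa\La)$ is not literal counting of the two boundary points, and the correct absolute‑measure normalisation has to be pinned down. The safe way is to demand consistency with Descartes' theorem in the plane — the contributions over the vertices of a regular (or star) polygon must sum to $1$ — which forces $\delta_\ga(y)=\frac{\pi-\iota}{2\pi}$; alternatively one may take the identification of the $2$‑curvature with the exterior angle in absolute measure (noted just before the statement) as the starting point, in which case the proof reduces to the trivial inversion of (\ref{spec-value}) carried out above.
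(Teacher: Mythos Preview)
Your argument is correct and arrives at the result by the same route as the paper: identify the $2$-curvature with the exterior angle in absolute angle measure and invert identity~(\ref{spec-value}). The paper gives no formal proof---it simply asserts before the proposition that ``it is straightforward to see that at a vertex of degree $2$, the $2$-curvature just measures the exterior angle in absolute angle measure'' and then invokes~(\ref{spec-value})---so your treatment is in fact considerably more careful: you actually unwind Definition~\ref{def:N-curvature} to compute $\vec{x}_1,\vec{x}_2$ and the interior angle $\iota$, and you are right to flag the $H^0(\partial\Lambda)$ normalisation as the one genuinely ambiguous point, which the paper resolves by fiat rather than by computation.
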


Note that $\lim_{\ga \ra 1^-}\delta_{\ga}(y)$ is well-defined and equals $1/2$.  For example, if $\Ga$ is a cyclic graph of even order $2k$ and $\phi$ is a function taking on alternate values at neighbouring vertices.  Then $\phi$ satisfies (\ref{one}) with $\ga = 1$.  The total curvature is then given by $k$.  If $\Ga$ is a regular polygon in the plane and $\phi$ the corresponding position function, then the total $2$-curvature is equal to $1$, or in radians, to $2\pi$, as required. 

We now proceed to higher dimensional curvature; dimension $3$ is of particular interest because of the minimizing property of the solution to the lifting problem that occurs in this case; see Theorem \ref{thm:lift}.

\begin{proposition} \label{prop:3-curvature} {\rm ($3$-dimensional vertex-curvature)} Let $(\Ga , \phi )$ be a pair consisting of a graph $\Ga$ coupled to a solution $\phi$ to {\rm (\ref{one})}.  Let $y$ be a vertex of $\Ga$ and let $n$ be the degree of $\Ga$ at $y$.  Suppose that $n \in \{ 3,4,5\}$ and that $\ga \leq 1$.  Then the $3$-dimensional vertex-curvature of $(\Ga , \phi )$ at $y$ is given by the quantity: 
$$
\delta_{\ga}(y) = 1 - \frac{n}{2\pi}\arccos\left\{ \frac{1+2(1-\ga )\cos \frac{2\pi}{n}}{3-2\ga}\right\}\,.
$$
\end{proposition}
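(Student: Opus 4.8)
The plan is to unwind Definition~\ref{def:N-curvature} in the case $N=3$, where every ingredient can be written down explicitly. Since $n\in\{3,4,5\}$, the regular $3$-polytope whose vertex figure is a regular polygon on $n$ vertices is the tetrahedron, the octahedron or the icosahedron, and that vertex figure $P$ is correspondingly a regular $n$-gon in $\RR^2$. Place $P$ symmetrically about $\vec{0}\in\RR^2$ with circum-radius $r>0$, so that its vertices are $\vec{v}_\ell=r\,e^{2\pi\ii\ell/n}$ ($\ell=1,\dots,n$); then, by the computation following (\ref{star-R2}) applied to the $n$-th roots of unity rescaled by $r$, the configuration invariant is $\rho=\frac{1}{2}nr^2$. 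Substituting $\vec{v}_\ell$ and $\rho$ into the prescription of Definition~\ref{def:N-curvature} gives the star vertices
$$
\vec{x}_\ell=\frac{1}{\sqrt{\rho+nr^2(1-\ga)}}\left(\begin{array}{c}\sqrt{n(1-\ga)}\,\vec{v}_\ell\\ \sqrt{\rho}\end{array}\right)\in S^2,
$$
and one checks immediately that $\|\vec{x}_\ell\|=1$ and that the last coordinate of $\vec{x}_\ell$ equals $\sqrt{\rho}/\sqrt{\rho+nr^2(1-\ga)}>0$, the same value for every $\ell$. Thus the points $\vec{x}_1,\dots,\vec{x}_n$ lie, equally spaced, on a small circle of latitude of $S^2$ contained strictly inside the open upper hemisphere.

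The second step is to identify $\pa\La$, the boundary of the spherical convex hull $\La$ of $\{\vec{x}_1,\dots,\vec{x}_n\}$. Central projection of the open upper hemisphere onto its tangent plane at the north pole is a homeomorphism carrying spherical geodesics to straight lines and carrying the latitude circle to an honest circle; since $n$ equally spaced points on a circle are in convex position, it follows that the spherical convex hull of the $\vec{x}_\ell$ is the spherical $n$-gon with these vertices in cyclic order, and $\pa\La$ is the union of the $n$ minor great-circle arcs joining consecutive vertices $\vec{x}_\ell$ and $\vec{x}_{\ell+1}$. The rotation of order $n$ about the polar axis permutes these arcs cyclically, so they all have the same length, equal to the geodesic distance $\arccos\inn{\vec{x}_\ell,\vec{x}_{\ell+1}}$ on $S^2$. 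Hence the length of $\pa\La$ in radians is $n\arccos\inn{\vec{x}_\ell,\vec{x}_{\ell+1}}$, and, passing to absolute angle measure by dividing the radian length by $\si_1=2\pi$ (the measure conventions of Section~\ref{sec:curvature}), one gets $H^{1}(\pa\La)=\frac{n}{2\pi}\arccos\inn{\vec{x}_\ell,\vec{x}_{\ell+1}}$.

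It then remains only to evaluate the inner product. Consecutive vertices of a regular $n$-gon of circum-radius $r$ satisfy $\inn{\vec{v}_\ell,\vec{v}_{\ell+1}}=r^2\cos\frac{2\pi}{n}$ and $\|\vec{v}_\ell\|^2=r^2$, so substituting into the formula for $\vec{x}_\ell$ and using $\rho=\frac{1}{2}nr^2$ gives
$$
\inn{\vec{x}_\ell,\vec{x}_{\ell+1}}=\frac{n(1-\ga)r^2\cos\frac{2\pi}{n}+\frac{1}{2}nr^2}{\frac{1}{2}nr^2+nr^2(1-\ga)}=\frac{1+2(1-\ga)\cos\frac{2\pi}{n}}{3-2\ga},
$$
which, as it must be, is independent of the scale $r$. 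Combining the last two displays yields $\delta_\ga(y)=1-H^{1}(\pa\La)=1-\frac{n}{2\pi}\arccos\left\{\frac{1+2(1-\ga)\cos\frac{2\pi}{n}}{3-2\ga}\right\}$, which is the asserted formula. For $\ga=1$ the argument of the $\arccos$ tends to $1$ as $\ga\to1^-$, so $\delta_1(y)=\lim_{\ga\to1^-}\delta_\ga(y)=1$, consistent with the fact that at $\ga=1$ the $\vec{x}_\ell$ all degenerate to the north pole and $\La$ shrinks to a point.

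The two algebraic checks ($\|\vec{x}_\ell\|=1$ and the value of $\inn{\vec{x}_\ell,\vec{x}_{\ell+1}}$) are routine. The step requiring genuine care is the geometric one: verifying that for every admissible $\ga<1$ the $n$ points remain in spherically convex position inside a single open hemisphere, so that $\pa\La$ really is the spherical $n$-gon through them with each edge the \emph{minor} great-circle arc, and then keeping the absolute-angle-measure normalisation (the factor $\si_1=2\pi$) straight. A short estimate using $\cos\frac{2\pi}{n}\ge-\frac{1}{2}$ for $n\ge3$ shows $\inn{\vec{x}_\ell,\vec{x}_{\ell+1}}\in(-1,1)$ for all $\ga<1$, so each geodesic arc is the unique minor one and the $\arccos$ is unambiguous; this is where an error would most easily slip in.
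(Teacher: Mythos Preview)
Your proof is correct and follows exactly the same approach as the paper's: place the vertex figure as the $n$-th roots of unity, form the $\vec{x}_\ell$ from Definition~\ref{def:N-curvature}, note that $\pa\La$ consists of $n$ equal great-circle arcs of length $\arccos\inn{\vec{x}_\ell,\vec{x}_{\ell+1}}$, and substitute. Your version is in fact more careful than the paper's brief sketch, supplying the gnomonic-projection argument for convex position, the explicit inner-product computation, the well-definedness of the $\arccos$, and the $\ga\to1^-$ limit.
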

\begin{proof}
The configuration of vertices $\vec{v}_{\ell}$ is given by (\ref{star-R2}), that is $\vec{v}_{\ell} = e^{2\pi \ii \ell / n}$ ($\ell = 1, \ldots , n$).  The boundary of the convex hull $\La$ of the set $\{ \vec{x}_1, \ldots , \vec{x}_n\}$ in $S^2$ is made up of arcs of great circles of length $\al = \arccos (\vec{x}_1\cdot \vec{x}_2)$.  In absolute angle measure, the deficit, or $3$-curvature, is given by $1 - \frac{n}{2\pi}\al$.  Substitution of the expressions for $\vec{x}_{\ell}$ given by Definition \ref{def:N-curvature} gives the required formula.
\end{proof}

Our requirement that $n\in \{ 3,4,5\}$ is a consequence of Definition \ref{def:N-curvature}, for the only polygons that appear as vertex figures of regular polyhedra have these possibilities for their number of sides.  Of course, the expression for $\delta_{\ga}(y)$ above is defined for any $n$ provided $\ga \leq 1$.

Note that $\lim_{\ga \ra 1^-}\delta_{\ga}(y)$ is well-defined and equals $1$.  For example, the bipartite graph $K_{33}$ has the unique value $\ga = 1$ in its geometric spectrum.  Also the degree of each vertex is $n = 3$.  The $3$-curvature at each vertex is then $\delta = 1$ and the total curvature is given by $\sum_{y\in V}\delta (y) = 6$.  On the other hand, the $1$-skeleton of a tetrahedron in $\RR^3$ has $n = 3$ and $\ga = 3/4$, so that $\ga <1$.  It is easily checked that the $3$-curvature at each vertex is given by $\delta = 1/2$, giving a total curvature of $2$.  To obtain the total curvature in radian measure, we multiply by $2\pi$ to give the value $4\pi$, which confirms the theorem of Descartes.

For degree $3$, the $3$-curvature is the only $N$-curvature that can apply, since there is no other vertex figure with three vertices.  This then gives a well-defined curvature for vertices of degree $3$, provided $\ga \leq 1$.

\begin{example} \label{ex:curvature-double-cone} {\rm  Consider the double cone on the triangle discussed in Section \ref{sec:inv-polytopes}.  The corresponding framework satisfies equation (\ref{one}) with $\ga = 4/5$ on the three lateral vertices of degree $4$ and $\ga = 1/3$ at the two apexes of degree $3$.  Then we calculate:
$$
\delta_{\rm apex} = 1- \frac{3}{2\pi}\arccos \frac{1}{7} \qquad {\rm and} \qquad \delta_{\rm lat} = 1 - \frac{4}{2\pi} \arccos \frac{5}{7}\,,
$$
to give total curvature:
$$
\delta_{\rm tot} = 3\delta_{\rm lat} + 2\delta_{\rm apex} = 5 - \frac{3}{\pi}\left( \arccos\frac{1}{7} + 2\arccos\frac{5}{7}\right)\,.
$$
Now
$$
\cos\left( \arccos\frac{1}{7} + 2\arccos\frac{5}{7}\right) = \frac{1}{7^3}(1-240\sqrt{2}) = -0\cdot 98662\,.
$$
to five decimal places, so that
$$
 \arccos\frac{1}{7} + 2\arccos\frac{5}{7} \sim \pi\,.
 $$
 is close, but not equal to $\pi$.  The exact value $\pi$ gives a total curvature of $2$, which is the value that we expect from the theorem of Descartes.  The small difference arises due to the fact that the vertex figures at the lateral vertices are not \emph{configured} stars, whereas the curvature is defined in terms of the deficit that occurs for the unique lifted configured star.  That is, we try to fit a \emph{regular} polytope in the best way possible.  What is remarkable, is how close the two values are.  
}
\end{example}

The above example illustrates one of the problems in defining the curvature.  The advantage of lifting to a \emph{configured} star is that, in dimension $N=3$, the lift is unique and so the curvature is uniquely defined.  However, any expression of the total curvature as an invariant quantity would need to involve some approximation.

For the $4$-curvature, there are some special cases to consider.  We list these in the proposition below. 

\begin{proposition} \label{prop:4-curvature} {\rm ($4$-dimensional vertex-curvature)} Let $(\Ga , \phi )$ be a pair consisting of a graph $\Ga$ coupled to a solution $\phi$ to {\rm (\ref{one})}.  Let $y$ be a vertex of $\Ga$ and let $n$ be the degree of $\Ga$ at $y$.  Suppose that $n \in \{4,6,12,20 \}$ and that $\ga \leq 1$.  Then depending on the degree, the $4$-dimensional vertex-curvature of $(\Ga , \phi )$ at $y$ is given by one of the expressions below:
\begin{center}
\begin{tabular}{ccc}
degree & vertex figure & curvature  \\ \hline
$4$ & tetrahedron & $\ds 2-\frac{3}{\pi}\arccos \left(\frac{\ga}{4-2\ga}\right)$ \\
$6$ & octahedron & $ \ds 3 - \frac{6}{\pi}\arccos \left( \frac{1}{5-3 \ga}\right)$ \\
$12$ & icosahedron & \quad $\ds 6-\frac{15}{\pi}\arccos\left(\frac{6(\sqrt{5}+1)\ga - 11 - 7\sqrt{5}}{2[6(\sqrt{5}+3)\ga - 23 - 7\sqrt{5}]}\right)$ \\
$20$ & dodecahedron & $\ds 10-\frac{15}{\pi}\arccos\left(\frac{5\ga -1-2\sqrt{5}}{2[-5\ga +8-\sqrt{5}]}\right)$ \\ \hline
\end{tabular}
\end{center}
\medskip
\end{proposition}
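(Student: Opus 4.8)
The plan is to apply Definition~\ref{def:N-curvature} with $N=4$ to each of the four possible vertex figures, using the explicit coordinate models of Section~\ref{sec:orthographic}: the tetrahedron~(\ref{tetrahedron-coords}), the octahedron, the icosahedron~(\ref{vertices-icosahedron}) with $\la=(1+\sqrt5)/2$, and the dodecahedron~(\ref{vertices-dodecahedron}) with $\la=(1+\sqrt5)/2$, in each case scaled so that the vertices lie on a sphere of radius $r$ centred at the origin. The first point is that for every regular polyhedron the vertex set is automatically a configuration in the sense of~(\ref{star-conditions}): the rotation group acts irreducibly on $\RR^3$, so $UU^t$ is a scalar matrix, and taking the trace gives the configuration invariant $\rho=nr^2/3$. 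Substituting this into the formula of Definition~\ref{def:N-curvature} for the lifted vertices $\vec x_\ell\in S^3$ collapses it to the single identity
$$
\vec x_i\cdot\vec x_j=\frac{1+3(1-\ga)\cos\theta_{ij}}{4-3\ga},
$$
where $\theta_{ij}$ is the angle subtended at the centre of the polyhedron by $\vec v_i$ and $\vec v_j$; this is the only place the parameter $\ga$ intervenes.

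Next I would pin down the spherical hull $\La$. Since every $\vec x_\ell$ has the same positive fourth coordinate, the Euclidean convex hull of $\{\vec x_1,\dots,\vec x_n\}$ is a rescaled copy of the polyhedron $P$; lifting a supporting hyperplane of each $2$-face $F'$ of $P$ to one through the origin shows that the corresponding face of $\partial\La$ is the geodesic polygon spanned by $\{\vec x_\ell:\ell\in F'\}$ inside a great $2$-sphere, so $\partial\La$ is combinatorially $P$ with $f$ congruent regular spherical $p$-gons as faces, where $(f,p)=(4,3),(8,3),(20,3),(12,5)$ in the four cases. One must also verify that the lifted points lie in a spherical cap small enough that this geodesic hull is non-degenerate, which holds throughout the range of $\ga$ at issue.

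The vertex angle $\beta$ of a face is then obtained by spherical trigonometry. In the three triangular cases all three vertices of a face are edge-adjacent vertices of $P$, so the face is an equilateral spherical triangle of side $d$ with $\cos d=\vec x_i\cdot\vec x_j$, and the spherical law of cosines gives $\cos\beta=\cos d/(1+\cos d)$; inserting $\cos\theta_{ij}=-1/3,\ 0,\ 1/\sqrt5$ for the tetrahedron, octahedron and icosahedron gives, after simplification, the $\arccos$ arguments of the first three rows. For the dodecahedron a face is a regular spherical pentagon; here I would work in the triangle formed by a vertex and its two neighbours in that face, whose two sides are pentagon edges and whose base is a pentagon diagonal, using $\cos d_{\mathrm{edge}}=\vec x_i\cdot\vec x_j$ and $\cos d_{\mathrm{diag}}=\vec x_i\cdot\vec x_k$ evaluated from the dodecahedron's edge and face-diagonal central angles ($\cos\theta=\sqrt5/3$ and $1/3$), together with the law of cosines in the form $\cos\beta=(\cos d_{\mathrm{diag}}-\cos^2 d_{\mathrm{edge}})/(1-\cos^2 d_{\mathrm{edge}})$. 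Finally a regular spherical $p$-gon of vertex angle $\beta$ has spherical area $p\beta-(p-2)\pi$, hence in absolute angle measure $H^2(\partial\La)=\frac1{4\pi}\big(fp\beta-(p-2)f\pi\big)$, and Definition~\ref{def:N-curvature} yields
$$
\delta_\ga(y)=1-H^2(\partial\La)=1+\frac{(p-2)f}{4}-\frac{fp}{4\pi}\,\beta,
$$
so the four constants $2,3,6,10$ and the coefficients $3/\pi,6/\pi,15/\pi,15/\pi$ come straight out of the pairs $(f,p)$. The case $\ga=1$ follows because each $\beta$, viewed as a function of $\ga$, extends continuously to $\ga=1$ (the star degenerates to a point there, but the face angles tend to those of the flat Euclidean polygon).

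The main obstacle is the dodecahedron row: the pentagonal face forces the two-distance spherical computation above and then a somewhat delicate algebraic reduction of $\cos\beta$ to a rational function of $\ga$ (one has to spot and cancel the common factor $1-\ga$ in numerator and denominator and then clear the surd), and one must take slightly more care to confirm that $\partial\La$ is genuinely a spherical dodecahedron rather than a degenerate hull. The three triangular rows are routine once the inner-product identity above is in hand.
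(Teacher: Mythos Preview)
Your approach is essentially the same as the paper's: both identify $\partial\La$ as a spherical polyhedron combinatorially equal to the vertex figure, compute the interior angle of a face, and apply the Gauss--Bonnet area formula for spherical polygons. The difference is only in execution. The paper computes the interior angle by writing down explicit unit tangent vectors to the great-circle edges at a vertex (via the parametrised arc formula) and taking their dot product, carrying this out in full for the dodecahedron and saying ``similarly'' for the others. You instead extract the single identity
\[
\vec x_i\cdot\vec x_j=\frac{1+3(1-\ga)\cos\theta_{ij}}{4-3\ga}
\]
from the observation $\rho=nr^2/3$ (valid for every Platonic solid by irreducibility of the rotation action), and then read off the interior angle from the spherical law of cosines. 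This is tidier: it makes the dependence on $\ga$ transparent, treats the three triangular cases uniformly via $\cos\beta=\cos d/(1+\cos d)$, and for the dodecahedron replaces the paper's lengthy tangent-vector calculation by a single application of the law of cosines in the edge--edge--diagonal triangle. The paper's route gives the same information in principle but hides it inside coordinate computations that must be redone for each polyhedron. One caution: when you simplify the icosahedron and dodecahedron cases, do not expect your final $\arccos$ argument to match the table verbatim---the paper's explicit expressions for those two rows rest on particular numerical values of $\rho$ inserted in the proof, and your route, which uses the scale-free ratio $\rho/r^2=n/3$, is the more robust one.
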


\begin{proof}
Given two vectors $\vec{u}, \vec{v} \in S^M(r)$ in a sphere of radius $r$, the arc of the great circle joining $\vec{u}$ to $\vec{v}$ is given by
\begin{eqnarray*}
\theta \mapsto & \ds \left( \cos\theta - \frac{(\vec{u}\cdot\vec{v})\sin\theta}{\sqrt{r^4-(\vec{u}\cdot \vec{v})^2}}\right)\vec{u} + \frac{r^2\sin\theta}{\sqrt{r^4-(\vec{u}\cdot \vec{v})^2}}\,\vec{v} \\ & \qquad \qquad  \qquad \qquad (0\leq \theta \leq \arcsin\frac{\sqrt{r^4-(\vec{u}\cdot \vec{v})^2}}{r^2} )\,.
\end{eqnarray*}
When $r=1$, this is unit speed.  Furthermore, the tangent vector to this arc at $\vec{v}$ is given by
\begin{equation} \label{tangent}
\vec{t} = \frac{1}{\sqrt{r^4-(\vec{u}\cdot \vec{v})^2}}(-r^2\vec{u} + (\vec{u}\cdot\vec{v})\vec{v})\,.
\end{equation}
The area of a spherical polygon with $m$ sides and with interior angles $\theta_k$ ($k = 1, \ldots , m$) is given by 
$$
KA = \sum_k\theta_k - (m-2)\pi\,,
$$
where $K$ is the curvature of the sphere.  We are required to calculate the spherical surface area of the boundaries of the various vertex figures in $S^3$.  These are made up of faces lying in great $2$-spheres which are either triangles, or in the case of the dodecahedron, pentagons, whose edges are arcs of great circles.  In order to calculate the interior angles, we calculate the scalar product of the unit tangents to these edges at a vertex.  By symmetry, any vertex will do.  We calculate this for the icosahedron and the dodecahedron, the other cases being similar.

For the dodecahedron, a configuration of vertices is given by (\ref{vertices-dodecahedron}).  With the notation of Definition \ref{def:N-curvature}, three consecutive vertices around one pentagonal face are given by
$$
\vec{x}_1=\frac{1}{\sqrt{8+3a^2}}\left(\begin{array}{c} a\la^{-1} \\ a\la \\ 0  \\ \sqrt{8} \end{array} \right) , \, 
\vec{x}_2=\frac{1}{\sqrt{8+3a^2}}\left(\begin{array}{c} a \\ a \\ a  \\ \sqrt{8} \end{array} \right) , \,
\vec{x}_3=\frac{1}{\sqrt{8+3a^2}}\left(\begin{array}{c} 0 \\ a\la^{-1} \\ a\la \\ \sqrt{8} \end{array} \right) ,
$$
where $a = \sqrt{n(1-\ga )}$ and $\la = (1+\sqrt{5})/2$.  These three vertices determine a great $2$-sphere in $S^3$ which contains the pentagonal face.  With this arrangement, $\vec{x}_2$ is the central vertex joined to $\vec{x}_1$ and $\vec{x}_3$ by arcs of great circles.  In order to calculate the interior angle at each vertex of the pentagon, we calculate the tangent to each of these arcs at $\vec{x}_2$.  To do this we apply (\ref{tangent}).  For the first arc we set $\vec{u} = \vec{x}_1$ and $\vec{v} = \vec{x}_2$, to obtain the tangent vector:
$$
\vec{t}_1 = \frac{1}{2\sqrt{3a^2+8}\sqrt{a^2+12-4\sqrt{5}}}\left( \begin{array}{c} \frac{a^2}{2}(3-\sqrt{5})+4(3-\sqrt{5}) \\ - \frac{a^2}{2}(3+\sqrt{5})+4(1-\sqrt{5}) \\
\sqrt{5}a^2+8 \\ -a\sqrt{8}(3-\sqrt{5}) \end{array} \right) \,.
$$
For the second arc, we set $\vec{u} = \vec{x}_3$ and $\vec{v} = \vec{x}_2$, to obtain:
$$
\vec{t}_2 = \frac{1}{2\sqrt{3a^2+8}\sqrt{a^2+12-4\sqrt{5}}}\left( \begin{array}{c} \sqrt{5}a^2+8 \\ \frac{a^2}{2}(3-\sqrt{5})+4(3-\sqrt{5}) \\ - \frac{a^2}{2}(3+\sqrt{5})+4(1-\sqrt{5}) \\
 -a\sqrt{8}(3-\sqrt{5}) \end{array} \right) \,.
$$
Then
$$
\vec{t}_1\cdot \vec{t}_2 = \frac{-a^2+16-8\sqrt{5}}{2(a^2+12-4\sqrt{5})}\,,
$$
which gives the cosine of the interior angle (it is indeed the interior angle, being greater than $\pi /2$).  Then the area (in absolute angle measure) of each pentagonal face is given by
$$
\frac{1}{4\pi}\left\{ 5\arccos \left(\frac{-a^2+16-8\sqrt{5}}{2(a^2+12-4\sqrt{5})}\right) - 3\pi\right\} \,,
$$
so that the surface area of the spherical dodecahedron is given by twelve times this quantity.  We then obtain the angular deficiency, or $4$-curvature:
$$
\delta = 10 - \frac{15}{\pi} \arccos \left(\frac{-a^2+16-8\sqrt{5}}{2(a^2+12-4\sqrt{5})}\right)\,.
$$ 
On substituting the value of $a$, we obtain the required formula.

For the icosahedron, a configuration of vertices is given by (\ref{vertices-icosahedron}).  Three vertices which form one of the triangular faces are given by:
$$
\vec{v}_1=\left( \begin{array}{c} 0 \\ 1 \\ \la \end{array}\right) , \quad 
\vec{v}_2=\left( \begin{array}{c} 1 \\ \la \\ 0 \end{array}\right) , \quad
\vec{v}_3=\left( \begin{array}{c} \la \\ 0 \\ 1 \end{array}\right) ,
$$ 
where $\la = (1+\sqrt{5})/2$.  Then $\rho = 2+2\la^2 = 5+\sqrt{5}$, $n=12$ and $r^2 = (5+\sqrt{5})/2$.  This gives the corresponding vertices in $S^3$ as:
$$
\vec{x}_{\ell} = \frac{\sqrt{2}}{\sqrt{2+n(1-\ga )}}\left( \begin{array}{c} \sqrt{\frac{n(1-\ga )}{5+\sqrt{5}}}\, \vec{v}_{\ell} \\ 1 \end{array} \right) \quad (\ell = 1,2,3)\,.
$$ 
We then proceed as above for the dodecahedron to calculate the angular deficiency. 
\end{proof}

\begin{example} {\rm The $600$-cell is a convex $4$-dimensional regular polytope made up of $600$ tetrahedral $3$-polytopes.  It has $120$ vertices and $720$ edges.  Its vertex figure is a regular icosahedron.  If we consider an orthogonal projection onto the complex plane and let $\phi$ associate the corresponding value to each vertex, then by Theorem \ref{thm:reg-polytope}, $\phi$ satisfies (\ref{one}) with $\ga$ constant.  We can find the value of $\ga$ as follows.  

Since the edges of the $600$-cell all have the same length, in the notation of the above proof, we must have the distance from the origin to $\vec{x}_1$, that is $1$, equal to the distance between two neighbours of the vertex figure: $||\vec{x}_1-\vec{x}_2||$.  One can readily calculate:
$$
||\vec{x}_1-\vec{x}_2||^2 = \frac{8n(1-\ga )}{(5+\sqrt{5})[2+n(1-\ga )]}\,,
$$
to obtain the negative value:
$$
\ga = \frac{5(1-2\sqrt{5})}{3}\,.
$$
One can now confirm the generalization of the theorem of Descartes (Theorem \ref{thm:descartes}).  

The $600$-cell has $5$ tetrahedra around each edges, each having dihedral angle $\arccos (1/3)$.  Thus the angular deficiency at each edge (in absolute angle measure) is given by:
$$
\delta_e= 1-\frac{5}{2\pi}\arccos \frac{1}{3}\,.
$$
Substitution of the above value of $\ga$ into the third formula of Proposition \ref{prop:4-curvature} gives the deficit, or curvature at each vertex, as
$$
\delta_v = 6-\frac{15}{\pi}\arccos \frac{1}{3}\,.
$$
On then finds that $120\delta_v-720\delta_e = 0$, as required.
}
\end{example}

We can proceed similarly to obtain explicit formulae for higher dimensional $N$-curvature.  This is simplified by the fact that there are just three regular polytopes in dimensions $N\geq 5$, namely the $N$-simplex, the $N$-cube and the cross-polytope, with vertex figures an $(N-1)$-simplex in the first two cases and another cross-polytope in the last case.  To find the $N$-curvature requires the calculation of the $(N-2)$-dimensional measure of $(N-2)$-simplices in great spheres of $S^{N-1}$.  This is a standard, but non-trivial procedure using Schl\"afli's differential equality \cite{Sc}.  See also the expository article of J. Milnor for a nice account and references \cite{Mi}.  The article of J. Murakami provides explicit expressions in the $3$-sphere \cite{Mu}.  We do not attempt to derive these formulae here. 

\medskip

\noindent \emph{Edge-curvature}  Let $\Ga = (V,E)$ be a graph endowed with a solution $\phi$ to (\ref{one}), together with a choice of lift of a configured star into $\RR^N$ at each vertex, where the dimension $N$ is to be fixed over the whole graph.  If we suppose $N=3$, this may be achieved by defining an orientation on $\Ga$, as discussed in Section \ref{sec:distance}.  We suppose further that each star has a well-defined axis defined by a unit vector $\vec{u}(x)\in \RR^N$, for each $x\in V$.  The axis should be directed from the internal vertex of the star towards its centre of mass.

\begin{definition} \label{def:edge-curv}  Given an edge $e=\ov{xy}\in E$, define the \emph{edge-curvature} of $e$ to be the unique angle $\theta (e) := \arccos (\inn{\vec{u}(x), \vec{u}(y)}_{\RR^N})\in [0, \pi ]$.  Given a vertex $x\in V$, define the \emph{mean edge-curvature at $x$} to be the mean of the edge-curvatures of the edges incident with $x$.
\end{definition}

Thus the edge-curvature measures the angle between the axes of adjacent stars.  It is clearly independent of the freedom $\phi \mapsto \la \phi + \mu$ in the solution $\phi$.   By analogy with Riemannian geometry, various other curvatures can now be defined.  If we let $\ell (e)$ denote the length of an edge $e=\ov{xy}$ as given by Definition \ref{def:edge-length}, and $\theta (e)$ its edge-curvature, then the radius of the best-fit circle is given by $r(e) = \ell (e)/\theta (e)$ (by \emph{best-fit circle}, we mean the circle subtending the same arc length $\ell (e)$ for the given angle $\theta (e)$).  The \emph{normal curvature} of $e$ is then the reciprocal $1/r(e) = \theta (e)/\ell (e)$.  The \emph{mean curvature} at a vertex $x$ is the mean of the normal curvatures of the edges incident with $x$.  Since $\ell (e)$ depends on the scaling $\phi \mapsto \la \phi$, this quantity also depends on the scaling; the mean curvature should be thought of as the analogue of the same notion in the smooth setting, wherby we locally embed a Riemannian manifold in a Euclidean space. 

Ricci curvature is one of the most natural curvatures intrinsic to a Riemannian manifold.  Recall that given two unit directions $\vec{X}$ and $\vec{Y}$, the sectional curvature ${\rm Sec}\, (\vec{X}, \vec{Y})$ can be interpreted as the Gaussian curvature of a small geodesic surface generated by the plane $\vec{X}\wedge \vec{Y}$.  This in turn is the product of the principal curvatures which are the extremal values of the normal curvatures.  The Ricci curvature ${\rm Ric}\,(\vec{X},\vec{X})$ is then the sum: $\sum_j{\rm Sec}\,(\vec{X}, \vec{Y}_j)$ taken over an orthonormal frame $\{ \vec{Y}_j\}$ with each $\vec{Y}_j$ orthogonal to $\vec{X}$.   This motivates the following definition. 

\begin{definition}  Given a vertex $x\in V$ and two edges $e_1=\ov{xy_1}$ and $e_2=\ov{xy_2}$ with endpoint $x$, we define the \emph{sectional curvature ${\rm Sec}\,(e_1,e_2)$ determined by $e_1$ and $e_2$} to be the product:
$$
{\rm Sec}_x\,(e_1,e_2)= \theta (e_1)\theta (e_2)\,,
$$
where $\theta (e_j)$ ($j=1,2$) are the edge-curvatures.  For an edge $e=\ov{xy}$, the \emph{Ricci curvature} ${\rm Ric}\,(e,e)$ is the sum
$$
{\rm Ric}_x\,(e,e) = \ell (e)^2\sum_{z\sim x, z\neq y}\theta (\ov{xy})\theta( \ov{xz})\,,
$$ 
and the \emph{scalar curvature at} $x$ is given by
$$
{\rm Scal}_x = \sum_{y\sim x} {\rm Ric}\,(\ov{xy}, \ov{xy})/\ell (\ov{xy})^2.
$$
\end{definition}

The length scaling does not appear in the sectional curvature, since in the smooth setting this quantity depends only on the plane generated by two unit vectors.  On the other hand, the Ricci curvature is bilinear in its arguments and so should depend on the square of the length.  In Riemannian geometry, one usually applies the polarization identity to define ${\rm Ric}\,(\vec{X}, \vec{Y})$, however, there would seem to be no reasonable interpretation for the sum of two edges in our setting.  The dependence on length is once more removed from the scalar curvature, which is the trace of the Ricci curvature.

\section{The geometric spectrum and the $\ga$-polynomial} \label{sec:spectrum}
Recall the \emph{geometric spectrum} of a graph $\Ga = (V,E)$ is the set 
$$
\Si = \{ \ga\in \RR : \exists\ {\rm non-constant} \ \phi : V \ra \CC \ {\rm such \ that} \ \ga (\Delta \phi )^2 = (\dd \phi )^2\}\,.
$$
It is clear that the spectrum only depends on the isomorphism class of a graph.
Section \ref{sec:distance} shows how a particular value in the spectrum may correspond to local Euclidean geometry.  Thus, at a particular vertex, the edges which connect it to its neighbours may be realised as vectors in a Euclidean space; in particular their relative lengths are defined as well as the angles between them.  The edge-curvature as defined in Definition \ref{def:edge-curv}, may then be considered as a measure of how these local Euclidean geometries are pieced together to form a global geometric object.  None of these aspects require the graph to be embedded in an ambient space; they emerge purely from the combinatorial properties of the graph.

There are some obvious questions about the geometric spectrum of a graph: is it discrete?  is it finite? are there bounds?  To compute it, even for simple graphs, is quite challenging.  We deduced in Section \ref{sec:cyclic} the spectrum of some cyclic graphs of low order.  However, once the order increases, then the problem can become difficult.  Real regular cyclic sequences correspond to polynomial equations over the integers with positive coefficients having real roots; complex solutions correspond to closed walks in the plane with each step forming an angle $\pm \theta$ with the previous step, for some fixed $\theta$.  We may attempt an algebraic geometric approach to shed some light on these issues.

Let $\Ga = (V,E)$ be a connected graph.  We are interested in the possible real numbers $\ga$ for which there are non-constant solutions to the equation:
$$
\ga\Delta\phi^2=(\dd\phi )^2\,.
$$
Any solution is invariant by $\phi \mapsto \la\phi + \mu$, for complex constants $\la, \mu$.  Consider first how to parametrize all possible complex fields on the graph under this invariance.  

Label the vertices of the graph $x_1,x_2, \ldots , x_N$ and consider a non-constant complex field $\phi$ that assigns the value $\phi (x_k) = z_k$ to vertex $x_k$.  Then the space of all such fields is identified with the complex space $\CC^N \setminus \{ \mu (1,1,\ldots , 1): \mu \in \CC\}$.  Up to the equivalence $(z_1, \ldots , z_N) \sim (z_1+\mu , \ldots , z_N+\mu )$, we can identify these fields with the set $\Pi\setminus \{ 0\}$, where $\Pi$ is the linear subspace $\Pi = \{ \vec{Z}=(z_1, \ldots , z_N)\in \CC^N: z_1+\cdots + z_n=0\}\subset \CC^N$.  In effect, given any non-constant field $(z_1, \ldots , z_N)$, then $(z_1+\mu, \ldots , z_N+\mu )$ lies in the plane $z_1+\cdots +z_N = 0$, when we set $\mu = -\frac{1}{N}(z_1+\cdots + z_N)$.  By non-constancy, this is non-zero.  Furthermore, it is clear that any two equivalent fields correspond to the same point.  

Now consider the equivalence $\vec{Z}\sim \la \vec{Z}$, for $\la \in \CC\setminus \{ 0\}$.  This defines the \emph{moduli space} of fields up to equivalence, as ${\mathcal Z}:=\CP^{N-2}$.  Specifically, given a point $[z_1,\ldots , z_{N-1}]\in {\mathcal Z}$ in homogeneous co\"ordinates, we define a representative field by $(z_1, \ldots , z_{N-1}, z_N= - \sum_{k=1}^{N-1}z_k)\in \CC^N$.  In practice, we can set a field equal to $0$ and $1$ on two selected vertices $x_0$ and $x_1$, respectively, and label the other vertices arbitrarily.  This is only one chart and we miss those fields which coincide at these two vertices.  

If we consider $\ga$ as an arbitrary \emph{complex} parameter, then (\ref{one}) imposes a constraint at each vertex, so we have $N$ equations in $N-1$ parameters.  In general these are independent so that this is an overdetermined system, which may have no solutions.  The graphs on five and six vertices below have empty geometric spectrum.

\medskip
\begin{center}
\setlength{\unitlength}{0.254mm}
\begin{picture}(244,94)(113,-202)
        \allinethickness{0.254mm}\path(135,-195)(185,-195) % Plain Solid Line
        \allinethickness{0.254mm}\path(185,-195)(205,-145) % Plain Solid Line
        \allinethickness{0.254mm}\path(205,-145)(160,-110) % Plain Solid Line
        \allinethickness{0.254mm}\path(135,-195)(115,-145) % Plain Solid Line
        \allinethickness{0.254mm}\path(115,-145)(160,-110) % Plain Solid Line
        \allinethickness{0.254mm}\path(160,-110)(135,-195) % Plain Solid Line
        \allinethickness{0.254mm}\path(160,-110)(185,-195) % Plain Solid Line
        \allinethickness{0.254mm}\path(315,-110)(275,-135) % Plain Solid Line
        \allinethickness{0.254mm}\path(275,-135)(275,-175) % Plain Solid Line
        \allinethickness{0.254mm}\path(275,-175)(315,-200) % Plain Solid Line
        \allinethickness{0.254mm}\path(315,-200)(355,-175) % Plain Solid Line
        \allinethickness{0.254mm}\path(355,-175)(355,-135) % Plain Solid Line
        \allinethickness{0.254mm}\path(355,-135)(315,-110) % Plain Solid Line
        \allinethickness{0.254mm}\path(315,-110)(275,-175) % Plain Solid Line
        \allinethickness{0.254mm}\path(315,-110)(315,-200) % Plain Solid Line
        \allinethickness{0.254mm}\path(315,-110)(355,-175) % Plain Solid Line
        \allinethickness{0.254mm}\special{sh 0.3}\put(160,-110){\ellipse{4}{4}} % Shade Dot
        \allinethickness{0.254mm}\special{sh 0.3}\put(115,-145){\ellipse{4}{4}} % Shade Dot
        \allinethickness{0.254mm}\special{sh 0.3}\put(135,-195){\ellipse{4}{4}} % Shade Dot
        \allinethickness{0.254mm}\special{sh 0.3}\put(185,-195){\ellipse{4}{4}} % Shade Dot
        \allinethickness{0.254mm}\special{sh 0.3}\put(205,-145){\ellipse{4}{4}} % Shade Dot
        \allinethickness{0.254mm}\special{sh 0.3}\put(315,-110){\ellipse{4}{4}} % Shade Dot
        \allinethickness{0.254mm}\special{sh 0.3}\put(275,-135){\ellipse{4}{4}} % Shade Dot
        \allinethickness{0.254mm}\special{sh 0.3}\put(275,-175){\ellipse{4}{4}} % Shade Dot
        \allinethickness{0.254mm}\special{sh 0.3}\put(315,-200){\ellipse{4}{4}} % Shade Dot
        \allinethickness{0.254mm}\special{sh 0.3}\put(355,-175){\ellipse{4}{4}} % Shade Dot
        \allinethickness{0.254mm}\special{sh 0.3}\put(355,-135){\ellipse{4}{4}} % Shade Dot
         % Set color to black again (default font color)
\end{picture}
\end{center}
\medskip
\begin{center}
Figure 2.  Two graphs which admit no non-trivial solutions to (\ref{one}) with $\ga$ constant.
\end{center}
\medskip

Even with such simple examples, the equations are quite difficult to solve by hand.  Let us consider one way to approach the problem of computing the geometric spectrum.

As above, let $\Ga = (V,E)$ be a finite connected graph with $N$ vertices labeled $x_1, \ldots , x_N$.  For each $\ell = 2, \ldots N$, consider the following set of $N$ polynomials defined over the algebraically closed field $\CC$.  The variables are the values $\{ z_1, \ldots , z_N\}$ of a field on $\Ga$ with constraints $z_1=0$ and $z_{\ell} = 1$; we suppose the degree of vertex $j$ is $n(j)$ and that $z_{jk}\in \{ z_1, \ldots , z_N\}$ $(k = 1, \ldots , n(j))$ are the values of the field on the neighbours $x_{jk}$ of $x_j$.  The polynomials are then defined by   
$$
f_j{}^{\ell}:= \frac{\ga}{n(j)} \left( \sum_{k=1}^{n(j)}(z_j - z_{jk})\right)^2 - \sum_{k=1}^{n(j)}(z_j-z_{jk})^2 \qquad (z_1=0,\, z_{\ell} = 1)\,,
$$
in the $N-1$ \emph{complex} variables $\{ \ga , z_2,z_3, \ldots , \wh{z_{\ell}}, \ldots , z_N\}$.  Recall some facts and terminology from commutative algebra.  We are particularly interested in the techniques of Gr\"obner bases, for which we refer the reader to \cite{Ad-Lo, St}. 

For an ideal $I = <f_1, \ldots , f_N>$ in a polynomial ring $\CC [x_1, x_2, \ldots , x_M]$, we denote by $V(I)$ the corresponding variety given as the solution set of the equations $f_1=0, \, f_2=0,\, \ldots , f_N=0$.  Then $I$ is called \emph{zero-dimensional} if $V(I)$ is finite.  A \emph{Gr\"obner basis} for $I$ is a basis of polynomials which can be constructed from $f_1, \ldots , f_N$ using a particular algorithm, called the \emph{Buchberger algorithm}.  To employ this algorithm, one is required first to choose an order on monomials.  We shall only be concerned with \emph{lexicographical order} here, which means we first choose an ordering of the variables, say $x_1>x_2>\cdots >x_M$ and then order monomials $x^{\al}:=x_1{}^{\al_1}\cdots x_M{}^{\al_M}$, $x^{\be}:=x_1{}^{\be_1}\cdots x_M{}^{\be_M}$, by $x^{\al}<x^{\be}$ if and only if the first co\"ordinate $\al_i$ and $\be_i$ from the left which are different satisfy $\al_i<\be_i$.  With respect to the monomial order, every polynomial $f$ in $I$ has a leading term ${\rm lt}\,(f)$ which is the product ${\rm lt}\,(f)={\rm lc}\,(f){\rm lm}\,(f)$ of the leading coefficient with the leading monomial.  

A set of non-zero polynomials $G = \{ g_1, \ldots , g_P\}$ in $I$ is called a \emph{Gr\"obner basis for $I$} if and only if for all $f\in I$ such that $f\neq 0$, there is a $g_j$ in $G$ such that ${\rm lm}\,(g_j)$ divides ${\rm lm}\,(f)$.  The Gr\"obner basis is further called \emph{reduced} if for all $j$, ${\rm lc}\,(g_j)=1$ and $g_j$ is reduced with respect to $G\setminus \{ g_j\}$, that is, no non-zero term in $g_j$ is divisible by any ${\rm lm}\,(g_k)$ for any $k\neq j$.  A theorem of Buchberger states that every non-zero ideal has a unique reduced Gr\"ober basis with respect to a monomial order \cite{Bu}.  Gr\"obner bases are particularly useful for understanding the solution set of a system of polynomial equations.

Let $I$ be an ideal in the polynomial ring $\CC [x_1, x_2, \ldots , x_M]$ and let $G = \{ g_1, \ldots , g_P\}$ be the unique reduced Gr\"obner basis with respect to the lexicographical ordering induced by the order $x_1>x_2>\cdots >x_M$.  Then $V(I)$ is finite if and only if for each $j = 1, \ldots , M$, there exists a $g_k\in G$ such that ${\rm lm}\, g_k=x_j{}^{n_j}$ for some natural number $n_j$.  As a consequence, if $I$ is a zero-dimensional ideal, it follows that we can order $g_1, \ldots , g_P$ so that $g_1$ contains only the variable $x_M$, $g_2$ contains only $x_M, x_{M-1}$ and so on.  This is because the leading monomial of one element, $g_1$ say, of $G$ must be a power of $x_M$ and then no other term of $g_1$ can contain powers of any other variable (for such terms would be greater that any power of $x_M$ with respect to the monomial order), and so on for successive elements $g_2, g_3, \ldots $ of $G$.  We note also that $V(I)$ is empty if and only if $1\in G$.  

It is also the case that, with the above hypotheses, the polynomial $g_1$ is the \emph{least degree univariate polynomial in $x_M$ which belongs to} $I$ (any zero-dimensional ideal contains such a polynomial for every variable).  For if there was another univariate polynomial $p(x_M)$ with ${\rm deg}\, p < {\rm deg}\, g_1$, then ${\rm lm}\, p$ would divide ${\rm lm}\, g_1$ in a strict sense, which would contradict the fact that $G$ is a reduced Gr\"obner basis.  Let us now return to the case under consideration.

For each $\ell = 2, \ldots ,N$, consider the ideal $I_{\ell} = < f_1{}^{\ell}, \ldots , f_N{}^{\ell}>$.  Suppose that for each $\ell = 2, \ldots , N$ this admits a least degree univariate polynomial $p_{\ell}$ in $\ga$. This can be constructed by first choosing a lexicographical ordering of the variables with $\ga$ the smallest and then applying an algorithm (say the Buchberger algorithm) to construct the unique reduced Gr\"obner basis for $I_{\ell}$.  The first element of this basis gives $p_{\ell}$.  

\begin{definition} \label{def:gamma-poly}  We define the \emph{$\ga$-polynomial} $p = p_{\Ga}$ of the connected finite graph $\Ga = (V,E)$ to be the least common multiple of the least degree univariate polynomials $p_{\ell}$ ($\ell = 2, \ldots , N$) in $\ga$ associated to the equations {\rm (\ref{one})} for fields $(z_1, \ldots , z_N)$ on $\Ga$ with $z_1=0$ and $z_{\ell} = 1$:
$$
p:= {\rm lcm}\, (p_2, \ldots , p_N)\,,
$$
when each $p_{\ell}$ exists.
\end{definition}

The $\ga$-polynomial $p(\ga )$ is defined up to rational multiple and has rational coefficients.  This is because the initial polynomials $f_j{}^{\ell}$ used to define $p$ all have integer coefficients and the Buchberger algorithm then generates polynomials with rational coefficients--it involves at most division by coefficients--see \cite{Ad-Lo}. Clearly $p$ depends only on the isomorphism class of a graph and in the case when the equations (\ref{one}) admit no solutions for $\ga$ constant and complex, then $p \equiv 1$.  In this case we shall say that \emph{$p$ is trivial}.  The polynomials $p_{\ell}$ and so $p$ may still be well-defined even if the solution set of the equations is infinite (that is the corresponding ideal is no longer zero-dimensional).  In fact we know of no case when they are not well-defined.  

The elements of the geometric spectrum arise as real roots of $p$ (the problem of establishing the discreteness of the spectrum is clearly intimately related to knowing if $p$ is well-defined in all cases).  However, not all real roots may occur in the spectrum, for in general they must also solve the other equations determined by the Gr\"obner basis: $g_1=0, \ldots , g_P=0$.  Examples below illustrate this property.  We know of no two non-isomorphic connected graphs with non-trivial $\ga$-polynomial having the same $\ga$-polynomial.  However, the examples of Figure 2, give two non-isomorphic graphs having trivial $p$. 

The examples of the triangle $C_3$ (the cyclic graph on three vertices) and the bipartite graphs $K_{23}$ and $K_{33}$ are instructive.  We label the vertices as indicated and consider fields $\phi$ taking the values $\phi (x_j) = z_j$ at each vertex $x_j$. 

\medskip
\begin{center}
\setlength{\unitlength}{0.254mm}
\begin{picture}(376,117)(123,-166)
        \allinethickness{0.254mm}\path(130,-120)(200,-120) % Plain Solid Line
        \allinethickness{0.254mm}\path(200,-120)(160,-65) % Plain Solid Line
        \allinethickness{0.254mm}\path(160,-65)(125,-120) % Plain Solid Line
        \allinethickness{0.254mm}\path(140,-120)(195,-120) % Plain Solid Line
        \allinethickness{0.254mm}\path(125,-120)(150,-120) % Plain Solid Line
        \allinethickness{0.254mm}\path(255,-120)(225,-65) % Plain Solid Line
        \allinethickness{0.254mm}\path(225,-65)(305,-120) % Plain Solid Line
        \allinethickness{0.254mm}\path(305,-120)(330,-65) % Plain Solid Line
        \allinethickness{0.254mm}\path(330,-65)(285,-95) % Plain Solid Line
        \allinethickness{0.254mm}\path(275,-105)(255,-120) % Plain Solid Line
        \allinethickness{0.254mm}\path(280,-65)(290,-85) % Plain Solid Line
        \allinethickness{0.254mm}\path(305,-120)(295,-95) % Plain Solid Line
        \allinethickness{0.254mm}\path(280,-65)(270,-85) % Plain Solid Line
        \allinethickness{0.254mm}\path(255,-120)(265,-95) % Plain Solid Line
        \allinethickness{0.254mm}\path(365,-120)(365,-65) % Plain Solid Line
        \allinethickness{0.254mm}\path(365,-65)(420,-120) % Plain Solid Line
        \allinethickness{0.254mm}\path(420,-120)(475,-65) % Plain Solid Line
        \allinethickness{0.254mm}\path(475,-65)(475,-120) % Plain Solid Line
        \allinethickness{0.254mm}\path(420,-65)(420,-120) % Plain Solid Line
        \allinethickness{0.254mm}\path(420,-65)(395,-90) % Plain Solid Line
        \allinethickness{0.254mm}\path(390,-95)(365,-120) % Plain Solid Line
        \allinethickness{0.254mm}\path(420,-65)(445,-90) % Plain Solid Line
        \allinethickness{0.254mm}\path(450,-95)(475,-120) % Plain Solid Line
        \allinethickness{0.254mm}\path(365,-120)(395,-105) % Plain Solid Line
        \allinethickness{0.254mm}\path(475,-65)(445,-80) % Plain Solid Line
        \allinethickness{0.254mm}\path(435,-85)(425,-90) % Plain Solid Line
        \allinethickness{0.254mm}\path(415,-95)(405,-100) % Plain Solid Line
        \allinethickness{0.254mm}\path(365,-65)(395,-80) % Plain Solid Line
        \allinethickness{0.254mm}\path(475,-120)(445,-105) % Plain Solid Line
        \allinethickness{0.254mm}\path(435,-100)(425,-95) % Plain Solid Line
        \allinethickness{0.254mm}\path(415,-90)(405,-85) % Plain Solid Line
        \allinethickness{0.254mm}\special{sh 0.3}\put(200,-120){\ellipse{4}{4}} % Shade Dot
        \allinethickness{0.254mm}\special{sh 0.3}\put(160,-65){\ellipse{4}{4}} % Shade Dot
        \allinethickness{0.254mm}\special{sh 0.3}\put(125,-120){\ellipse{4}{4}} % Shade Dot
        \allinethickness{0.254mm}\special{sh 0.3}\put(255,-120){\ellipse{4}{4}} % Shade Dot
        \allinethickness{0.254mm}\special{sh 0.3}\put(225,-65){\ellipse{4}{4}} % Shade Dot
        \allinethickness{0.254mm}\special{sh 0.3}\put(280,-65){\ellipse{4}{4}} % Shade Dot
        \allinethickness{0.254mm}\special{sh 0.3}\put(330,-65){\ellipse{4}{4}} % Shade Dot
        \allinethickness{0.254mm}\special{sh 0.3}\put(305,-120){\ellipse{4}{4}} % Shade Dot
        \allinethickness{0.254mm}\special{sh 0.3}\put(365,-120){\ellipse{4}{4}} % Shade Dot
        \allinethickness{0.254mm}\special{sh 0.3}\put(365,-65){\ellipse{4}{4}} % Shade Dot
        \allinethickness{0.254mm}\special{sh 0.3}\put(420,-65){\ellipse{4}{4}} % Shade Dot
        \allinethickness{0.254mm}\special{sh 0.3}\put(475,-65){\ellipse{4}{4}} % Shade Dot
        \allinethickness{0.254mm}\special{sh 0.3}\put(475,-120){\ellipse{4}{4}} % Shade Dot
        \allinethickness{0.254mm}\special{sh 0.3}\put(420,-120){\ellipse{4}{4}} % Shade Dot
        \put(125,-136){\shortstack{$x_1$}} % Plain Text
        \put(195,-136){\shortstack{$x_2$}} % Plain Text
        \put(155,-61){\shortstack{$x_3$}} % Plain Text
        \put(255,-136){\shortstack{$x_1$}} % Plain Text
        \put(305,-136){\shortstack{$x_2$}} % Plain Text
        \put(220,-61){\shortstack{$x_3$}} % Plain Text
        \put(275,-61){\shortstack{$x_4$}} % Plain Text
        \put(325,-61){\shortstack{$x_5$}} % Plain Text
        \put(360,-136){\shortstack{$x_1$}} % Plain Text
        \put(415,-136){\shortstack{$x_2$}} % Plain Text
        \put(470,-136){\shortstack{$x_3$}} % Plain Text
        \put(365,-61){\shortstack{$x_4$}} % Plain Text
        \put(415,-61){\shortstack{$x_5$}} % Plain Text
        \put(465,-61){\shortstack{$x_6$}} % Plain Text
        \put(160,-166){\shortstack{$C_3$}} % Plain Text
        \put(275,-166){\shortstack{$K_{23}$}} % Plain Text
        \put(415,-166){\shortstack{$K_{33}$}} % Plain Text
         % Set color to black again (default font color)
\end{picture}
\end{center}
\medskip

For the triangle, there are precisely two solutions to (\ref{one}) when we normalize so that $z_1=0, z_2=1$; specifically $z_3 = \frac{1}{2} \pm \ii \frac{\sqrt{3}}{2}$.  Then $p = p_2=p_3 = 3\ga - 2$ is the $\ga$ polynomial and the geometric spectrum is the unique root $\ga = 2/3$.  

For $K_{23}$, we find $p_2 = 1$ with no solution and $p_3 = \ga^2-2\ga + 1 = (\ga -1)^2$ with solution $z_1=0,\, z_3=1$, $z_2=0$, $z_5=\la$ arbitrary and $z_4=[1+\la \pm \sqrt{3}(1-\la )\ii ]/2$.  Then $p = \ga^2-2\ga + 1$ and the geometric spectrum is given by $\Si = \{ 1\}$.  

For $K_{33}$, we find $p_2 = 9\ga^2-26\ga + 17 = (\ga -1)(9\ga - 17)$ and $p_4 = 9\ga^3-35\ga^2+43\ga - 17 = (\ga -1)p_2$, so that the $\ga$-polynomial $p = 9\ga^3-35\ga^2+43\ga - 17$.  Although this has $\ga = 17/9$ as a root, the geometric spectrum $\Si = \{ 1\}$.  In fact for $\ga = 1, z_1=0, z_2 = 1$ we find a two complex parameter family of solutions as in Section \ref{sec:colourings}.  The next example shows that even for simple graphs, the $\ga$-polynomial can be quite complicated.

Consider the graph of constant degree three on six vertices whose edges form two concentric triangles as shown below. 
\medskip
\begin{center}
\setlength{\unitlength}{0.254mm}
\begin{picture}(134,99)(113,-162)
        \allinethickness{0.254mm}\path(145,-145)(180,-95) % Plain Solid Line
        \allinethickness{0.254mm}\path(180,-95)(215,-145) % Plain Solid Line
        \allinethickness{0.254mm}\path(180,-65)(245,-160) % Plain Solid Line
        \allinethickness{0.254mm}\path(245,-160)(115,-160) % Plain Solid Line
        \allinethickness{0.254mm}\path(115,-160)(180,-65) % Plain Solid Line
        \allinethickness{0.254mm}\path(215,-145)(145,-145) % Plain Solid Line
        \allinethickness{0.254mm}\path(180,-95)(180,-65) % Plain Solid Line
        \allinethickness{0.254mm}\path(245,-160)(215,-145) % Plain Solid Line
        \allinethickness{0.254mm}\path(145,-145)(115,-160) % Plain Solid Line
        \allinethickness{0.254mm}\special{sh 0.3}\put(180,-65){\ellipse{4}{4}} % Shade Dot
        \allinethickness{0.254mm}\special{sh 0.3}\put(180,-95){\ellipse{4}{4}} % Shade Dot
        \allinethickness{0.254mm}\special{sh 0.3}\put(245,-160){\ellipse{4}{4}} % Shade Dot
        \allinethickness{0.254mm}\special{sh 0.3}\put(215,-145){\ellipse{4}{4}} % Shade Dot
        \allinethickness{0.254mm}\special{sh 0.3}\put(145,-145){\ellipse{4}{4}} % Shade Dot
        \allinethickness{0.254mm}\special{sh 0.3}\put(115,-160){\ellipse{4}{4}} % Shade Dot
         % Set color to black again (default font color)
\end{picture}
\end{center}
\medskip

The $\ga$-polynomial is given by 
\begin{eqnarray*}
& 5859375\ga^{10}-67656250\ga^9+333521875\ga^8-926025000\ga^7 \\
& \qquad  +1603978830\ga^6 -1808486028\ga^5 +1339655598\ga^4 \\
& \qquad \qquad -639892872\ga^3+186760323\ga^2-29598858\ga+1883007\,.
\end{eqnarray*}
This has eight real roots, four of which are rational: $\ga = 3/5, 21/25, 1, 3$.  The value $3$ lies in the spectrum and corresponds to the obvious colouring of the vertices with two colours, by choosing identical colours for each triangle.  The value $1$ also lies in the spectrum and corresponds to the colouring of each triangle with the three colours $0,1, \frac{1}{2} + \ii \frac{\sqrt{3}}{2}$ corresponding the position function of an equilateral triangle in the plane; we do this so each vertex is joined to precisely one of the same colour.  We do not know which of the other roots lie in the spectrum since the computer program used for this example fails to solve the complete set of equations in a reasonable time.  There remain two conjugate complex roots of the $\ga$-polynomial.  

For the spectral values $\ga = 1$, we can compute the corresponding curvature as given by Definition \ref{def:N-curvature} and Proposition \ref{prop:3-curvature}, to obtain (in radians) $\delta = 2\pi$ at each vertex, to give a total curvature of $12\pi$.

 As we calculated above, the bipartite graph $K_{33}$ has $\ga$-polynomial given by
$$
9\ga^3-35\ga^2+43\ga-17 = (\ga - 1)^2(9\ga - 17)\,,
$$
In particular, these two graphs of constant degree three on six vertices cannot be isomorphic.

\section{An elementary universe}  \label{sec:particles}  Our objective is to construct an elementary universe populated entirely by graphs from which geometry and dynamics emerge.  The universe is based on a binary relation between objects:  are they connected by an edge or not?  It is only this relation that matters; the nature of the objects being irrelevant.  Our perspective is that, out of the graphs that are so formed, further implicit structure is present, given by the geometric spectrum and the corresponding fields.  This implicit structure comes into play when graphs correlate.  Thus we describe ways in which graphs can interact and so dynamics, that is \emph{change}, appears.  With a suitable definition of time, this change can be ordered to give a universe endowed with local geometry and time.  

The initial data for our universe is a graph $\Ga$ made up of a finite number of connected components $\Ga_1, \Ga_2, \ldots , \Ga_K$.  In addition to the binary relation between vertices (whether or not they are connected by an edge), there is an additional relation between them: whether or not they belong to the same connected component of $\Ga$.  A priori there is no reason to give greater emphasis to the property that two vertices be connected by an edge, rather than that they are not so connected.  The representation by drawing an edge just gives a convenient way to visualize the relation.  

A \emph{particle} is a connected component $\Ga_k$ of $\Ga$.  A \emph{state of the particle} is an equivalence class of solutions $(\phi , \ga )$ to equation (\ref{one}) on $\Ga_k$, where $\phi \sim \la \phi + \mu$ for $\la , \mu \in \CC$ and where we require that $\ga \leq 1$ at each vertex.  A member of an equivalence class will be called a \emph{representative state} and we shall write $[\phi ]$ for the equivalence class determined by the state $\phi$.  A state for which $\ga$ is constant will be called an \emph{isostate}.  We allow point particles, consisting of a single vertex with all elements of $\CC$ as representative states.  

A particle is not deemed to be in any state, but carries with it, its ensemble of states.  By analogy with quantum mechanics, when two particles correlate, each falls into a particular state; that is, states are chosen with a certain probability.  Isostates are favoured for empirical reasons relating to energy that we discuss at the end of this section.  After correlation, further states may be present in the combined graph, permitting new correlations with other graphs that were not possible prior to correlation.  An \emph{evolution} of the universe is a sequence $\Ga \mapsto \Ga^{'}  \mapsto \Ga^{''} \mapsto \cdots$ of graphs, whereby a subsequent graph is obtained from the previous one by specific rules to be defined.  The order of the sequence is dictated by the rules.  There are three \emph{changes} in our universe that we now specify:  correlation between particles; internal mutation of a particle; separation of a particle into two or more particles.

\medskip

\noindent (i)  \emph{Correlation.}  A \emph{correlation} between two particles $\Ga_1 = (V_1, E_1)$ and $\Ga_1 = (V_2, E_2)$ is a new particle $\Ga_1\star \Ga_2 = (V_1\cup V_2, E)$, where we require $E_1\cup E_2 \subset E$, i.e. vertices are preserved and the edge set is increased.  For a correlation to occur, we require $\Ga_1$ be in a state $[\phi_1]$, $\Ga_2$ be in a state $[\phi_2]$ and $\Ga$ be in a state $[\phi ]$ such that $\phi\vert_{V_1} \in [\phi_1]$ and $\phi\vert_{V_1}\in [\phi_2]$.  After correlation, the new particle carries its ensemble of states and is not considered to be in any particular state.  An example of correlation is shown in the figure below.

\medskip
\begin{center}
\setlength{\unitlength}{0.254mm}
\begin{picture}(342,212)(120,-256)
        \allinethickness{0.254mm}\path(160,-60)(135,-105) % Plain Solid Line
        \allinethickness{0.254mm}\path(135,-105)(160,-145) % Plain Solid Line
        \allinethickness{0.254mm}\path(160,-145)(185,-105) % Plain Solid Line
        \allinethickness{0.254mm}\path(185,-105)(160,-60) % Plain Solid Line
        \allinethickness{0.254mm}\path(185,-105)(135,-105) % Plain Solid Line
        \allinethickness{0.254mm}\path(320,-60)(295,-105) % Plain Solid Line
        \allinethickness{0.254mm}\path(320,-60)(345,-105) % Plain Solid Line
        \allinethickness{0.254mm}\path(345,-105)(320,-145) % Plain Solid Line
        \allinethickness{0.254mm}\path(320,-145)(295,-105) % Plain Solid Line
        \allinethickness{0.254mm}\path(295,-105)(345,-105) % Plain Solid Line
        \allinethickness{0.254mm}\path(240,-200)(215,-155) % Plain Solid Line
        \allinethickness{0.254mm}\path(215,-155)(190,-200) % Plain Solid Line
        \allinethickness{0.254mm}\path(190,-200)(215,-240) % Plain Solid Line
        \allinethickness{0.254mm}\path(215,-240)(240,-200) % Plain Solid Line
        \allinethickness{0.254mm}\path(245,-200)(270,-155) % Plain Solid Line
        \allinethickness{0.254mm}\path(270,-155)(295,-200) % Plain Solid Line
        \allinethickness{0.254mm}\path(295,-200)(270,-240) % Plain Solid Line
        \allinethickness{0.254mm}\path(245,-200)(270,-240) % Plain Solid Line
        \allinethickness{0.254mm}\path(215,-155)(270,-155) % Plain Solid Line
        \allinethickness{0.254mm}\path(270,-240)(215,-240) % Plain Solid Line
        \allinethickness{0.254mm}\path(240,-200)(190,-200) % Plain Solid Line
        \allinethickness{0.254mm}\path(245,-200)(295,-200) % Plain Solid Line
        \allinethickness{0.254mm}\special{sh 0.3}\put(160,-60){\ellipse{4}{4}} % Shade Dot
        \allinethickness{0.254mm}\special{sh 0.3}\put(135,-105){\ellipse{4}{4}} % Shade Dot
        \allinethickness{0.254mm}\special{sh 0.3}\put(185,-105){\ellipse{4}{4}} % Shade Dot
        \allinethickness{0.254mm}\special{sh 0.3}\put(160,-145){\ellipse{4}{4}} % Shade Dot
        \allinethickness{0.254mm}\special{sh 0.3}\put(320,-60){\ellipse{4}{4}} % Shade Dot
        \allinethickness{0.254mm}\special{sh 0.3}\put(295,-105){\ellipse{4}{4}} % Shade Dot
        \allinethickness{0.254mm}\special{sh 0.3}\put(345,-105){\ellipse{4}{4}} % Shade Dot
        \allinethickness{0.254mm}\special{sh 0.3}\put(320,-145){\ellipse{4}{4}} % Shade Dot
        \allinethickness{0.254mm}\special{sh 0.3}\put(270,-155){\ellipse{4}{4}} % Shade Dot
        \allinethickness{0.254mm}\special{sh 0.3}\put(215,-155){\ellipse{4}{4}} % Shade Dot
        \allinethickness{0.254mm}\special{sh 0.3}\put(190,-200){\ellipse{4}{4}} % Shade Dot
        \allinethickness{0.254mm}\special{sh 0.3}\put(215,-240){\ellipse{4}{4}} % Shade Dot
        \allinethickness{0.254mm}\special{sh 0.3}\put(240,-200){\ellipse{4}{4}} % Shade Dot
        \allinethickness{0.254mm}\special{sh 0.3}\put(245,-200){\ellipse{4}{4}} % Shade Dot
        \allinethickness{0.254mm}\special{sh 0.3}\put(295,-200){\ellipse{4}{4}} % Shade Dot
        \allinethickness{0.254mm}\special{sh 0.3}\put(270,-240){\ellipse{4}{4}} % Shade Dot
        \put(235,-111){\shortstack{$\star$}} % Plain Text
        \put(165,-56){\shortstack{$\sqrt{3}i$}} % Plain Text
        \put(325,-56){\shortstack{$\sqrt{3}i$}} % Plain Text
        \put(200,-151){\shortstack{$\sqrt{3}i$}} % Plain Text
        \put(250,-151){\shortstack{$2+\sqrt{3}i$}} % Plain Text
        \put(110,-111){\shortstack{$-1$}} % Plain Text
        \put(190,-111){\shortstack{$1$}} % Plain Text
        \put(270,-111){\shortstack{$-1$}} % Plain Text
        \put(350,-111){\shortstack{$1$}} % Plain Text
        \put(150,-155){\shortstack{$-\sqrt{3}i$}} % Plain Text
        \put(310,-155){\shortstack{$-\sqrt{3}i$}} % Plain Text
        \put(240,-196){\shortstack{$1$}} % Plain Text
        \put(110,-206){\shortstack{$=$}}
        \put(165,-206){\shortstack{$-1$}} % Plain Text
        \put(300,-206){\shortstack{$3$}} % Plain Text
        \put(200,-256){\shortstack{$-\sqrt{3}i$}} % Plain Text
        \put(255,-256){\shortstack{$2-\sqrt{3}i$}} % Plain Text
        \put(385,-66){\shortstack{$\gamma =2/3$}} % Plain Text
        \put(390,-111){\shortstack{$\gamma =0$}} % Plain Text
        \put(385,-151){\shortstack{$\gamma =2/3$}} % Plain Text
        \put(390,-216){\shortstack{$\gamma =0$}} % Plain Text
         % Set color to black again (default font color)
\end{picture}

\small{Fig. 1.  An example of correlation: the two particles at the top correlate to form a single particle in an isostate.}  
\end{center}
\medskip

We have drawn the resulting particle as an invariant framework in the plane, so the two central vertices appear superimposed at the point $1$; but in fact they are distinct unconnected vertices.
This correlation is to be favoured, since from two non-isostates, the outcome is an isostate.

\medskip

\noindent \emph{Separation.}  A \emph{separation} of a particle is a dissociation of $\Ga$ into two particles $\Ga_1=(V_1,E_1)$, $\Ga_2=(V_2,E_2)$ with $\Ga$ a correlation of $\Ga_1$ and $\Ga_2$.  For separation to occur, we require $\Ga$ be in a state $[\phi ]$, $\Ga_1$ be in a state $[\phi_1]$ and $\Ga_2$ be in a state $[\phi_2]$ with $\phi\vert_{V_1}\in [\phi_1]$ and $\phi\vert_{V_2}\in [\phi_2]$.  An example of separation is given by the last example of Section \ref{sec:spectrum}, when the two concentric triangles connected by edges as indicated, falls into the state corresponding to the spectral value $\ga = 1$.  The edges joining vertices on which the field has a common value are removed to give two disjoint triangles.

\medskip
\begin{center}
\setlength{\unitlength}{0.254mm}
\begin{picture}(410,147)(70,-221)
        \allinethickness{0.254mm}\path(160,-85)(80,-205) % Plain Solid Line
        \allinethickness{0.254mm}\path(80,-205)(240,-205) % Plain Solid Line
        \allinethickness{0.254mm}\path(240,-205)(160,-85) % Plain Solid Line
        \allinethickness{0.254mm}\path(160,-130)(125,-185) % Plain Solid Line
        \allinethickness{0.254mm}\path(125,-185)(195,-185) % Plain Solid Line
        \allinethickness{0.254mm}\path(195,-185)(160,-130) % Plain Solid Line
        \allinethickness{0.254mm}\path(160,-85)(160,-130) % Plain Solid Line
        \allinethickness{0.254mm}\path(195,-185)(240,-205) % Plain Solid Line
        \allinethickness{0.254mm}\path(125,-185)(80,-205) % Plain Solid Line
        \allinethickness{0.254mm}\path(400,-85)(320,-205) % Plain Solid Line
        \allinethickness{0.254mm}\path(320,-205)(480,-205) % Plain Solid Line
        \allinethickness{0.254mm}\path(480,-205)(400,-85) % Plain Solid Line
        \allinethickness{0.254mm}\path(400,-130)(365,-185) % Plain Solid Line
        \allinethickness{0.254mm}\path(365,-185)(435,-185) % Plain Solid Line
        \allinethickness{0.254mm}\path(435,-185)(400,-130) % Plain Solid Line
        \allinethickness{0.254mm}\path(250,-140)(310,-140)\special{sh 1}\path(310,-140)(304,-138)(304,-140)(304,-142)(310,-140) % Plain Solid Arrow
        \put(165,-131){\shortstack{$0$}} % Plain Text
        \put(165,-86){\shortstack{$0$}} % Plain Text
        \put(115,-186){\shortstack{$1$}} % Plain Text
        \put(70,-216){\shortstack{$1$}} % Plain Text
        \put(150,-198){\shortstack{$\frac{1}{2}+\frac{\sqrt{3}}{2}i$}} % Plain Text
        \put(210,-221){\shortstack{$\frac{1}{2}+\frac{\sqrt{3}}{2}i$}} % Plain Text
         % Set color to black again (default font color)
\end{picture}

\small{Fig.2.  An example of separation}
\end{center}
\medskip

\noindent \emph{Mutation.}  A \emph{mutation} of a particle $\Ga$ is a change $\Ga = (V,E) \ra \Si = (W,F)$, where $\Si$ is a new particle with $V=W$.  For mutation to occur, we require $\Ga$ be in a state $[\phi ]$, $\Si$ be in a state $[\psi ]$ with $\phi \in [\psi ]$.  Collapsing is a particular example of mutation provided it does not disconnect the particle, whereby we remove edges that connect vertices on which $\phi$ takes on the same value.  If collapsing disconnects the particle, it falls into the category of separation.  An example of mutation is illustrated in the following figure. 

\medskip
\begin{center}
\setlength{\unitlength}{0.254mm}
\begin{picture}(351,152)(50,-236)
        \allinethickness{0.254mm}\path(120,-100)(80,-160) % Plain Solid Line
        \allinethickness{0.254mm}\path(80,-160)(120,-220) % Plain Solid Line
        \allinethickness{0.254mm}\path(120,-220)(160,-160) % Plain Solid Line
        \allinethickness{0.254mm}\path(120,-100)(160,-160) % Plain Solid Line
        \allinethickness{0.254mm}\path(240,-160)(280,-100) % Plain Solid Line
        \allinethickness{0.254mm}\path(280,-100)(320,-160) % Plain Solid Line
        \allinethickness{0.254mm}\path(320,-160)(280,-220) % Plain Solid Line
        \allinethickness{0.254mm}\path(280,-220)(240,-160) % Plain Solid Line
        \allinethickness{0.254mm}\path(240,-160)(320,-160) % Plain Solid Line
        \allinethickness{0.254mm}\special{sh 0.3}\put(120,-100){\ellipse{4}{4}} % Shade Dot
        \allinethickness{0.254mm}\special{sh 0.3}\put(80,-160){\ellipse{4}{4}} % Shade Dot
        \allinethickness{0.254mm}\special{sh 0.3}\put(160,-160){\ellipse{4}{4}} % Shade Dot
        \allinethickness{0.254mm}\special{sh 0.3}\put(120,-220){\ellipse{4}{4}} % Shade Dot
        \allinethickness{0.254mm}\special{sh 0.3}\put(280,-100){\ellipse{4}{4}} % Shade Dot
        \allinethickness{0.254mm}\special{sh 0.3}\put(240,-160){\ellipse{4}{4}} % Shade Dot
        \allinethickness{0.254mm}\special{sh 0.3}\put(320,-160){\ellipse{4}{4}} % Shade Dot
        \allinethickness{0.254mm}\special{sh 0.3}\put(280,-220){\ellipse{4}{4}} % Shade Dot
        \allinethickness{0.254mm}\path(185,-160)(215,-160)\special{sh 1}\path(215,-160)(209,-158)(209,-160)(209,-162)(215,-160) % Plain Solid Arrow
        \put(115,-96){\shortstack{$1$}} % Plain Text
        \put(115,-236){\shortstack{$0$}} % Plain Text
        \put(30,-150){\shortstack{$\frac{1}{2}+\frac{\sqrt{3}}{2}i$}} % Plain Text
        \put(160,-150){\shortstack{$\frac{1}{2}+\frac{\sqrt{3}}{2}i$}} % Plain Text
        \put(140,-106){\shortstack{$\gamma =1$}} % Plain Text
        \put(135,-226){\shortstack{$\gamma =1$}} % Plain Text
        \put(160,-176){\shortstack{$\gamma = 2/3$}} % Plain Text
        \put(335,-166){\shortstack{$\gamma =$1}} % Plain Text
         % Set color to black again (default font color)
\end{picture}

\small{Fig. 3. The particle on the left mutates into an isostate by the addition of an edge.} 
\end{center}
\medskip

Another illustration is given by the particle of Figure 1, which, after mutation, produces the $1$-skeleton of a cube.   

\medskip
\begin{center}
\setlength{\unitlength}{0.254mm}
\begin{picture}(365,95)(90,-175)
        \allinethickness{0.254mm}\path(120,-80)(90,-130) % Plain Solid Line
        \allinethickness{0.254mm}\path(90,-130)(120,-175) % Plain Solid Line
        \allinethickness{0.254mm}\path(120,-175)(150,-130) % Plain Solid Line
        \allinethickness{0.254mm}\path(150,-130)(120,-80) % Plain Solid Line
        \allinethickness{0.254mm}\path(155,-130)(185,-80) % Plain Solid Line
        \allinethickness{0.254mm}\path(185,-175)(155,-130) % Plain Solid Line
        \allinethickness{0.254mm}\path(150,-130)(90,-130) % Plain Solid Line
        \allinethickness{0.254mm}\path(185,-80)(215,-130) % Plain Solid Line
        \allinethickness{0.254mm}\path(215,-130)(185,-175) % Plain Solid Line
        \allinethickness{0.254mm}\path(215,-130)(155,-130) % Plain Solid Line
        \allinethickness{0.254mm}\path(120,-80)(185,-80) % Plain Solid Line
        \allinethickness{0.254mm}\path(185,-175)(120,-175) % Plain Solid Line
        \allinethickness{0.254mm}\path(360,-80)(325,-130) % Plain Solid Line
        \allinethickness{0.254mm}\path(325,-130)(360,-175) % Plain Solid Line
        \allinethickness{0.254mm}\path(360,-80)(425,-80) % Plain Solid Line
        \allinethickness{0.254mm}\path(360,-175)(425,-175) % Plain Solid Line
        \allinethickness{0.254mm}\path(425,-175)(455,-130) % Plain Solid Line
        \allinethickness{0.254mm}\path(455,-130)(425,-80) % Plain Solid Line
        \allinethickness{0.254mm}\path(390,-130)(325,-130) % Plain Solid Line
        \allinethickness{0.254mm}\path(390,-130)(425,-80) % Plain Solid Line
        \allinethickness{0.254mm}\path(390,-130)(425,-175) % Plain Solid Line
        \allinethickness{0.254mm}\path(395,-130)(455,-130) % Plain Solid Line
        \allinethickness{0.254mm}\path(360,-175)(385,-140) % Plain Solid Line
        \allinethickness{0.254mm}\path(360,-80)(385,-120) % Plain Solid Line
        \allinethickness{0.254mm}\path(255,-130)(290,-130)\special{sh 1}\path(290,-130)(284,-128)(284,-130)(284,-132)(290,-130) % Plain Solid Arrow
         % Set color to black again (default font color)
\end{picture}

\small{Fig. 4. Mutation occurs when two edges ``flip" to connect the middle outer vertices to different central vertices.} 
\end{center}
\medskip

This change of state embeds the particle in Euclidean space as a $3$-dimensional object.  The particles before and after mutation both correspond to holomorphic states and so we view this mutation as neutral.

\medskip

\begin{example} \label{ex:evolution} {\rm An evolution of a simple universe is as follows. First take the universe $\Ga$ consisting of two copies of the left-hand particle of Figure 3.  These then mutate to form $\Ga^{'}$ consisting of two copies of the right-hand particle of Figure 3.  A correlation then occurs as in Figure 1 to form $\Ga^{''}$.  Finally a mutation takes place as in Figure 4 to form $\Ga^{'''}$.  We view this evolution as irreversible, in the sense that if we begin with the $1$-skeleton of the cube, the probability that it fall into a state $\phi$ which has identical values on two diagonally opposite vertices to enable the reciprocal mutation of Figure 4, would be negligible. However, this is speculative, since we have not given a rule for deciding the probability of transition.}
\end{example}  

A desirable objective would be to produce a complex universe from a simple initial state.  One way to accomplish this is to suppose the existence of virtual point particles that are susceptible to correlate with existing particles.  We now explore this possibility in more detail.  

Consider a graph $\Ga = (V,E)$ together with a function $\phi : V \ra \CC$ not necessarily a solution to (\ref{one}).  For ease of representation, suppose that each vertex $x\in V$ be placed at its corresponding position $\phi (x)$ in the complex plane.  We fix our attention on a particular vertex, $x_0$ say, which by translation, we suppose placed at the origin.  Suppose $x_0$ has $k$ neighbours placed at $z_1, \ldots , z_k$.
We now wish to add a new vertex placed at $w$ and to join it to $x_0$ in such a way as to satisfy (\ref{one}) at $x_0$.  Specifically, we wish to consider the locus of points $w$ which can be placed in this way.  Since the mapping
\begin{equation} \label{locus}
w\mapsto \frac{z_1{}^2+ \cdots + z_k{}^2 + w^2}{(z_1+ \cdots + z_k+w)^2}\,,
\end{equation}
is in general holomorphic in $w$, we expect a $1$-parameter family of values of $w$ for which the right-hand side is real.      

\begin{example} {\rm Consider the graph on three vertices as indicated below.  It may be that the extremal vertices placed at $re^{\ii\theta}$ and $1$ are joined to other vertices, but for the moment we are just interested in satisfying (\ref{one}) at the origin.
\medskip
\begin{center}
\setlength{\unitlength}{0.254mm}
\begin{picture}(208,68)(103,-147)
        \allinethickness{0.254mm}\path(155,-145)(225,-145) % Plain Solid Line
        \allinethickness{0.254mm}\path(155,-145)(240,-90) % Plain Solid Line
        \allinethickness{0.254mm}\dashline{5}[5](155,-145)(105,-100) % Plain Dashed Line
        \allinethickness{0.254mm}\special{sh 0.3}\put(105,-100){\ellipse{4}{4}} % Shade Dot
        \allinethickness{0.254mm}\special{sh 0.3}\put(240,-90){\ellipse{4}{4}} % Shade Dot
        \allinethickness{0.254mm}\special{sh 0.3}\put(225,-145){\ellipse{4}{4}} % Shade Dot
        \allinethickness{0.254mm}\special{sh 0.3}\put(155,-145){\ellipse{4}{4}} % Shade Dot
        \put(245,-91){\shortstack{$re^{i\theta}$}} % Plain Text
        \put(230,-146){\shortstack{$1$}} % Plain Text
        \put(115,-101){\shortstack{$w$}} % Plain Text
        \put(150,-136){\shortstack{$0$}} % Plain Text
         % Set color to black again (default font color)
\end{picture}
\end{center}
\medskip
If we set $w = u +\ii v$, then it is a routine computation to show that the identity $\ga (1+re^{\ii\theta} + w)^2= 1+r^2e^{2\ii\theta} + w^2$ has $\ga$ real if and only if the following algebraic equation of degree three in $u$ and $v$ is satisfied:
\begin{equation} \label{algebraic}
\begin{array}{l} (ur\sin \theta - v(1+r\cos \theta ))(u^2+v^2) + r\sin \theta (u^2-v^2) - 2uvr\cos \theta  \\
 + r\sin \theta (1-r^2-2r\cos \theta )u + (1+r\cos \theta + r^3\cos \theta + r^2 \cos 2\theta )v \\
  \qquad \qquad + r(1-r^2)\sin \theta = 0\,. \end{array}
\end{equation} 
There are two cases when the solution set can be explicitly written down:

\noindent (i) $\theta = \pi /2$.  Equation (\ref{algebraic}) now becomes:
\begin{equation} \label{implicit-degree3}
(ru-v)(u^2+v^2)+r(u^2-v^2)+r(1-r^2)u + (1-r^2)v + r(1-r^2) = 0\,.
\end{equation}
For each $r \neq 0, 1$, this is a smooth curve except at the singular point $(u,v) = (-1,-r)$.  For the case $r=2$, the curve is as indicated in the graph below.  When $r=1$ (so the original graph is holomorphic at the origin), this gives the algebraic set:
$$
(u-v)(u^2+v^2+u+v)=0\,,
$$
consisting of the union of the line $u=v$ and the circle $(u+\frac{1}{2})^2+(v+\frac{1}{2})^2 = \frac{1}{2}$\,.

\begin{figure}
\begin{center}
\includegraphics[width=10cm, height=10cm]{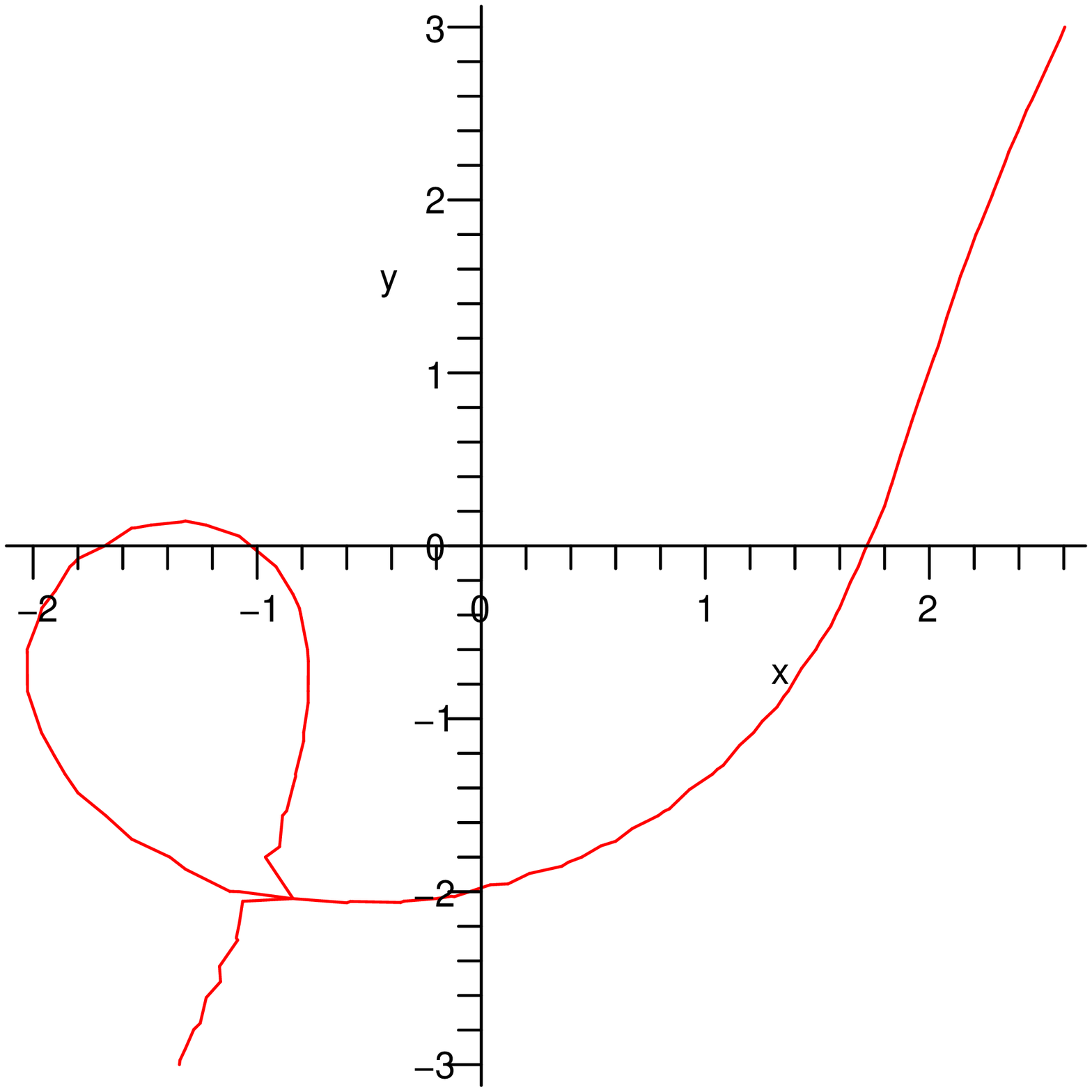}
\end{center}
\end{figure}
}
\end{example}

A variant on the above procedure is to consider two particles $\Ga$ and $\Si$ and to add a new vertex $x$ which correlates with both particles, joining it to $x_0$ in $\Ga$ and $y_0$ in $\Si$, say.  In order to correlate, suppose $\Ga$ falls into state $[\phi ]$ and $\Si$ falls into state $[\psi ]$.  Now we require that (\ref{one}) be satisfied at the new vertex $x$.  From Section \ref{sec:cyclic}, this is the case if and only if $|\phi (x)-\phi (x_0)|=|\psi (y)-\psi (y_0)|$, for representative states.  This can be further generalized by creating a new vertex $x$ and attempting to join several vertices to $x$.  Such correlations can lead to discrete phenomena when we combine the various constraints.   That is, the various loci determined by the real solutions to (\ref{locus}) will in general intersect in a discrete set of points.

The possibility that point particles may attach themselves to existing (more complex) particles, can lead to duplication and eventually a complex universe.  The following sequence of correlation, mutation and separation gives an example of duplication.

\medskip
\begin{center}
\setlength{\unitlength}{0.254mm}
\begin{picture}(444,59)(78,-117)
        \allinethickness{0.254mm}\path(120,-60)(80,-115) % Plain Solid Line
        \allinethickness{0.254mm}\path(80,-115)(160,-115) % Plain Solid Line
        \allinethickness{0.254mm}\path(160,-115)(120,-60) % Plain Solid Line
        \allinethickness{0.254mm}\path(240,-60)(200,-115) % Plain Solid Line
        \allinethickness{0.254mm}\path(200,-115)(280,-115) % Plain Solid Line
        \allinethickness{0.254mm}\path(280,-115)(240,-60) % Plain Solid Line
        \allinethickness{0.254mm}\path(240,-60)(240,-80) % Plain Solid Line
        \allinethickness{0.254mm}\path(200,-115)(220,-105) % Plain Solid Line
        \allinethickness{0.254mm}\path(280,-115)(260,-105) % Plain Solid Line
        \allinethickness{0.254mm}\path(360,-60)(320,-115) % Plain Solid Line
        \allinethickness{0.254mm}\path(320,-115)(400,-115) % Plain Solid Line
        \allinethickness{0.254mm}\path(400,-115)(360,-60) % Plain Solid Line
        \allinethickness{0.254mm}\path(360,-60)(360,-80) % Plain Solid Line
        \allinethickness{0.254mm}\path(320,-115)(340,-105) % Plain Solid Line
        \allinethickness{0.254mm}\path(400,-115)(380,-105) % Plain Solid Line
        \allinethickness{0.254mm}\path(380,-105)(360,-80) % Plain Solid Line
        \allinethickness{0.254mm}\path(360,-80)(340,-105) % Plain Solid Line
        \allinethickness{0.254mm}\path(340,-105)(380,-105) % Plain Solid Line
        \allinethickness{0.254mm}\path(480,-60)(440,-115) % Plain Solid Line
        \allinethickness{0.254mm}\path(440,-115)(520,-115) % Plain Solid Line
        \allinethickness{0.254mm}\path(520,-115)(480,-60) % Plain Solid Line
        \allinethickness{0.254mm}\path(480,-80)(460,-105) % Plain Solid Line
        \allinethickness{0.254mm}\path(460,-105)(500,-105) % Plain Solid Line
        \allinethickness{0.254mm}\path(500,-105)(480,-80) % Plain Solid Line
        \allinethickness{0.254mm}\special{sh 0.3}\put(120,-60){\ellipse{4}{4}} % Shade Dot
        \allinethickness{0.254mm}\special{sh 0.3}\put(80,-115){\ellipse{4}{4}} % Shade Dot
        \allinethickness{0.254mm}\special{sh 0.3}\put(160,-115){\ellipse{4}{4}} % Shade Dot
        \allinethickness{0.254mm}\special{sh 0.3}\put(240,-60){\ellipse{4}{4}} % Shade Dot
        \allinethickness{0.254mm}\special{sh 0.3}\put(240,-80){\ellipse{4}{4}} % Shade Dot
        \allinethickness{0.254mm}\special{sh 0.3}\put(220,-105){\ellipse{4}{4}} % Shade Dot
        \allinethickness{0.254mm}\special{sh 0.3}\put(200,-115){\ellipse{4}{4}} % Shade Dot
        \allinethickness{0.254mm}\special{sh 0.3}\put(260,-105){\ellipse{4}{4}} % Shade Dot
        \allinethickness{0.254mm}\special{sh 0.3}\put(280,-115){\ellipse{4}{4}} % Shade Dot
        \allinethickness{0.254mm}\special{sh 0.3}\put(360,-60){\ellipse{4}{4}} % Shade Dot
        \allinethickness{0.254mm}\special{sh 0.3}\put(360,-80){\ellipse{4}{4}} % Shade Dot
        \allinethickness{0.254mm}\special{sh 0.3}\put(380,-105){\ellipse{4}{4}} % Shade Dot
        \allinethickness{0.254mm}\special{sh 0.3}\put(340,-105){\ellipse{4}{4}} % Shade Dot
        \allinethickness{0.254mm}\special{sh 0.3}\put(320,-115){\ellipse{4}{4}} % Shade Dot
        \allinethickness{0.254mm}\special{sh 0.3}\put(400,-115){\ellipse{4}{4}} % Shade Dot
        \allinethickness{0.254mm}\special{sh 0.3}\put(480,-60){\ellipse{4}{4}} % Shade Dot
        \allinethickness{0.254mm}\special{sh 0.3}\put(480,-80){\ellipse{4}{4}} % Shade Dot
        \allinethickness{0.254mm}\special{sh 0.3}\put(460,-105){\ellipse{4}{4}} % Shade Dot
        \allinethickness{0.254mm}\special{sh 0.3}\put(500,-105){\ellipse{4}{4}} % Shade Dot
        \allinethickness{0.254mm}\special{sh 0.3}\put(520,-115){\ellipse{4}{4}} % Shade Dot
        \allinethickness{0.254mm}\special{sh 0.3}\put(440,-115){\ellipse{4}{4}} % Shade Dot
         % Set color to black again (default font color)
\end{picture}
\end{center}
\medskip 
We begin with a triangle on the left-hand side.  Three isolated vertices then attach themselves in a symmetric way.  This must be done to preserve the property that equation (\ref{one}) remain satisfied.  Symmetry is prefered since this leads to an isostate.  In fact, if the vertices of the left-hand triangle have representative field values $0, 1, \frac{1}{2} + \frac{\sqrt{3}}{2}\ii$, then the point particle connected to $0$ should have representative field value $\frac{3}{2} + \frac{\sqrt{3}}{2}\ii$, with the other point particles similarly assigned values to give an isostate with $\ga = 1$ (so the figure is misleading if we view the field as the position function, but avoids crossing edges).  A mutation now occurs whereby the new vertices are joined by edges in the way shown.  This produces an non-isostate particle with $\ga = 7/9$ at the new vertices (with $\ga$ still equal to $1$ at the original vertices), which then falls into an isostate given by the example on the left-hand side of Figure 2, with spectral value $\ga = 1$, once more.  Finally, separation occurs as in Figure 2.  

Further examples of how particles may correlate are given in Appendix \ref{sec:invariant-structures}.  This principally concerns unstable double cones which correlate to form a rich array of stable geometric structures. 

We now wish to discuss how we may associate an energy to a state and the role of isostates.  A natural quantity that occurs, which we may call \emph{energy}, is given in the notation of Section \ref{sec:distance} by $||Z||^2 = {\rm trace}\, Z^tZ$ (see equation (\ref{ZtZ})).  

Recall that at each vertex, the vector $Z$ picks out the configured star that projects to a state normalized to be zero at the vertex in question; furthermore, it has the minimum Frobenius norm amongst solutions to the system (\ref{system-2}).  When $N=3$, it is precisely this solution that satisfies the constraint (\ref{system-2-constraint}).  In dimension $N>3$, the solution $X=Z+Y$ which satisfies the constraint is no longer in general minimizing, whereas $Z$ is still minimizing amongst solutions to (\ref{system-2}).  In dimension $N=2$, at a vertex of degree $2$ as discussed prior to Theorem \ref{thm:lift}, the solution to the lifting problem is completely determined.  If we are to regard this energy as a criterion for stability in our universe, then this may provide reasons why dimension three should emerge as a favoured dimension.  In defining energy, we also need to take into account the normalizing freedom.

\begin{definition}  \label{def:energy} Let $\Ga = (V,E)$ be a connected graph endowed with a non-constant solution $\phi$ to {\rm (\ref{one})} with $\ga (x)\leq 1$ for all $x\in V$ (a state).  Consider a particular vertex $x\in V$ and let $y_1, \ldots , y_n$ be the neighbours of $x$.  Suppose $\phi$ is not constant on the star with internal vertex $x$.  Set $z_{\ell} = \phi (y_{\ell})-\phi (x)$ for $\ell = 1, \ldots n$, so that {\rm (\ref{gz})} is satisfied for some $\ga$.  Then for $N\geq 3$, provided $\ga <1$, we define the \emph{energy of $\phi$ at $x$} to be the quantity:
$$
\Ee (\phi ,x):= \frac{n||Z||^2}{\sum_{\ell}|z_{\ell}|^2} = \frac{1}{2(1-\ga )}\left\{ n - \ga (3-2\ga )\frac{|\sum_{\ell}z_{\ell}|^2}{\sum_{\ell}|z_{\ell}|^2}\right\}\,;
$$
for $n=N=2$, we define the energy by 
$$
\Ee (\phi , x) = 1+\cos \theta\,,
$$
where $\theta$ is the exterior angle at $x$.
If $\phi$ is constant on the star with internal vertex $x$, we take the energy to be zero.  The \emph{total energy} is defined to be the sum over the vertices of the energies at each vertex: $\Ee (\phi ) = \sum_{x\in V}\Ee (\phi ,x)$.
\end{definition}     
There are various ways to normalize; we have chosen to divide by the average length of the complex numbers $z_{\ell}$.  This is most convenient when $n=N=2$, when the energy has the above concise expression.  For $N\geq 3$, the formula in the definition is deduced from (\ref{ZtZ}):
$$
||Z||^2 = {\rm trace}\,Z^tZ = \rho + \si (1-u_1{}^2 - u_2{}^2)\,,
$$
and the expressions for $u_1$ and $u_2$ given by (\ref{identity-1}), where we recall that $\si = \ga \rho / (1-\ga )$.  For $n=2$, we have $u_1{}^2+u_2{}^2 = 1$, so that it is reasonable to replace $||Z||^2$ above by $\rho$.  From Section \ref{sec:cyclic}, specifically equation (\ref{spec-value}), when $n=2$, any solution to (\ref{one}) with $\ga <1$, necessarily corresponds to a configuration in the plane in which the two edges connecting the neighbours of $x$ have the same length, $r$ say.  But now if we refer to the figure above equation(\ref{spec-value}), then
$$
\Ee (\phi ,x) = \frac{\rho}{r}  =  1 - \frac{\cos \theta}{2(\cos \theta - 1)}|1-e^{-\ii\theta}|^2 
  =  1 + \cos \theta\,.
$$
In this case, we can allow states for which $\ga = 1$ which correspond to $\theta = \pm \pi$; these are characterized as having zero energy.  It is also important to note that the energy doesn't depend on the sign of the exterior angle.  We now wish to look more closely at the case when $n=N=2$ in order to understand the importance of isostates.

Consider a framework in the plane whose underlying graph is cyclic with $K$ edges.  Suppose the length of each edge is identical, so that the position function $\phi$ of the framework defines a solution to (\ref{one}).  We are interested in critical configurations for the energy:
$$
\Ee (\phi ) = K + \sum_{j = 1}^K\cos \theta_j\,,
$$
where $\theta_j$ is the exterior angle at vertex $x_j$.  In Appendix \ref{sec:trig}, we show that the regular configurations, that is the isostates, are critical.

Up to normalization, a three sided figure is completely determined and corresponds to an isostate with $\ga = 2/3$.  Four sided figures are determined up to two branches by one of the exterior angles $\theta$.  The two branches occur depending upon whether we choose $+\theta$ or $-\theta$ for the opposite exterior angle.  In the former case, the total energy is $4$, whatever the exterior angle $\theta \in (0, \pi )$; in the latter case it is $2+2\cos \theta$.  The two branches coalesce when $\theta = 0$.  The absolute minimum $\Ee =0$ occurs when the four-sided figure is completely folded up, that is when all exterior angles are $\pm \pi$.  This corresponds to an isostate with $\ga = 1$.  

As $K$ increases, the situation becomes more complicated.  Up to normalization, the configuration space of the framework is parametrized by $K-3$ exterior angles, however there are various branches that can occur which become more numerous as $K$ gets larger.   When $K$ is even, the absolute minimum of $\Ee$ is again zero and occurs when the figure folds up so all edges are superimposed and the exterior angles are all $\pm \pi$.  When $K$ is odd, the framework can no longer fold up in this way, but as we discuss in Appendix \ref{sec:trig}, the evidence suggests that the absolute minimum of $\Ee$ occurs when the figure folds up as best it can, that is with all exterior angles as close to $\pm\pi$ as possible, in a regular configuration.  For example, in the case of a five sided figure, one can easily check that the regular pentagon gives a local maximum for $\Ee$, whereas the regular star pentagon gives a local minimum.  These are precisely the isostates for a five sided figure (see Example \ref{ex:cyclic-5vertices}).          
 
 \medskip
 
 \noindent \emph{Time.}  Time is an ordering on a sequence of universes: $(\Ga, \Ga^{'}, \ldots )$.  The ordering must be compatible with the rules for change.  Thus $\Ga^{(j+1)}$ must derive from $\Ga^{(j)}$ by correlation, separation, mutation, or correlation with virtual point particles.  There are two ways to decide such an ordering:

(i)  The rules for change:  these may determine an irreversible process, such as that given in Example \ref{ex:evolution}.  The order $(\Ga , \Ga^{'}, \Ga^{''}, \Ga^{'''})$ is determined by the irreversibility of $\Ga^{''}\ra \Ga^{'''}$.

(ii)  A statistical parameter.  The \emph{thermal time hypothesis} has been developed by Connes and Rovelli \cite{Co-Ro}.  This is based on the Tomita flow associated to a von Neumann algebra.  In quantum field theory, the appropriate von Neumann algebra is the closure of the algebra of observables.  Then, given a state of a system over this algebra there is always a flow by which the state evolves and we may call this the ``flow of time" (see also \cite{Ro} \S 5.5.1).  In our context, we don't have an obvious von Neumann algebra that we can exploit.  However, there are various parameters that we may consider.  

Graph entropy is a well-know concept based on a probability distribution associated to the vertices \cite{Ko}.  Entropy is of course intimately related to the second law of thermodynamics.  Intrinsic curvature is also a natural parameter that occurs in smooth Riemannian geometry and curvature flow provides a way by which a manifold may evolve into one of uniform structure.

Curvature can provide a measure of local concentrations of structure.  In general, our various notions of curvature described in Section \ref{sec:curvature} depend on the field $\phi$ satisfying (\ref{one}).  However, given a particle in a particular state, we could envisage processes whereby the graph could evolve to uniformize the curvature -- say the vertex curvature, edge curvature or scalar curvature.

A simple curvature which depends only on the combinatorial structure is given, for a graph $\Ga = (V, E)$ with degree function $n:V \ra \NN$, by the function $n(x)-2$.  
That this can be considered as a measure of curvature appears to have first been suggested in \cite{Ur-1}.  In \cite{Ba-Ti}, an algorithm is given whereby the quantity $\sum_{x\in V}(n(x)-2)^2$ may be minimized subject to $\sum_{x\in V}n(x)$ remaining constant, by a process of sliding edges.  This procedure preserves connectedness and may be viewed as a discrete analogue of the scalar curvature flow in Riemannian geometry.  This leads to the parameter 
$$
t(\Ga ) := \frac{\sqrt{}\left\{ \sum_{x\in V} (n(x)-2)^2\right\}}{2|E|}\,,
$$
as a possible measure of thermal time.  In Example \ref{ex:evolution}, the passage from $\Ga\ra \Ga^{'}$ increases $t$, whereas the passage from $\Ga^{'} \ra \Ga^{''}$ decreases $t$, but we don't preclude local increases in $t$.  Indeed, time should be a statistically \emph{dominant} parameter that appears at a macroscopic level. 
 
\appendix

\section{Invariant structures}  \label{sec:invariant-structures} For our purposes, an \emph{elementary invariant framework} is a framework corresponding to the $1$-skeleton of a either a regular polytope, or the invariant double cones of Section \ref{sec:inv-polytopes}, possibly with the two apexes connected with an edge.  An \emph{invariant structure} is an invariant framework made up of elementary components, the components connected by edges in an appropriate way.  Motivated by the last section, we also require that the corresponding function $\ga$ be constant.  Thus, invariant structures are objects that may populate our elementary universe.

Consider an invariant star in $\RR^N$ with internal vertex located at the origin and with $n$ external vertices.  Let $\vec{x}\in \RR^N$ be the centre of mass of the external vertices.  Suppose that $\vec{x}\neq \vec{0}$.  Then we call the ray through the origin generated by $\vec{x}$ the \emph{axis of the star}.  Let $b>0$ denote the distance of the centre of mass from the origin along this axis.

\begin{lemma} \label{lem:extension-star}  The addition of a new external vertex at any point other than $-nb$ along the axis of the star produces a new invariant star.  Furthemore, if $\ga$ denotes the invariant of the original star and $x\in \RR$ is the position along the axis of the new vertex, then the new star invariant is given by 
\begin{equation} \label{ga-ext}
\wt{\ga} = \frac{(n+1)(x^2+nb^2\ga )}{(x+nb)^2}\,.
\end{equation}
\end{lemma}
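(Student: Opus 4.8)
The plan is to reduce the invariance of the extended star to a single scalar identity, using only the $\RR$-linearity of orthogonal maps and of the projection $P$ together with the invariance already assumed for the original star. So there is essentially nothing to ``solve''; the content is a short bookkeeping computation along the axis, and I expect it to go through without serious difficulty.

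First I would fix notation adapted to the axis. Let $\vec{e}$ be the unit vector generating the axis of the original star, so that the hypothesis on its centre of mass reads $\sum_{\ell=1}^{n}\vec{x}_\ell = nb\,\vec{e}$, and let the new vertex be $\vec{x}_{n+1}=x\vec{e}$ with $x\neq -nb$ (this condition is exactly what makes the new centre of mass $\frac{nb+x}{n+1}\vec{e}$ nonzero, so the extended star again has a well-defined axis). Fix an arbitrary orthogonal transformation $A:\RR^N\to\RR^N$ and put $w:=P(A\vec{e})\in\CC$ and $z_\ell:=P\circ A(\vec{x}_\ell)$ for $\ell=1,\dots,n+1$. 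Linearity of $P$ and $A$ gives $z_{n+1}=xw$ and $\sum_{\ell=1}^{n}z_\ell = P\bigl(A(\sum_{\ell}\vec{x}_\ell)\bigr)=nb\,w$, hence $\sum_{\ell=1}^{n+1}z_\ell=(nb+x)w$. Invariance of the original star yields $\sum_{\ell=1}^{n}z_\ell^{2}=\frac{\ga}{n}\bigl(\sum_{\ell=1}^{n}z_\ell\bigr)^{2}=\ga n b^{2}w^{2}$, and therefore $\sum_{\ell=1}^{n+1}z_\ell^{2}=(\ga n b^{2}+x^{2})w^{2}$.

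Comparing this with $\bigl(\sum_{\ell=1}^{n+1}z_\ell\bigr)^{2}=(nb+x)^{2}w^{2}$ shows that equation (\ref{inv-star}), with $n$ replaced by $n+1$, holds with $\wt{\ga}=\frac{(n+1)(x^{2}+nb^{2}\ga)}{(x+nb)^{2}}$: this constant is well-defined precisely because $x\neq -nb$, is manifestly real, and does not depend on $A$, which is exactly the definition of invariance. The one point worth a remark --- the nearest thing to an obstacle --- is that one must not simply cancel $w^{2}$ from the identity, since $w$ vanishes for those $A$ carrying $\vec{e}$ into the plane $y_1=y_2=0$; but for such $A$ both sides of (\ref{inv-star}) vanish, so the identity in fact holds for every $A$ without exception, and the formula for $\wt{\ga}$ is read off from any $A$ with $w\neq0$.
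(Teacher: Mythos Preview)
Your proof is correct and follows essentially the same route as the paper's: both compute the sums $\sum z_\ell$ and $\sum z_\ell^2$ for the extended star by linearity along the axis and by invoking the original star's invariance, then read off $\wt{\ga}$. Your explicit remark handling the degenerate case $w=P(A\vec{e})=0$ is a small extra point of rigor that the paper leaves implicit.
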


\begin{proof}  Without loss of generality, we may suppose that the centre of mass of the star lies along the $y_N$-axis.  In particular, if $\vec{v}_1, \ldots , \vec{v}_n$ denote the external vertices, then
$$
\sum_{\ell = 1}^n\vec{v}_{\ell} = nb\vec{e}_N\,.
$$
We now add a new vertex at the point $x\vec{e}_N$, for some $x\in \RR$.  Thus the new star matrix is given by
$$
\left( \vec{v}_1|\cdots |\vec{v}_n|x\vec{e}_N\right)\,.
$$
As usual, let $A = (a_{jk})$ be an arbitrary orthogonal transformation of $\RR^N$ and let $P:\RR^N\ra \CC$ be the projection $P(y_1, \ldots , y_N) = y_1 + \ii y_2$.  Set $z_{\ell} = P\circ A(\vec{v}_{\ell})$ for $\ell = 1, \ldots , n$ and $z_{n+1} = P\circ A(a\vec{e}_N)$.  Then
$$
z_{\ell} = \sum_{j=1}^N(a_{1j} + \ii a_{2j})v_{\ell j}\,, \qquad z_{n+1} = x(a_{1N} + \ii a_{2N})\,.
$$
Furthermore, $\sum_{\ell = 1}^nz_{\ell} = nb(a_{1N} + \ii a_{2N})$, so that
$$
\sum_{\ell = 1}^nz_{\ell}{}^2 = \frac{\ga}{n} \left(\sum_{\ell =1}^nz_{\ell}\right)^2 = \ga nb^2(a_{1N} + \ii a_{2N})^2\,,
$$
where $\ga$ is the invariant of the original star.  We require that there is a real number $\wt{\ga}$ such that
$$
\frac{\wt{\ga}}{n+1}\left( z_{n+1} + \sum_{\ell = 1}^nz_{\ell}\right)^2 = z_{n+1}{}^2 + \sum_{\ell = 1}^nz_{\ell}{}^2\,.
$$
But this is uniquely given by (\ref{ga-ext}). 
\end{proof}

We note that as $x$ approaches $-nb$, then $\wt{\ga}$ becomes arbitrary large.  Indeed, when $x = -nb$, then we have harmonicity at the internal vertex of the new star, so that $\wt{\ga}$ is not well-defined in this case.

Invariant structures now arise by connecting invariant frameworks with edges in an appropriate way.  We can use the invariant double cones of Section \ref{sec:inv-polytopes}, as well as regular polytopes to produce new structures.  There are various ways in which this can be done; as we don't have an exhaustive classification, we will consider some examples of geometric interest. 

Consider first the case of a double cone on a regular polygon.  The corresponding framework in $\RR^3$ satisfies (\ref{one}) with $\ga$ in general having a different value at the apexes to the value at the lateral vertices.  We extend the double cone by adding a new edge to each apex along the axis of the cone, attaching new double cones to each of these to produce an infinite family of double cones along a common axis.  Suppose the double cone is aligned along the $x_3$-axis.  Let $x$ denote a variable along this axis which is zero at the topmost apex and which increases towards the centre of mass of the vertex figure.  Place a new vertex at position $x$ along the axis.  Then from (\ref{ga-ext}) and (\ref{ga-apex}), the new value of $\ga$ at the apex becomes:
$$
\wt{\ga}_{\rm apex} = \frac{(n+1)(x^2+n\sin^2\frac{2\pi}{n} - \frac{n}{2})}{(x+n\sin\frac{2\pi}{n})^2}\,.
$$
We require this to equal the lateral value of $\ga$ given by (\ref{ga-lat}).  This determines the quadratic equation in $x$:
\begin{equation} \label{quadratic}
\frac{2x^2-n\cos\frac{4\pi}{n}}{(x+n\sin\frac{2\pi}{n})^2} = \frac{2-2\cos\frac{2\pi}{n} + \cos \frac{4\pi}{n}}{(2-\cos\frac{2\pi}{n})^2}\,.
\end{equation}
When $n=3$, this has two distinct roots given by $x=\sqrt{3}$ and $x=-\sqrt{3}/4$.  When we take the root $x = \sqrt{3}$, then $x$ lies precisely at the bottom vertex and the edge joining $x$ to the top vertex connects the two apexes.  In fact, since the triangle is a complete graph on three vertices, we have recovered the case of Corollary \ref{cor:inv-complete-graph} with $n=3$.  However, we can continue to add additional copies of the double cone to obtain a curious structure, as follows.

We begin with one double cone $C_1$ and attach it to a new one $C_2$ so the top apex of $C_1$ is joined to the bottom apex of $C_2$ at a distance $\sqrt{3}$ along the axis; thus the bottom apex of $C_2$ is situated at the bottom apex of $C_1$.  The two cones are therefore superimposed, with a new edge running down the centre.  The value of $\ga$ at the top apex of $C_1$ is equal to that at the bottom apex of $C_2$, which is equal to $\ga_{\rm lat}$.  

However, when we join the cones, we are at liberty to perform an arbitrary rotation of $C_2$ with respect to $C_1$ without affecting invariance.  If we perform a relative rotation through an irrational multiple of $2\pi$ and perform the same relative rotation on joining a new cone $C_3$ to $C_2$ and so on, we obtain countably many double cones with common central axis.  The closure of this set consists of the solid region enclosed by two (round) double cones, together with the segment $A$ of the central axis from $\sqrt{3}$ to $0$.  We note that even though $A$ constitutes a common edge joining successive cones, as a framework, each edge is distinct.  

For $n=4$, equation (\ref{quadratic}) has no real solutions. This also turns out to be the case for $n=5$, however for $n\geq 6$, there are two distinct real solutions, which enable us to connect double cones with edges along the axis of symmetry to obtain invariant structures.  For example, for $n=6$, we obtain the quadratic:
$$
118x^2-48\sqrt{3}\,x - 27 = 0\,,
$$
with roots, one positive, one negative, given by
$$
x = \frac{24\sqrt{3} \pm \sqrt{546}}{118}\,.
$$
Taking the positive root, on connecting successive cones we can perform a rotation to obtain interlaced frameworks.

As a final construction, we reconsider double cones on regular polytopes and attach new double cones both laterally and vertically, interpolating the two distances in such a way that $\ga$ is constant and the resulting structures are invariant.  In order to do this, we need to be able to periodically position the regular polytopes in $\RR^{N-1}$ in an appropriate way.  For example, for regular polygons, this amounts to finding an appropriate periodic tiling in the plane.  Three examples are illustrated below.

\medskip
\begin{center}
\setlength{\unitlength}{0.200mm}
\begin{picture}(580,200)(10,-230)
        \allinethickness{0.254mm}\path(80,-180)(120,-180) % Plain Solid Line
        \allinethickness{0.254mm}\path(120,-180)(145,-155) % Plain Solid Line
        \allinethickness{0.254mm}\path(120,-180)(150,-150) % Plain Solid Line
        \allinethickness{0.254mm}\path(150,-150)(150,-110) % Plain Solid Line
        \allinethickness{0.254mm}\path(150,-110)(120,-80) % Plain Solid Line
        \allinethickness{0.254mm}\path(120,-80)(80,-80) % Plain Solid Line
        \allinethickness{0.254mm}\path(80,-80)(50,-110) % Plain Solid Line
        \allinethickness{0.254mm}\path(50,-110)(50,-150) % Plain Solid Line
        \allinethickness{0.254mm}\path(50,-150)(80,-180) % Plain Solid Line
        \allinethickness{0.254mm}\path(80,-80)(80,-40) % Plain Solid Line
        \allinethickness{0.254mm}\path(80,-40)(120,-40) % Plain Solid Line
        \allinethickness{0.254mm}\path(120,-40)(120,-80) % Plain Solid Line
        \allinethickness{0.254mm}\path(50,-110)(10,-110) % Plain Solid Line
        \allinethickness{0.254mm}\path(10,-110)(10,-150) % Plain Solid Line
        \allinethickness{0.254mm}\path(10,-150)(50,-150) % Plain Solid Line
        \allinethickness{0.254mm}\path(80,-180)(80,-220) % Plain Solid Line
        \allinethickness{0.254mm}\path(80,-220)(120,-220) % Plain Solid Line
        \allinethickness{0.254mm}\path(120,-220)(120,-180) % Plain Solid Line
        \allinethickness{0.254mm}\path(150,-150)(190,-150) % Plain Solid Line
        \allinethickness{0.254mm}\path(190,-150)(190,-110) % Plain Solid Line
        \allinethickness{0.254mm}\path(190,-110)(150,-110) % Plain Solid Line
        \allinethickness{0.254mm}\path(295,-185)(325,-185) % Plain Solid Line
        \allinethickness{0.254mm}\path(365,-145)(365,-115) % Plain Solid Line
        \allinethickness{0.254mm}\path(325,-75)(295,-75) % Plain Solid Line
        \allinethickness{0.254mm}\path(255,-115)(255,-145) % Plain Solid Line
        \allinethickness{0.254mm}\path(325,-185)(350,-170) % Plain Solid Line
        \allinethickness{0.254mm}\path(350,-170)(365,-145) % Plain Solid Line
        \allinethickness{0.254mm}\path(365,-115)(350,-90) % Plain Solid Line
        \allinethickness{0.254mm}\path(350,-90)(325,-75) % Plain Solid Line
        \allinethickness{0.254mm}\path(255,-115)(270,-90) % Plain Solid Line
        \allinethickness{0.254mm}\path(295,-75)(270,-90) % Plain Solid Line
        \allinethickness{0.254mm}\path(255,-145)(270,-170) % Plain Solid Line
        \allinethickness{0.254mm}\path(270,-170)(295,-185) % Plain Solid Line
        \allinethickness{0.254mm}\path(325,-75)(310,-50) % Plain Solid Line
        \allinethickness{0.254mm}\path(310,-50)(295,-75) % Plain Solid Line
        \allinethickness{0.254mm}\path(270,-90)(240,-90) % Plain Solid Line
        \allinethickness{0.254mm}\path(240,-90)(255,-115) % Plain Solid Line
        \allinethickness{0.254mm}\path(350,-90)(380,-90) % Plain Solid Line
        \allinethickness{0.254mm}\path(380,-90)(365,-115) % Plain Solid Line
        \allinethickness{0.254mm}\path(350,-170)(380,-170) % Plain Solid Line
        \allinethickness{0.254mm}\path(380,-170)(365,-145) % Plain Solid Line
        \allinethickness{0.254mm}\path(270,-170)(240,-170) % Plain Solid Line
        \allinethickness{0.254mm}\path(240,-170)(255,-145) % Plain Solid Line
        \allinethickness{0.254mm}\path(295,-185)(310,-210) % Plain Solid Line
        \allinethickness{0.254mm}\path(310,-210)(325,-185) % Plain Solid Line
        \allinethickness{0.254mm}\path(500,-155)(530,-155) % Plain Solid Line
        \allinethickness{0.254mm}\path(500,-105)(530,-105) % Plain Solid Line
        \allinethickness{0.254mm}\path(530,-155)(545,-130) % Plain Solid Line
        \allinethickness{0.254mm}\path(530,-105)(545,-130) % Plain Solid Line
        \allinethickness{0.254mm}\path(500,-105)(485,-130) % Plain Solid Line
        \allinethickness{0.254mm}\path(485,-130)(500,-155) % Plain Solid Line
        \allinethickness{0.254mm}\path(500,-105)(485,-80) % Plain Solid Line
        \allinethickness{0.254mm}\path(485,-80)(500,-55) % Plain Solid Line
        \allinethickness{0.254mm}\path(500,-55)(530,-55) % Plain Solid Line
        \allinethickness{0.254mm}\path(530,-55)(545,-80) % Plain Solid Line
        \allinethickness{0.254mm}\path(545,-80)(530,-105) % Plain Solid Line
        \allinethickness{0.254mm}\path(500,-155)(485,-180) % Plain Solid Line
        \allinethickness{0.254mm}\path(485,-180)(500,-205) % Plain Solid Line
        \allinethickness{0.254mm}\path(500,-205)(530,-205) % Plain Solid Line
        \allinethickness{0.254mm}\path(530,-205)(545,-180) % Plain Solid Line
        \allinethickness{0.254mm}\path(545,-180)(530,-155) % Plain Solid Line
        \allinethickness{0.254mm}\path(485,-80)(455,-80) % Plain Solid Line
        \allinethickness{0.254mm}\path(455,-80)(440,-105) % Plain Solid Line
        \allinethickness{0.254mm}\path(440,-105)(455,-130) % Plain Solid Line
        \allinethickness{0.254mm}\path(455,-130)(485,-130) % Plain Solid Line
        \allinethickness{0.254mm}\path(455,-130)(440,-155) % Plain Solid Line
        \allinethickness{0.254mm}\path(440,-155)(455,-180) % Plain Solid Line
        \allinethickness{0.254mm}\path(455,-180)(485,-180) % Plain Solid Line
        \allinethickness{0.254mm}\path(545,-80)(575,-80) % Plain Solid Line
        \allinethickness{0.254mm}\path(575,-80)(590,-105) % Plain Solid Line
        \allinethickness{0.254mm}\path(590,-105)(575,-130) % Plain Solid Line
        \allinethickness{0.254mm}\path(575,-130)(545,-130) % Plain Solid Line
        \allinethickness{0.254mm}\path(575,-130)(590,-155) % Plain Solid Line
        \allinethickness{0.254mm}\path(590,-155)(575,-180) % Plain Solid Line
        \allinethickness{0.254mm}\path(575,-180)(545,-180) % Plain Solid Line
        \allinethickness{0.254mm}\path(310,-50)(310,-30) % Plain Solid Line
        \allinethickness{0.254mm}\path(240,-90)(220,-75) % Plain Solid Line
        \allinethickness{0.254mm}\path(380,-90)(400,-75) % Plain Solid Line
        \allinethickness{0.254mm}\path(380,-170)(400,-185) % Plain Solid Line
        \allinethickness{0.254mm}\path(310,-210)(310,-230) % Plain Solid Line
        \allinethickness{0.254mm}\path(240,-170)(220,-185) % Plain Solid Line
         % Set color to black again (default font color)
\end{picture}
\end{center}
\medskip

In the first figure, we consider the square, bisect its vertex figure at each vertex to form a complementary octagon and so tile the plane with squares and octagons.  For the second figure, we take a triangle, bisect its vertex figure at each vertex to form a complementary dodecagon ($12$-sided figure).  Finally, the complementary figure of a hexagon is the hexagon itself.  Our aim is to form an infinite structure in $\RR^3$, by first taking the double cone on each of these polygons (square, triangle and hexagon); then taking an identical copy and attaching it to the first by vertical edges of appropriate length which join the apexes, so forming succesive layers.

The first observation, is that the height of the double cone as given by Theorem \ref{thm:double-cone} is not affected by the addition of a new edge and vertex bisecting the vertex figure. Indeed, using the notation of the proof of Theorem \ref{thm:double-cone} and taking the more general situation of that theorem of a double cone on a regular polytope, the star matrix (\ref{star-double-cone}) is adjusted to 
$$
S= \left( \begin{array}{c|c|c|c|c|r|r} \vec{v}_1 & \vec{v}_2 & \cdots & \vec{v}_n & \vec{0} & \vec{0} & \vec{0} \\
 c & c & \cdots & c & a & a & -d \\ 0 & 0 & \cdots & 0 & b & -b & 0 \end{array} \right)
 $$
 where $d$ denotes the distance of the new vertex along the axis of symmetry (measured away from the centre of the vertex figure).  In addition to the projections $z_1, \ldots , z_n, z_{n+1}, z_{n+2}$, we now have an additional vertex which projects to
 $$
 z_{n+3} = -d(a_{1N} + \ii a_{2N})\,.
 $$
 The affect of this is to give the new sums:
 $$
 \sum_{\ell = 1}^{n+3}z_{\ell} = (nc+2a-d)(a_{1N} + \ii a_{2N})\,,
 $$
 and
 $$
 \sum_{\ell = 1}^{n+3} z_{\ell}{}^2 = (nc^2+2a^2+d^2-\rho )(a_{1N} + \ii a_{2N})^2 + (2b^2-\rho )(a_{1,N+1} + \ii a_{2,N+1})^2\,.
 $$
 Thus, as before, invariance requires $b = \sqrt{\rho /2}$ and the new value of $\ga_{\rm lat}$ is given by 
 $$
 \wt{\ga}_{\rm lat} = \frac{(n+3)(nc^2+2a^2+d^2-\rho )}{(nc + 2a - d)^2}\,.
 $$
 
 If we take as an example the double cone on a regular triangle whose vertices are at a distance $1$ from its centre, then we have
 $$
 a=1,\quad c = 2\sin^2\frac{2\pi}{3} = \frac{3}{2}, \quad \rho = 2\sin^2\frac{2\pi}{3} = \frac{3}{2}\,.
 $$
 If we impose the tiling given by triangles and dodecagons, then $d = \sqrt{3}$, so that 
 $$
 \wt{\ga}_{\rm lat} = \frac{40}{(5-\sqrt{3})^2}\,.
 $$
 On the other hand, we can calculate $\ga$ at the apex, after we add on another edge according to Lemma \ref{lem:extension-star}.  To do this we apply (\ref{double-cone-ga-apex}) and (\ref{ga-ext}) to give
$$
\wt{\ga}_{\rm apex} = \frac{(m+1)(x^2+mb^2- \rho_{\Pp})}{(x+mb)^2} = \frac{8x^2+6}{(\sqrt{2}\,x - 3\sqrt{3/2})^2}\,,
$$
where we recall $m=3$ is the cardinality of the regular polytope $\Pp$ (in this case the triangle).  Equating $\wt{\ga}_{\rm lat}$ and $\wt{\ga}_{\rm apex}$ yields the two (real) roots of the quadratic:
$$
4(9-5\sqrt{3})x^2+60\sqrt{3}\,x -93-15\sqrt{3}\,.
$$
We now stack layers at either of these distances in order to obtain an invariant structure.

One can generalize this procedure to higher dimension, whenever we can fill out the corresponding Euclidean space in an appropriate way.  This can be done in any dimension for the hypercube and the cross-polytope, where we attach edges along the axes of symmetry of the various vertex figures.  In $\RR^3$, we can also take the tetrahedron with vertices placed at the points given by (\ref{tetrahedron-coords}).  Note that these lie at those vertices of a cube (of edge length $2$) diagonally opposite across each face.  We then fill out $\RR^3$ with identical cubes, whose vertices are placed on a lattice with odd integer components.  Begin by placing a tetrahedron in one of these cubes; then the axes of symmetry of the vertex figures cross adjacent cubes along diagonals which connect vertices which are opposite with respect to the centre of the cube.  We obtain the $3$-dimensional analogue of the tiling of the plane by squares and octagons illustrated above.  By taking double cones on these tetrahedra and connecting them at the apexes with edges of appropriate length, we then obtain an invariant structure in $\RR^4$.  We omit the detailed calculations of the edge lengths, which proceed as in the example above.

\section{Some planar trigonometry} \label{sec:trig}  In this section we parametrize the configuration space of the framework corresponding to an $M$-sided planar polygonal figure with edges of common length, and show that the regular figures are extremal with respect to the energy functional $\Ee$ defined in Section \ref{sec:particles}.  

Consider such a framework with $M\geq 5$ with edges of unit length labeled as in the figure below.  The first bar has endpoints $0$ and $1$ in the complex plane, then the next $M-3$ exterior angles are labeled by $\si_1, \si_2 , \ldots , \si_{M-3}\in [-\pi , \pi ]$ and the last three exterior angles by $\theta_1, \theta_2, \theta_3$.  We define the quantities
$$
\begin{array}{rcl}
s & = & |1+e^{\ii\si_1} + e^{\ii (\si_1+\si_2)} + \cdots + e^{\ii (\si_1 + \si_2 + \cdots + \si_{M-3})}| \\
r & = & |1+e^{\ii\si_1} + e^{\ii (\si_1+\si_2)} + \cdots + e^{\ii (\si_1 + \si_2 + \cdots + \si_{M-4})}|  \\
t & = & |e^{\ii\si_1} + e^{\ii (\si_1+\si_2)} + \cdots + e^{\ii (\si_1 + \si_2 + \cdots + \si_{M-3})}| \\
 & = & |1+e^{\ii\si_2} + e^{\ii (\si_2+\si_3)} + \cdots + e^{\ii (\si_2 + \si_2 + \cdots + \si_{M-3})}| 
\end{array}
$$

\medskip
\begin{center}
\setlength{\unitlength}{0.254mm}
\begin{picture}(396,245)(65,-260)
        \allinethickness{0.254mm}\path(160,-220)(280,-220) % Plain Solid Line
        \allinethickness{0.254mm}\path(280,-220)(385,-160) % Plain Solid Line
        \allinethickness{0.254mm}\path(385,-160)(430,-50) % Plain Solid Line
        \allinethickness{0.254mm}\path(160,-220)(100,-120) % Plain Solid Line
        \allinethickness{0.254mm}\path(100,-120)(190,-30) % Plain Solid Line
        \allinethickness{0.254mm}\path(190,-30)(310,-30) % Plain Solid Line
        \allinethickness{0.254mm}\dottedline{5}(430,-50)(215,-15) % Plain Dotted Line
        \allinethickness{0.254mm}\dottedline{5}(190,-30)(140,-30) % Plain Dotted Line
        \allinethickness{0.254mm}\dottedline{5}(100,-120)(65,-155) % Plain Dotted Line
        \allinethickness{0.254mm}\dottedline{5}(160,-220)(185,-260) % Plain Dotted Line
        \allinethickness{0.254mm}\dottedline{5}(280,-220)(330,-220) % Plain Dotted Line
        \allinethickness{0.254mm}\dottedline{5}(385,-160)(455,-120) % Plain Dotted Line
        \allinethickness{0.254mm}\dashline{5}[5](190,-30)(160,-220) % Plain Dashed Line
        \allinethickness{0.254mm}\dashline{5}[5](310,-30)(160,-220) % Plain Dashed Line
        \allinethickness{0.254mm}\dashline{5}[5](190,-30)(230,-115) % Plain Dashed Line
        \allinethickness{0.254mm}\dashline{5}[5](280,-220)(240,-135) % Plain Dashed Line
        \put(260,-166){\shortstack{$t$}} % Plain Text
        \put(275,-91){\shortstack{$r$}} % Plain Text
        \put(185,-121){\shortstack{$s$}} % Plain Text
        \put(175,-236){\shortstack{$\theta_3$}} % Plain Text
        \put(310,-216){\shortstack{$\sigma_1$}} % Plain Text
        \put(397,-141){\shortstack{$\sigma_2$}} % Plain Text
        \put(215,-27){\shortstack{$\sigma_{M-3}$}} % Plain Text
        \put(155,-46){\shortstack{$\theta_1$}} % Plain Text
        \put(95,-146){\shortstack{$\theta_2$}} % Plain Text
        \put(170,-61){\shortstack{$\frac{\theta_2}{2}$}} % Plain Text
        \put(150,-196){\shortstack{$\frac{\theta_2}{2}$}} % Plain Text
        \put(170,-190){\shortstack{$\alpha$}} % Plain Text
         % Set color to black again (default font color)
\end{picture}
\end{center}
\medskip   

\begin{proposition} \label{prop:5-sided}  The angles $\si_1, \ldots , \si_{M-3}$ parametrize a closed polygonal bar framework if and only if $s\leq 2$, in which case $\cos \theta_2$ is determined and provided $s\neq 0$, $\cos\theta_1$ and $\cos\theta_3$ have a two-fold ambiguity; the energy
\begin{equation} \label{energy-polygon}
\begin{array}{l}
\Ee := M+\left(\sum_{j=1}^{M-3}\cos\si_j\right) + \cos\theta_1 + \cos\theta_2 + \cos\theta_3 = \\
M-\frac{3}{2} + \left(\sum_{j=1}^{M-3}\cos\si_j\right) + \frac{r^2+t^2}{4} \\
+ \frac{\sqrt{4-s^2}}{2s} \left\{ \sum_{j=1}^{M-3}[\sin (\si_1+ \si_2+ \cdots + \si_j) +  \sin (\si_j + \si_{j+1} + \cdots + \si_{M-3})] \right\}
\end{array}
\end{equation} 
where the choice of sign of the square root corresponds to two possible configurations of the framework.  If $s=0$, there are infinitely many possible configurations given by $\theta_2= \pm\pi$ with $\theta_1$ arbitrary. 
\end{proposition}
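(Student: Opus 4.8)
The plan is to reduce everything to the planar trigonometry of the ``closing triangle'' $v_{M-2}v_{M-1}v_0$ formed by the last two unit bars together with the diagonal $v_0v_{M-2}$ of length $s$. Label the vertices $v_0=0,\ v_1=1,\ v_2,\dots ,v_{M-1}$ in order around the chain. Whatever the values of $\si_1,\dots ,\si_{M-3}$, the first $M-2$ bars can always be drawn: bar $k$ has direction $e^{\ii(\si_1+\cdots +\si_{k-1})}$, so $v_{M-2}=1+\sum_{k=1}^{M-3}e^{\ii(\si_1+\cdots +\si_k)}$ and $|v_{M-2}|=s$ by the very definition of $s$. The chain closes into an $M$-gon exactly when there is a point $v_{M-1}$ with $|v_{M-1}-v_{M-2}|=|v_{M-1}-v_0|=1$, and by the triangle inequality such a point exists if and only if $s\le 2$. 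For $0<s<2$ there are precisely two such points, symmetric across the line $v_0v_{M-2}$ (this is the two-fold ambiguity; at $s=2$ they coalesce at the midpoint); and for $s=0$, i.e. $v_{M-2}=v_0$, the constraint on $v_{M-1}$ reduces to $|v_{M-1}|=1$, giving a whole circle of configurations, in each of which the turn at $v_{M-1}$ is $\pm\pi$ (one arrives along $v_{M-1}$ and leaves along $-v_{M-1}$) while the turn $\theta_1$ at $v_{M-2}$ is unconstrained. This settles the first and last assertions.

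Next I would compute $\theta_2$. Since $v_{M-1}$ is the apex of the isosceles triangle with legs $1$ and base $s$, the law of cosines gives interior apex angle $\arccos\bigl(1-\tfrac{1}{2}s^2\bigr)$ and hence $\cos\theta_2=\tfrac{1}{2}s^2-1$, depending on $s$ alone. From this, $\cos(\theta_2/2)=s/2$ and $\sin(\theta_2/2)=\tfrac{1}{2}\sqrt{4-s^2}$; in particular $\sin(\theta_2/2)/s=\sqrt{4-s^2}/(2s)$, the coefficient appearing in (\ref{energy-polygon}). (Sanity check: a regular $M$-gon has all $\si_j$ and $\theta_j$ equal to $2\pi/M$, a sum of $M-2$ consecutive $M$-th roots of unity has modulus $s=2\cos(\pi/M)$, and indeed $\tfrac{1}{2}s^2-1=\cos(2\pi/M)$.) The key geometric fact, used repeatedly below, is that the two legs $v_{M-2}v_{M-1}$ and $v_{M-1}v_0$ of this isosceles triangle make angle $\theta_2/2$ with the diagonal at $v_{M-2}$ and at $v_0$ respectively; this is precisely what couples the ``$\si$ side'' of the figure to the ``$\theta$ side''.

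Then I would handle $\theta_1$ and $\theta_3$ via the diagonal. Let $\be$ be the angle at $v_0$ between bar $v_0v_1$ and the diagonal, and $\ga$ the angle at $v_{M-2}$ between bar $v_{M-3}v_{M-2}$ and the diagonal; these sit inside the triangles $v_0v_1v_{M-2}$ (sides $1,t,s$) and $v_{M-3}v_0v_{M-2}$ (sides $1,r,s$), so the law of cosines gives
$$
\cos\be=\frac{1+s^2-t^2}{2s},\qquad \cos\ga=\frac{1+s^2-r^2}{2s}.
$$
Tracking turning directions around the chain, the leg at $v_0$ (resp. at $v_{M-2}$) sits at angular distance $\theta_2/2$ from the diagonal on the side where $v_{M-1}$ lies, which yields $\theta_3=\pi-\be\mp\theta_2/2$ and $\theta_1=\pi-\ga\mp\theta_2/2$, the common sign $\mp$ labelling the two configurations; hence $\cos\theta_3=-\cos(\be\pm\theta_2/2)$ and $\cos\theta_1=-\cos(\ga\pm\theta_2/2)$. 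The sine data is read off from imaginary parts of partial sums of unit vectors: $s\sin\be=\Im v_{M-2}=\sum_{j=1}^{M-3}\sin(\si_1+\cdots +\si_j)$, and, rotating $v_{M-2}$ by $-(\si_1+\cdots +\si_{M-3})$ (the direction of bar $v_{M-3}v_{M-2}$), $s\sin\ga=\sum_{j=1}^{M-3}\sin(\si_j+\cdots +\si_{M-3})$.

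Finally I would assemble (\ref{energy-polygon}). Expanding $\cos(\be\pm\theta_2/2)$ and $\cos(\ga\pm\theta_2/2)$ and substituting $\cos(\theta_2/2)=s/2$, $\sin(\theta_2/2)=\tfrac{1}{2}\sqrt{4-s^2}$, the two law-of-cosines identities and the two sine identities, the cosine parts telescope:
$$
\cos\theta_1+\cos\theta_2+\cos\theta_3=\Bigl(\tfrac{1}{2}s^2-1\Bigr)-\tfrac{s}{2}(\cos\be+\cos\ga)\pm\tfrac{1}{2}\sqrt{4-s^2}\,(\sin\be+\sin\ga)=-\tfrac{3}{2}+\tfrac{1}{4}(r^2+t^2)\pm(\text{sine term}),
$$
and the sine term is exactly $\dfrac{\sqrt{4-s^2}}{2s}\Bigl\{\sum_{j=1}^{M-3}[\sin(\si_1+\cdots +\si_j)+\sin(\si_j+\cdots +\si_{M-3})]\Bigr\}$ once $s$ is pulled out of the sine identities. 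Adding $M+\sum_{j}\cos\si_j$ finishes the formula, and the degenerate locus $s=0$ is exactly where the division by $s$ breaks down, consistently with the last sentence of the statement. The step I expect to be the real obstacle is the sign bookkeeping in the previous paragraph: reconciling the exterior-angle (turning-angle) convention, the choice of which base angle counts as ``$+\theta_2/2$'' and which as ``$-\theta_2/2$'', and the side of the diagonal on which $v_{M-1}$ sits, so that a single $\pm$ controls the whole expression and genuinely parametrises the two reflected configurations; everything else is routine trigonometry.
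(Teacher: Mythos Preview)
Your proposal is correct and follows essentially the same route as the paper: both arguments reduce everything to the isosceles closing triangle $v_0v_{M-1}v_{M-2}$ (giving $\cos\theta_2$ and the base angles $\theta_2/2$) together with the two auxiliary triangles with side-lengths $(1,r,s)$ and $(1,t,s)$ (giving $\cos\theta_1$ and $\cos\theta_3$), and then invoke the reflectional symmetry $\sigma_j\leftrightarrow\sigma_{M-2-j}$, $r\leftrightarrow t$, $\theta_1\leftrightarrow\theta_3$.

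The differences are cosmetic but worth noting. The paper works with the angle $\alpha$ at $v_0$ in the triangle $(1,r,s)$ and the combination $A=\theta_1+\theta_2/2$, applies the sine rule to obtain $\cos A=(r^2-s^2-1)/(2s)$, and then expands $\cos\theta_1=\cos(A-\theta_2/2)$; your $\gamma$ is the angle at $v_{M-2}$ in the same triangle, with $\cos\gamma=(1+s^2-r^2)/(2s)=-\cos A$, so your $\pi-\gamma$ is exactly the paper's $A$. Where the paper simply writes ``simplification of the various terms now gives the formula'', you make this step explicit by reading off $s\sin\beta$ and $s\sin\gamma$ as imaginary parts of $v_{M-2}$ and of $e^{-\ii(\sigma_1+\cdots+\sigma_{M-3})}v_{M-2}$; this is a nice shortcut that produces the curly-bracket sine sums directly rather than via further manipulation of $\sqrt{4s^2-(r^2-s^2-1)^2}$. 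Your own caveat about the sign bookkeeping is well placed: that is indeed the only point requiring care, and the paper handles it in exactly the way you anticipate, by attaching the $\pm$ to a $\cot$ (equivalently to the square root) and tracking it through the addition formula.
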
 
We postpone the proof of this proposition until the end of this section.

Note that the sum of the exterior angles may jump from $2\pi$ to $4\pi$ with a continuous deformation of the framework, however, $\Ee$ varies continuously, as indicated in the sketch below for $M=5$.

\medskip
\begin{center}
\setlength{\unitlength}{0.254mm}
\begin{picture}(240,65)(80,-150)
        \allinethickness{0.254mm}\path(160,-140)(125,-85) % Plain Solid Line
        \allinethickness{0.254mm}\path(115,-85)(80,-140) % Plain Solid Line
        \allinethickness{0.254mm}\path(115,-85)(120,-145) % Plain Solid Line
        \allinethickness{0.254mm}\path(120,-145)(125,-85) % Plain Solid Line
        \allinethickness{0.254mm}\path(160,-140)(125,-140) % Plain Solid Line
        \allinethickness{0.254mm}\path(115,-140)(80,-140) % Plain Solid Line
        \allinethickness{0.254mm}\path(270,-95)(275,-100) % Plain Solid Line
        \allinethickness{0.254mm}\path(290,-95)(280,-150) % Plain Solid Line
        \allinethickness{0.254mm}\path(280,-150)(270,-95) % Plain Solid Line
        \allinethickness{0.254mm}\path(240,-140)(270,-140) % Plain Solid Line
        \allinethickness{0.254mm}\path(320,-140)(290,-140) % Plain Solid Line
        \allinethickness{0.254mm}\path(240,-140)(265,-115) % Plain Solid Line
        \allinethickness{0.254mm}\path(290,-95)(275,-110) % Plain Solid Line
        \allinethickness{0.254mm}\path(320,-140)(290,-115) % Plain Solid Line
        \allinethickness{0.254mm}\path(185,-110)(220,-110)\special{sh 1}\path(220,-110)(214,-108)(214,-110)(214,-112)(220,-110) % Plain Solid Arrow
         % Set color to black again (default font color)
\end{picture}
\end{center}
\medskip
For the framework on the left, the sum of the exterior angles is $2\pi$, whereas for the right-hand figure, it is $4\pi$.  At the point of transition, we have $\si_1 = \si_2 = 2\pi /3$.  Then both $s$ and the curly bracket (which equals $2(\sin \frac{2\pi}{3} + \sin \frac{4\pi}{3})$) vanish in the expression for $\Ee$.  However, the singularity is removable and $\Ee = 3-\sqrt{3}$ is well-defined and continuous at this point.   

\begin{corollary}  \label{cor:polygons}  The regular polygon and star polygons represent extrema for the functional $\Ee$.
\end{corollary}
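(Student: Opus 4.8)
The plan is to show that a regular configuration is a critical point of the energy $\Ee$ as a function on the configuration space parametrized by $(\si_1,\dots,\si_{M-3})$, using the explicit formula \eqref{energy-polygon} from Proposition \ref{prop:5-sided}. At a regular polygon or star polygon, all exterior angles are equal to a common value $\si_1 = \dots = \si_{M-3} = \theta_1 = \theta_2 = \theta_3 = \om$, where $\om = 2\pi k / M$ for some integer $k$ (the winding number); the closure condition $\sum \si_j + \theta_1+\theta_2+\theta_3 = M\om \in 2\pi\ZZ$ is then automatic. The task is thus to verify that the gradient $\partial \Ee / \partial \si_i$ vanishes at such a point for each $i = 1,\dots,M-3$.

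First I would differentiate \eqref{energy-polygon} with respect to a generic $\si_i$. The term $M - \tfrac32 + \sum_j \cos\si_j$ contributes $-\sin\si_i$. The remaining terms involve $r^2$, $t^2$, $s$, and the curly-bracket sum of sines of partial angle-sums; each of these is a function of the $\si_j$ through the partial sums $\si_1+\cdots+\si_j$, so the chain rule produces expressions built from $\sin$ and $\cos$ of these partial sums and from $s$ and $\sqrt{4-s^2}$. The key simplification is that, by the geometric meaning of the construction, $r$, $s$, $t$ are the lengths of the three ``diagonals'' closing off the last three edges, and at a regular configuration these lengths, together with the angle $\al$ in the figure, take symmetric values determined purely by $\om$ and $M$; moreover the law of cosines relations among $r, s, t, \theta_1, \theta_2, \theta_3$ reduce to identities. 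I would substitute $\si_j = \om$ everywhere, evaluate each partial derivative, and collect terms; the expected outcome is that the contribution $-\sin\om$ from the first block is exactly cancelled by the derivative of the square-root block, which at the symmetric point reconstitutes $\sin\om$ with opposite sign. A cleaner route, which I would try first, is to avoid \eqref{energy-polygon} entirely: parametrize instead by all $M$ exterior angles $\phi_1,\dots,\phi_M$ subject to the single constraint $g:=\sum \phi_k - 2\pi k = 0$ together with the two scalar closure conditions $\sum \cos(\phi_1+\cdots+\phi_k) = $ (appropriate constant), $\sum\sin(\cdots) = 0$ coming from $\sum e^{\ii(\phi_1+\cdots+\phi_k)} = 0$; then use Lagrange multipliers on $\Ee = M + \sum\cos\phi_k$. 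By the rotational symmetry of the regular configuration (it is fixed by the cyclic shift of the $\phi_k$), the gradient of $\Ee$ and the gradients of all constraint functions are all eigenvectors of the shift with matching structure, so $\nabla\Ee$ automatically lies in the span of the constraint gradients at that point; this is essentially a symmetry/representation-theoretic argument and sidesteps the messy calculus.

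The main obstacle is bookkeeping: whichever parametrization I use, the constraint surface is cut out nonlinearly and I must be careful that ``critical for $\Ee$ on the configuration space'' is interpreted correctly — namely as a constrained critical point, since the configuration space is the constraint variety inside $[-\pi,\pi]^M$, not an open set. With the Lagrange-multiplier formulation the symmetry argument is short, but I must check that the regular configuration is a smooth point of the constraint variety (equivalently that the two closure conditions have independent differentials there, which holds precisely when $s\neq 0$, i.e. away from the degenerate ``fully folded'' cases $\om = 2\pi/M \cdot (M/2)$); the genuinely exceptional cases where $s=0$ must be handled separately, and there the statement should be read via the alternative description $\theta_2 = \pm\pi$ with $\theta_1$ free given in Proposition \ref{prop:5-sided}. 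Finally I would remark that the regular polygon gives a local max and the regular star polygon a local min (as already noted for $M=5$ in the discussion preceding the corollary), which follows from examining the second-order behaviour, but this is not needed for the ``extrema'' claim itself.
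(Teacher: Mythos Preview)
Your preferred Lagrange-multiplier route is correct and genuinely different from the paper's argument. The key observation is immediate: in the over-parametrized coordinates $(\phi_1,\dots,\phi_M)$ one has $\nabla\Ee=(-\sin\phi_k)_k$, which at the regular point $\phi_k\equiv\om$ equals $-\sin\om\cdot(1,\dots,1)=-\sin\om\cdot\nabla g_0$, so the Lagrange condition holds with multipliers $(-\sin\om,0,0)$ and no appeal to $\nabla g_1,\nabla g_2$ is needed. (Your eigenvector phrasing for $\nabla g_1,\nabla g_2$ is slightly off --- the closure functions are not themselves shift-invariant, only their common zero set is after re-rotating the first edge --- but the direct observation above makes that machinery unnecessary.) The paper instead stays in the unconstrained $(M-3)$-dimensional chart $(\si_1,\dots,\si_{M-3})$ and differentiates (\ref{energy-polygon}) head-on: it derives recursions expressing $\partial s/\partial\si_\ell$, $\partial r/\partial\si_\ell$, $\partial t/\partial\si_\ell$ in terms of their $\si_{\ell-1}$-counterparts, combines these into a recursion for $\partial\Ee/\partial\si_\ell$ in terms of $\partial\Ee/\partial\si_{\ell-1}$, verifies $\partial\Ee/\partial\si_1=0$ at the regular configuration by explicit substitution, and then propagates the vanishing by induction on $\ell$. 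Your route gives essentially a one-line proof of criticality at the cost of a constraint-qualification check; the paper's route is intrinsic to the chart used in Proposition~\ref{prop:5-sided} and produces explicit derivative formulas that could in principle be pushed to second order, but at the price of substantial trigonometric bookkeeping.
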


\begin{proof}  The angles $\si_{\ell}$ are not all on an equal footing, so in order to determine critical points, we establish a recursive formula on the derivatives.  First note that
\begin{eqnarray*}
s & = & |1+e^{\ii\si_1} + e^{\ii (\si_1+\si_2)} + \cdots + e^{\ii (\si_1 + \si_2 + \cdots + \si_{M-3})}| \\
 & = & \surd \{ (1+\cos \si_1 + \cos (\si_1+\si_2) + \cdots + \cos (\si_1 + \si_2 + \cdots + \si_{M-3}))^2 \\
& & \qquad + (\sin \si_1 + \sin (\si_1+\si_2) + \cdots + \sin (\si_1 + \si_2 + \cdots + \si_{M-3}))^2 \} \\
  & = &  \surd\Big\{ N-2 + 2 \sum_{k=1}^{M-3}\sum_{j = k}^{M-3} \cos (\si_k + \si_{k+1} + \cdots + \si_j)\Big\}
\end{eqnarray*} 
where the latter equality follows from using $\cos \si_1\cos (\si_1+\si_2) + \sin \si_1\sin (\si_1+\si_2) = \cos \si_2$ and so on.  Therefore, for each $\ell = 1, \ldots , M-3$, 
$$
\frac{\pa s}{\pa \si_{\ell}} = - \frac{1}{s} \sum_{k = 1}^{\ell}\sum_{j = \ell}^{M-3} \sin (\si_k+ \si_{k+1} + \cdots \si_j)\,,
$$
and we have the recursive formula
$$
\frac{\pa s}{\pa \si_{\ell}} = \frac{\pa s}{\pa\si_{\ell -1}} + \frac{1}{s} \sum_{k=1}^{\ell - 1} \sin (\si_k + \cdots + \si_{\ell -1}) - \frac{1}{s}\sum_{j = \ell}^{M-3}\sin (\si_{\ell} + \cdots + \si_j)\,,
$$
where, for $\ell = 1$, we set the first two terms on the right-hand side equal to zero. 
By the same reasoning, for $\ell = 1 , \ldots , M-4$,
$$
\frac{\pa r}{\pa \si_{\ell}} = \frac{\pa r}{\pa\si_{\ell -1}} + \frac{1}{r} \sum_{k=1}^{\ell - 1} \sin (\si_k + \cdots + \si_{\ell -1}) - \frac{1}{r}\sum_{j = \ell}^{M-4}\sin (\si_{\ell} + \cdots + \si_j)\,,
$$
with $\pa r / \pa \si_{M-3} = 0$, and for $\ell = 2, \ldots M-3$,
$$
\frac{\pa t}{\pa \si_{\ell}} = \frac{\pa t}{\pa\si_{\ell -1}} + \frac{1}{t} \sum_{k=2}^{\ell - 1} \sin (\si_k + \cdots + \si_{\ell -1}) - \frac{1}{t}\sum_{j = \ell}^{M-3}\sin (\si_{\ell} + \cdots + \si_j)\,,
$$
with $\pa t/\pa \si_1 = 0$.  From (\ref{energy-polygon}), we now obtain for $\ell = 2, \ldots , M-3$,
\begin{equation} \label{ee-rec}
\begin{array}{l}
\ds \frac{\pa\Ee}{\pa\si_{\ell}}  =  \frac{\pa\Ee}{\pa \si_{\ell -1}} + \sin \si_{\ell -1} - \sin \si_{\ell} - \frac{1}{2}\sin (\si_1+ \cdots + \si_{\ell -1}) \\
 + \frac{1}{2}\sin (\si_{\ell} + \cdots + \si_{M-3}) 
  - \frac{\sqrt{4-s^2}}{2s} [\cos (\si_1 + \cdots + \si_{\ell -1}) - \cos (\si_{\ell} + \cdots + \si_{M-3})] \\
+ \left( 1 - \frac{2T}{s^3\sqrt{4-s^2}}\right) \left( \sum_{k=1}^{\ell -1}\sin (\si_k + \cdots + \si_{\ell -1}) - \sum_{j=\ell}^{M-3}\sin (\si_{\ell} + \cdots + \si_j)\right) \,,
\end{array}
\end{equation}
where we have set
$$
T:= \sum_{j=1}^{M-3}[\sin (\si_1 + \cdots + \si_j) + \sin (\si_j + \cdots + \si_{M-3})]\,.
$$
We now claim that under the hypothesis that all angles $\si_{\ell} = \ta$ are equal with $M\ta = 2k\pi$, for some integer $k$ satisfying $0<k\leq [M/2]$, then all derivatives $\pa \Ee / \pa \si_{\ell}$ vanish.

The case $\ta = \pm\pi$ can only occur when $M$ is even, in which case it represents an absolute minimum for $\Ee$ and is certainly critical; so henceforth, suppose that $\ta \neq \pm \pi$.  With all angles equal, we must take the positive sign for the square root in (\ref{energy-polygon}); indeed, this is necessary to obtain $M-M\cos \ta$ for the value of $\Ee$.  The following identities are useful:
$$
\begin{array}{rl}
 & 1+e^{\ii\ta} + e^{2\ii\ta} + \cdots + e^{(M-1)\ii\ta} = 0 \\
\Rightarrow & \sum_{j = 1}^{M-3} \sin (j\ta ) = - \sin (M-2)\ta - \sin (M-1)\ta = \sin \ta (1+2\cos \ta ) \\
{\rm similarly} & \sum_{j=1}^{M-3}\cos (j\ta ) = - \cos\ta (1+2\cos\ta ) \\
 \Rightarrow & T = 2\sin\ta (1+2\cos\ta )\,,
\end{array}
$$
and
$$
s=2\cos \frac{\ta}{2} \qquad r = t = |1+2\cos\ta |\,.
$$
On noting that $\pa t/\pa \si_1=0$, for the derivative with respect to $\si_1$, we obtain
\begin{eqnarray*}
\frac{\pa\Ee}{\pa \si_1} & = & - \sin \si_1 + \frac{r}{2} \frac{\pa r}{\pa \si_1} 
 -\frac{4}{s^2\sqrt{4-s^2}} \frac{\pa s}{\pa \si_1}  \sum_{j=1}^{M-3}\sin (j\ta ) \\
 & & + \frac{\sqrt{4-s^2}}{2s}\Big\{ \cos (M-3)\ta + \sum_{j=1}^{M-3} \cos (j\ta ) \Big\} \\
 & = & 0\,, 
\end{eqnarray*} 
where the last equality follows after substitution and routine calculations.  Now apply recurrence on $\pa \Ee / \pa \si_{\ell}$, so suppose that $\pa \Ee / \pa \si_{\ell - 1} = 0$.  Then from (\ref{ee-rec}) we have,
\begin{eqnarray*}
\frac{\pa \Ee}{\pa \si_\ell} & = & -\frac{1}{2} \sin (\ell - 1)\ta - \frac{1}{2} \sin (2+\ell ) \ta + \frac{\cos\ta}{1+\cos\ta}[\sin \ell \ta + \sin (\ell + 1)\ta ] \\
 & & \quad + \frac{\sin\ta}{2(1+\cos\ta )} [ - \cos (\ell -1)\ta + \cos (2+\ell )\ta ] \,.
\end{eqnarray*}
But now multiplication through by $2(1+\cos\ta )$ shows that this vanishes.  By recurrence, all derivatives $\pa \Ee / \pa \si_{\ell}$ vanish for $\ell = 1, \ldots , M-3$.       
\end{proof}

Included in the list of ``star polygons" are ones that correspond to several circuits of a regular polygon.  For example, when $M=6$ and in the notation of the above proof $\ta = 2\pi /3$, the solution corresponds to two circuits of a regular triangle as illustrated in Example \ref{ex:hexagon}.  Such cases occur when $M$ has non-trivial factors.  When $M$ is even, we always have the solution $\ta = \pm \pi$, where the polygonal chain ``folds up" to cover the segment from $0$ to $1$.  Then the energy is zero and is at an absolute minimum.  We conjecture that the regular convex polygon given by $\ta = 2\pi /M$ is a global maximum for $\Ee$ and for $M$ odd, the star polygon given by $\ta = 2[M/2]\pi / M$ is a global minimum; but we don't have proofs of this in general.  Also, we do not know if there are other critical configurations for $\Ee$ other than the regular ones.

When $M=5$, there are just two possible regular configurations.  For $\si_1 = \si_2 = 2\pi /5$, we have $\Ee = 5(3+\sqrt{5})/4$ and for $\si_1 = \si_2 = 4\pi /5$, we have $\Ee = 5(3-\sqrt{5})/4$.  It is now possible to calculate explicitly the second derivatives of $\Ee$ at these critical points and to verify directly that these configurations correspond respectively to a local maximum and local minimum of $\Ee$.  However, this approach for general $M$ presents formidable computational difficulties.  

Our introduction of the functional $\Ee$ suggests the intriguing possibility of setting up a gradient flow on the configuration space of positions of the framework parametrized by the $\si_{\ell}$, whereby the figure evolves into one of the regular configurations (depending upon whether we take the gradient flow or its inverse).  The combinatorial solution to such problems of regularizing polygonal frameworks, has been derived in the references \cite{Sa} and \cite{Co-De-Ro}.  But we are not aware of a functional analytic approach.  An evolution from the pentagon to the star pentagon with reflectional symmetry, is obtained by setting $\si_1 = \si_2$ and allowing $\si_1$ to vary continuously from $2\pi /5$ to $4\pi /5$.  Two of the positions of this evolution are illustrated in the figure above.

\medskip

\noindent \emph{Proof of Proposition.}\  Without loss of generality we can take the length of each edge to be unity.  Since $s = 2\cos \frac{\ta_2}{2}$, $\cos \ta_2$ is completely determined by the data $\si_1, \ldots , \si_{M-3}$ and is given by
$$
\cos \ta_2 = \frac{s^2-2}{2}\,.
$$
even though $\ta_2$ is only determined up to sign.  We now find expressions for $\ta_1$ and $\ta_3$.  Suppose $s\neq 0$.  Note that this implies that $r\neq 0$ as well.  Let $\al$ be the angle indicated in the figure, being the interior angle of the triangle with adjacent edges of lengths $r$, $s$ and opposite edge of length $1$.  Set $A=\ta_1+\frac{\ta_2}{2}$.  Then application of the sine rule gives the equalities:
$$
\sin \al = \frac{\sin (A - \al )}{s} = \frac{\sin A}{r}\,.
$$
One can eliminate the terms involving $\al$ from this expression to obtain
$$
\cos A = \frac{r^2-s^2-1}{2s}\,.
$$
Once more this is unambiguously defined, even though $A$ itself has a sign ambiguity.  This is underlined by the expression for $\cot A$, which has the ambiguity attached:
$$
\cot A = \pm \frac{r^2-s^2-1}{\sqrt{4s^2-(r^2-s^2-1)^2}}\,.
$$
However,
$$
\cos\ta_1  =  \cos (A-\frac{\ta_2}{2}) =  (1+\cot A \cot \frac{\ta_2}{2})\sin A\sin \frac{\ta_2}{2}\,,
$$
from which we deduce that
\begin{eqnarray*}
\cos \ta_1 & = &  \frac{1 + \cot A\cot \frac{\ta_2}{2}}{\sqrt{(1+\cot^2A)(1+\cot^2\frac{\ta_2}{2})}} \\
 & = & \frac{1}{4s}\{ \sqrt{4-s^2}\sqrt{4s^2-(r^2-s^2-1)^2}+ s(r^2-s^2-1)\} \,.
\end{eqnarray*}
But now there is a reflectional symmetry of notation by traversing the polygon in the opposite sense, so we may interchange $\si_j$ with $\si_{M-2-j}$, $r$ with $t$ and $\ta_1$ with $\ta_3$, to obtain
$$
\cos \ta_3 = \frac{1}{4s}\{ \sqrt{4-s^2}\sqrt{4s^2-(t^2-s^2-1)^2}+ s(t^2-s^2-1)\} \,.
$$
Simplification of the various terms now gives the formula of the proposition.  When $s=0$, then $\ta_2=\pm\pi$ and we can pivot the two superimposed edges that result about a common vertex, giving infinitely many solutions obtained by varying $\ta_1$ arbitrarily.    
\hfill $\Box$

\section{The linearized and weak forms of the equations} \label{sec:lin}
We develop the relevant functional analytic framework on a finite graph in order to deduce weak forms of the equations (\ref{one}).  This provides heuristic arguments as to why a pair $(\Ga , \phi )$ consisting of a finite connected graph and a solution $\phi$ to (\ref{one}) should model an elementary particle.  

Let $\Ga = (V,E)$ be a finite graph.  For $x\in V$ define the \emph{tangent space to $\Ga$ at $x$} to be the set of oriented edges with base point $x$ : $T_x\Ga = \{ \vec{xy}: y\sim x\}$.  Define the \emph{tangent bundle to $\Ga$} to be the union: $T\Ga = \cup_{x\in V}T_x\Ga$.  Then a \emph{$1$-form} on $\Ga$ is a map $\om : T\Ga \ra \CC$ such that $\om (\vec{xy}) = - \om (\vec{yx})$.  To a function $\phi : V \ra \CC$, we can naturally associate a $1$-form, the \emph{derivate} $\dd \phi$, by $\dd \phi (\vec{xy}) = \phi (y) - \phi (x)$.  

For two $1$-forms $\om , \eta$, define their \emph{pointwise symmetric product at $x\in V$} by 
$$
\inn{\om , \eta }_x = \sum_{y\sim x}\om (\vec{xy})\eta (\vec{xy})\,,
$$
and their \emph{(global) symmetric product} by
$$
(\om , \eta ) = \sum_{e\in E}\om (e)\eta (e) = \frac{1}{2}\sum_{x\in V}\sum_{y\sim x}\om (\vec{xy})\eta (\vec{xy})\,.
$$
Note that in the first sum the $1$-forms act on unoriented edges so that only their product is well-defined; the factor of one half occurs in the second sum, since there, unoriented edges are counted twice. 

For functions $\phi , \psi : V \ra \CC$, define their \emph{(global) symmetric product} by 
$$
(\phi , \psi ) = \sum_{x\in V}n(x)\phi (x)\psi (x)\,,
$$
where $n(x)$ is the degree of vertex $x$.

The above definitions are the complex symmetric analogues of standard $L^2$ products that arise in functional analytic theory on a graph; in the latter situation they are replaced by Hermitian products rather than symmetric products \cite{Ch}. 

Given a function $\xi : V \ra \CC$ and a $1$-form $\om : T\Ga \ra \CC$, we can define a new $1$-form $\xi \om$ by
$$
(\xi \om )(\vec{xy}) = \frac{1}{2}(\xi (x) + \xi (y))\om (\vec{xy})\,.
$$
Then it is easily checked that
$$
\dd (\phi \psi ) = \phi \dd \psi + \psi \dd \phi\,.
$$

Given a $1$-form $\om$, define its \emph{co-deriviative} $\dd^*\om$ to be the function which at each vertex $x\in V$ is given by
$$
\dd^*\om (x) = - \frac{1}{n(x)}\sum_{y\sim x}\om (\vec{xy})\,.
$$
Then for a function $\phi : V \ra \CC$, 
$$
\dd^*\dd\phi = - \frac{1}{n(x)}\sum_{y\sim x}\dd\phi (\vec{xy}) = - \frac{1}{n(x)}\sum_{y\sim x}(\phi (y)-\phi (x)) = - \Delta \phi\,.
$$

\begin{lemma} \label{lem:functional-formulae} Let $\phi , \psi : V \ra \CC$ be functions and $\om : T\Ga \ra \CC$ a $1$-form.  Then the following formulae hold:

{\rm (i)}  $(\dd\phi , \om ) = ( \phi , \dd^*\om )$\,;

{\rm (ii)} $(\Delta \phi , \psi ) = - ( \dd\phi , \dd \psi )$\,;

{\rm (iii)}  $(\Delta \phi , \psi ) = (\phi , \Delta \psi )$\,;

{\rm (iv)}  $\dd^*(\phi \om )(x) = \phi (x) \dd^*\om (x) - \frac{1}{2n(x)}\inn{\dd\phi , \om}_x$ for each $x\in V$.
\end{lemma}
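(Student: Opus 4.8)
\textbf{Proof proposal for Lemma \ref{lem:functional-formulae}.}

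The plan is to verify each of the four identities by direct expansion in terms of the definitions just given, being careful about the combinatorial weights ($n(x)$ in the function product, the factor $\tfrac12$ in the $1$-form product arising from double-counting of unoriented edges). I would organise the computation so that (i) is proved first and the rest follow either by substitution or by a parallel computation.

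First I would prove (i). Starting from $(\dd\phi , \om ) = \frac{1}{2}\sum_{x\in V}\sum_{y\sim x}\dd\phi(\vec{xy})\,\om(\vec{xy}) = \frac{1}{2}\sum_{x}\sum_{y\sim x}(\phi(y)-\phi(x))\om(\vec{xy})$, I would split into two sums, use the antisymmetry $\om(\vec{yx}) = -\om(\vec{xy})$ to relabel the $\phi(y)$-term (swapping the roles of $x$ and $y$), and collect to get $-\sum_{x}\phi(x)\sum_{y\sim x}\om(\vec{xy})$; dividing and multiplying by $n(x)$ inside gives exactly $\sum_x n(x)\phi(x)\dd^*\om(x) = (\phi , \dd^*\om)$. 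Then (ii) is immediate: since $\dd^*\dd\psi = -\Delta\psi$, we have $(\dd\phi,\dd\psi) = (\phi,\dd^*\dd\psi) = -(\phi,\Delta\psi)$, and by symmetry of the function product in $\phi,\psi$ this also equals $-(\Delta\phi,\psi)$ after applying (i) the other way; I would write it cleanly as $(\Delta\phi,\psi) = (\dd^*\dd\phi,\psi)\cdot(-1)$ handled via (i). Identity (iii) then follows at once from (ii) because $(\dd\phi,\dd\psi)$ is manifestly symmetric in $\phi$ and $\psi$.

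For (iv) I would expand the left side pointwise: $\dd^*(\phi\om)(x) = -\frac{1}{n(x)}\sum_{y\sim x}(\phi\om)(\vec{xy}) = -\frac{1}{n(x)}\sum_{y\sim x}\tfrac12(\phi(x)+\phi(y))\om(\vec{xy})$. Splitting the $\phi(x)$ and $\phi(y)$ pieces, the first gives $-\frac{\phi(x)}{2n(x)}\sum_{y\sim x}\om(\vec{xy}) = \tfrac12\phi(x)\dd^*\om(x)$, and the second gives $-\frac{1}{2n(x)}\sum_{y\sim x}\phi(y)\om(\vec{xy})$. To match the claimed right-hand side I would write $\phi(y) = \phi(x) + (\phi(y)-\phi(x)) = \phi(x) + \dd\phi(\vec{xy})$ in that second piece; the $\phi(x)$ part combines with the first piece to give the full $\phi(x)\dd^*\om(x)$, and the remainder is $-\frac{1}{2n(x)}\sum_{y\sim x}\dd\phi(\vec{xy})\om(\vec{xy}) = -\frac{1}{2n(x)}\inn{\dd\phi,\om}_x$, as required.

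I do not anticipate a serious obstacle here; the lemma is a bookkeeping exercise analogous to the classical integration-by-parts formulae on a Riemannian manifold. The only point requiring genuine care — and hence the one I would present in most detail — is the correct handling of the weights and the double-counting convention in identity (i): the factor $\tfrac12$ in the global $1$-form product against the weight $n(x)$ in the global function product must be tracked precisely for the relabelling argument (using antisymmetry of $\om$) to land on the stated formula. Once (i) is secured, (ii), (iii) are formal consequences, and (iv) is a short pointwise computation using the decomposition $\phi(y)=\phi(x)+\dd\phi(\vec{xy})$.
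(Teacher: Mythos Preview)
Your proposal is correct and follows essentially the same route as the paper: (i) is proved by expanding the sum and using the antisymmetry of $\om$ to relabel, (ii) and (iii) follow formally from (i) together with $\dd^*\dd = -\Delta$ and the symmetry of the products, and (iv) is the same pointwise expansion using $\phi(y)=\phi(x)+\dd\phi(\vec{xy})$. The only cosmetic difference is that in (iv) the paper goes directly from $-\frac{1}{2n(x)}\sum_{y\sim x}(\phi(x)+\phi(y))\om(\vec{xy})$ to $\phi(x)\dd^*\om(x)-\frac{1}{2n(x)}\sum_{y\sim x}(\phi(y)-\phi(x))\om(\vec{xy})$ in one step, whereas you split and recombine, but this is the same calculation.
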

\begin{proof}  To prove (i), we notice that for $x\sim y$, the sum
$$
(\dd\phi , \om )  =  \frac{1}{2}\sum_{x\in V}\sum_{y\sim x} (\phi (y) - \phi (x)) \om (\vec{xy}) 
$$
contributes $(\phi (y) - \phi (x)) \om (\vec{xy}) = - \phi (x) \om (\vec{xy}) - \phi (y)\om (\vec{yx})$ (the term being symmetric in $x$ and $y$), which equates to the corresponding terms in the sum
$$
(\phi , \dd^*\om ) = - \sum_{x\in V} \phi (x) \sum_{y\sim x} \om (\vec{xy})\,.
$$
The identity (ii) now follows from the fact that $\dd^*\dd\phi = - \Delta \phi$.  Identity (iii) follows from (ii), by symmetry.  Finally
\begin{eqnarray*}
\dd^*(\phi \om )(x) & = & - \frac{1}{n(x)}\sum_{y\sim x}(\phi \om )(\vec{xy}) \\
 & = & - \frac{1}{2n(x)} \sum_{y\sim x} (\phi (x) + \phi (y)) \om (\vec{xy}) \\
  & = & \phi (x) \dd^*\om (x) - \frac{1}{2n(x)} \sum_{y\sim x}(\phi (y) - \phi (x))\om (\vec{xy}) \\
  & = & \phi (x) \dd^*\om (x) - \frac{1}{2n(x)}\inn{\dd \phi , \om}_x\,,
  \end{eqnarray*}
  which gives (iv).
\end{proof}

With the above notation and formulae established, we can give a weak form of equation (\ref{one}).

\begin{proposition} \label{prop:weak} The equation {\rm (\ref{one})} holds if and only if
\begin{equation} \label{weak}
(\Delta \phi ,  \xi \ga \Delta \phi ) - 2(\dd\phi , n\xi \dd \phi ) = 0\,,
\end{equation}
for any function $\xi : V \ra \CC$.
\end{proposition}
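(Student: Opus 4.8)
The plan is to prove the equivalence by a direct computation: I would expand the two global symmetric products in (\ref{weak}) into sums over the vertices of $\Ga$, so that (\ref{weak}) becomes a linear condition on the test function $\xi$, and then observe that this condition holds for all $\xi$ precisely when the pointwise equation (\ref{one}) holds at each vertex.

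Concretely, for the first term: since $\Delta\phi$, $\ga$ and $\xi$ are all functions on $V$, the definition of the symmetric product of functions gives $(\Delta\phi, \xi\ga\Delta\phi)=\sum_{x\in V}n(x)\,\xi(x)\,\ga(x)\,\Delta\phi(x)^2$, and substituting $n(x)\Delta\phi(x)=\sum_{y\sim x}(\phi(y)-\phi(x))$ turns this into $\sum_{x\in V}\xi(x)\,\tfrac{\ga(x)}{n(x)}\bigl(\sum_{y\sim x}(\phi(y)-\phi(x))\bigr)^2$, a $\xi$-weighted sum of the left-hand sides of (\ref{one}). For the second term I would first use the rule $(n\xi\,\dd\phi)(\vec{xy})=\tfrac12\bigl(n(x)\xi(x)+n(y)\xi(y)\bigr)\dd\phi(\vec{xy})$ for the product of a function with a $1$-form, and then the definition of the global product of $1$-forms (half the sum over all incident oriented pairs), to write
\begin{equation*}
(\dd\phi,\,n\xi\,\dd\phi)=\tfrac14\sum_{x\in V}\sum_{y\sim x}\bigl(n(x)\xi(x)+n(y)\xi(y)\bigr)(\phi(y)-\phi(x))^2 .
\end{equation*}

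The one manipulation needing care is to notice that this double sum is invariant under the relabelling $x\leftrightarrow y$, because the coefficient $n(x)\xi(x)+n(y)\xi(y)$ and the factor $(\phi(y)-\phi(x))^2$ are both symmetric in $x$ and $y$; so the contribution of the $n(y)\xi(y)$-part coincides with that of the $n(x)\xi(x)$-part, and the sum collapses to $\tfrac12\sum_{x\in V}n(x)\xi(x)\sum_{y\sim x}(\phi(y)-\phi(x))^2$, equivalently $\tfrac12\sum_{x\in V}n(x)\xi(x)\,\langle\dd\phi,\dd\phi\rangle_x$ — a $\xi$-weighted sum of the right-hand sides of (\ref{one}); Lemma~\ref{lem:functional-formulae} provides the tools for reaching the same point in a more coordinate-free way if preferred. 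Assembling the two expansions, the left-hand side of (\ref{weak}) becomes a single sum over $V$ whose $x$-th summand is, up to a fixed positive degree factor, $\xi(x)$ times the defect $\tfrac{\ga(x)}{n(x)}\bigl(\sum_{y\sim x}(\phi(y)-\phi(x))\bigr)^2-\sum_{y\sim x}(\phi(y)-\phi(x))^2$ of (\ref{one}) at $x$; I would leave the bookkeeping of the precise degree factors, which is purely mechanical, implicit. Finally, since $\xi$ ranges over all functions $V\to\CC$ and the multiplier at each vertex never vanishes (recall $n(x)\ge 1$), the sum is zero for every $\xi$ if and only if every defect vanishes, i.e.\ if and only if (\ref{one}) holds — this last implication needing only that choosing $\xi$ supported at a single vertex isolates the defect there. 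The symmetrisation step reducing the $1$-form product to a vertex sum with the diagonal weight is, as far as I can see, the only place where anything beyond substituting the definitions of $\Delta\phi$, $(\dd\phi)^2$ and the two symmetric products is required.
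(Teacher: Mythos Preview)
Your approach is correct and matches the paper's proof essentially line for line: both expand $(\Delta\phi,\xi\ga\Delta\phi)$ directly from the definition of the function product, both expand $2(\dd\phi,n\xi\,\dd\phi)$ using the function-times-$1$-form rule and then invoke the $x\leftrightarrow y$ symmetry to collapse it to a vertex sum, and both conclude the converse by choosing $\xi$ supported at a single vertex. The symmetrisation step you single out is exactly the one substantive manipulation the paper performs as well.
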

\begin{proof} Let $\xi : V \ra \CC$.  Then if (\ref{one}) holds, we have:
$$
\sum_{x\in V}n(x) \xi (x)\left(\ga (x) (\Delta \phi (x))^2- \dd\phi (x)^2\right) = 0\,,
$$
equivalently
$$
(\Delta \phi , \xi \ga \Delta \phi ) - (\dd\phi^2, \xi ) = 0\,,
$$
where we recall that $\dd\phi^2(x) = \sum_{y\sim x} (\phi (y) - \phi (x))^2 = \inn{\dd\phi , \dd\phi }_x$.  But we claim that $(\dd\phi^2, \xi ) = 2(\dd\phi , \xi \dd\phi )$.  Indeed, 
$$
(\dd\phi^2, \xi ) = \sum_{x\in V}n(x) \xi (x) \left( \sum_{y\sim x}(\phi (y) - \phi (x))^2\right)\,,
$$
which for each $x\sim y$, contributes the term $(n(x)\xi (x) + n(y)\xi (y))(\phi (y)-\phi (x))^2$.  But precisely this term occurs with in
$$
2(\dd \phi , n\xi \dd\phi ) = \frac{1}{2}\sum_{x\in V} \sum_{y\sim x} (n(x)\xi (x) + n(y)\xi (y))(\phi (y) - \phi (x))^2\,.
$$
(on noting the symmetry of the expression to be summed on the right-hand side).  

Conversely, if we fix a vertex $x\in V$ and consider the function $\xi : V \ra \CC$ given by $\xi (x) = 1$ and $\xi (y) = 0$ for all $y \neq x$, then (\ref{weak}) gives $\ga (x) \Delta \phi (x)^2 - \dd\phi^2(x) = 0$, so that (\ref{one}) holds.
\end{proof}

On taking the function $\xi$ to be identically equal to $1$, we obtain the following consequence.

\begin{corollary} \label{cor:weak}  Let $\phi : V \ra \CC$ be a solution to equation {\rm (\ref{one})} with $\ga$ constant.  Then
$$
(\ga \Delta (\Delta \phi ) + 2n\Delta \phi +\frac{1}{n}\inn{\dd n, \dd\phi}, \phi ) = 0\,.
$$
\end{corollary}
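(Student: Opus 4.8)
The plan is to specialize the weak identity of Proposition \ref{prop:weak} to the constant test function $\xi \equiv 1$ and then repeatedly apply the integration-by-parts formulae of Lemma \ref{lem:functional-formulae} to move the operators around until everything is expressed as a single symmetric product paired against $\phi$. Concretely, with $\xi \equiv 1$ and $\ga$ constant, equation (\ref{weak}) reads $\ga (\Delta\phi , \Delta\phi) - 2(\dd\phi , n\dd\phi ) = 0$. The task is to rewrite this so that the second factor in each product is $\phi$ itself.

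For the first term, I would use Lemma \ref{lem:functional-formulae}(ii) backwards: since $(\Delta\phi , \psi) = -(\dd\phi , \dd\psi)$ for any $\psi$, taking $\psi = \Delta\phi$ gives $(\Delta\phi , \Delta\phi) = -(\dd\phi , \dd(\Delta\phi))$, and then (ii) again (or (iii)) converts $-(\dd\phi , \dd(\Delta\phi)) = (\Delta\phi , \Delta\phi)$ into $(\Delta(\Delta\phi), \phi)$ via the self-adjointness (iii): $(\Delta\phi , \Delta\phi) = (\Delta(\Delta\phi), \phi)$. So $\ga(\Delta\phi,\Delta\phi) = (\ga\Delta(\Delta\phi), \phi)$. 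For the second term, the obstacle is the weight $n$: the factor $n(x)$ sitting inside $n\dd\phi$ has to be absorbed, and this is exactly where Lemma \ref{lem:functional-formulae}(iv) enters, applied with $\om = \dd\phi$ and with the function taken to be $n$ (viewing the degree as a function $n : V \to \CC$). We have $(\dd\phi , n\,\dd\phi) = (\phi , \dd^*(n\,\dd\phi))$ by (i), and by (iv), $\dd^*(n\,\dd\phi)(x) = n(x)\dd^*\dd\phi(x) - \frac{1}{2n(x)}\inn{\dd n , \dd\phi}_x = -n(x)\Delta\phi(x) - \frac{1}{2n(x)}\inn{\dd n , \dd\phi}_x$, using $\dd^*\dd\phi = -\Delta\phi$. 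Hence $2(\dd\phi , n\,\dd\phi) = (\phi , -2n\Delta\phi - \frac{1}{n}\inn{\dd n,\dd\phi}) = -(2n\Delta\phi + \frac1n\inn{\dd n,\dd\phi}, \phi)$, using symmetry of the product.

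Substituting both computations into $\ga(\Delta\phi,\Delta\phi) - 2(\dd\phi,n\dd\phi) = 0$ yields $(\ga\Delta(\Delta\phi), \phi) + (2n\Delta\phi + \frac1n\inn{\dd n,\dd\phi}, \phi) = 0$, which is precisely the stated identity after combining the two products by bilinearity. The only genuinely delicate point is bookkeeping the factors of $1/2$ and the degree weights: the global product $(\cdot,\cdot)$ on $1$-forms double-counts edges while the product on functions carries the weight $n(x)$, and one must check that the term-by-term matching in (iv) (where the $\frac{1}{2n(x)}$ appears) is consistent with these conventions. I would verify this by writing out the contribution of a single oriented edge $\vec{xy}$ on both sides, exactly as is done in the proof of Lemma \ref{lem:functional-formulae}(i). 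Everything else is a routine application of parts (i)--(iv) of that lemma, so no new idea is required beyond correctly orchestrating them.
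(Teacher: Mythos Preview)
Your proof is correct and follows exactly the route the paper indicates: the paper merely says ``On taking the function $\xi$ to be identically equal to $1$, we obtain the following consequence,'' and you have correctly supplied the details by invoking parts (iii), (i), and (iv) of Lemma~\ref{lem:functional-formulae} to rewrite $\ga(\Delta\phi,\Delta\phi)$ and $2(\dd\phi,n\,\dd\phi)$ as products against $\phi$. The bookkeeping with the degree weight and the factor $\tfrac{1}{2n(x)}$ from (iv) is handled correctly.
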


This corollary suggests heuristic arguments as to why we might consider a pair $(\Ga , \phi )$ consisting of a connected graph endowed with a solution $\phi$ to (\ref{one}) with $\ga$ constant as a particle with mass inversely proportional to $|\ga |$; in the case when $\ga = 0$, we will view the pair as representing a massless particle. 

Firstly, we do not admit any fixed background with respect to define parameters of equations: the particle in a sense creates its own background, so we view an equation of the form $(\Pp (\phi ), \phi )$ as appropriate, where $\Pp$ is some (discrete) differential operator.  In the case when $n$ is constant, we now note the relation between the operator $\Pp (\phi ) = \ga \Delta (\Delta \phi ) + 2n\Delta \phi$ and the operator on the left-hand side of the time-independent Schr\"odinger equation on a fixed smooth background:
$$
\left( -\frac{\hbar^2}{2m}\vec{\na}^2 + V(x)\right) \psi = E\psi (x)\,,
$$  
when $\psi$ is identified with $\Delta \phi$.  The case of mass zero (when $\ga \equiv 0$) is justified in some detail in \cite{Ba-We}. 

In order to define the linearized equation, we consider a family $\{ \phi_t\}$ of functions such that $\phi_0 = \phi$ solves (\ref{one}) with $\ga$ independent of $t$.  On writing $\xi (x) = \frac{\pa \phi (x)}{\pa t}\vert_{t=0}$, we obtain the equation linear in $\xi$:
\begin{equation} \label{linearized}
\ga (x) \Delta \phi (x) \Delta \xi (x) = \inn{\dd\phi (x), \dd \xi (x)}_x\,.
\end{equation}
In the case when $\ga \equiv 0$, that is $\phi$ is holomorphic, this takes the particularly simple form:
\begin{equation} \label{lin-holo}
\inn{\dd\phi (x), \dd \xi (x)}_x=0\,,
\end{equation}
for all $x\in V$.  In all case, we see that $\xi = \la \phi + \mu$ solves the linearized equation ($\la , \mu \in \CC$ constant); this reflects the normalisation freedom $\phi \mapsto \la \phi + \mu$. 

The natural class of mappings between graphs which preserve equation (\ref{one}) are the so-called \emph{holomorphic} mappings.  These were introduced for simple graphs under the name \emph{semi-conformal} mappings by Urakawa \cite{Ur-1, Ur-2}, as the class of maps which preserve local harmonic functions (harmonic at a vertex).  The notion was later extended to non-simple graphs by Baker and Norine \cite{Ba-No-1, Ba-No-2}, who used the term holomorphic mapping.  In \cite{Ba-We}, it was shown that the holomorphic mappings are precisely the class of mappings which preserve local \emph{holomorphic} functions, as we have defined them by (\ref{holo}). 

The definition requires that we restrict to mappings of the vertices that respect the relation of adjacency.  Thus we define a \emph{mapping $f:\Ga = (V,E) \ra \Si = (W,F)$ between graphs} as a mapping of the vertices such that $x\sim y$ implies \emph{either} $f(x) = f(y)$ \emph{or} $f(x) \sim f(y)$.  

\begin{definition} Let $f: \Ga = (V,E) \ra \Si = (W,F)$ be a mapping between graphs.  Then $f$ is \emph{holomorphic} if there exists a function $\la : V \ra \NN$ such that for all $x\in V$ and for all $z'\sim z = f(x)$, we have
$$
\la (x) = \la (x, z') = \sharp \{ x'\in V : x'\sim x, f(x') = z'\}\,,
$$
is independent of the choice of $z'$; we set $\la (x)=0$ if $f(x')=z$ for all $x'\sim x$.  Call $\la$ the \emph{dilation of} $f$.
\end{definition}

\begin{proposition} \label{prop:holo}  Let $f: \Ga = (V,E) \ra \Si = (W,F)$ be a holomorphic mapping between graphs of dilation $\la : V \ra \NN$.  Suppose $\psi : W \ra \CC$ satisfies the equation
$$
\mu (\Delta \psi )^2 = (\dd\psi )^2\,,
$$
for some $\mu : W \ra \RR$.  Then for each $x\in V$ such that $\la (x) \neq 0$, the function $\phi = \psi \circ f$ satisfies {\rm (\ref{one})} at $x$ with
$$
\ga (x) = \frac{n(x)\mu (f(x))}{\la (x) m(f(x))}\,,
$$
where $m(f(x))$ denotes the degree of the vertex $f(x)\in W$.
\end{proposition}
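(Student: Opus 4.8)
The plan is to localise everything at the single vertex $x$ with $\la(x)\neq 0$ and reduce the claim to a direct count of fibres. Write $z=f(x)$. Since $f$ is a mapping of graphs, every neighbour $y\sim x$ satisfies either $f(y)=z$ or $f(y)\sim z$, so the neighbours of $x$ fall into two classes. Those with $f(y)=z$ contribute nothing to any power sum of $\phi(y)-\phi(x)=\psi(f(y))-\psi(z)$, because that difference vanishes. For those with $f(y)\sim z$, the defining property of a holomorphic map says precisely that each $z'\sim z$ has exactly $\la(x)$ preimages among the neighbours of $x$; since $\la(x)\neq0$, every neighbour of $z$ in $\Si$ is attained. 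First I would record from this the two identities, for $k=1,2$,
$$
\sum_{y\sim x}\bigl(\phi(y)-\phi(x)\bigr)^k \;=\; \la(x)\sum_{z'\sim z}\bigl(\psi(z')-\psi(z)\bigr)^k\,,
$$
which are the only place the combinatorial hypothesis is used.

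Next I would substitute these into equation {\rm (\ref{one})} at $x$. Recalling $\Delta\psi(z)=\frac{1}{m(z)}\sum_{z'\sim z}(\psi(z')-\psi(z))$ and $(\dd\psi)^2(z)=\frac{1}{m(z)}\sum_{z'\sim z}(\psi(z')-\psi(z))^2$, and the analogous formulae for $\phi$ at $x$ with $n(x)$ in place of $m(z)$, the equation
$$
\frac{\ga(x)}{n(x)}\Bigl(\sum_{y\sim x}(\phi(y)-\phi(x))\Bigr)^2 \;=\; \sum_{y\sim x}(\phi(y)-\phi(x))^2
$$
becomes, after substituting the two identities above and cancelling one factor of $\la(x)$,
$$
\frac{\ga(x)\la(x)}{n(x)}\Bigl(\sum_{z'\sim z}(\psi(z')-\psi(z))\Bigr)^2 \;=\; \sum_{z'\sim z}(\psi(z')-\psi(z))^2\,.
$$
But the hypothesis $\mu(z)(\Delta\psi(z))^2=(\dd\psi)^2(z)$ is exactly this same identity with $\mu(z)/m(z)$ in place of $\ga(x)\la(x)/n(x)$. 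Comparing the two coefficients gives $\ga(x)=n(x)\mu(z)/(\la(x)m(z))$ whenever $\sum_{z'\sim z}(\psi(z')-\psi(z))\neq0$, which is the assertion of Proposition \ref{prop:holo}.

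The only point needing a moment's care is the degenerate case $\Delta\psi(z)=0$: then the hypothesis forces $(\dd\psi)^2(z)=0$, hence $\sum_{z'\sim z}(\psi(z')-\psi(z))^2=0$, and by the displayed identities both sides of {\rm (\ref{one})} at $x$ vanish, so the equation holds with the asserted value of $\ga(x)$ (both sides being zero). I do not expect any genuine obstacle here; the entire content of the argument is the fibre-count bookkeeping in the two power-sum identities, and that is precisely the defining property of a holomorphic map of graphs.

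\endinput
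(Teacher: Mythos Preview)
Your proposal is correct and follows essentially the same route as the paper: both arguments rest on the two fibre-counting identities $\sum_{y\sim x}(\phi(y)-\phi(x))^k=\la(x)\sum_{z'\sim z}(\psi(z')-\psi(z))^k$ for $k=1,2$, and then substitute. The only cosmetic difference is that the paper runs the computation forward (starting from the hypothesis on $\psi$ and arriving directly at the identity for $\phi$ with the asserted coefficient), which sidesteps the need to treat the degenerate case $\Delta\psi(z)=0$ separately; your ``compare coefficients'' presentation requires that small extra remark, which you handled correctly.
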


\begin{proof}  Let $f: \Ga = (V,E) \ra \Si = (W,F)$ be a holomorphic mapping between graphs of dilation $\la : V \ra \NN$.  Let $x\in V$ and set $z = f(x)$.  Then
$$
\frac{\mu (z)}{m(z)} \left(\sum_{z'\sim z}(\psi (z')-\psi (z)\right)^2  =  \sum_{z'\sim z}(\psi (z')-\psi (z))^2.
$$
Since $f$ is holomorphic
$$
\sum_{x'\sim x}[(\psi \circ f)(x')-(\psi \circ f)(x)] = \la (x)\sum_{z'\sim z}(\psi (z')-\psi (z))\,.
$$
Suppose that $\la (x) \neq 0$.  Then
\begin{eqnarray*}
\sum_{x'\sim x}[(\psi \circ f)(x')-(\psi \circ f)(x)]^2 & = & \la (x)\sum_{z'\sim z}(\psi (z')-\psi (z))^2 \\
 & = & \frac{\la (x)\mu (z)}{m(z)} \left( \sum_{z'\sim z}(\psi (z')-\psi (z))\right)^2 \\
  & = & \frac{\mu (z)}{\la (x)m(z)} \left( \sum_{x'\sim x}[(\psi \circ f)(x')-(\psi \circ f)(x)]\right)^2\,,
  \end{eqnarray*}
 from which the formula follows. 
If on the other hand $\la (x) = 0$, then $f(x') = f(x)$ for all $x'\sim x$ and both sides of (\ref{one}) vanish.
\end{proof}

\end{document}